\pdfoutput=1
\documentclass[10pt,letterpaper]{article}
\usepackage[T1]{fontenc}
\usepackage[utf8]{inputenc}
\usepackage[textwidth=5.8in,textheight=8.9in]{geometry}
\usepackage{amsmath,amsthm}
\usepackage{newpxtext}
\usepackage{newpxmath}
\usepackage{microtype}
\usepackage{xcolor}
\usepackage{enumitem}
\usepackage{authblk}
\usepackage{algorithm}
\usepackage{algpseudocode}
\makeatletter
\AtBeginDocument{\def\theHALG@line{\thealgorithm.\arabic{ALG@line}}}
\makeatother
\usepackage{graphicx}
\usepackage{nicefrac}
\usepackage{booktabs}
\usepackage[font=small,labelfont=bf,skip=6pt]{caption}

\usepackage{titlesec}
\titleformat*{\section}{\large\bfseries}
\titleformat*{\subsection}{\normalsize\bfseries}
\titleformat*{\subsubsection}{\normalsize\itshape}
\titlespacing*{\section}{0pt}{3.2ex plus 1ex minus .2ex}{1.4ex plus .2ex}
\titlespacing*{\subsection}{0pt}{2.6ex plus .8ex minus .2ex}{1ex plus .2ex}
\titlespacing*{\subsubsection}{0pt}{2.2ex plus .6ex minus .2ex}{.8ex plus .2ex}
\definecolor{linkblue}{RGB}{0,51,128}
\usepackage{hyperref}
\hypersetup{colorlinks=true, linkcolor=linkblue, citecolor=linkblue, urlcolor=linkblue}
\usepackage[capitalize,noabbrev]{cleveref}
\makeatletter
\g@addto@macro{\UrlBreaks}{\UrlOrds}
\makeatother

\newtheorem{theorem}{Theorem}

\newtheorem{lemma}{Lemma}

\newtheorem{corollary}{Corollary}

\theoremstyle{definition}
\newtheorem{definition}{Definition}

\theoremstyle{remark}
\newtheorem{remark}{Remark}

\newcommand{\SCP}{\textrm{SCP}}
\newcommand{\SCT}{\textrm{SCT}}
\newcommand{\BM}{B_{\max}}
\newcommand{\Bz}{B_0}
\newcommand{\algAone}{\hyperref[alg:A1]{Algorithm~\ensuremath{\mathcal{A}_1}}}
\newcommand{\algAtwo}{\hyperref[alg:mmcr]{Algorithm~\ensuremath{\mathcal{A}_2}}}

\title{\bfseries Job Scheduling with Battery Recharging Constraints}
\author[1]{Rudransh Kumar}
\author[2]{Nima Nasiri}
\author[3]{Jared Paul}
\author[1]{Sathish Gopalakrishnan\thanks{Corresponding author: \texttt{sathish@ece.ubc.ca}}}
\affil[1]{Department of Electrical and Computer Engineering,
          The University of British Columbia, Vancouver, BC, Canada}
\affil[2]{Lyft, Inc.}
\affil[3]{Amazon.com, Inc.}

\date{}

\begin{document}
\maketitle
\begin{abstract}
A battery-powered device, such as a delivery drone that works from a depot,
must stop to recharge between jobs, and the time spent recharging delays
every job that follows. Scheduling models that fix the duration of a
recharge do not describe a device whose recharge takes longer when it
acquires more energy. We study a single device that executes a batch of
non-preemptive jobs with known execution times, energy demands, and
optional deadlines. Under \emph{partial recharging} the device may acquire
any amount of energy between jobs; under \emph{complete recharging} every
recharge fills the battery. Four objectives and four relationships between
execution time and energy demand give 32 variants. When acquiring $q$ units
of energy takes $q$ time units, we show that 14 of the 16 partial-recharging
variants are polynomial, and we give a tight 2-approximation for the average completion time, one of the two NP-hard variants. Under complete
recharging, the four variants with equal energy demands are polynomial and
the other 12 are NP-hard; for makespan we give a $5/4$-approximation. When
each recharge also incurs a fixed \emph{setup time} $h$, the equal-energy
variants remain polynomial and the other 24 are strongly NP-hard if $h$ is
part of the input. For these we give exact algorithms that are exponential
only in the number of jobs, and approximation algorithms for makespan and,
when the battery starts empty, for the weighted average completion time.
Experiments on synthetic and trace-derived job sets compare the algorithms
with exact optima. The model is offline and deterministic, and we have not
validated the schedules on hardware.

\medskip\noindent\textbf{Keywords:} battery recharging, setup time,
scheduling, complexity, approximation algorithms, deadlines
\end{abstract}

\section{Introduction}
\label{sec:intro}

A delivery drone that works from a single depot executes a batch of orders,
one round trip per order. The operator fixes the routes before the batch
starts, and therefore knows how long each trip takes and how much battery
charge it consumes: a heavy order to a distant address needs more time and
more charge than a light order to a nearby one. Some orders carry a
promised delivery time. The drone has one battery, which may be only partly
charged when the batch starts, and the depot has a charging pad. The drone
can recharge between trips, but it cannot deliver while it recharges.

The operator must choose the order of the trips and decide when, and for
how long, to recharge. These two decisions interact. A trip that fits
within the remaining charge can start immediately; a trip that needs more
charge waits for the charger, and so does every trip that follows it.
Suppose that the battery is empty, that trip $A$ takes one minute and needs
ten minutes of charging, and that trip $B$ takes two minutes and needs one
minute of charging. If $A$ runs first, then the two deliveries complete at
minutes 11 and 14. If $B$ runs first, then they complete at minutes 3
and 14. Running the shorter trip first, which minimizes the average
completion time when no recharging is involved, is therefore not optimal
here. The choice is harder when the battery starts with some charge,
because the operator must also decide which trips should consume that
charge.

The charging arrangement determines which recharges are possible. Some
operators allow the drone to acquire only the energy that the next trip
needs. Others require the drone to charge to a fixed level whenever it
docks; that level may be the full capacity or a lower target such as 80\%.
We call the first policy \emph{partial recharging} and the second
\emph{complete recharging}. Complete recharging removes the choice of
amount but not the choice of when to dock, and we will show that this
remaining choice determines the total recharging time of the batch.

Starting a recharge may itself take time. The drone must land on the pad
and establish a connection, and this preparation takes the same time
whether the drone then recharges to a small extent or to a more
substantial extent. We call it the
\emph{setup time}. Setup time is a larger fraction of the schedule when
trips are short; one operator reports an average delivery time under five
minutes in initial tests~\cite{doordash2026air}. Consider two trips that
each take one minute and each consume the charge acquired in one minute, an
empty battery that holds two such units of charge, and a setup time of
three minutes. Charging one unit before each trip completes the deliveries
at minutes 5 and 10. Charging two units before the first trip completes
them at minutes 6 and 7: the first delivery is one minute later, but the
drone avoids the second setup, and the average completion time falls
from 7.5 to 6.5 minutes. With setup time, the operator must therefore also decide
which trips share a recharge.

The same decisions arise for a warehouse robot that executes
pick-and-deliver tasks. The time and energy of a task depend on the load
and the distance, and the robot can visit a charger between tasks. A batch
of orders with promised dispatch times matches the description above once
routing has fixed the time and energy demand of each task.

Payload tasks on a satellite and reports from a sensor node also compete
for stored energy and may have deadlines. Orbital charging windows and
intermittent energy harvesting impose constraints that we do not consider,
so our results apply to these settings only when the operator can start a
recharge between any two jobs.

We ask which versions of this scheduling problem can be solved exactly in
polynomial time, which are NP-hard, and how closely the hard versions can
be approximated. Related models either fix the duration of each
recharge~\cite{yu2025recharging} or treat the battery as a budget within a
routing problem (\cref{sec:related}). In the problem that we study, the
duration of a recharge depends on the energy acquired, and the scheduler
chooses both the job order and the recharges. Setup times have also been
studied for a processor that switches between job families, where they
model cold start delays~\cite{gopalakrishnan2025coldstart}. In that
setting the job sequence determines when a setup occurs; here the
scheduler incurs a setup whenever it chooses to recharge.

\subsection{The Scheduling-and-Charging Problem}

We call this problem the \emph{Scheduling-and-Charging Problem} (\SCP{})
and define it formally in \cref{sec:model}. A device with battery capacity
$\BM$ and initial energy $\Bz$ must execute $n$ non-preemptive jobs, all
available at time zero. Job $J_i$ has an execution time $e_i$, an energy
demand $b_i$, and, where applicable, a deadline $d_i$. The device can
recharge between jobs at a constant rate, under either the partial or the
complete policy. For complete recharging we write $\BM$ for the target
level, so the analysis applies to any fixed target and not only to a full
battery.

We consider four objectives: the average completion time; the makespan,
which is the completion time of the last job; the maximum tardiness; and
the number of tardy jobs. (A job is tardy if it completes after its
deadline.) Because the number of jobs is fixed, minimizing the average
completion time is equivalent to minimizing the sum of completion times
(\SCT{}), with the same optimal schedules and the same approximation
factors. We refer to the average in this introduction and state the
formal results for the sum. We combine these objectives with four relationships between $e_i$ and $b_i$:
equal energy demands, equal execution times, energy demand equal to
execution time, and unrelated parameters. With two policies, this gives 32
variants (\cref{tab:summary}).

We treat the 32 variants twice. We first assume that acquiring $q$ units
of energy takes $q$ units of time
(\cref{sec:complete,sec:partial,sec:experiments}). We then add the setup
time $h$ to every recharge
(\cref{sec:setup-partial,sec:setup-complete,sec:hardness,sec:selection,sec:setup-experiments}).
The two settings require different arguments, and several ordering rules
that are optimal without setup time fail with it.

\subsection{Results without Setup Time}

Under partial recharging without setup time, the device needs to acquire
only the energy that the next job lacks. Once we choose the job order, the
recharge durations follow (\cref{lem:min-recharge}), and each objective
reduces to an ordering problem. We show that 14 of the 16
partial-recharging variants are solvable in polynomial time
(\cref{sec:partial}). Earliest-deadline-first order minimizes the maximum
tardiness for every job class, every order has makespan
$\sum_i e_i+\max(0,\sum_i b_i-\Bz)$, and shortest-job-first order or a
Moore--Hodgson rule solves the three structured job classes for the other
two objectives. The two remaining variants have unrelated $(e_i,b_i)$ and
minimize either the average completion time or the number of tardy jobs.

The first of these two variants was left open in the conference version of
this paper~\cite{gopalakrishnan2022rtss}. The case is interesting because
the difficulty arises only from the initial charge: with an empty battery,
sorting the jobs by $e_i+b_i$ is optimal. With $\Bz>0$, the scheduler must
decide which jobs execute on the initial charge, and we show that this
decision is NP-hard through a reduction from Partition that uses $2n+3$
jobs (\cref{thm:sct-arb-nph}). An exact cost identity shows that a schedule
meets the decision threshold only if it encodes a valid partition. We then
give an $O(n\log n)$ algorithm whose average completion time is at most
twice the optimum, show that the factor two is tight for this algorithm
(\cref{thm:2approx}), and improve the factor to $2-2/(n+1)$ by enumerating
the final job (\cref{thm:final-job}). We also show that minimizing the
number of tardy jobs is NP-hard for unrelated parameters and give a
pseudo-polynomial dynamic program for it
(\cref{thm:numtardy-arb-nph,thm:numtardy-dp}).

Under complete recharging, the scheduler must also decide which jobs
execute between consecutive recharges (\cref{sec:complete}). The four
equal-energy variants are solvable in polynomial time: $O(n\log n)$ for
the average completion time, $O(n)$ for makespan, $O(n^2)$ for maximum
tardiness, and $O(n^3)$ for the number of tardy jobs. The other 12 variants
are NP-hard. For makespan, we show that the total recharging time depends
only on which jobs execute after the last recharge. Choosing that final
group is a subset-sum problem, which yields both the hardness proof and an
$O(n\log n)$ algorithm whose makespan is at most $5/4$ of the optimum
(\cref{thm:makespan-nph,thm:ssmin-reduct,thm:ssmax-reduct,thm:a2-guarantee}).

\subsection{Results with Setup Time}

With setup time, even partial recharging requires the scheduler to group
jobs into charging sessions, as the two-trip example shows. The eight
equal-energy variants remain solvable in polynomial time under both
policies, and the other 24 are strongly NP-hard when $h$ is part of the
input (\cref{thm:setup-hardness}). Ten partial-recharging variants that are
polynomial without setup time therefore become hard. The makespan and
deadline reductions already hold at $h=1$. The reduction for the average completion time chooses $h$ from the input, so it does not establish
hardness for every fixed positive value of $h$.

Equal energy demands allow us to exchange jobs without changing which
charging patterns are feasible. Shortest-job-first and
earliest-deadline-first orders therefore remain optimal for the average completion time and for maximum tardiness, respectively. Minimizing the
number of tardy jobs requires a dynamic program, because removing the
longest job from an infeasible prefix can fail even when demands are
equal. The resulting algorithms take $O(n\log n)$ time for the average completion time, $O(n)$ for makespan, and $O(n^3)$ for the two deadline
objectives (\cref{sec:setup-partial,sec:setup-complete}). These bounds
count arithmetic operations.

For unequal demands (\cref{sec:selection}), we give exact algorithms that
are exponential in $n$ but independent of the numerical battery capacity,
and polynomial algorithms that optimize the recharges for any prescribed
job order. We approximate makespan within a factor of two for any initial
energy and within $3/2$ when the battery starts empty. For the average completion time with an empty battery and positive demands, a sorting key
that depends on $h$, followed by next-fit grouping, gives a factor of three
under either policy, even with job weights (\cref{thm:sct-approx}). The
dependence on $h$ is necessary in the following sense: the zero-setup order
with optimal recharging has ratio $\Omega(\sqrt n)$
(\cref{thm:a1-order-lower-bound}).

\subsection{Scope}

The model applies to a depot-based device after routing has fixed the
duration and energy demand of each job. It does not optimize routes,
resolve competition among several devices for chargers, or represent
charging windows. Setup consumes time but no energy, the charging rate is
constant, and charging cannot overlap execution. The time to travel to a
charger is covered only if it is the same for every session, in which case
it can be included in $h$. These assumptions isolate the scheduling effect
of recharge duration and setup time; we have not validated them against a
physical battery or charger.

\subsection{Relation to the Conference Version}

A preliminary version of this work appeared at the IEEE Real-Time Systems
Symposium (RTSS) in 2022~\cite{gopalakrishnan2022rtss}. That version considered the 24
variants without makespan and without setup time, and left open the average completion time under partial recharging with unrelated parameters. Here
we settle that case, add the 2-approximation, study the eight makespan
variants including the $5/4$-approximation, revise the algorithms and
proofs for the deadline objectives, and add the computational evaluation.
The results on setup time did not appear in the conference version.

\section{System Model}
\label{sec:model}

We now state the problem precisely. We first describe the jobs and the
battery, then the recharging policies and the setup time, and finally the
schedules and the objectives that we use to compare them.

\subsection{Jobs and Battery}

One battery-powered device executes $n$ jobs, one at a time, and it cannot
execute a job while it recharges. We write
$\mathcal{J}=\{J_1,\ldots,J_n\}$ for the job set. All jobs are available at
time zero, which corresponds to a batch that the operator holds before the
device starts. Jobs are \emph{non-preemptive}: once a job starts, it runs
to completion. This assumption is reasonable for the applications in
\cref{sec:intro}, because a delivery trip or a pick-and-deliver task cannot
be suspended midway and resumed after a recharge.

Each job $J_i$ has an \emph{execution time} $e_i\ge0$, which is the time
for which the job occupies the device. It has an \emph{energy demand}
$b_i\ge0$, which is the charge that it consumes. For the deadline
objectives, it also has a \emph{deadline} $d_i$.

The battery has capacity $\BM>0$ and initial energy $\Bz$, with
$0\le\Bz\le\BM$. We assume that $b_i\le\BM$ for every job, because a job
that needs more than a full battery can never execute. All parameters are
integers, and deadlines may be negative. We model charging as a linear
process and choose the energy unit so that acquiring one unit of energy
takes one unit of time. Linear charging makes the duration of a recharge
proportional to the energy acquired; it does not represent the slower
charging of a nearly full battery, which we discuss in
\cref{sec:conclusions}.

Whether longer jobs also consume more energy affects which ordering rules
are optimal. We therefore distinguish four classes of job sets, according
to the relationship between $e_i$ and $b_i$.

\begin{description}
  \item[Equal energy ($b_i = b$).] Every job consumes the same energy,
        whatever its execution time.
  \item[Equal execution time ($e_i = e$).] Every job executes for the same
        time, whatever its energy demand.
  \item[Energy equal to execution time ($e_i = b_i$).] A job that executes
        for $e$ time units needs $e$ time units of charging. The equality
        fixes the constant of proportionality at one after normalization;
        it does not cover every proportional relationship.
  \item[Arbitrary.] No constraint relates $e_i$ and $b_i$.
\end{description}
These are classes of instances. An individual instance may satisfy several
of the restrictions, or may need no recharging at all.

\subsection{Recharging and Setup Time}

The device can recharge before the first job and between jobs. A recharge
that acquires $q>0$ units of energy first incurs a \emph{setup time}
$h\ge0$, so it occupies the device for
\begin{equation}
 L(q)=q+h\,\mathbf{1}\{q>0\}
 \label{eq:setup-duration}
\end{equation}
time units. Setup consumes no energy. Choosing $q=0$ incurs neither setup
nor charging time; in particular, a request to fill a battery that is
already full has no cost. We exclude a recharge after the last job because
it cannot improve any of the objectives below. The setup time is a
duration; a monetary charge per session would require a different
objective.

We study two policies, which differ in the amounts that the device may
acquire when the battery holds $B$ units.

\begin{description}
  \item[Partial recharging.] The device may acquire any integer amount
        $q\in[0,\BM-B]$.
  \item[Complete recharging.] The device may acquire either $q=0$ or
        $q=\BM-B$: a recharge, once started, fills the battery.
\end{description}

When $h=0$, a recharge of $q$ units takes exactly $q$ time units. This is
the model of the conference version~\cite{gopalakrishnan2022rtss}, and we
assume it throughout \cref{sec:complete,sec:partial,sec:experiments}. From
\cref{sec:setup-partial} onward, $h$ is an input.

\subsection{Schedules}

A \emph{schedule} consists of a permutation $\pi$ of the jobs and amounts
$q_1,\ldots,q_n$, where the device acquires $q_j$ units immediately before
the $j$-th job $\pi(j)$. We write $r_j=L(q_j)$ for the duration of that
recharge; without setup time, $r_j=q_j$. We omit idle time. All jobs are
available at time zero, a recharge is possible between any two jobs, and
no objective below improves when a completion time increases, so removing
idle time preserves feasibility and cannot worsen the objective.

Let $C_j$ denote the completion time of the $j$-th job, with $C_0=0$. Let
$B_j$ denote the battery level immediately after the $j$-th job, so that
$B_0$ is the initial energy. A schedule is \emph{feasible} if, for every
position $j$,
\begin{align}
 b_{\pi(j)}&\leq B_{j-1}+q_j\leq\BM,\label{eq:feasibility}\\
 B_j&=B_{j-1}+q_j-b_{\pi(j)},\label{eq:battery-level}\\
 C_j&=C_{j-1}+r_j+e_{\pi(j)},\label{eq:completion}
\end{align}
and if, under complete recharging, each $q_j$ is either zero or
$\BM-B_{j-1}$. Constraint (\ref{eq:feasibility}) requires that the job has
enough energy to execute and that the battery never exceeds its capacity.
When the job order is fixed or clear from the context, we index the jobs
by position and write $e_j$, $b_j$, and $d_j$ for the parameters of the
$j$-th job; then $C_j=\sum_{k\le j}(e_k+r_k)$.

A \emph{group} is a maximal sequence of consecutive jobs with no positive
recharge between them. The first group may execute on the initial energy
without any recharge.

\subsection{Objectives}

An operator may wish to reduce the average completion time, to complete
the whole batch early, or to limit deadline misses. We represent these
requirements through four objectives, each of which we minimize.

\begin{description}
  \item[Sum of completion times,] $\SCT=\sum_{j=1}^n C_j$. Minimizing
        \SCT{} minimizes the average completion time and, because the
        total execution time is fixed, the average waiting time. This
        objective applies to deliveries of equal priority without hard
        deadlines.
  \item[Makespan,] $C_{\max}=C_n$, the completion time of the last job.
        This objective applies when a whole batch must complete before a
        fixed time, such as the end of a shift.
  \item[Maximum tardiness,]
        $T_{\max}=\max\bigl(0,\max_j(C_j-d_{\pi(j)})\bigr)$. This
        objective bounds how late any single job can be relative to its
        deadline.
  \item[Number of tardy jobs,]
        $U=\sum_{j=1}^n\mathbf{1}\{C_j>d_{\pi(j)}\}$. This objective
        applies when a late job has no value, such as a sensor reading
        reported after its time window closes.
\end{description}
All four objectives equal zero for an empty job set.

Four totals recur in the analysis. We write $E=\sum_i e_i$ for the total
execution time and $S=\sum_i b_i$ for the total energy demand. For a given
schedule, $Q=\sum_j q_j$ is the acquired energy and
$K=\sum_j\mathbf{1}\{q_j>0\}$ is the number of positive recharges, which
we call \emph{charging sessions}. Energy
conservation gives $Q=S-\Bz+B_n$, and therefore
\begin{equation}
 C_{\max}=E+Q+hK.
 \label{eq:makespan-setup}
\end{equation}
Without setup time, minimizing makespan is equivalent to minimizing the
acquired energy. With setup time, the number of sessions also matters, and
a schedule that acquires the least energy need not have the least
makespan.

\subsection{Classification}

The complexity of the 32 variants differs between the two policies and
between the two settings (\cref{tab:summary}). Without setup time, partial
recharging is hard only for the sum of completion times and the number of
tardy jobs with arbitrary parameters, whereas complete recharging is hard
for every job class except equal energy. With setup time, the
classification is the same under both policies: equal energy is polynomial
and the other three classes are strongly NP-hard
(\cref{thm:setup-hardness}).

We analyze complete recharging without setup time first
(\cref{sec:complete}), because grouping jobs between full recharges is the
source of its hardness and the same grouping decision returns under both
policies once setup takes time. We then show that acquiring only the
energy needed reduces the partial-recharging variants without setup time
to ordering problems (\cref{sec:partial}).

\begin{table}[tbp]
\caption{Complexity of the 32 \SCP{} variants. P: solvable in polynomial
time. H: NP-hard. sH: strongly NP-hard when the setup time $h$ is part of
the input; for $C_{\max}$, $T_{\max}$, and $U$ this already holds at
$h=1$. Setup time changes ten entries, all under partial recharging.}
\label{tab:summary}
\centering
\begin{tabular}{lcccccccc}
\toprule
& \multicolumn{4}{c}{Partial recharging}
& \multicolumn{4}{c}{Complete recharging} \\
\cmidrule(lr){2-5}\cmidrule(lr){6-9}
Objective
  & $b_i{=}b$ & $e_i{=}e$ & $e_i{=}b_i$ & Arbitrary
  & $b_i{=}b$ & $e_i{=}e$ & $e_i{=}b_i$ & Arbitrary \\
\midrule
\multicolumn{9}{l}{\emph{Without setup time} ($h=0$)}\\[2pt]
$\SCT$     & P & P & P & H & P & H & H & H \\
$C_{\max}$ & P & P & P & P & P & H & H & H \\
$T_{\max}$ & P & P & P & P & P & H & H & H \\
$U$        & P & P & P & H & P & H & H & H \\
\midrule
\multicolumn{9}{l}{\emph{With setup time}}\\[2pt]
$\SCT$     & P & sH & sH & sH & P & sH & sH & sH \\
$C_{\max}$ & P & sH & sH & sH & P & sH & sH & sH \\
$T_{\max}$ & P & sH & sH & sH & P & sH & sH & sH \\
$U$        & P & sH & sH & sH & P & sH & sH & sH \\
\bottomrule
\end{tabular}
\end{table}

\section{Complete Recharging without Setup Time}
\label{sec:complete}

Having defined the model, we first study complete recharging without setup
time. Under this policy every recharge fills the battery, so the scheduler
does not choose a recharge amount; it chooses the job order and the
positions in that order at which the device recharges. In this section and
the next, $h=0$, and the recharge before the $j$-th job therefore lasts
$r_j=q_j\in\{0,\,\BM-B_{j-1}\}$, where $B_{j-1}$ is the battery level when
the recharge begins.

We show that equal energy demands make all four objectives solvable in
polynomial time and that unequal demands make all four NP-hard. We treat
the sum of completion times, the maximum tardiness, and the number of
tardy jobs in turn. We treat makespan last, because its analysis also
yields an approximation algorithm.

\subsection{Minimizing Sum of Completion Times (\SCT{})}

\subsubsection{Equal Energy Consumption ($b_i = b$)}

When every job consumes the same energy $b>0$, the device can execute at
most $m^* = \lfloor \BM / b \rfloor$ jobs consecutively without recharging.
(If $b=0$, then the device never recharges and shortest-job-first (SJF)
order is optimal.) For $b>0$, we first separate the choice of job order
from the choice of recharge positions.

\begin{lemma}[SJF within groups]
\label{lem:sjf-groups}
Suppose that the recharging intervals are fixed. Among all orderings of the
jobs within the resulting groups, SJF minimizes the sum of completion times.
\end{lemma}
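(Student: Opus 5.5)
Fix the recharge positions: groups $G_1,\dots,G_K$ occupy consecutive positions, and group $G_k$ has a fixed size $m_k\le m^*$. Because every job consumes $b$, the battery level before and after each group depends only on the sizes of the groups and not on which jobs they contain. Hence every recharge duration $r_j$ is determined by the positions alone. The total recharge time before position $j$, $R_j=\sum_{k\le j} r_k$, is therefore a constant independent of the job assignment. Feasibility is also independent of the assignment, since it depends only on the counts per group. Thus any permutation of the jobs over the positions is feasible with the same recharge durations.

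I then write the objective by position:
\[
\SCT=\sum_{j=1}^n C_j=\sum_{j=1}^n R_j+\sum_{j=1}^n (n-j+1)\,e_{\pi(j)} .
\]
The first sum is fixed. The second is a weighted sum of execution times with strictly decreasing weights $n-j+1$. By the rearrangement inequality, or equivalently an adjacent-interchange argument, it is minimized by placing the jobs in nondecreasing order of $e_i$ across the positions. That is SJF over the whole sequence, which is SJF within each group, with the shortest jobs assigned to the earliest groups.

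For the interchange I would argue as follows. If positions $j<j'$ hold jobs with $e_{\pi(j)}>e_{\pi(j')}$, swapping them leaves every $R_j$ unchanged and changes the cost by $-(j'-j)(e_{\pi(j)}-e_{\pi(j')})\le 0$. Repeated swaps therefore reach SJF without increasing the cost.

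The only delicate step is the claim that the recharge durations do not depend on the assignment. This needs the equal-demand assumption together with the complete-recharging rule $q_j\in\{0,\BM-B_{j-1}\}$. I would verify it by induction on the positions: $B_{j-1}$ is determined by $\Bz$, $b$, and the recharge positions before $j$, and so is $q_j$. The rest of the proof is the standard SJF exchange.
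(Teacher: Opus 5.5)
Your proof is correct and follows essentially the paper's argument: the paper likewise splits $\sum_j C_j$ into the recharge term $\sum_j (n-j+1) r_j$, which equal energy demands leave unchanged under reordering, and the execution term $\sum_j (n-j+1) e_{\pi(j)}$, which SJF minimizes. Your induction on battery levels and your explicit interchange computation fill in details that the paper states in one line. Your remark that global SJF is also optimal matches how the paper uses the lemma in the proof of \cref{thm:sct-complete-eq}.
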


\begin{proof}
Because all jobs consume the same energy, we can reorder the jobs within a
group without changing the recharge durations. We can therefore minimize
the group's contribution from execution times by placing its jobs in SJF
order. We give the decomposition into execution and recharge contributions
in the appendix (\cref{apx:lem:sjf-groups}).
\end{proof}

With SJF order inside each group, we still need to determine how many jobs
each group contains. The next lemma reduces this choice: we can fill the
interior groups to their maximum size without increasing the sum of
completion times.

\begin{lemma}[Full interior groups]
\label{lem:full-groups}
There exists an optimal schedule in which every group of jobs that lies strictly
between the first and last recharging intervals has exactly $m^*$ jobs.
\end{lemma}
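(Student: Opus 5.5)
We would first fix the job order and treat the recharge positions as the only unknowns. The exchange arguments below only move recharge boundaries and never reorder jobs, so we may keep any fixed job order (for instance the SJF order within groups from \cref{lem:sjf-groups}). Under complete recharging with equal demands and $h=0$, every group $G$ that follows a recharge starts with a full battery and leaves $\BM-|G|b$. The next recharge therefore lasts exactly $|G|\,b$. Let the groups after the first recharge be $G_1,\ldots,G_K$, with $G_K$ the last group. The recharges after the first one then have durations $|G_1|b,\ldots,|G_{K-1}|b$, and the recharge following $G_t$ delays every job from $G_{t+1}$ onward. Hence
\[
\SCT=\Bigl(\text{execution part}\Bigr)+(\text{first recharge})\cdot N_1+\sum_{t=1}^{K-1}|G_t|\,b\,N_{t+1},
\]
where $N_t$ is the number of jobs in $G_t,\ldots,G_K$. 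The first recharge and the jobs before it stay unchanged throughout.

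\emph{Step 1 (sizes can be permuted).} We would show that swapping the sizes of two adjacent groups $G_t,G_{t+1}$ leaves $\SCT$ unchanged. The swap keeps the same jobs in the same order and moves only the boundary between the two groups. Write $k=|G_t|$, $k'=|G_{t+1}|$ and $M=N_{t+2}$ (with $M=0$ if $G_{t+1}$ is last). The two affected recharge terms are $kbN_{t+1}+k'bM$ before the swap and $k'b(N_{t+1}-k'+k)+kbM$ after it. Using $N_{t+1}=k+k'+M$, both equal $kb(k'+M)+k'bM$ up to the common term. The execution part and all other terms are unchanged, because the positions of the other boundaries are fixed. Since this also holds when $G_{t+1}$ is the last group, we may sort all of $G_1,\ldots,G_K$ into nonincreasing size without changing the cost.

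\emph{Step 2 (pulling a job forward).} Suppose that $G_t$ with $t<K$ has $k<m^*$ jobs and that the next group $G_{t+1}$ has $k'\le k$ jobs. We move the recharge that separates them one position later, so that the first job $x$ of $G_{t+1}$ joins $G_t$. This is feasible because $k+1\le m^*$. Job $x$ completes $kb$ earlier, since the recharge of $kb$ no longer precedes it. The remaining $k'-1$ jobs of $G_{t+1}$ now wait behind a recharge of $(k+1)b$ instead of $kb$, so each is delayed by $b$. The recharge after $G_{t+1}$, if there is one, shrinks by $b$, so every later job is unaffected. The net change is $b(k'-1-k)<0$. If $G_{t+1}$ becomes empty, two boundaries merge, and the cost again falls.

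\emph{Step 3 (conclusion).} Take an optimal schedule and apply Step 1 to sort its groups nonincreasingly. If some group other than $G_K$ were not full, let $G_t$ be the last such group among $G_1,\ldots,G_{K-1}$. By the sorting, $|G_{t+1}|\le|G_t|<m^*$, so Step 2 would strictly decrease $\SCT$, a contradiction. Hence $G_1,\ldots,G_{K-1}$, which include every group strictly between the first and last recharges, all have exactly $m^*$ jobs.

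The main obstacle is the delay bookkeeping in Step 2, in particular the case where $G_{t+1}$ is the last group and nothing compensates the delayed jobs. This case is exactly why we first establish that adjacent swaps are cost-neutral even for the last group: the swap ensures that the group we pull from is no larger than the non-full one.
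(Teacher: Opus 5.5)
Your proof is correct, but it is organized differently from the paper's. The paper works directly on an optimal schedule. For a non-full interior group of size $m_a$ whose successor has size $m_{a'}$, it first rules out $m_a+m_{a'}\le m^*$, because merging would strictly reduce the objective. Otherwise it moves $m^*-m_a$ jobs across the boundary in a single step, with cost change $(m_a-m^*)(m^*-m_{a'})\,b\le 0$. This bound holds whatever the relative sizes of the two groups, so no preprocessing is needed, and iterating from left to right fills every interior group.

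You instead move one job at a time. That move only helps when the successor is no larger, since the change $b(k'-1-k)$ is positive when $k'>k+1$. You secure this condition by showing that adjacent size swaps are cost-neutral. Equivalently, the recharge contribution after the first recharge is $b\sum_{s<t}|G_s|\,|G_t|=\tfrac{b}{2}\bigl(N_1^2-\sum_t|G_t|^2\bigr)$, which is symmetric in the group sizes. Your route is slightly longer but proves more: every optimal schedule has at most one non-full group after its first recharge, and that group can be placed in any position at no cost.

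One small correction to your closing remark: the obstacle in Step 2 is not specific to the case where $G_{t+1}$ is the last group. The shrinking recharge after $G_{t+1}$ always compensates every later job. The $k'-1$ delayed jobs of $G_{t+1}$ are never compensated, in any position. So the sorting step is needed to guarantee $k'\le k$ at every boundary, not only at the last one. You should also state the standing assumption $b>0$, without which $m^*$ is undefined.
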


\begin{proof}
If an interior group is not full, then we move jobs into it from the next
group while preserving their order. This changes where we recharge but
does not increase the recharge contribution to the objective
(\cref{apx:lem:full-groups}).
\end{proof}

These two lemmas leave a single choice, the number of jobs that execute
before the first recharge. Once we make that choice, every later group is
full except possibly the last. We show next that SJF order applies across
the whole schedule, so that after sorting we can evaluate every first-group
size in constant time.

\begin{theorem}[\SCT{}, equal energy]
\label{thm:sct-complete-eq}
Minimizing the sum of completion times under complete recharging with equal
energy demand is solvable in $O(n \log n)$ time.
\end{theorem}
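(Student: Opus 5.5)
We reduce the problem to one choice, the size $k$ of the first group, and show that every value of $k$ can be scored in constant time. By \cref{lem:full-groups} we may restrict attention to schedules in which all interior groups hold exactly $m^*$ jobs. Such a schedule is fixed by $k$, the number of jobs executed before the first recharge, with $0\le k\le\min(n,\lfloor \Bz/b\rfloor)$. The remaining $n-k$ jobs then form $\lceil (n-k)/m^*\rceil$ groups, all full except possibly the last. The recharge durations in this pattern depend only on $k$, because all demands equal $b$. The first recharge lasts $\BM-(\Bz-kb)$. Each later recharge lasts $\BM-(\BM-m^*b)=\BM-B_{\mathrm{rem}}$, where $B_{\mathrm{rem}}=\BM-m^*b$ is the residual energy left after a full group.

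Next we argue that a single global SJF order is optimal for every pattern, not only SJF within each group as in \cref{lem:sjf-groups}. Fix the pattern, so that the positions and durations of the recharges are fixed. The sum of completion times then splits as
\[
\SCT=\sum_{j=1}^n (n-j+1)e_{\pi(j)}+\sum_{j=1}^n (n-j+1)r_j .
\]
Equal demands make every permutation feasible with the same $r_j$, so the second sum is independent of $\pi$. A pairwise interchange argument shows that the first sum is minimized by placing the jobs in nondecreasing $e$ order. So we sort once, in $O(n\log n)$ time, and use the sorted order for every candidate $k$.

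We then compute $\SCT(k)$ in $O(1)$ amortized time per $k$. With the sorted execution times and prefix sums, the execution term $\sum_j (n-j+1)e_j$ is a constant independent of $k$. The recharge term is $\sum_{\ell}(n-p_\ell+1)\rho_\ell$, where recharge $\ell$ occurs before position $p_\ell$ and has duration $\rho_\ell$. The first recharge sits at position $k+1$ with duration $\BM-\Bz+kb$. The later recharges sit at positions $k+1+tm^*$ for $t=1,\dots,T(k)$, where $T(k)=\lceil (n-k)/m^*\rceil-1$, and each has duration $\BM-B_{\mathrm{rem}}$. Their contribution is $(\BM-B_{\mathrm{rem}})\sum_{t=1}^{T(k)}(n-k-tm^*)$, an arithmetic series with a closed form in $k$. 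If $k=n$, no recharge occurs. If $\Bz$ is too small for the first job, then $k=0$ is forced.

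Minimizing over $O(n)$ values of $k$ costs $O(n)$ after sorting, which gives the $O(n\log n)$ bound. The main obstacle is the global-SJF step. We must check that \cref{lem:full-groups} and the exchange argument can be applied together, that is, that the full-interior normalization does not depend on the job order. It does not, because the recharge durations depend only on the group sizes. We must also handle the boundary cases: the last group may be short, there may be a single recharge, $k$ is bounded by $\Bz/b$, and in the degenerate case $m^*=0$ the instance is infeasible.
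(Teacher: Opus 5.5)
Your proposal is correct and matches the paper's proof. Both arguments use global SJF order, justified by the equal-energy exchange. Both fill the interior groups by \cref{lem:full-groups} and evaluate each first-group size $k$ in $O(1)$: your $T(k)=\lceil (n-k)/m^*\rceil-1$ equals the paper's $\ell=\lfloor (N-1)/m\rfloor$, and your arithmetic series reproduces its $R(p)$. The only difference is that you include the no-recharge case in the enumeration as $k=n$, whereas the paper treats $b=0$ and $nb\le\Bz$ separately. You should state $b=0$ explicitly, since $m^*$ is then undefined.
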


\begin{proof}
By \cref{lem:sjf-groups}, we sort jobs in SJF order.
Because all jobs consume equal energy $b$, the recharging time contributed to
the objective depends only on the group sizes, not on which job occupies which
group. Global SJF order minimizes the execution contribution and therefore
minimizes the objective across groups as well as within each group.
If $b=0$ or $nb\le\Bz$, then the SJF schedule needs no recharge and is optimal.
Otherwise set $m=m^*=\lfloor\BM/b\rfloor$.
By \cref{lem:full-groups}, we can fill every group after the first recharge
to $m$ jobs except the last. Enumerate the number $p$ of jobs
before the first recharge from $0$ to $\lfloor\Bz/b\rfloor<n$.
For each $p$, put $N=n-p$ and $\ell=\lfloor(N-1)/m\rfloor$.
The first recharge lasts $\BM-\Bz+pb$ and delays all $N$ remaining jobs;
the $\ell$ later recharges each last $mb$, since a full group leaves residual
charge $\BM-mb$. Hence the recharge contribution is
\[
 R(p)=N(\BM-\Bz+pb)
       +mb\left(\ell N-\frac{m\ell(\ell+1)}2\right).
\]
The execution contribution is independent of $p$, so we evaluate each
choice in $O(1)$ time and all choices in $O(n)$ time.
Sorting dominates the running time, giving $O(n \log n)$ overall.
\end{proof}

The algorithm relies on the fact that, with equal demands, the recharge
durations depend on the group sizes alone. This property fails for the
other three job classes.

\subsubsection{Equal Execution Time ($e_i = e$) and Other Cases}

When energy demands differ, the recharge duration of a group depends on
which jobs it contains and not only on how many. We use this selection
problem to prove NP-hardness: in our construction, choosing the groups is
equivalent to finding an equal-cardinality partition, even when all jobs
have the same execution time.

\begin{theorem}[Hardness of \SCT{}]
\label{thm:sct-complete-nph}
Minimizing the sum of completion times under complete recharging is NP-hard
when jobs have equal execution time ($e_i = e$), when energy demand equals
execution time ($e_i = b_i$), or when the relationship is arbitrary.
\end{theorem}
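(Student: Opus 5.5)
We reduce from \emph{Equal-Cardinality Partition}, which is NP-complete. An instance consists of $2k$ positive integers $a_1,\dots,a_{2k}$ with $\sum_i a_i=2T$. The question is whether some set $A$ of exactly $k$ indices has $\sum_{i\in A}a_i=T$. We handle the class $e_i=e$ first. Hardness for the arbitrary class then follows, because that class contains it. The class $e_i=b_i$ needs a modified cost identity, discussed at the end.

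\emph{Construction for $e_i=e$.} Pick a large offset $M$, say $M>4kT$, and create $2k$ jobs with $e_i=1$ and $b_i=a_i+M$. Set $\BM=\Bz=kM+T$, so the battery starts full. Since $M>T$, no group can contain $k+1$ jobs, because $(k+1)M>\BM$. Hence every feasible schedule has at least one positive recharge, and the last group has at most $k$ jobs.

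Consider a schedule with a single recharge placed after the first $k$ jobs, whose index set we call $A$. The second group is feasible exactly when $kM+2T-\sum_A a_i\le\BM$, that is, when $\sum_A a_i\ge T$. The recharge lasts $kM+\sum_A a_i$ and delays the $k$ later jobs. Its cost is therefore
\[
 \SCT=\frac{2k(2k+1)}2+k\Bigl(kM+\sum_{i\in A}a_i\Bigr)
\]
(the first term is the execution contribution). This is at most the threshold $\Theta=k(2k+1)+k(kM+T)$ if and only if $\sum_A a_i=T$. So a yes-instance yields a schedule of cost $\Theta$.

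\emph{Converse.} We must show that no other recharge pattern reaches $\Theta$. Note that a recharge at position $0$ has $q=0$, since the battery is full, so it is not a session. Let the positive recharges occur after $p_1<p_2<\dots$ jobs. The last group has at most $k$ jobs, so the last recharge position is at least $k$. Each recharge refills the energy used since the previous one, which is at least $M$ times the number of jobs run in between.

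For one recharge at position $p$, feasibility of both groups forces $p=k$. For two or more recharges, we keep only the $M$-terms. With the second position at least $k$, and since pushing it later only adds cost, the $M$-coefficient is at least
\[
 p_1(2k-p_1)+(k-p_1)k=k^2+p_1(k-p_1).
\]
When $1\le p_1\le k-1$, the excess $p_1(k-p_1)\ge k-1\ge1$ contributes at least $M>4kT$. That exceeds every $a$-dependent term in $\Theta$, which totals at most $2k\cdot 2T$. The case $p_1\ge k$ with an additional recharge is dominated in the same way. Hence any schedule of cost at most $\Theta$ has exactly one recharge, at position $k$, with $\sum_A a_i=T$, which encodes a valid partition.

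The main obstacle is the class $e_i=b_i$. There the execution contribution also depends on which jobs are in the first group and on their order within groups. We would set $e_i=b_i=a_i+M$ and rederive an exact cost identity. By \cref{lem:sjf-groups}-style exchange, each group runs in SJF order, so the execution contribution of a fixed partition $(A,\bar A)$ becomes an explicit function of $\sum_A a_i$ plus rank-dependent terms. To kill the rank terms, which are of order $kT$, we would scale the instance: replace $a_i$ by $a_i\cdot D$ plus a small tie-breaking perturbation, or add $k$ extra rigid filler jobs. The goal is that the leading dependence on $\sum_A a_i$ is strictly monotone and forced to equal $T$ by the feasibility bound $\sum_A a_i\ge T$. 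We would also need the rank terms to stay below the scaled gap, so that the threshold still separates partitions exactly; this is the one step we expect to need care.
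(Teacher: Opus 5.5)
Your $e_i=e$ construction is the paper's: the same offset $b_i=M+a_i$, the same battery $\BM=\Bz=kM+T$, and the same threshold $k(2k+1)+k\BM$. The main gap is that the $e_i=b_i$ class is left as a plan. The theorem asserts that class separately, and it does not follow from $e_i=e$; only the arbitrary class is implied.

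The missing idea is that the cost never needs to pin down $\sum_{i\in A}a_i$. In a two-group schedule the first group runs on $\Bz=kM+T$, so feasibility alone gives $\sum_{i\in A}a_i\le T$. The second group gives $\sum_{i\in A}a_i\ge T$. Hence every feasible $(k,k)$ schedule already encodes an equipartition. You used only the second constraint and let the cost threshold supply the first, which is why you thought an exact cost identity was needed. The threshold only has to separate the $(k,k)$ pattern from all other patterns. With $e_i=b_i=M+a_i$, the execution contribution is $Mk(2k+1)+\sum_j(2k-j+1)a_{\pi(j)}$, whose order-dependent part lies between $2T$ and $4kT$. A YES instance therefore has a schedule of cost at most $M(3k^2+k)+5kT$. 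Any other pattern costs more than $M(3k^2+k)+M(k-1)$, which exceeds that threshold once $M(k-1)\ge 5kT$; for $k\ge2$ the paper takes $M=6kT$. No rescaling, perturbation, or filler jobs are required.

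The lower bound for other patterns is also where your $e_i=e$ converse is incomplete, because you treat only two recharges. With three or more, the second recharge need not come at position $k$ or later. For $k=5$, groups of sizes $1,1,3,5$ are feasible (put the five smallest $a_i$ last) and have $p_2=2$, so your bound $k^2+p_1(k-p_1)$ is not derived for them. Even for two recharges, ``pushing it later only adds cost'' is false: $(p_2-p_1)(2k-p_2)$ is concave in $p_2$. The bound holds only because the feasible window $k\le p_2\le p_1+k$ has equal endpoint values $k(k-p_1)$.

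A uniform argument covers every pattern. Since the battery starts full, the recharge before group $g$ equals the energy of group $g-1$. So the coefficient of $M$ in the recharge contribution is $\sum_{f<g}m_fm_g=\bigl(4k^2-\sum_g m_g^2\bigr)/2$. Every pattern other than $(k,k)$ has at least three nonempty groups of at most $k$ jobs each, so $\sum_g m_g^2\le k^2+(k-1)^2+1$ and the coefficient is at least $k^2+k-1$. This gives the excess of $M(k-1)$ used above and closes both cases.
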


\begin{proof}
We reduce from EquiPartition by constructing jobs that can form exactly
two full groups if and only if the partition exists. We then show that
every other grouping exceeds the decision threshold. We choose separate
thresholds for equal execution times and for $e_i=b_i$
(\cref{apx:thm:sct-complete-nph}).
\end{proof}

\subsection{Minimizing Maximum Tardiness ($T_{\max}$)}

\subsubsection{Equal Energy Consumption ($b_i = b$)}

To minimize the maximum tardiness, we first determine whether we can
choose the job order independently of the recharge positions. With equal
energy demands, exchanging two jobs changes no recharge duration, and an
exchange argument then shows that earliest-deadline-first (EDF) order is
optimal.

\begin{lemma}[EDF with equal energy]
\label{lem:edf-complete}
When all jobs consume the same energy $b$, scheduling in EDF order
minimizes maximum tardiness under complete recharging.
\end{lemma}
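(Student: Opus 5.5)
We would prove \cref{lem:edf-complete} by a pairwise interchange argument that keeps the recharge decisions attached to \emph{positions} rather than to jobs. Take an optimal schedule, given by a permutation $\pi$ and recharge amounts $q_1,\ldots,q_n$, each either $0$ or $\BM-B_{j-1}$. If it is not in EDF order, then some adjacent pair at positions $j$ and $j+1$ has $d_{\pi(j)}>d_{\pi(j+1)}$. We swap the two jobs and keep the same set of positions at which the device recharges. We then show that the new schedule is feasible and that its maximum tardiness does not increase. Repeating the swap, as in bubble sort, reaches EDF order after finitely many steps with $T_{\max}$ no larger than optimal.

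The first step is feasibility and timing. Every job consumes $b$, so the battery levels $B_0,B_1,\ldots$ depend only on which positions recharge, and not on which jobs occupy them. A recharge at position $k$ fills the battery to $\BM$. Between recharges the level falls by $b$ per job. Hence the sequence $B_k$, the feasibility condition $b\le B_{k-1}+q_k\le\BM$, and the durations $r_k=q_k$ are unchanged by the swap. The complete-recharging constraint also still holds, because $q_k\in\{0,\BM-B_{k-1}\}$ refers to the same $B_{k-1}$. Every job outside positions $j$ and $j+1$ keeps its completion time. The recharge $r_{j+1}$ before position $j+1$ stays in place, so the two swapped jobs still occupy the same interval $[C_{j-1},C_{j+1}]$, with $r_j$ and $r_{j+1}$ at fixed positions.

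The second step is the tardiness comparison, which is the only point that needs care. The execution times differ, so the intermediate time $C_j$ changes. Write $x=\pi(j)$ and $y=\pi(j+1)$, with $d_x>d_y$. Before the swap, the pair contributes $\max(C_j-d_x,\;C_{j+1}-d_y)$, and this is at least $C_{j+1}-d_y$. After the swap, $y$ finishes at $C_{j-1}+r_j+e_y\le C_{j+1}$, so its lateness is at most $C_{j+1}-d_y$. Job $x$ now finishes at $C_{j+1}$, so its lateness is $C_{j+1}-d_x<C_{j+1}-d_y$. Both new values are therefore bounded by the old maximum, and $T_{\max}$ does not increase.

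Ties in deadlines need no swaps. Counting inversions shows that the procedure terminates. So there is an optimal schedule in EDF order, and the recharge positions for that order can be chosen by a separate procedure. The main thing to verify is the claim that recharge durations are invariant under the swap. This holds because each recharge duration $\BM-B_{k-1}$ depends only on the number of jobs since the previous recharge, which is exactly where equal demands are used.
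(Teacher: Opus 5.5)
Your proposal is correct and follows essentially the same route as the paper: an adjacent-inversion swap that keeps the recharge durations attached to positions. Equal demands preserve every battery level, hence feasibility and every completion outside the pair, and both new lateness values are bounded by the old lateness $C_{j+1}-d_y$ of the earlier-deadline job.
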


\begin{proof}
Let $\Pi$ contain an adjacent inversion: $J_j$ immediately precedes $J_i$
although $d_i<d_j$. Swap these two jobs while keeping the recharging
durations attached to their positions. Equal energy demand preserves
feasibility and every battery level at a slot boundary.
The swap preserves the completion of the second position; write it as $t$.
After the swap, $J_i$ completes by $t$ and $J_j$ completes at $t$.
Both new tardiness values are at most $(t-d_i)^+$, the old tardiness of
$J_i$. The swap preserves every other completion, so maximum tardiness does
not increase.
Iterating over all inversions yields an EDF schedule that is at least as good as
$\Pi$.
\end{proof}

EDF order determines the sequence of the jobs but not the positions at
which the device recharges. We use a dynamic program to choose these
positions, which are the boundaries between groups.

\begin{theorem}[Maximum tardiness, equal energy]
\label{thm:maxt-complete-eq}
Minimizing maximum tardiness under complete recharging with equal energy
consumption is solvable in $O(n^2)$ time.
\end{theorem}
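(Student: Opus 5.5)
By \cref{lem:edf-complete}, we first sort the jobs in EDF order in $O(n\log n)$ time and fix this order. The only remaining decision is the set of positions at which the device recharges. If $b=0$ or $nb\le\Bz$, then no recharge is needed, and EDF without recharging is optimal. Otherwise set $m=m^*=\lfloor\BM/b\rfloor$ and $p_0=\lfloor\Bz/b\rfloor$.

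With equal demands, the battery level at every group boundary depends only on how many jobs have executed since the last recharge. The first recharge, if it occurs after $p\le p_0$ jobs, lasts $\BM-\Bz+pb$. A recharge after a group of $k\le m$ jobs that started full lasts $kb$. So a schedule is a segmentation of the EDF sequence $1,\dots,n$ into an initial group of size $p\in\{0,\dots,p_0\}$ followed by groups of sizes in $\{1,\dots,m\}$. The recharge durations are determined by these sizes.

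We then set up a dynamic program over prefixes. Let $F(i)$ denote the minimum, over feasible segmentations of jobs $1,\dots,i$ whose last group ends at job $i$ (with a recharge to follow), of the pair (maximum lateness so far, completion time $C_i$). A pair is not a scalar, so I would avoid it. The key observation is that $C_i$ is determined by $i$ and the segmentation: $C_i=\sum_{k\le i}e_k+Q_i$, where $Q_i$ is the energy acquired before job $i$. For full-battery groups, $Q_i$ equals $\BM-\Bz$ plus $b$ times the number of jobs executed before the last recharge that precedes job $i$. Thus $C_i$ does not depend on how the earlier jobs were grouped, only on the position $j$ of the last recharge, provided there was a first recharge. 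Indeed, $Q_i=(\BM-\Bz)+(j-1)b$ whenever the last recharge occurs just before job $j\ge1$.

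This decouples the problem. The energy acquired at any point equals the energy consumed so far minus what remains, which is fixed by the position of the last recharge. Define $G(j)$ as the minimum maximum lateness of jobs $1,\dots,j-1$ over feasible segmentations in which a recharge occurs immediately before job $j$. With $E_i=\sum_{k\le i}e_k$ and $\Delta=\BM-\Bz$, we get
\[
 G(j)=\min\Bigl\{\max_{k<j}(E_k-d_k)\ \text{if } j-1\le p_0,\ \ \min_{j-m\le j'<j}\max\bigl(G(j'),\max_{j'\le k<j}(E_k+\Delta+(j'-1)b-d_k)\bigr)\Bigr\}.
\]
The first option says that no earlier recharge occurred; the second says that the previous recharge was before $j'$. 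The answer is $\max(0,\min_{j\ge n-m+1}G'(j))$, where $G'$ closes the last group at $n$, together with the no-recharge option if it is feasible. Feasibility of each group is simply its size bound.

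For the running time, I would compute the inner maximum incrementally as $j$ grows for fixed $j'$, or equivalently as $j'$ decreases for fixed $j$. Each maximum then costs $O(1)$ amortized, giving $O(n\cdot m)\subseteq O(n^2)$ after the sort. The main obstacle is justifying the decoupling step: that the completion time of a job depends only on the position of the most recent recharge, not on the earlier grouping. It follows from energy conservation, since a full battery after every recharge resets the state. I would verify it carefully for the first group, whose starting level is $\Bz$ rather than $\BM$. That group is the reason for the separate first-option term and the cap $p\le p_0$.
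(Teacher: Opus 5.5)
Your proposal is correct and takes essentially the paper's route. It uses EDF order from \cref{lem:edf-complete}, and the energy-conservation fact that a complete recharge before position $j$ brings the cumulative recharging time to $\BM-\Bz+(j-1)b$ however the earlier jobs were grouped. It then runs a prefix dynamic program over the position of the last recharge, with a separate option for a recharge-free first group, and evaluates it in $O(n^2)$ time with running maxima. Your $G(j)$ is the paper's $\tau_j$ written with lateness and a final $\max(0,\cdot)$, and your window $j-m\le j'<j$ plays the role of the paper's infinite cost for groups that exceed the capacity.
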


\begin{proof}
For a fixed EDF order, consider a group that begins with a recharge. We
can determine the cumulative recharging time from the position of its first
job, independently of how we grouped the preceding jobs, by energy
conservation. We also allow the first group to use the initial charge
without recharging. We can therefore evaluate the maximum tardiness of
each possible group and use a dynamic program to choose where one group
ends and the next begins
(\cref{apx:thm:maxt-complete-eq}).
\end{proof}

\subsubsection{Other Cases}

When energy demands differ, we obtain NP-hardness from the corresponding
makespan problem, which we show to be NP-hard in
\cref{sec:complete-makespan}. If every deadline is zero, then the
tardiness of each job equals its completion time, so the maximum tardiness
equals the makespan.

\begin{theorem}[Hardness of maximum tardiness]
\label{thm:maxt-complete-nph}
Minimizing maximum tardiness under complete recharging is NP-hard when jobs have
equal execution time ($e_i = e$), when $e_i = b_i$, or when the relationship is
arbitrary.
\end{theorem}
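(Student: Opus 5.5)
The plan is to reduce from the makespan problem under complete recharging. That problem is shown NP-hard in \cref{sec:complete-makespan} (\cref{thm:makespan-nph}) for each of the three job classes $e_i=e$, $e_i=b_i$, and arbitrary. The reduction is the one sketched just before the statement. Given a makespan instance $(\mathcal{J},\BM,\Bz)$ and a threshold $T$, we build a maximum-tardiness instance with the same jobs, the same $e_i$ and $b_i$, the same battery parameters, and the deadline $d_i=0$ for every job. Negative or zero deadlines are permitted by the model, and the construction changes neither $e_i$ nor $b_i$. The job class of the instance is therefore preserved, and one reduction covers all three cases at once.

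The key step is the identity $T_{\max}=C_{\max}$ for every feasible schedule of the constructed instance. Every schedule is feasible for both instances, since feasibility depends only on $e_i$, $b_i$, $\BM$, $\Bz$ and the complete-recharging rule. With all deadlines zero, the tardiness of the $j$-th job is $\max(0,C_j)=C_j$. Completion times are nondecreasing in position by \eqref{eq:completion}, because $r_j,e_{\pi(j)}\ge0$. Hence
\[
T_{\max}=\max\Bigl(0,\max_j C_j\Bigr)=C_n=C_{\max}.
\]
It follows that a schedule with makespan at most $T$ exists if and only if a schedule with maximum tardiness at most $T$ exists. The construction is clearly polynomial, so NP-hardness transfers.

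The only point needing care is that \cref{thm:makespan-nph} must indeed give hardness separately in each of the three classes, including $e_i=e$ and $e_i=b_i$, and not only for arbitrary parameters. Hardness for a restricted class transfers to any superclass, so the arbitrary case follows from either restricted case. I expect this to be guaranteed by the makespan section, whose subset-sum reductions (\cref{thm:ssmin-reduct,thm:ssmax-reduct}) are stated for those classes. If one of the classes were not covered there, I would adapt the makespan construction directly, for example by padding execution times to a common value $e$, which shifts $C_{\max}$ by the constant $ne$ and leaves the recharging structure unchanged.
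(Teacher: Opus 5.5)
Your proposal is correct and is essentially the paper's proof: assign $d_i=0$ to every job of a makespan decision instance from \cref{thm:makespan-nph}, observe that nonnegative completion times give $T_{\max}=C_{\max}$ for every feasible schedule, and note that the map preserves each job class. The per-class hardness you were unsure about comes directly from \cref{thm:makespan-nph}, whose Subset-Sum construction sets $e_j=1$ or $e_j=b_j=a_j$; the SS-MIN/SS-MAX results are approximation-preserving reductions rather than hardness proofs, so your padding fallback is not needed.
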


\begin{proof}
We assign deadline zero to each job of a makespan decision instance.
Maximum tardiness then equals makespan for every schedule
(\cref{apx:thm:maxt-complete-nph}).
\end{proof}

\subsection{Minimizing Number of Tardy Jobs ($U$)}

\subsubsection{Equal Energy Consumption ($b_i = b$)}

To minimize the number of tardy jobs, we select a largest set of jobs that
can all meet their deadlines, execute those jobs in EDF order, and place
the remaining jobs afterward. We use a dynamic program to make the
selection (\cref{alg:numtardy-complete}). Its state records the number of
jobs selected and the battery level after the last of them. For each
state, the program retains the earliest completion time, because an
earlier completion leaves at least as much time for every later job as a
later completion would.

\begin{algorithm}[tbp]
\caption{Minimize $U$: complete recharging, $b_i = b$}
\label{alg:numtardy-complete}
\begin{algorithmic}[1]
\Require Jobs $J_1,\ldots,J_n$ in EDF order; equal energy $b$
\Ensure Maximum number of jobs that can meet their deadlines
\State $D[0][\Bz] \gets 0$ and $D[t][\beta] \gets \infty$ otherwise
       \Comment{$D[t][\beta]$: least completion of the $t$-th timely job, ending at battery $\beta$}
\For{$i = 1$ to $n$}
  \State $D' \gets D$                                       \Comment{$J_i$ tardy}
  \For{each $(t,\beta)$ with $L \gets D[t][\beta] < \infty$}
    \If{$\beta \ge b$ and $L + e_i \le d_i$}                \Comment{$J_i$ extends the group}
      \State $D'[t{+}1][\beta{-}b] \gets \min\!\bigl(D'[t{+}1][\beta{-}b],\ L + e_i\bigr)$
    \EndIf
    \If{$L + (\BM{-}\beta) + e_i \le d_i$}                  \Comment{$J_i$ starts a new group}
      \State $D'[t{+}1][\BM{-}b] \gets \min\!\bigl(D'[t{+}1][\BM{-}b],\ L + (\BM{-}\beta) + e_i\bigr)$
    \EndIf
  \EndFor
  \State $D \gets D'$
\EndFor
\State \Return $\max\{\,t : D[t][\beta] < \infty \text{ for some } \beta\,\}$
\end{algorithmic}
\end{algorithm}

\begin{theorem}[Tardy jobs, equal energy]
\label{thm:numtardy-complete-eq}
The dynamic program minimizes the number of tardy jobs under complete
recharging when all jobs consume the same energy (\cref{alg:numtardy-complete}).
It runs in $O(n^3)$ time.
\end{theorem}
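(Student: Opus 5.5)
The proof has three parts: a reduction to EDF schedules with all tardy jobs at the end, correctness of the state space, and a count of reachable states.

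\emph{Structural reduction.} Take an optimal schedule and move every tardy job to the end. Removing a tardy job from its position, and dropping its recharge if it had one, does not make any on-time job later. Equal energy demand is what keeps the remaining schedule feasible. After the removal, every battery level at a slot boundary is either unchanged or larger by a multiple of $b$. In the complete-recharge model, a recharge refills the battery to $\BM$ whatever the level before it, so every later group behaves as before. The only change is that a recharge made on a fuller battery is shorter. The removed jobs can then run afterward in any order with recharges inserted as needed, since they are already tardy. Next, reorder the on-time jobs into EDF order by the adjacent-swap argument of \cref{lem:edf-complete}, keeping recharges attached to positions. Equal energy preserves feasibility and the battery levels, and the swap keeps both jobs on time. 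So some optimal schedule consists of an EDF-ordered on-time subsequence followed by the tardy jobs, and the problem becomes choosing a largest subset that can be scheduled on time in EDF order with some recharge pattern.

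\emph{Correctness of the DP.} I would prove by induction on $i$ the following invariant. For every $t$ and $\beta$, $D[t][\beta]$ equals the least completion time over all feasible on-time partial schedules that use $t$ jobs from $J_1,\dots,J_i$ in EDF order and end with battery level $\beta$. The transitions cover the three possibilities for $J_i$: it is skipped; it runs without a recharge, which needs $\beta\ge b$; or it runs after a complete recharge of duration $\BM-\beta$, leaving battery $\BM-b$. Keeping only the minimum completion time per state is justified by a dominance argument. For fixed $(t,\beta)$, the continuation options depend only on $\beta$, and every later completion time is monotone in the current time, so an earlier prefix can replace a later one. This dominance step is the one I expect to need the most care. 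It holds because recharge durations depend only on the battery level and not on the time, so two prefixes in the same state admit identical continuations, shifted in time. The answer is the largest $t$ with a finite entry, and the tardy jobs are appended.

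\emph{Running time.} $\beta$ cannot range over all of $[0,\BM]$, because that would make the algorithm pseudo-polynomial. The reachable battery levels are exactly $\Bz-kb$ for $0\le k\le\lfloor\Bz/b\rfloor$ before the first recharge, and $\BM-kb$ for $1\le k\le m^*$ afterward, and only $O(n)$ of them are relevant since at most $n$ jobs run. I would index these levels by $k$ together with a flag for whether a recharge has occurred. This gives $O(n)$ battery states, $O(n)$ values of $t$, and $n$ iterations with $O(1)$ work per state, for $O(n^3)$ time in total. The EDF sort takes $O(n\log n)$, which is dominated by this bound.
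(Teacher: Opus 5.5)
Your proposal is correct and follows the paper's proof. Both use an EDF-ordered timely prefix and the dynamic program over $(t,\beta)$ that keeps only the earliest completion. That choice is justified by the same dominance argument: continuations depend only on $\beta$. The running time comes from the same count of $O(n)$ reachable levels $\Bz-jb$ and $\BM-jb$.

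One fix is needed in your structural step. When a deleted tardy job was preceded by a recharge, keep that recharge in place before the next job rather than dropping it. Otherwise the rest of its group can run out of energy; for example, with $\BM=2b$ and $\Bz=b$, a job that ran on the initial charge leaves the battery empty for its successor. With the recharges retained, your claim that battery levels only rise is true, and this step does not even need equal demands.
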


\begin{proof}
Suppose two partial schedules have the same timely-job count and battery
level but different completion times. Any continuation that meets its
deadlines after the later completion can also meet them after the earlier
one. We can therefore retain only the earliest completion for each state
without losing an optimal solution
(\cref{apx:thm:numtardy-complete-eq}).
\end{proof}

\subsubsection{Other Cases}

When jobs have equal execution times, $e_i = b_i$, or an arbitrary
relationship, minimizing $U$ under complete recharging is NP-hard.
To prove this, we take a makespan decision instance of the corresponding
variant from \cref{thm:makespan-nph}, with threshold $D$, and give every
job the common deadline $d_i=D$. A schedule then has no tardy jobs if and only
if its makespan is at most $D$. This polynomial transformation preserves
the relation between execution time and energy demand, so deciding
whether the minimum number of tardy jobs is zero is NP-hard in all three
variants.

\subsection{Minimizing Makespan ($C_{\max}$)}
\label{sec:complete-makespan}

The total execution time is fixed, so we minimize
$C_{\max} = \sum_i e_i + \sum_i r_i$ by minimizing the time spent
recharging. We show that this time depends only on which jobs execute
after the last recharge. This observation gives a linear-time algorithm
for equal energy demands, an NP-hardness proof for the other job classes,
and, through two reductions to subset-sum problems, the approximation
algorithm $\mathcal{A}_2$.

\begin{definition}[Final group and head]
\label{def:head-final} The \emph{final group} $\mathcal{F}$ of a schedule $\sigma$ is the set of jobs
executed after the last recharging interval.
If $\sigma$ contains no recharging interval, then $\mathcal{F} = \mathcal{J}$.
The \emph{head} $\mathcal{H} = \mathcal{J} \setminus \mathcal{F}$ consists of
all remaining jobs.
\end{definition}

Whenever $\sum_j b_j > \Bz$, the device must recharge at least once.
The jobs after its last recharge must fit within one full battery:
\begin{equation}
  \sum_{j \in \mathcal{F}} b_j \;\le \; \BM.
  \label{eq:final-group-feasible}
\end{equation}
Equivalently, the head must consume at least the energy that does not fit
in the final group:
\begin{equation}
  \sum_{j \in \mathcal{H}} b_j \;\ge \;
    \sum_{j \in \mathcal{J}} b_j - \BM.
  \label{eq:head-constraint}
\end{equation}
We still need to determine whether the recharge positions within the head
affect the total recharging time. They do not: energy conservation
determines that time from the head alone.

\begin{lemma}[Recharging time formula]
\label{lem:recharge-formula}
For any feasible schedule $\sigma$ that contains at least one positive recharging interval,
the total recharging time equals
\begin{equation}
  \sum_{i=1}^n r_i \;=\; \BM - \Bz + \sum_{j \in \mathcal{H}} b_j.
  \label{eq:recharge-formula}
\end{equation}
\end{lemma}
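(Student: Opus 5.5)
The plan is to use energy conservation, evaluated at the moment the last recharge ends. With $h=0$ we have $r_j=q_j$, so the total recharging time equals the acquired energy $Q=\sum_j q_j$. I will not work with the global identity $Q=S-\Bz+B_n$ directly. The residual $B_n$ depends on how much energy the final group leaves behind, so that identity expresses $Q$ only indirectly. Instead, I will split the schedule at the last positive recharge.

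Let $k$ be the position of the last positive recharge. By definition of the final group (\cref{def:head-final}), $\mathcal{H}$ consists exactly of the jobs in positions $1,\ldots,k-1$, and $\mathcal{F}$ of those in positions $k,\ldots,n$. First, I sum the battery recursion (\ref{eq:battery-level}) over positions $1,\ldots,k-1$. This gives $B_{k-1}=\Bz+\sum_{j<k}q_j-\sum_{j\in\mathcal{H}}b_j$. Second, I use the complete-recharging rule. The recharge at position $k$ is positive, so $q_k=\BM-B_{k-1}$. All later $q_j$ are zero because $k$ is the last positive recharge.

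Adding these gives
\[
\sum_{i=1}^n r_i=\sum_{j<k}q_j+\BM-B_{k-1}.
\]
Substituting the expression for $B_{k-1}$, the term $\sum_{j<k}q_j$ cancels, and we obtain $\BM-\Bz+\sum_{j\in\mathcal{H}}b_j$. This is exactly (\ref{eq:recharge-formula}).

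The only step that needs care is the telescoping identity for $B_{k-1}$. This includes the degenerate case $k=1$, where $\mathcal{H}$ is empty and $B_0=\Bz$, so the formula reduces to $\BM-\Bz$. The identity is routine once I index by position and note that every intermediate recharge amount, whether zero or positive, appears with coefficient $+1$ on both sides.

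I would also remark that the recharge positions inside the head do not matter, because the cancellation removes them. The hypothesis that at least one recharge is positive is needed so that $k$ exists. Without a positive recharge, no $q_k=\BM-B_{k-1}$ equation is available to anchor the computation.
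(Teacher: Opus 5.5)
Your proof is correct and is essentially the paper's argument: both rest on energy conservation together with the fact that the last positive recharge fills the battery to $\BM$. The paper balances energy over the whole schedule using $B_{\mathrm{end}}=\BM-\sum_{j\in\mathcal{F}}b_j$, so the global identity you chose to avoid is in fact direct; you instead balance over the head and close with $q_k=\BM-B_{k-1}$, and the two cuts are equivalent.
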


\begin{proof}
Let $B_{\mathrm{end}}$ denote the battery level after the last job completes.
By energy conservation:
\[
  \Bz + \sum_{i=1}^n r_i \;=\; \sum_{j \in \mathcal{J}} b_j + B_{\mathrm{end}}.
\]
Under complete recharging the last recharging interval fills the battery to
$\BM$, after which the jobs in $\mathcal{F}$ execute.
Hence $B_{\mathrm{end}} = \BM - \sum_{j \in \mathcal{F}} b_j$. Substituting and rearranging:
\[
  \sum_{i=1}^n r_i
  = \sum_j b_j + \BM - \sum_{j \in \mathcal{F}} b_j - \Bz
  = \BM - \Bz + \sum_{j \in \mathcal{H}} b_j. \qedhere
\]
\end{proof}

Since $\BM$ and $\Bz$ are constants of the instance, whenever a recharge
is necessary we minimize recharging time by choosing a head $\mathcal{H}$ of
minimum total energy (\cref{lem:recharge-formula}). The head must satisfy
(\ref{eq:head-constraint}); equivalently, we maximize the final group's
energy subject to (\ref{eq:final-group-feasible}). Any such bipartition
$\mathcal{J}=\mathcal{H}\cup\mathcal{F}$ yields a feasible schedule
(\cref{alg:shtf}).

\begin{algorithm}[tbp]
\caption{$\textsc{ScheduleByBipartition}: \mathcal{J} = \mathcal{H} \cup \mathcal{F}$}
\label{alg:shtf}
\begin{algorithmic}[1]
\Require $\mathcal{F} \subseteq \mathcal{J}$ satisfies (\ref{eq:final-group-feasible}), $\mathcal{H} = \mathcal{J} \setminus \mathcal{F}$
\Ensure A feasible schedule $\sigma$
\If{$\sum_j b_j\le\Bz$}
  \State Execute $\mathcal{H}$ followed by $\mathcal{F}$ without recharging; \Return
\EndIf
\State Recharge $\BM - \Bz$ before scheduling any jobs
\For{each job in $\mathcal{H}$, in any order}
  \State Execute the job
  \If{at least one job in $\mathcal{H}\cup\mathcal{F}$ remains unexecuted}
    \State Recharge completely
  \EndIf
\EndFor
\State Append all jobs in $\mathcal{F}$ in any order, without adding any recharging intervals
\end{algorithmic}
\end{algorithm}

\begin{lemma}[Schedule from a bipartition]\label{lem:shtf-feasible}
If $\mathcal{H}$ and $\mathcal{F}$ form a bipartition of $\mathcal{J}$
that satisfies the equivalent constraints (\ref{eq:final-group-feasible})
and (\ref{eq:head-constraint}), then \cref{alg:shtf} constructs a feasible
schedule in $O(n)$ time.
\end{lemma}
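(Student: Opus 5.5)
The plan is to check feasibility constraint (\ref{eq:feasibility}) and the complete-recharging rule position by position along the schedule that \cref{alg:shtf} builds. Two cases arise.

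\emph{Case $\sum_j b_j\le\Bz$.} The algorithm executes all jobs without recharging. The battery level before any job is at least $\Bz-\sum_{k<j}b_k\ge\sum_{k\ge j}b_k\ge b_{\pi(j)}$, so every job has enough energy. Every $q_j=0$ is allowed under both policies, and the battery never exceeds $\Bz\le\BM$.

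\emph{Case $\sum_j b_j>\Bz$.} I would maintain the invariant that the battery holds exactly $\BM$ just before each job of $\mathcal{H}$. For the first job this holds because the initial recharge is $\BM-\Bz=\BM-B_0$, a permitted complete recharge.
\begin{itemize}
\item Each head job $J_i$ satisfies $b_i\le\BM$ by the model assumption, so it can execute. It leaves the battery at $\BM-b_i$.
\item If some job remains, the next recharge is then $q=b_i=\BM-B$, again of the complete form. It restores the invariant for the next head job, or brings the battery to $\BM$ before the final group.
\item If $\mathcal{H}$ is empty, the initial recharge already brings the battery to $\BM$.
\item When $\mathcal{F}$ starts, the battery holds $\BM$. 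By (\ref{eq:final-group-feasible}), before the $t$-th job of $\mathcal{F}$ the level is $\BM-\sum_{\text{earlier in }\mathcal{F}}b\ge b$ of that job, so all of $\mathcal{F}$ executes without recharging.
\item No level ever exceeds $\BM$, because each recharge adds exactly $\BM-B$.
\item A recharge after the last job never occurs: the guard ``at least one job remains'' excludes it, and when $\mathcal{F}$ is empty the last head job is simply not followed by a recharge.
\end{itemize}

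Some recharges may have $q=0$ when $b_i=0$ or $\Bz=\BM$. These are still legal, since $q=0$ is always allowed. They only affect the count of positive recharges, which the feasibility statement does not concern.

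For the running time, the algorithm does one comparison against $\sum_j b_j$, which takes $O(n)$. It then makes one pass over $\mathcal{H}$ and $\mathcal{F}$, performing $O(1)$ work per job to append the job and at most one recharge. The total is $O(n)$.

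The only delicate point is the boundary bookkeeping: the empty-$\mathcal{H}$ case, the empty-$\mathcal{F}$ case (no trailing recharge), and the observation that zero-length recharges are permitted. I would handle these explicitly rather than rely on the general invariant.
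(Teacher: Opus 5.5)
Your proposal is correct and follows the paper's own argument. The paper also splits on $\sum_j b_j\le\Bz$, fills the battery before each head job using $b_j\le\BM$, fits the final group by (\ref{eq:final-group-feasible}), omits the terminal recharge, and counts one linear pass; you additionally make explicit the battery invariant and the fact that each $q_j$ has the complete-recharging form $0$ or $\BM-B$, which the paper leaves implicit.
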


\begin{proof}
If $\sum_j b_j\le\Bz$, then we can execute every job without recharging.
Otherwise, the construction fills the battery before each head job, which
fits by the standing assumption $b_j\le\BM$. If $\mathcal{F}$ is nonempty,
then we also fill the battery before executing its jobs; their total
consumption fits by (\ref{eq:final-group-feasible}). If $\mathcal{F}$ is
empty, then we stop after the last head job and add no terminal recharge.
These operations take $O(n)$ time.
\end{proof}

When recharging is necessary, the construction uses at most
$\BM-\Bz+\sum_{j\in\mathcal{H}}b_j$ units of recharging time. If
$\mathcal{F}$ is nonempty, then this bound is an equality: the initial
recharge contributes $\BM-\Bz$, and the recharge after each head job
replenishes its consumption. If $\mathcal{F}$ is empty, then the last
head job has no following recharge, so the actual cost can be smaller.
We retain this distinction when mapping an arbitrary subset-sum answer
back to a schedule.

\subsubsection{Equal Energy ($b_i = b$)}

If $b=0$, then all jobs run without recharging. Otherwise the final group
contains at most $\min(n, \lfloor {\BM}/b \rfloor)$ jobs
(\ref{eq:final-group-feasible}), and any selection of that many jobs for
$\mathcal{F}$ minimizes $C_{\max}$.

\begin{theorem}[Makespan, equal energy]
\label{thm:makespan-equal}
Minimizing $C_{\max}$ under complete recharging with $b_i = b$ is solvable in
$O(n)$ time.
\end{theorem}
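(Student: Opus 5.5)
The plan is to combine \cref{lem:recharge-formula} with \cref{lem:shtf-feasible}, handling two trivial cases first. If $b=0$ or $nb\le\Bz$, no recharge is needed. Any order with no recharging interval is then feasible, and its makespan $E$ is a lower bound for every schedule, since $C_{\max}=E+\sum_i r_i$. These checks take $O(n)$ time.

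In the remaining case $nb>\Bz$, every feasible schedule contains at least one positive recharging interval. By \cref{lem:recharge-formula}, its total recharging time is $\BM-\Bz+|\mathcal{H}|\,b$. Because every job has demand $b$, this time depends only on the cardinality of the head. Constraint (\ref{eq:final-group-feasible}) gives $|\mathcal{F}|\le m:=\min(n,\lfloor\BM/b\rfloor)$, so every feasible schedule has recharging time at least $\BM-\Bz+(n-m)b$. This yields the lower bound
\[
 C_{\max}\ \ge\ E+\BM-\Bz+(n-m)b
\]
for every feasible schedule.

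For the matching upper bound, I take any $m$ jobs as $\mathcal{F}$ and the rest as $\mathcal{H}$. Then $mb\le\BM$, so (\ref{eq:final-group-feasible}) holds, and \cref{alg:shtf} builds a feasible schedule in $O(n)$ time by \cref{lem:shtf-feasible}. If $\mathcal{F}$ is nonempty, the remark after \cref{lem:shtf-feasible} shows that the construction's recharging time equals $\BM-\Bz+(n-m)b$, which matches the lower bound.

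The only delicate point, and the one I expect to be the main obstacle, is the case $\mathcal{F}=\emptyset$. Since $b\le\BM$, we have $m\ge1$ whenever $n\ge1$, so $\mathcal{F}$ is nonempty and this case never arises. Choosing the subset and running the construction take $O(n)$ time in total, which proves the theorem.
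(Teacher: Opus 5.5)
Your proposal is correct and follows essentially the same route as the paper: both use \cref{lem:recharge-formula} to reduce the problem to maximizing the energy of the final group, observe that with equal demands any $\min(n,\lfloor\BM/b\rfloor)$ jobs achieve this, and build the schedule with \cref{alg:shtf} in $O(n)$ time. Your explicit lower bound and your observation that $b\le\BM$ forces $\mathcal{F}\neq\emptyset$ spell out details that the paper leaves implicit. For the upper bound alone, the paper's remark after \cref{lem:shtf-feasible}, that the construction uses \emph{at most} $\BM-\Bz+\sum_{j\in\mathcal{H}}b_j$ recharging time, would already suffice.
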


\begin{proof}
If $\sum_j b_j \le \Bz$, then all jobs execute without recharging, so we
set $\mathcal{F}=\mathcal{J}$. Otherwise, we minimize recharging time and
hence makespan by maximizing the energy in $\mathcal{F}$ subject to its
capacity constraint (\cref{lem:recharge-formula}). Since all jobs have equal
$b$, we select any $\min(n,\lfloor\BM/b\rfloor)$ jobs for $\mathcal{F}$.
Selection and schedule construction each take $O(n)$ time
(\cref{alg:shtf,lem:shtf-feasible}).
\end{proof}

\subsubsection{Other Cases: NP-hardness}
When jobs consume different amounts of energy, choosing a final group that
uses as much of the battery as possible is a subset-selection problem. It
encodes Subset-Sum, which gives NP-hardness for each of the remaining job
classes.

\begin{theorem}[Hardness of makespan]
\label{thm:makespan-nph}
Minimizing $C_{\max}$ under complete recharging is NP-hard when jobs have equal
execution time ($e_i = e$), when $e_i = b_i$, or when the relationship is
arbitrary.
\end{theorem}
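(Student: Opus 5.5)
We reduce from Subset-Sum, using \cref{lem:recharge-formula} to turn makespan into a statement about the final group. Let the Subset-Sum instance consist of positive integers $a_1,\ldots,a_n$ and a target $T$, with each $a_i\le T$ and $\sum_i a_i>T$ (other instances are decided trivially). We create one job per integer with $b_i=a_i$, and we set $\BM=T$ and $\Bz=0$. Since $\Bz=0<\sum_i b_i$, every feasible schedule recharges at least once. By \cref{lem:recharge-formula} its total recharging time is then
\[
 \BM+\sum_{j\in\mathcal{H}}b_j=T+\sum_i a_i-\sum_{j\in\mathcal{F}}b_j,
\]
where the final group $\mathcal{F}$ satisfies $\sum_{j\in\mathcal{F}}b_j\le T$ by (\ref{eq:final-group-feasible}). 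The makespan is therefore at least $E+\sum_i a_i$, with equality exactly when $\sum_{j\in\mathcal{F}}b_j=T$. We take the decision threshold $D=E+\sum_i a_i$.

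First we show that a yes-instance gives a schedule within the threshold. If some subset sums to $T$, we take it as $\mathcal{F}$. This $\mathcal{F}$ is nonempty because $T>0$. \Cref{alg:shtf} then gives a feasible schedule (\cref{lem:shtf-feasible}) whose recharging time equals the formula exactly, so its makespan is $D$. Conversely, suppose some schedule has makespan at most $D$. It has at least one recharge, so \cref{lem:recharge-formula} applies to it. Its final group therefore has energy exactly $T$ and is a Subset-Sum solution.

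For the arbitrary class and for $e_i=b_i$, we set $e_i=a_i$. For equal execution times, we set $e_i=1$. Both choices are covered by the reduction, because the argument never uses the execution times beyond the constant $E$. The construction takes polynomial time, which proves NP-hardness (weak, because the numbers are encoded in binary).

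The main obstacle is a gap in \cref{lem:recharge-formula}. The lemma assumes at least one recharge, so we must check that its formula is also a lower bound for every feasible schedule, including one whose last recharge is followed by an empty final group. Under \cref{def:head-final}, a schedule whose last recharge precedes the last job has a nonempty $\mathcal{F}$, and recharges after the last job are excluded by the model. The formula therefore holds for every feasible schedule. Choosing $\Bz=0$ removes the case with no recharge, which is why we fix the initial energy to zero.
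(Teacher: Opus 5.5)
Your proposal is correct and is essentially the paper's proof. Both use the same Subset-Sum reduction with $\BM=T$, $\Bz=0$, $b_i=a_i$, $e_i\in\{1,a_i\}$, and threshold $D=E+\sum_i a_i$, and in both \cref{lem:recharge-formula} forces the final group to have energy exactly $T$.

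Two small corrections. Instances with some $a_i>T$ are not ``decided trivially''; you handle them by discarding those items, as the paper does, because they cannot appear in any solution and the model requires $b_i\le\BM$. The ``gap'' you raise in \cref{lem:recharge-formula} is not a gap: with $\Bz=0$ every schedule has a positive recharge, and in this model a recharge is always followed by a job.
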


\begin{proof}
We reduce from Subset-Sum (SS): given a multiset $\{a_1, \ldots, a_n\}$ of
positive integers and a positive target $C$, does any subset sum to exactly $C$?
Discard values larger than $C$, which cannot participate in a solution.
If no values remain, then the answer is NO and we return a fixed NO instance;
otherwise every remaining value is individually feasible at capacity $C$.

\emph{Construction.}
Set $\BM = C$, $\Bz = 0$, and create $n$ jobs with $b_j = a_j$ for each $j$.
For the $e_i = b_i$ case set $e_j = b_j = a_j$;
for the $e_i = e$ case set $e_j = 1$ for all $j$;
for the arbitrary case use either assignment.

\emph{Correctness.}
Since $\Bz = 0$ and $\sum_j b_j > 0$, every feasible schedule requires at
least one recharging interval.
By \cref{lem:recharge-formula} and (\ref{eq:final-group-feasible}), an optimal schedule has a final group $\mathcal{F}^*$ that maximizes
$\sum_{j \in \mathcal{F}} b_j$ subject to $\sum_{j \in \mathcal{F}} b_j \le C$.
Since each $b_j=a_j$, the SS instance has a solution if and only if the
optimal final-group energy is $C$. Equivalently, the optimal makespan is
at most the decision threshold $D=\sum_j e_j+\sum_j a_j$:
its exact value is $D+C-\sum_{j\in\mathcal{F}^*}b_j$.
This provides a polynomial reduction for each execution-time assignment.
A fixed NO instance for the preprocessing case is one job with
$e_1=b_1=\BM=1$, $\Bz=0$, and threshold $D=1$; its makespan is $2$.
\end{proof}

\subsubsection{Other Cases: Reductions to Subset-Sum Minimization and Maximization}

We have reduced makespan minimization to choosing a head with as little
energy as possible while leaving a final group that fits in the battery.
To approximate this choice, we use \emph{Subset-Sum Minimization}
(SS-MIN): given a multiset $A = \{a_1, \ldots, a_n\}$ of positive integers
and a target $C$, find a set $A' \subset A$ that minimizes
$\sum_{a_i \in A'} a_i$ subject to $\sum_{a_i \in A'} a_i \geq C$.
We prove a \emph{strict reduction}: converting a subset-sum solution back
to a schedule does not increase its approximation ratio.

\begin{theorem}[Reduction to SS-MIN]\label{thm:ssmin-reduct}
Minimizing makespan under complete recharging strictly reduces to SS-MIN in $O(n)$ time.
\end{theorem}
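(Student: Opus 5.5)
The plan is to build the reduction directly from \cref{lem:recharge-formula,lem:shtf-feasible}. Given an instance with $S=\sum_j b_j>\Bz$, map it to the SS-MIN instance with $A=\{b_j: b_j>0\}$ and target $C=S-\BM$, so that a feasible $A'$ corresponds to a head $\mathcal{H}$ satisfying (\ref{eq:head-constraint}). If $S\le\Bz$, no recharge is needed and we output the schedule without recharging directly. If $C\le 0$, the empty head is feasible: a single recharge of $\BM-\Bz$ followed by all jobs is optimal, since any schedule with a recharge costs at least $\BM-\Bz$ by \cref{lem:recharge-formula}. Jobs with $b_j=0$ are placed in $\mathcal{F}$ at no cost. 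Both maps take $O(n)$ time.

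For the backward map, a solution $A'$ with value $\mathrm{val}(A')\ge C$ is sent to $\mathcal{H}=A'$ and $\mathcal{F}=\mathcal{J}\setminus A'$. By the same equivalence this $\mathcal{F}$ satisfies (\ref{eq:final-group-feasible}), so \cref{alg:shtf} yields a feasible schedule. By the remark after \cref{lem:shtf-feasible}, its recharging time is at most $\BM-\Bz+\mathrm{val}(A')$.

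In the other direction, take any optimal schedule. It must recharge, so by \cref{lem:recharge-formula} its recharge time is $\BM-\Bz+\sum_{\mathcal{H}^*}b_j$, and its head is feasible for SS-MIN. Conversely, every feasible $A'$ gives a schedule of cost at most $\BM-\Bz+\mathrm{val}(A')$. Together these show $\mathrm{OPT}_{C_{\max}}=E+\BM-\Bz+\mathrm{OPT}_{SS}$.

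The ratio is then immediate. If $\mathrm{val}(A')\le\rho\,\mathrm{OPT}_{SS}$, the schedule has
\[
C_{\max}\le E+\BM-\Bz+\rho\,\mathrm{OPT}_{SS}\le\rho\,(E+\BM-\Bz+\mathrm{OPT}_{SS})=\rho\,\mathrm{OPT}_{C_{\max}},
\]
using $\rho\ge1$ and $E+\BM-\Bz\ge0$. This is the strict-reduction property.

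The main obstacle is getting the equality $\mathrm{OPT}_{C_{\max}}=E+\BM-\Bz+\mathrm{OPT}_{SS}$ right at the boundaries. The inequality from the backward map may be strict when $\mathcal{F}=\emptyset$, but that only helps the upper bound. The lower bound needs \cref{lem:recharge-formula}, which applies because every schedule recharges when $S>\Bz$. Zero-energy jobs and the degenerate targets $C\le0$ must be handled so that SS-MIN receives positive integers and a nonempty instance. The strict-subset requirement $A'\subset A$ also needs care: the full set is admissible only if it is a proper subset. This works out because $C=S-\BM<S$ and every $a_i\le\BM$, so removing any one element keeps the sum at least $C$, and an optimal solution is a proper subset.
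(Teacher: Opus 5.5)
Your proposal is correct and follows essentially the same route as the paper. It uses the same forward map ($a_j=b_j$ for positive-energy jobs, $C=S-\BM$), the same backward map through \cref{alg:shtf}, the identity $C^*_{\max}=E+\BM-\Bz+\mathrm{OPT}_{SS}$ from \cref{lem:recharge-formula}, and the observation that an empty final group only lowers the constructed makespan. The one difference is bookkeeping: for the degenerate cases $S\le\BM$ the paper sends the instance to a fixed dummy instance ($A=\{1\}$, $C=1$) whose answer the backward map ignores, which is the step you flag but leave unspecified.
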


\begin{proof}
Recall that $S=\sum_j b_j$.
If $S\le\BM$, then we can construct an optimal schedule directly: when $S\le\Bz$, execute
all jobs without recharge; otherwise recharge to $\BM$ initially and then
execute all jobs. In the latter case, recharging time is at least $\BM-\Bz$
(\cref{lem:recharge-formula}), attained by this schedule with empty head. For these cases the forward map can return the fixed SS-MIN
instance $A=\{1\}$, $C=1$, and the backward map returns the known optimum
regardless of the oracle's answer. Thus no ratio with zero optimal head
energy is needed.

Now suppose $S>\BM$. The forward map creates one item $a_j=b_j$ for
each positive-energy job, retaining its job index, and sets $C=S-\BM>0$.
Zero-energy jobs need no item. Given a feasible SS-MIN subset $A'$, the
backward map selects the corresponding jobs as the head $\mathcal{H}$
and all other jobs, including zero-energy jobs, as $\mathcal{F}$.
The constraint on $A'$ gives $\sum_{j\in\mathcal{F}}b_j\le\BM$,
so we can construct a feasible schedule using \cref{alg:shtf}. Both maps take $O(n)$.

Write $x$ and $x^*>0$ for the selected and optimal head energies, and
$\gamma=\sum_j e_j+\BM-\Bz\ge0$. An optimal SS-MIN solution leaves at
least one positive-energy job outside its head: excluding any such job
from the full set still leaves a feasible head because $b_j\le\BM$, and
strictly reduces its weight. The construction therefore attains
$\gamma+x^*$ for this optimum. Conversely, the head of any feasible
schedule meets the SS-MIN target, so \cref{lem:recharge-formula} gives
$C_{\max}^*\ge\gamma+x^*$. Thus $C_{\max}^*=\gamma+x^*$.

An arbitrary oracle answer may select every item and leave
$\mathcal{F}$ empty. Our construction still gives
$C_{\max}\le\gamma+x$, which suffices to compare the ratios:
\[
 \frac{C_{\max}}{C_{\max}^*}
 \le\frac{\gamma+x}{\gamma+x^*}
 \le\frac{x}{x^*}
 =\frac{\sum_{a_j\in A'}a_j}{\sum_{a_j\in A^*}a_j}.
\]
Together with the directly solved cases, this proves the strict reduction.
\end{proof}

We can use the reduction to carry an approximation guarantee from SS-MIN
to makespan minimization. To show that the resulting makespan ratio is
also tight, we need a family of subset-sum instances that approaches the
bound and also produces valid scheduling instances; in particular, every
item must fit within the resulting battery capacity. We also need a
nonempty final group, so that the schedule incurs the full recharge cost
of the head that the oracle selects. We ensure this by adding a
zero-energy job that remains after the last recharge.

\begin{theorem}[Conditional tightness transfer]\label{thm:ssmin-tight}
Let $\mathcal{P}$ be a scale-invariant SS-MIN approximation algorithm, meaning
$R_{\mathcal{P}}(\alpha I)=R_{\mathcal{P}}(I)$ for every positive integer
$\alpha$, where scaling multiplies every item and the target by $\alpha$.
Suppose a family $\mathcal{I}$ witnesses its tight ratio
$\rho=\sup_{I\in\mathcal{I}}R_{\mathcal{P}}(I)<\infty$ and every
$I=(A_I,C_I)\in\mathcal{I}$ satisfies
\[
  0<C_I<\sum_{a\in A_I}a,
  \qquad \max_{a\in A_I}a\le \sum_{a\in A_I}a-C_I.
\]
Then the reduction in \cref{thm:ssmin-reduct} attains the same worst-case
ratio $\rho$ on valid scheduling instances obtained from this family,
with equal execution times and one additional zero-energy job.
\end{theorem}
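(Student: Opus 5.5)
For each instance $I=(A_I,C_I)\in\mathcal{I}$ I will build a scheduling instance on which the reduction of \cref{thm:ssmin-reduct} produces exactly the SS-MIN instance $I$, or a scaled copy. I will then show that the makespan ratio of the returned schedule tends to $R_{\mathcal{P}}(I)$ as the scale grows.

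\emph{Construction.} Fix $\alpha\ge1$. Create one job per item $a\in A_I$ with $b_j=\alpha a$ and $e_j=1$, and add one job with $b_0=0$ and $e_0=1$. Set $\Bz=0$ and $\BM=\alpha(\sum_{a}a-C_I)$. The hypotheses make this instance valid. Since $C_I<\sum a$, we have $\BM>0$. Since $\max a\le\sum a-C_I$, every $b_j\le\BM$. Since $C_I>0$, we get $S=\alpha\sum a>\BM$, so the reduction takes its nontrivial branch. Its forward map then outputs items $\alpha a$ (the zero-energy job gets no item) and target $S-\BM=\alpha C_I$, which is precisely $\alpha I$. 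By scale invariance, $\mathcal{P}$ has ratio $R_{\mathcal{P}}(I)$ on $\alpha I$.

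\emph{Cost of the returned schedule.} Let $A'$ be $\mathcal{P}$'s answer on $\alpha I$, with weight $x$, and let $x^*$ be the optimal weight. The backward map always places the zero-energy job in $\mathcal{F}$, so $\mathcal{F}\neq\emptyset$. The equality case after \cref{lem:shtf-feasible} then applies, and \cref{alg:shtf} attains makespan exactly $\gamma+x$, where $\gamma=(n+1)+\BM-\Bz$. The proof of \cref{thm:ssmin-reduct} gives $C^*_{\max}=\gamma+x^*$. The achieved ratio is therefore
\[
\frac{\gamma+x}{\gamma+x^*}.
\]

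\emph{Main obstacle.} The difficulty is that $\gamma$ scales with $\alpha$ along with $x$ and $x^*$. We have $\BM=\alpha(\sum a-C_I)$, while $x=\alpha x_I$ and $x^*=\alpha x^*_I$. As $\alpha\to\infty$, the ratio converges to
\[
\frac{(\sum a-C_I)+x_I}{(\sum a-C_I)+x^*_I}
\]
rather than to $x_I/x^*_I$. Scaling alone does not wash out the additive $\BM$.

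To remove it, I will use a different choice of capacity and initial charge. Take $\Bz=\BM$, which is allowed since $0\le\Bz\le\BM$. Then $\gamma=n+1$ is independent of $\alpha$. The forward map still outputs target $S-\BM=\alpha C_I$, because it does not depend on $\Bz$. Moreover $S>\BM\ge\Bz$, so recharging is still necessary and we stay in the same branch. The ratio becomes
\[
\frac{n+1+\alpha x_I}{n+1+\alpha x^*_I}\;\longrightarrow\;\frac{x_I}{x^*_I}=R_{\mathcal{P}}(I)\quad\text{as }\alpha\to\infty.
\]
Taking the supremum over $I\in\mathcal{I}$ and $\alpha$ gives worst-case ratio at least $\rho$. \Cref{thm:ssmin-reduct} gives at most $\rho$. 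Hence the reduction attains exactly $\rho$ on this family, with equal execution times and a single added zero-energy job.

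I will also check that the proof of \cref{thm:ssmin-reduct} does not implicitly need $\Bz<\BM$. The lemma allows $\Bz=\BM$, and recharging is necessary since $S>\Bz$, so no such condition is used.
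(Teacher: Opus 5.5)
Your final construction (unit execution times, items scaled by $\alpha$, one zero-energy job, and $\Bz=\BM=\alpha(\sum_{a}a-C_I)$ so that the additive $\BM-\Bz$ term vanishes, followed by $\alpha\to\infty$) is exactly the paper's proof, and it is correct. One point the paper makes explicit: $\mathcal{P}$'s selection on $\alpha I$ may depend on $\alpha$, so read your $x_I$ as $x/\alpha$; scale invariance still gives $x_I/x^*_I=R_{\mathcal{P}}(I)$ for every $\alpha$, so your limit is unaffected.
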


\begin{proof}
Fix $I\in\mathcal{I}$ and a positive integer $\alpha$.
For each item $a_j\in A_I$, create a job with $e_j=1$ and $b_j=\alpha a_j$;
add one job with execution time one and energy demand zero, and set
$\BM=\Bz=\alpha(\sum_j a_j-C_I)$.
The hypotheses ensure $0<b_j\le\BM$ and
$\sum_j b_j>\Bz$, so this is a valid scheduling instance requiring recharge.
The forward map of \cref{thm:ssmin-reduct} omits the zero-energy job and
therefore gives exactly $\alpha I$. The backward map places that job in
$\mathcal{F}$, so the designated final group is nonempty even if
$\mathcal{P}$ selects every item.
Let $y>0$ be the optimal SS-MIN value on $I$, and let $x_\alpha$ be the sum
of the original, unscaled items selected by $\mathcal{P}$ on $\alpha I$.
Then $x_\alpha/y=R_{\mathcal{P}}(\alpha I)=R_{\mathcal{P}}(I)$.
The selected head consumes $\alpha x_\alpha$. Because a job remains
after it, every head job's consumption contributes to recharging time, and
the constructed makespan is $N+1+\alpha x_\alpha$, where $N=|A_I|$.
The optimal makespan is $N+1+\alpha y$: the strict-reduction argument
establishes this equality, and the extra zero-energy job changes only
the fixed execution sum. Hence
\[
 \frac{C_{\max}}{C_{\max}^*}
 =\frac{N+1+\alpha x_\alpha}{N+1+\alpha y}
 =\frac{N+1+\alpha yR_{\mathcal{P}}(I)}{N+1+\alpha y}
 \longrightarrow R_{\mathcal{P}}(I).
\]
Taking the supremum over the witnessing family gives $\rho$; the matching
upper bound follows from \cref{thm:ssmin-reduct}.
\end{proof}

Any SS-MIN approximation algorithm therefore yields a makespan
approximation with no larger relative error (\cref{thm:ssmin-reduct}).
To transfer a tight ratio, we also need a witnessing family that meets the capacity condition
(\cref{thm:ssmin-tight}). Scale invariance alone does not provide such a
family, and we have not established tightness transfer for the cited
algorithm.

The SS-MIN target $C=\sum_j b_j-\BM$ grows with total energy demand.
When $\sum_j b_j>2\BM$, we instead select a final group through
\emph{Subset-Sum Maximization} (SS-MAX): given positive items
$A=\{a_1,\ldots,a_n\}$ and a target $C$, select a subset $A'$ maximizing
$\sum_{a_i\in A'}a_i$ subject to $\sum_{a_i\in A'}a_i\le C$.
This reduction uses the fixed target $C=\BM$, so the absolute error can be
bounded in terms of the battery capacity.

\begin{theorem}[Reduction to SS-MAX]\label{thm:ssmax-reduct}
Whenever $\sum_j b_j > 2\BM$, minimizing makespan under complete
recharging strictly reduces to SS-MAX in $O(n)$ time.
\end{theorem}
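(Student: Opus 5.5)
We mirror the proof of \cref{thm:ssmin-reduct}, but parametrize the choice by the final group instead of the head. The forward map creates one item $a_j=b_j$ for each positive-energy job, retaining its index, and sets the target $C=\BM$. Every item fits, since $b_j\le\BM$. The backward map takes a feasible SS-MAX subset $A'$ as the final group $\mathcal{F}$, puts every other job into $\mathcal{H}$, and places zero-energy jobs in $\mathcal{F}$ so that $\mathcal{F}$ is nonempty whenever such jobs exist. The constraint $\sum_{A'}a_j\le\BM$ is exactly (\ref{eq:final-group-feasible}), so \cref{alg:shtf} and \cref{lem:shtf-feasible} give a feasible schedule in $O(n)$ time. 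Because $S>2\BM\ge\BM\ge\Bz$, a recharge is always required. The head energy is $S-y$, where $y=\sum_{A'}a_j$. By \cref{lem:recharge-formula}, the constructed schedule has makespan at most $\gamma+S-y$, where $\gamma=E+\BM-\Bz$ as before. Conversely, every feasible schedule has a final group meeting the capacity constraint, so $C_{\max}^*=\gamma+S-y^*$, where $y^*$ is the SS-MAX optimum. Attainment follows because an optimal final group can be taken nonempty: any single positive job fits.

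The main work is the ratio comparison, which is where the hypothesis $S>2\BM$ enters. A strict reduction here means that if the oracle's answer has ratio $y^*/y\le\rho$, then the makespan ratio is at most $\rho$. We need
\[
 \frac{\gamma+S-y}{\gamma+S-y^*}\le\frac{y^*}{y}.
\]
Since $\gamma\ge0$ and $y\le y^*$, it suffices to show $(S-y)y\le(S-y^*)y^*$ whenever $y\le y^*$. Because $y^*\le\BM<S/2$, both $y$ and $y^*$ lie in the interval $[0,S/2]$. On this interval the function $t\mapsto t(S-t)$ is increasing, so the inequality $(S-y)y\le(S-y^*)y^*$ holds. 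Dividing by the positive quantities gives $(S-y)/(S-y^*)\le y^*/y$. Adding $\gamma$ to the numerator and denominator of a ratio that is at least $1$ moves it toward $1$, so the inequality survives. The degenerate case $y=0$ has an infinite oracle ratio, so the bound holds vacuously.

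Two points remain. First, SS-MAX normally assumes positive items, which is why zero-energy jobs are excluded from the forward map and then appended to $\mathcal{F}$. Second, if the oracle returns $A'=\emptyset$ and there are no zero-energy jobs, the schedule from \cref{alg:shtf} is cheaper than the formula, which only helps. The main obstacle is the monotonicity step. One must check that $y^*\le\BM<S/2$, rather than just $y^*\le S$, which is exactly where the assumption $\sum_j b_j>2\BM$ is needed. Without it the ratio of head energies could exceed the ratio of final-group energies, and the reduction would not be strict.
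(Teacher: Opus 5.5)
Your proof is correct and follows the paper's route. It uses the same forward map ($a_j=b_j$, target $\BM$), the same backward map through \cref{alg:shtf} with zero-energy jobs placed in $\mathcal{F}$, and a ratio comparison that rests on $\sum_j b_j>2\BM$. The only difference is in the algebra: the paper writes $\phi^*/\phi-C_{\max}/C^*_{\max}=\delta(1/\phi-1/C^*_{\max})$ and needs $C^*_{\max}\ge\phi$, whereas you first discard $\gamma$ and then use monotonicity of $t(S-t)$ on $[0,S/2]$, which amounts to the slightly stronger but equally easy condition $y+y^*\le S$.
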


\begin{proof}
\emph{Reduction.}
The forward map $f$ creates one item $a_j = b_j$ for each positive-energy job $J_j$, and then sets $C= \BM$. Every item fits by itself because $b_j\le\BM$, so we may assume that the returned subset $A'$ is nonempty. The backward map $g$ chooses $\mathcal{F}$ as the jobs whose items are in $A'$ together with all zero-energy jobs, and then constructs a schedule by \cref{alg:shtf}. Constructing $A$ takes $O(n)$ time, and so does $g$ (\cref{lem:shtf-feasible}).

\emph{Correctness.} Since $\sum_j b_j > \Bz$, every feasible schedule has at least one recharging interval. An optimal schedule's final group $\mathcal{F}^*$ maximizes $\sum_{j \in \mathcal{F}}b_j$ subject to $\sum_{j \in \mathcal{F}}b_j \leq \BM$ (\cref{lem:recharge-formula}), which is the SS-MAX instance created by $f$. These maps preserve the selected energy and therefore give
\[
\frac{\sum_{j \in \mathcal{F}^*} b_j}{\sum_{j \in \mathcal{F}} b_j} = \frac{\sum_{a_j \in A^*} a_j}{\sum_{a_j \in A'} a_j}.
\]
Define $\delta := \sum_{j \in \mathcal{F}^*} b_j - \sum_{j \in \mathcal{F}} b_j \geq 0$. Then $\sum_{j \in \mathcal{F}^*} b_j = \sum_{j \in \mathcal{F}} b_j + \delta$, and by \cref{lem:recharge-formula}, $C_{\max} = C^*_{\max} + \delta$. It follows that
\[
\frac{\sum_{j \in \mathcal{F}^*} b_j}{\sum_{j \in \mathcal{F}} b_j} - \frac{C_{\max}}{C^*_{\max}} = \delta \left( \frac{1}{\sum_{j \in \mathcal{F}} b_j} - \frac{1}{C^*_{\max}}\right).
\]
Since $\delta \geq 0$, it suffices to show $C^*_{\max} \geq \sum_{j \in \mathcal{F}} b_j$. By \cref{lem:recharge-formula}, the total recharging time and hence $C^*_{\max}$ is at least $\BM - \Bz + \sum_j b_j - \BM = \sum_j b_j - \Bz$. Since $\sum_j b_j > 2 \BM$, $C^*_{\max} > \BM$. By (\ref{eq:final-group-feasible}), $\BM \geq \sum_{j \in \mathcal{F}} b_j$. Thus $C^*_{\max} \geq \sum_{j \in \mathcal{F}} b_j$ as required.
\end{proof}

We combine the two reductions
(\cref{thm:ssmin-reduct,thm:ssmax-reduct}) in $\mathcal{A}_2$
(\cref{alg:mmcr}), using the constant-factor subroutines specified below.
Replacing them with $(1+\varepsilon)$-approximation algorithms for SS-MIN
and SS-MAX gives a $(1+\varepsilon)$ guarantee for complete-recharging
makespan. For the maximization problem, this convention means a returned
value of at least $\operatorname{OPT}/(1+\varepsilon)$.
The resulting absolute error $C_{\max}-C^*_{\max}$ depends on $\varepsilon$
and the battery capacity, and not on the number of jobs. In the SS-MAX regime it equals the shortfall in final-group energy, which is at most $\varepsilon \BM$. In the SS-MIN regime it is at most $\varepsilon x^*$, where $x^*$ is the optimal head energy. This energy can exceed $\BM$: for $b=(7,7,6)$ and $\BM=10$ it is $13$. It always satisfies $x^*<\sum_j b_j\le 2\BM$, so the error is below $2\varepsilon\BM$ in both regimes.

To obtain the constant-factor guarantee for $\mathcal{A}_2$, we can use a
weaker SS-MAX guarantee than the strict reduction alone would suggest.
Whenever $\mathcal{A}_2$ calls SS-MAX, the optimal makespan exceeds $\BM$.
We can therefore bound the loss in final-group energy relative to a
makespan that is larger than the battery capacity.

\begin{theorem}[Guarantee of $\mathcal{A}_2$]\label{thm:a2-guarantee}
Suppose \textup{\textsc{SelectMinHead}} returns a feasible subset whose sum is at most $\frac{5}{4}$ times the SS-MIN optimum, and \textup{\textsc{SelectMaxFinalGroup}} returns a feasible subset whose sum is at least $\frac{3}{4}$ times the SS-MAX optimum. Then \cref{alg:mmcr} returns a feasible schedule with $C_{\max}\le\frac{5}{4}\,C^*_{\max}$.
\end{theorem}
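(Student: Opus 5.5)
We do not see \cref{alg:mmcr} itself, so the plan assumes it does the natural thing. If $S=\sum_j b_j\le\BM$, it solves the instance directly, as in \cref{thm:ssmin-reduct}. If $\BM<S\le2\BM$, it calls \textsc{SelectMinHead} on the SS-MIN instance with target $S-\BM$. If $S>2\BM$, it calls \textsc{SelectMaxFinalGroup} on the SS-MAX instance with target $\BM$. In each case it then builds the schedule with \cref{alg:shtf}. Feasibility follows from \cref{lem:shtf-feasible}, because both subroutines return feasible subsets: a head meeting (\ref{eq:head-constraint}), or a final group meeting (\ref{eq:final-group-feasible}). The proof is then a case analysis on $S$.

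\emph{Case $S\le\BM$.} The schedule is optimal, by the argument in \cref{thm:ssmin-reduct}.

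\emph{Case $\BM<S\le2\BM$.} The strict reduction of \cref{thm:ssmin-reduct} applies. It gives $C_{\max}/C^*_{\max}\le x/x^*\le5/4$, where $x$ is the returned head energy and $x^*$ the optimum. The bound $C_{\max}\le\gamma+x$ holds even when the returned head contains every item.

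\emph{Case $S>2\BM$.} Here the plan uses the additive form from the proof of \cref{thm:ssmax-reduct}. Let $F$ and $F^*$ denote the returned and optimal final-group energies. Then
\[C_{\max}=C^*_{\max}+\delta,\qquad \delta=F^*-F\le\tfrac14F^*\le\tfrac14\BM.\]
The proof of \cref{thm:ssmax-reduct} also gives $C^*_{\max}\ge S-\Bz>2\BM-\BM=\BM$. Hence $\delta\le\frac14\BM<\frac14C^*_{\max}$, and so $C_{\max}\le\frac54C^*_{\max}$.

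One detail needs checking. The identity $C_{\max}=C^*_{\max}+\delta$ assumes that the schedule built by \cref{alg:shtf} pays the full head recharge. This holds when the returned $\mathcal{F}$ is nonempty. The returned set can be assumed nonempty because every single item fits. If the final group were somehow empty, the construction's cost could only be smaller, so the inequality still holds.

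The main obstacle is that the bound $\delta\le\frac14\BM$ alone does not give the ratio; we need $C^*_{\max}>\BM$. That inequality holds only because the SS-MAX branch is used exclusively when $S>2\BM$. This is the point the remark before the theorem stresses: the weaker $3/4$ guarantee for SS-MAX suffices precisely because of this threshold. The write-up should therefore tie the branch condition in \cref{alg:mmcr} to this lower bound. If the algorithm splits at a threshold other than $2\BM$, I would recheck both cases. The SS-MIN case works for any $S>\BM$. The SS-MAX case needs $S-\Bz\ge\BM$, which $S>2\BM$ guarantees for every $\Bz\le\BM$.
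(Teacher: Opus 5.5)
Your proposal is correct and follows the paper's proof. It uses the same three-way split on $\sum_j b_j$, the strict reduction of \cref{thm:ssmin-reduct} for the first two cases, and, in the SS-MAX case, the additive bound $C_{\max}-C^*_{\max}=\phi^*-\phi\le\BM/4$ together with $C^*_{\max}>\BM$. The only cosmetic difference is that you bound $C^*_{\max}$ below by $\sum_j b_j-\Bz$ rather than by the paper's $\sum_j b_j-\phi^*$; both exceed $\BM$ once $\sum_j b_j>2\BM$.
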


\begin{proof}
If $\sum_j b_j\le\BM$, then the schedule with an empty head is optimal; if $\BM<\sum_j b_j\le2\BM$, then the strict reduction to SS-MIN preserves the ratio $\frac{5}{4}$ (\cref{thm:ssmin-reduct} covers both cases).
Otherwise $\sum_j b_j>2\BM$. Let $\phi$ and $\phi^*$ be the returned and optimal final-group energies, so that $\frac{3}{4}\phi^*\le\phi\le\phi^*\le\BM$. By \cref{lem:recharge-formula},
\begin{align*}
  C_{\max}-C^*_{\max}&=\phi^*-\phi\le\tfrac{1}{4}\phi^*\le\tfrac{1}{4}\BM,\\
  C^*_{\max}&\ge\textstyle\sum_j b_j-\phi^*>\BM .
\end{align*}
Hence $C_{\max}<\frac{5}{4}\,C^*_{\max}$.
\end{proof}

Our implementation uses size-class subroutines with these guarantees and
$O(n\log n)$ running time
(\cref{lem:ssmin-implementation,lem:ssmax-implementation}). We specify
both subroutines and prove their ratios in the appendix
(\cref{apx:a2-subroutines}). Together with linear-time schedule
construction, they give an $O(n\log n)$ implementation of $\mathcal{A}_2$.

We can also substitute other subroutines. Kellerer et
al.~\cite{KELLERER2000289} give SS-MAX algorithms with worst-case ratios
$\frac{3}{4}$ and $\frac{4}{5}$; we did not evaluate these
algorithms. For a tunable ratio, we can combine a fully polynomial-time
approximation scheme (FPTAS) for
SS-MAX~\cite{KellererPferschySperanza1997SubsetSumFPTAS} with one for
SS-MIN~\cite{GUNTZER200055}. The SS-MAX scheme takes
$O\!\left(\min\{n/\varepsilon,\;n+(1/\varepsilon^2)\log(1/\varepsilon)\}\right)$
time, compared with $O(n^2/\varepsilon)$ for the SS-MIN scheme. This
difference in running time is a second reason to use SS-MAX when
$\sum_j b_j>2\BM$.

\begin{algorithm}[tbp]
\caption{Algorithm $\mathcal A_2$: $5/4$-approximation for complete-recharging makespan}
\label{alg:mmcr}

\begin{algorithmic}[1]
\Require Jobs $\mathcal J$; $0\le\Bz\le\BM$ and $b_i\le\BM$ for all $i$
\Ensure A feasible schedule $\sigma$ with $C_{\max}\le(5/4)\,\mathrm{OPT}$

\If{$\sum_j b_j \leq \BM$}
    \State $\mathcal{F} \gets \mathcal{J}$
\ElsIf{$\sum_j b_j \leq 2 \BM$}
    \State $\mathcal{F} \gets \mathcal{J} \setminus \textsc{SelectMinHead}(\mathcal{J}, C = \sum_j b_j - \BM)$
\Else
    \State $\mathcal{F} \gets \textsc{SelectMaxFinalGroup}(\mathcal{J}, C = \BM)$
\EndIf
\State $\mathcal{H} \gets \mathcal{J} \setminus \mathcal{F}$
\State $\sigma \gets \textsc{ScheduleByBipartition}(\mathcal{H}, \mathcal{F})$
\end{algorithmic}
\end{algorithm}

We have now classified all 16 complete-recharging variants without setup
time. Equal energy demands admit polynomial algorithms for the four
objectives, and the other three job classes are NP-hard for each objective.
For makespan, $\mathcal{A}_2$ guarantees a factor of $5/4$. Approximation
of the other objectives in the NP-hard classes remains open
(\cref{sec:conclusions}). We next consider partial recharging, where the
scheduler also chooses the recharge amounts.

\section{Partial Recharging without Setup Time}
\label{sec:partial}

Having classified complete recharging, we now allow the device to choose
how much energy it acquires before each job. We first determine whether
choosing these amounts requires a separate optimization once the job order
is known. It does not: without setup time, acquiring only the energy that
each job lacks minimizes every completion time for that order. Each
objective then reduces to the choice of a job order.

\begin{lemma}[Minimum recharging]
\label{lem:min-recharge}
For any monotonically non-decreasing objective function $\varphi$, there exists
an optimal schedule in which the total recharging before job $J_i$ (in position
$i$ of the schedule) equals
\begin{equation}
  \max\!\Bigl\{0,\; \sum_{j=1}^{i} b_j - \Bz\Bigr\}.
  \label{eq:min-recharge}
\end{equation}
\end{lemma}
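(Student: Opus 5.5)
Fix an optimal schedule $\sigma$ with permutation $\pi$ and amounts $q_1,\ldots,q_n$. Keep $\pi$ and replace the amounts by the greedy choice $q'_i=\max\{0,\,b_i-B'_{i-1}\}$, where $B'_{i-1}$ is the battery level under the new amounts. This choice acquires exactly the energy that the next job lacks. The plan is to show three things: the new schedule is feasible, its cumulative recharge before each position equals (\ref{eq:min-recharge}), and each of its completion times is at most the corresponding completion time in $\sigma$. Since $\varphi$ is non-decreasing in the completion times, the new schedule is then also optimal.

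\emph{Step 1: closed form.} By induction on $i$ we show that $\sum_{k\le i}q'_k=\max\{0,\sum_{k\le i}b_k-\Bz\}$ and $B'_i=\max\{0,\Bz-\sum_{k\le i}b_k\}$. The induction step splits into two cases. If $B'_{i-1}\ge b_i$, no energy is acquired and the battery level drops by $b_i$. Otherwise $q'_i=b_i-B'_{i-1}$, the level becomes $0$, and the cumulative sum increases by exactly the deficit. Feasibility in the sense of (\ref{eq:feasibility}) follows because $B'_{i-1}+q'_i=\max\{B'_{i-1},b_i\}\le\BM$. Here we use $B'_{i-1}\le\Bz\le\BM$ and the standing assumption $b_i\le\BM$.

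\emph{Step 2: lower bound for every schedule.} For any feasible schedule with the same order, energy conservation up to position $i$ gives
\[
\Bz+\sum_{k\le i}q_k=\sum_{k\le i}b_k+B_i\ge\sum_{k\le i}b_k,
\]
because $B_i\ge0$. Trivially $\sum_{k\le i}q_k\ge0$, so $\sum_{k\le i}q_k\ge\max\{0,\sum_{k\le i}b_k-\Bz\}=\sum_{k\le i}q'_k$. With $h=0$ we have $C_i=\sum_{k\le i}(e_k+q_k)$, hence $C'_i\le C_i$ for every $i$. Therefore $\varphi(\sigma')\le\varphi(\sigma)$, and $\sigma'$ is optimal.

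The main obstacle is only bookkeeping. We must make sure that the bound applies to the \emph{cumulative} recharge, which is what the statement means by ``total recharging before job $J_i$'', and that we use the correct notion of monotonicity. The objective $\varphi$ must be non-decreasing in each completion time, and it may depend on job identities through $\pi$. Since $\pi$ is unchanged, both conditions hold for all four objectives of \cref{sec:model}.
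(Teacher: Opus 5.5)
Your proposal is correct and follows essentially the same route as the paper. You fix the order of an optimal schedule, use energy conservation to show that every feasible schedule in that order has acquired at least $\max\{0,\sum_{k\le i}b_k-\Bz\}$ by position $i$, show that deficit-only recharging attains this bound with equality, and conclude that no completion time, and hence no monotone objective, increases. The only difference is presentation: the paper defines the recharges directly as the increments $R_i-R_{i-1}$ of the cumulative lower bound, whereas you define them greedily from the battery level and verify the closed form and the capacity constraint by an explicit induction.
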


\begin{proof}
Fix a job ordering and write $P_i=\sum_{j=1}^i b_j$.
Every feasible schedule in this order must replenish at least
$R_i=\max(0,P_i-\Bz)$ before completing job $i$, by energy conservation.
Set $R_0=0$ and recharge for $r_i=R_i-R_{i-1}$ immediately before job $i$.
If the remaining initial charge covers $b_i$, then $r_i=0$; otherwise,
$r_i$ replenishes exactly the deficit. Since $b_i\le\BM$, this construction
respects capacity and never exhausts the battery during execution.
It attains each cumulative lower bound $R_i$ and hence minimizes every
completion time for the fixed order. Applying it to the order of an optimal
schedule cannot increase any coordinatewise non-decreasing objective.
\end{proof}

We can now determine the recharge durations from any given job order
(\cref{lem:min-recharge}). To construct an optimal schedule, we therefore
need to find an optimal permutation and insert the corresponding recharges.
To analyze how a permutation affects the completion times, we distinguish
the jobs that execute on the initial charge from those that require
recharging.

We call $J_{\pi_k}$ the \emph{separation job} of schedule $\pi$ if it is the
first job that requires a positive recharging interval, that is, $r_k > 0$
and $r_j = 0$ for all $j < k$.
The jobs before the separation job form the \emph{prefix} and the jobs from the
separation job onward form the \emph{suffix}.
If $\beta$ units of initial charge remain before the separation job, then
that job needs $b_{\pi_k}-\beta$ units of recharging. Every subsequent job
$J_{\pi_j}$ ($j>k$) needs exactly $b_{\pi_j}$, so its execution and
recharging together take $e_{\pi_j}+b_{\pi_j}$ time units.

\subsection{Minimizing Sum of Completion Times (\SCT{})}
\label{sec:partial-sct}

\subsubsection{Structured Execution Times and Energy Costs}
\label{sec:partial-sct-poly}

We first consider whether SJF order still minimizes the sum of completion
times. It does when ordering the jobs by execution time also orders them
by execution plus recharging time. To prove this, we compare adjacent jobs
and show that removing an inversion cannot increase the objective, even
when the exchange changes which job first requires a recharge.

\begin{lemma}[SJF under partial recharging]
\label{lem:sjf-partial}
Suppose that $e_j < e_i \Rightarrow e_j + b_j < e_i + b_i$ for every pair
$(i,j)$. Then, under partial recharging, SJF with ties broken by smaller
$b_i$ minimizes the sum of completion times.
\end{lemma}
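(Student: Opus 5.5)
The plan is an adjacent-exchange argument applied to the closed form that \cref{lem:min-recharge} gives for a fixed order. The sum of completion times is non-decreasing in every completion time. So for each permutation we may assume the minimum-recharging schedule of \cref{lem:min-recharge}, which is feasible for every order. Writing $P_i=\sum_{j\le i}b_j$ for prefix energy in the chosen order, the completion time of the job in position $i$ is then
\[
 C_i=\sum_{j\le i}e_j+\max\bigl(0,\,P_i-\Bz\bigr).
\]
This identity absorbs the separation job automatically, so we never have to track which job first needs a recharge.

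The key step is to swap the jobs $J_a$ and $J_b$ in positions $k$ and $k+1$. The prefix sums of execution time and of energy are unchanged at every position except $k$. Hence every $C_i$ with $i\ne k$ is unchanged, including $C_{k+1}$. Put $x=P_{k-1}-\Bz$ and $f(y)=\max(0,x+y)$. The swap changes the objective by
\[
 \bigl(e_b+f(b_b)\bigr)-\bigl(e_a+f(b_a)\bigr),
\]
and it suffices to show that this is at most $0$ whenever the pair is out of order with respect to SJF with ties broken by smaller $b_i$.

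We check this in two cases. If $e_a=e_b$ and $b_b<b_a$, the claim follows because $f$ is non-decreasing. If $e_b<e_a$, the hypothesis of the lemma gives $e_b+b_b<e_a+b_a$. If in addition $b_b\le b_a$, both terms decrease. If instead $b_b>b_a$, we use that $f$ is $1$-Lipschitz: $f(b_b)-f(b_a)\le b_b-b_a<e_a-e_b$, so the change is negative.

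Therefore removing any adjacent inversion does not increase the objective. Starting from an optimal order, bubble sort reaches the SJF order with tie-breaking by smaller $b_i$ and never increases the sum of completion times, so that order is optimal. Jobs that tie in both $e_i$ and $b_i$ are interchangeable, so how those ties are broken does not matter.

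The step I expect to be the main obstacle is the case $e_b<e_a$ but $b_b>b_a$. There the swap may move the point at which recharging begins, so the recharge deficit at position $k$ can grow. The Lipschitz bound on $f$ shows that this growth is at most $b_b-b_a$, and the hypothesis $e_b+b_b<e_a+b_a$ shows that the saving in execution time exceeds it. This is exactly why the lemma needs its condition linking the order by $e_i$ to the order by $e_i+b_i$.
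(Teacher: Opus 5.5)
Your proof is correct and takes essentially the same route as the paper. Both are adjacent-exchange arguments over minimum-recharge schedules in which only the completion at the first swapped position changes. Your observation that $f$ is nondecreasing and $1$-Lipschitz is the paper's rewriting of that increment as $\max(e_i,e_i+b_i-\gamma)$, which is monotone in both $e_i$ and $e_i+b_i$.
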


\begin{proof}
Consider adjacent jobs $i,j$ in a minimum-recharge schedule and let $\gamma$
be the remaining battery immediately before the pair. The first completion
increment is
$e_i+\max(0,b_i-\gamma)=\max(e_i,e_i+b_i-\gamma)$.
If $e_j\le e_i$ and $e_j+b_j\le e_i+b_i$, then placing $j$ first cannot increase
this increment. The swap preserves cumulative execution and energy demand at the
second position and every later position, so their completions do not change. The assumed condition, with ties in $e_i$ broken
by smaller $b_i$, supplies both inequalities at every adjacent inversion.
Removing the inversions proves the claim, including exchanges that move the
separation position.
\end{proof}

\begin{theorem}[\SCT{}, structured job classes]
\label{thm:sct-partial-poly}
Minimizing the sum of completion times under partial recharging is solvable in
$O(n \log n)$ time when $e_i < e_j \Rightarrow e_i + b_i < e_j + b_j$.
This condition holds for each of the following cases: $b_i = b$, $e_i = e$,
and $e_i = b_i$.
\end{theorem}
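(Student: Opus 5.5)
The plan is to combine \cref{lem:min-recharge} with \cref{lem:sjf-partial}, then verify the hypothesis for each of the three job classes and account for the running time.

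First, \SCT{} is coordinatewise non-decreasing in the completion times. So \cref{lem:min-recharge} lets us restrict attention to schedules in which the recharge before position $i$ is $r_i=R_i-R_{i-1}$, where $R_i=\max(0,P_i-\Bz)$. Under this restriction the objective depends only on the permutation. Under the stated hypothesis, \cref{lem:sjf-partial} says that SJF order, with ties broken by smaller $b_i$, is optimal among all permutations with minimum recharging. Its proof removes adjacent inversions one at a time. Each swap preserves the cumulative execution and energy demand at the second position of the pair, so it does not increase the sum. The only point to check is that the lexicographic order by $(e_i,b_i)$ has no pair violating both inequalities. A pair with $e_j<e_i$ has $e_j+b_j<e_i+b_i$ by hypothesis. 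A pair with $e_j=e_i$ and $b_j\le b_i$ trivially has $e_j+b_j\le e_i+b_i$. So the lexicographic order has no bad inversion.

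Next, the hypothesis holds in each listed class. If $b_i=b$, then $e_i<e_j$ gives $e_i+b<e_j+b$. If $e_i=e$, the premise $e_i<e_j$ never holds, so the implication is vacuous. In that case the tie-break makes the order nondecreasing in $b_i$, and every adjacent swap then weakly improves the schedule by the argument of \cref{lem:sjf-partial}. If $e_i=b_i$, then $e_i<e_j$ gives $2e_i<2e_j$.

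Finally, the running time. Sorting takes $O(n\log n)$. We then compute the prefix sums $P_i$ and the recharges $r_i=R_i-R_{i-1}$ in one $O(n)$ pass, and the completion times $C_i=C_{i-1}+r_i+e_i$ in another.

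The main obstacle is the case $e_i=e$, where the hypothesis is vacuous and all the work is done by the tie-break. I would make explicit that \cref{lem:sjf-partial} covers this case: its proof only needs $e_j\le e_i$ and $e_j+b_j\le e_i+b_i$ at each inversion being removed, and sorting by $b$ among equal $e$ supplies exactly these. A second point is that the exchange may move the separation job. I would rely on the lemma's observation that the first increment $\max(e_i,e_i+b_i-\gamma)$ is monotone in both coordinates, so no case split on the separation position is needed.
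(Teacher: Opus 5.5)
Your proposal is correct and follows the paper's proof: you restrict to minimum recharging via \cref{lem:min-recharge}, apply the adjacent-exchange argument of \cref{lem:sjf-partial} to SJF with ties broken by smaller $b_i$, check the hypothesis in each job class, and note that sorting dominates the $O(n\log n)$ bound. Your explicit remark that the hypothesis is vacuous for $e_i=e$, so that the tie-break by $b_i$ alone supplies both inequalities at every inversion, spells out what the paper covers with ``a direct calculation.''
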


\begin{proof}
We choose minimum recharging durations for each ordering
(\cref{lem:min-recharge}) and obtain an optimal schedule by sorting in SJF
order (\cref{lem:sjf-partial}).
Sorting in SJF order takes $O(n \log n)$.
Each of the three stated cases satisfies the precondition: for $b_i = b$ or
$e_i = e$ or $e_i = b_i$, a direct calculation confirms
$e_j < e_i \Rightarrow e_j + b_j < e_i + b_i$.
\end{proof}

The three structured job classes are therefore no harder under partial
recharging than they are without a battery. The arbitrary class does not
satisfy the condition, and we show next that it is NP-hard.

\subsubsection{Arbitrary $(e_i, b_i)$: NP-hardness}
\label{sec:partial-sct-arb}

When $\Bz=0$, the device must replenish the full energy demand of every
job, so each job has an effective duration of $e_i+b_i$. Sorting the jobs
by this duration, which we call \emph{shortest-sum-first} (SSF) order,
then minimizes the sum of completion times in $O(n\log n)$ time. With
$\Bz>0$, the scheduler must also decide which jobs execute on the initial
charge without recharging. We call this initial recharge-free interval the
\emph{free window}, and we use the choice of jobs within it to construct
the hardness reduction.

\paragraph{Structure of optimal schedules.}

Before we construct the reduction, we establish how an optimal schedule
orders the jobs on either side of the first recharge. SJF order applies
before that recharge, and SSF order applies after the job that requires
it. We must treat that job separately, because it can still use some of
the initial charge.

\begin{lemma}[Optimal schedule structure]
\label{lem:structure}
For the \SCT{} problem under partial recharging with arbitrary $(e_i,b_i)$,
there exists an optimal minimum-recharge schedule with the following properties.
If there is no separation job, then all jobs are in SJF order. Otherwise, writing
$J_{\pi_k}$ for the separation job:
\begin{enumerate}[label=(\alph*)]
  \item \emph{SJF prefix.} The jobs strictly before position $k$ are in
        non-decreasing execution time.
  \item \emph{SSF tail.} The jobs strictly after position $k$ are in
        non-decreasing $e_i+b_i$. The separation job need not satisfy this order.
  \item \emph{Boundary conditions.} Let
        $\beta=\Bz-\sum_{j<k}b_{\pi_j}$, so $0\le\beta<b_{\pi_k}$.
        For each neighbor that exists, optimality requires
        \begin{align}
          &e_{\pi_{k-1}}+\min\{b_{\pi_{k-1}},b_{\pi_k}-\beta\}\notag\\
          &\quad\le e_{\pi_k}+b_{\pi_k}-\beta \quad (k>1),\notag\\
          &e_{\pi_k}+b_{\pi_k}-\beta\notag\\
          &\quad\le e_{\pi_{k+1}}+\max\{0,b_{\pi_{k+1}}-\beta\} \quad (k<n).
          \label{eq:boundary}
        \end{align}
\end{enumerate}
\end{lemma}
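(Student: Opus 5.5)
The plan is to start from an optimal schedule, use \cref{lem:min-recharge} to assume minimum recharging, and then apply adjacent-exchange arguments inside each of the three regions. The prefix, the tail, and the two boundary pairs are treated separately, each with a potential function that shows the exchanges terminate.

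Throughout, I use the completion-increment calculation from the proof of \cref{lem:sjf-partial}. If $\gamma$ is the battery level before a job $i$ in a minimum-recharge schedule, its increment is $e_i+\max(0,b_i-\gamma)$. Swapping two adjacent jobs leaves unchanged the cumulative execution time and the cumulative energy at the second position. Hence every completion from the second position onward is preserved, and only the first completion of the pair changes. So a swap is acceptable exactly when it does not increase the first increment.

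If there is no separation job, every increment is $e_i$. Sorting by execution time is then plain SJF, and no swap creates a recharge, since cumulative energy at the end is unchanged and each prefix stays within $\Bz$. Otherwise, fix the separation position $k$ of an optimal schedule, chosen among optimal schedules to minimize a lexicographic potential.

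For (a), swap two adjacent prefix jobs at positions $j,j+1<k$ with $e_{\pi_j}>e_{\pi_{j+1}}$. Both jobs run on initial charge, because the cumulative energy at position $j+1$ is at most $\Bz$, and the earlier position needs even less. The first increment therefore drops from $e_{\pi_j}$ to $e_{\pi_{j+1}}$. The separation job and $\beta$ are unchanged, so repeated swaps sort the prefix by $e$.

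For (b), every job after position $k$ starts with an empty battery, because the separation job consumed all remaining charge. Its increment is therefore $e+b$. Swapping an adjacent inversion in $e+b$ after position $k$ strictly decreases the first increment and leaves the battery empty at both positions, so repeated swaps sort the tail by $e+b$.

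For (c), I would compare the optimal schedule with the swap of positions $k-1,k$ and with the swap of positions $k,k+1$. Each swapped schedule is feasible and has no larger \SCT{}, since recharges follow from \cref{lem:min-recharge}. Optimality therefore requires the original first increment to be no larger than the swapped one.

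For the pair $(k,k+1)$, the original first increment is $e_{\pi_k}+b_{\pi_k}-\beta$. If $\pi_{k+1}$ goes first, its increment is $e_{\pi_{k+1}}+\max\{0,b_{\pi_{k+1}}-\beta\}$, which gives the second inequality.

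For the pair $(k-1,k)$, let $\beta'=\beta+b_{\pi_{k-1}}$ be the charge before position $k-1$. The original first increment is $e_{\pi_{k-1}}$. If $\pi_k$ goes first, its increment is $e_{\pi_k}+\max\{0,b_{\pi_k}-\beta'\}$. I also track the second completion, which is unchanged. From this, the condition $e_{\pi_{k-1}}\le e_{\pi_k}+\max\{0,b_{\pi_k}-\beta-b_{\pi_{k-1}}\}$ follows. Adding $\min\{b_{\pi_{k-1}},b_{\pi_k}-\beta\}$ to both sides and simplifying yields the stated form. I will check this algebra carefully against \eqref{eq:boundary}, since the displayed inequality appears to encode exactly this comparison.

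The main obstacle is that the swaps in (a) and (b) must be compatible with the boundary conditions, and a swap in (c) may move the separation position. The conditions in (c) are only necessary conditions at an optimum. So I only need an optimal schedule with properties (a) and (b), and then (c) holds automatically by optimality. Swaps inside the prefix never change $k$ or $\beta$, and swaps inside the tail never change them either. Hence sorting the prefix and the tail preserves optimality and keeps the same separation job, and (c) follows by the one-step exchange comparisons.
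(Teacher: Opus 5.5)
Your proposal is correct and follows essentially the same route as the paper. Minimum recharging makes every completion depend only on the set of jobs before it, so sorting the recharge-free prefix by $e_i$ and the empty-battery tail by $e_i+b_i$ keeps the separation job and $\beta$ fixed, and the two adjacent exchanges at position $k$ give exactly \eqref{eq:boundary} through $\max\{0,b_{\pi_k}-\beta-b_{\pi_{k-1}}\}+\min\{b_{\pi_{k-1}},b_{\pi_k}-\beta\}=b_{\pi_k}-\beta$. One wording slip: the swapped schedules in (c) have no \emph{smaller} \SCT{} than the optimum, not ``no larger.'' Also, no potential function is needed, because these sorts never move $k$.
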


\begin{proof}
We sort the recharge-free prefix by execution time and the tail strictly
after the separation job by execution plus recharging time. Adjacent exchanges
at the separation job give the boundary inequalities (\cref{apx:lem:structure}).
\end{proof}

The separation job need not follow SSF order, because its recharge can be
shorter than its full energy demand. For
$\Bz=10$, $\BM=11$, and jobs $(e,b)=(1,11),(5,0)$, scheduling the job $(1,11)$ first gives total completion
time $2+7=9$, whereas SSF gives $5+7=12$. Its first recharge is only
$11-10=1$, not~11.

\paragraph{The NP-hardness reduction.}

We use Partition to show that choosing the jobs that benefit from the
initial charge is NP-hard. The reduction makes a schedule meet a cost
threshold exactly when the input integers admit an equal partition.

\begin{definition}[Partition]
Given positive integers $a_1, \ldots, a_n$ with $\sum_{i=1}^n a_i = 2T$,
decide whether there is a subset $X_1 \subseteq \{1, \ldots, n\}$ such that
$\sum_{i \in X_1} a_i = T$.
\end{definition}

\begin{theorem}[Hardness of \SCT{}, arbitrary parameters]
\label{thm:sct-arb-nph}
Minimizing the sum of completion times under partial recharging with arbitrary
$(e_i, b_i)$ and $\Bz > 0$ is binary NP-hard.
\end{theorem}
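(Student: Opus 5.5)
We reduce from Partition. Given $a_1,\ldots,a_n$ with $\sum_i a_i=2T$, the construction uses the free window of size $\Bz=T$ so that the jobs executing on the initial charge form one side of a partition. The paper's count of $2n+3$ jobs suggests the shape: two jobs per integer, plus three \emph{gadget} jobs that pin the boundary.

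\emph{Item jobs.} For each $a_i$ we create a pair. The first is an ``energy'' job with $b_i=a_i$ and small execution time, whose natural place is the free window. The second is a ``partner'' job. The pair is designed so that the two cases cost exactly the same up to a term linear in $a_i$: in one case the energy job runs on the initial charge, and in the other it runs in the tail, where its effective duration is $e_i+b_i$. We will add a large common offset $M$ to the execution times. This makes the number of prefix jobs and their SJF order (\cref{lem:structure}(a)) forced up to the choice of subset. It also makes the tail SSF order (\cref{lem:structure}(b)) predictable.

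\emph{Gadget jobs.} We add three gadget jobs. One is a large-energy separation job that must be the first job requiring recharge. The others are long ``anchor'' jobs that fix the positions around the boundary, so that the boundary inequalities~\eqref{eq:boundary} leave no slack.

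The central step is an exact cost identity. Restrict to schedules with the structure of \cref{lem:structure}. Let $X$ be the set of item indices whose energy jobs execute in the free window, and let $x=\sum_{i\in X}a_i\le T$. We aim to show that
\[
\SCT(\pi)=\Phi+c_1(T-x)+(\text{nonnegative terms that vanish iff } x=T),
\]
where $\Phi$ is a constant. The term $c_1(T-x)$ arises because the leftover initial charge $\beta=T-x$ reduces the separation job's recharge. However, every unit of energy not used in the window is paid as recharging by the many tail jobs that follow. With this identity the threshold is $D=\Phi$, and $\SCT\le D$ holds iff $x=T$, that is, iff Partition has a solution.

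The forward direction (partition $\Rightarrow$ a schedule of cost $D$) is a direct computation. For the converse, we first invoke \cref{lem:min-recharge} and \cref{lem:structure} to restrict attention to structured schedules. We then use the size of $M$ to rule out deviations: a wrong number of prefix jobs, a different separation job, or a partner job placed in the window each cost at least $M$ more than their savings.

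The main obstacle is the separation job. It can use the partial charge $\beta$, so its cost is not monotone in an obvious way; the $(1,11),(5,0)$ example shows that a separation job can beat SSF. The hard part is therefore choosing parameters so that using the leftover $\beta$ on the separation job never compensates for leaving energy unused. To do this, we first make the separation job's energy demand exceed $T$ and give it tiny execution time, so that $\beta$ enters linearly with coefficient $1$ times the number of jobs that follow it. We then tune the partner jobs so that the savings from placing $a_i$ in the window have coefficient strictly larger than $1$ per unit of energy. This produces the identity above.

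Since the numbers in the instance are polynomial in the $a_i$, the reduction is polynomial, and binary NP-hardness follows from the NP-completeness of Partition.
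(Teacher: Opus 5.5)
There is a genuine gap. You never fix the instance: the partner jobs, the gadget jobs, $\BM$ and $M$ are all left unspecified. The step that carries the whole reduction, the identity $\SCT(\pi)=\Phi+c_1(T-x)+\cdots$, is asserted rather than derived, and the mechanism you give for it does not work.

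Under minimum recharging no initial energy is wasted. Every minimum-recharge schedule acquires exactly $\max(0,\sum_i b_i-\Bz)$ whatever the order (\cref{thm:makespan-partial}), and whatever the window leaves unused is consumed by the separation job. By \cref{lem:min-recharge}, a completion depends only on the set of jobs before it. So a tail job that runs after every item job completes at the same time whichever items used the window. ``The many tail jobs that follow'' therefore contribute a constant and cannot amplify a $(T-x)$ term.

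What remains are rank effects in the SJF prefix and the SSF tail, and these depend on the composition of the chosen set, not only on its sum. Take $m$ consecutive jobs with sorted lengths $w_1\le\cdots\le w_m$. Deleting the $j$-th lowers their total completion time by $(m-j+1)w_j+\sum_{r<j}w_r$, which depends on which smaller items are also present. A common offset $M$ makes counts predictable but cannot remove these terms, because the $a_i$ must stay in the parameters. Subsets with equal $x$ can thus differ in cost by far more than the single unit of $T-x$ that separates a NO instance from your threshold. Tuning a coefficient ``strictly larger than $1$'' cannot close that margin.

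The paper supplies exactly this missing accounting. Its $2n+3$ jobs are two identical copies of each of $n+1$ types plus one anchor, not pairs plus three gadgets. A job of type $i$ has $e_i=S_i$ and $b_i=\Lambda-2S_i$, where $S_i$ are the prefix sums of the $a_k$, so $e_i+b_i=\Lambda-S_i$. The anchor has zero energy and execution time $\Lambda-2T$, and $\Bz=\BM-2T$. As a result, the SJF block before the anchor and the SSF block after it are sorted by $S_i$ in opposite directions.

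The paper then proves an exact identity, not a bound. Every central ordering costs $C_{\mathrm{ref}}+\sum_k a_kH_k^2+2|T-d|$, where $H_k$ is the imbalance of left-block counts below type $k$ and $d=-\sum_k a_kH_k$ (\cref{lem:partial-auxiliary}). The integer inequality $H_k(H_k+1)\ge0$ gives cost at least $C_{\mathrm{ref}}+T$, with equality iff $d=T$ and every $H_k\in\{0,-1\}$, which encodes a partition.

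Two further points show why an exchange argument driven by a large constant is not enough. First, the term $2|T-d|$ corrects an auxiliary cost that uses a signed recharge. The paper exhibits an exchange ($a=(2,2)$, $\Lambda=33$) that raises the auxiliary cost while lowering the feasible cost. Second, an anchor placed immediately after the separation job needs its own lower bound (\cref{lem:anchor-pos,lem:only-if}). Your plan contains neither this identity nor a substitute for it.
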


\begin{proof}
We reduce from Partition using two copies of each of $n+1$ job types and
one anchor. To compare different choices of jobs before the anchor, we
first define an auxiliary cost that allows a signed recharge term. This
expression is not the cost of a feasible schedule, so we derive an exact
correction before we use it in the reduction. We need the correction
because an exchange that increases the auxiliary cost can decrease the
feasible cost.

Let $a_1,\ldots,a_n$ be positive integers with $\sum_i a_i=2T$, and choose
\begin{equation}
  \Lambda=(4n+8)T+1.
  \label{eq:Lambdabound}
\end{equation}
Define $S_i=\sum_{j=1}^i a_j$ for $0\le i\le n$, with $S_0=0$ and $S_n=2T$.
For each type $i$, create two identical jobs $J_i,J_{n+1+i}$ with
\begin{equation}
  e_i=e_{n+1+i}=S_i,\qquad
  b_i=b_{n+1+i}=\Lambda-2S_i.
  \label{eq:job-params}
\end{equation}
Create an anchor $J_{2n+2}$ with
\begin{equation}
  e_{2n+2}=\Lambda-2T=:A,\qquad b_{2n+2}=0,
  \label{eq:anchor}
\end{equation}
and set
\begin{equation}
  \BM=\sum_{i=0}^n(\Lambda-2S_i),\qquad \Bz=\BM-2T.
  \label{eq:battery}
\end{equation}
All energy demands are nonnegative and at most $\BM$.
The anchor has strictly largest $e_i$ and smallest $e_i+b_i$, tied in the
latter quantity with the two jobs of type $n$.
The choice of $\Lambda$ also gives
\begin{equation}
  n\Lambda<\Bz< (n+2)(\Lambda-4T).
  \label{eq:partial-count-bounds}
\end{equation}
Thus any $n$ non-anchor jobs fit within the initial charge, whereas any
$n+2$ exceed it.

\noindent\emph{Reference ordering and threshold.}
The reference ordering is
\begin{equation}
  \pi^0=(J_0,J_1,\ldots,J_n,\ J_{2n+2},\ J_{2n+1},J_{2n},\ldots,J_{n+1}).
  \label{eq:pi0}
\end{equation}
Its first $n+1$ jobs consume $\BM=\Bz+2T$.
The following baseline charges this deficit to the anchor and later jobs,
but omits its contribution to the completion of $J_n$:
\begin{align}
  C_{\mathrm{ref}}
  &=\sum_{j=0}^n(n-j+1)S_j\notag\\
  &\quad+(n+2)\!\left(\sum_{j=0}^n S_j+2T+(\Lambda-2T)\right)\notag\\
  &\quad+\sum_{j=0}^n(j+1)(\Lambda-S_j).
  \label{eq:Cref}
\end{align}
The feasible minimum-recharge cost of $\pi^0$ is $C_{\mathrm{ref}}+2T$.
We use decision threshold $C_{\mathrm{ref}}+T$.

\noindent\emph{Auxiliary cost.}
Write $m=n+1$. A \emph{central ordering} has $m$ non-anchor jobs on each side
of the anchor. Let $\ell_i\in\{0,1,2\}$ be the number of type-$i$ jobs on
its left, with $\sum_i\ell_i=m$. Sort the left block by increasing $S_i$
and the right block by decreasing $S_i$. Define
\begin{align}
  H_k&=\sum_{i=0}^{k-1}(\ell_i-1),\quad 0\le k\le m,\qquad H_0=H_m=0,\notag\\
  d&=\sum_{i=0}^n(\ell_i-1)S_i
     =-\sum_{k=1}^n a_k H_k,\qquad
  \delta=2(T-d).
  \label{eq:partial-imbalance}
\end{align}
The left block has energy demand $\Bz+\delta$. Let its ordered jobs
be $x_1,\ldots,x_m$, let the ordered right block be $y_1,\ldots,y_m$, and
write $E_\ell=\sum_i\ell_iS_i$. Define the auxiliary cost
\begin{align}
  F(\ell)&=\sum_{r=1}^m(m-r+1)e_{x_r}
      +(m+1)(E_\ell+\delta+A)\notag\\
    &\quad+\sum_{r=1}^m(m-r+1)(e_{y_r}+b_{y_r}).
  \label{eq:partial-auxiliary}
\end{align}
This expression assigns no recharge to left-block completions and assigns
the signed amount $\delta$ to the anchor and subsequent completions.
In particular, it is well defined even when $\delta<0$.

\begin{lemma}[Auxiliary and feasible costs]
\label{lem:partial-auxiliary}
For every central ordering sorted as above,
\begin{align}
  F(\ell)&=C_{\mathrm{ref}}+\sum_{k=1}^n a_kH_k^2,\notag\\
  C(\ell)&=F(\ell)+|\delta|
          =C_{\mathrm{ref}}+\sum_{k=1}^n a_kH_k^2+2|T-d|,
  \label{eq:partial-cost-identity}
\end{align}
where $C(\ell)$ is its feasible minimum-recharge cost.
\end{lemma}
\begin{proof}
We first expand the auxiliary cost and telescope its coefficients to
obtain the first identity. To recover the feasible cost, we then correct
the completions of the last left-block job and the anchor. These corrections
sum to $|\delta|$, giving the second identity
(\cref{apx:lem:partial-auxiliary}).
\end{proof}

\noindent\emph{Regular and generalized exchanges.}
An exchange replacing one left-block job of type $v$ with one right-block
job of type $u>v$, followed by sorting both blocks, reduces left-block energy
by $2(S_u-S_v)$ and changes the auxiliary cost by
\begin{equation}
  F(\ell')-F(\ell)=\sum_{k=v+1}^{u}a_k(1-2H_k).
  \label{eq:partial-general-swap}
\end{equation}
Indeed, exactly those $H_k$ decrease by one. The reverse exchange instead
changes the auxiliary cost by $\sum_{k=v+1}^{u}a_k(1+2H_k)$, evaluated at
its own starting state. These are exact identities, with no assumption on
earlier exchanges. A regular forward exchange has $u=v+1=p$, so its
auxiliary increase is $a_p$ when $H_p=0$. The feasible-cost change is always
\begin{equation}
  C(\ell')-C(\ell)=F(\ell')-F(\ell)
       +2\bigl(|T-d'|-|T-d|\bigr).
  \label{eq:partial-feasible-swap}
\end{equation}
For $a=(2,2)$ and $\Lambda=33$, this gives $F:346\to348$ and
$C:350\to348$ for the regular exchange at $p=1$.
Thus a regular exchange need not increase the feasible objective.

\noindent\emph{The ``if'' direction.}

\begin{lemma}[A partition yields a schedule]
\label{lem:if-dir}
If Partition has a solution, then the constructed instance has a feasible schedule
with cost exactly $C_{\mathrm{ref}}+T$.
\end{lemma}
\begin{proof}
For a solution $X\subseteq\{1,\ldots,n\}$, set
$H_k=-1$ for $k\in X$ and $H_k=0$ otherwise, together with $H_0=H_m=0$.
Then $\ell_i=1+H_{i+1}-H_i\in\{0,1,2\}$ and $\sum_i\ell_i=m$, so these
counts define a central ordering. Here
$d=-\sum_k a_kH_k=T$ and $\sum_k a_kH_k^2=T$.
The resulting feasible cost is $C_{\mathrm{ref}}+T$ (\cref{lem:partial-auxiliary}). Equivalently,
apply regular exchanges for the selected indices in increasing order;
the count vector above shows that every exchange has an available outgoing
and incoming job. Each changes $F$ by $a_p$, and the final correction
$2|T-d|$ is zero.
\end{proof}

\noindent\emph{Anchor position and the ``only if'' direction.}

We have constructed a schedule from a partition. For the converse, we need
to show that every schedule that meets the cost threshold yields a
partition. We first restrict the positions that the anchor can occupy in
an optimal schedule, and then account for each position.

\begin{lemma}[Position of the anchor job]
\label{lem:anchor-pos}
There is an optimal schedule satisfying \cref{lem:structure} in which the
anchor is either at position $n+2$, or at position $n+3$ immediately after
the separation job. In the latter case, $n+1$ non-anchor jobs precede the
separation job.
\end{lemma}
\begin{proof}
We order only jobs strictly after the separation job and resolve the
anchor's SSF ties within that tail (\cref{apx:lem:anchor-pos}).
\end{proof}

\begin{lemma}[A schedule yields a partition]
\label{lem:only-if}
If the constructed instance has a feasible schedule with cost at most
$C_{\mathrm{ref}}+T$, then Partition has a solution.
\end{lemma}
\begin{proof}
For a central ordering, we obtain the lower bound from the integer inequality $H_k(H_k+1)\ge0$:
\begin{equation}
  C(\ell)-(C_{\mathrm{ref}}+T)
   =\sum_{k=1}^n a_kH_k(H_k+1)
       +d-T+2|T-d|\ \ge\ 0.
  \label{eq:partial-threshold-gap}
\end{equation}
Equality requires $d=T$ and every $H_k\in\{0,-1\}$; the indices with
$H_k=-1$ form the required subset. We establish the corresponding lower bound for an anchor at position
$n+3$ in the appendix (\cref{apx:lem:only-if}).
\end{proof}

The constructed instance meets the threshold if and only if Partition has a solution (\cref{lem:if-dir,lem:only-if}).
There are $2n+3$ jobs, and all parameters, including the threshold defined
through (\ref{eq:Cref}), have
bit lengths polynomial in the input length. The reduction is therefore
polynomial in the binary encoding of the Partition instance.
\end{proof}

\subsubsection{Arbitrary $(e_i, b_i)$: 2-Approximation}
\label{sec:partial-sct-approx}

The hardness result motivates an approximation algorithm for arbitrary
execution times and energy demands. \algAone{} uses SSF order to select a
prefix that fits within the initial charge, and then reorders the jobs in
that prefix by execution time (\cref{alg:A1}). The resulting schedule
follows SJF order where no recharge is needed and SSF order for the
remaining jobs. We prove the factor of two by bounding the completion
times of the prefix and of the suffix separately.

\begin{algorithm}[tbp]
\caption{Algorithm $\mathcal{A}_1$: 2-approximation for \SCT{} under partial recharging}
\label{alg:A1}
\begin{algorithmic}[1]
\Require Jobs $\mathcal{J} = \{J_1,\ldots,J_n\}$ with arbitrary $(e_i,b_i)$;
         initial battery $\Bz$
\Ensure A feasible schedule with $\SCT \le 2 \cdot \mathrm{OPT}$
\State Sort jobs so $e_1+b_1 \le e_2+b_2 \le \cdots \le e_n+b_n$
       \Comment{SSF order: $O(n\log n)$}
\State $k\gets\min\bigl(\{i\in\{1,\ldots,n\}:\sum_{j=1}^i b_j>\Bz\}\cup\{n+1\}\bigr)$
       \Comment{$k=n+1$ if no recharge is needed}
\State Re-sort $\{J_1,\ldots,J_{k-1}\}$ in non-decreasing order of $e_i$
       \Comment{SJF order: $O(n\log n)$}
\State Output this prefix followed by $J_k,\ldots,J_n$ (an empty tail if $k=n+1$)
\State Before each job, recharge only its energy deficit, as in \cref{lem:min-recharge}
\end{algorithmic}
\end{algorithm}

\begin{theorem}[Guarantee of $\mathcal{A}_1$]
\label{thm:2approx}
\algAone{} runs in $O(n \log n)$ time and produces a feasible
schedule with $\SCT \le 2 \cdot \mathrm{OPT}$.
\end{theorem}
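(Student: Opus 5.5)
The plan is to bound the completion times of the prefix $J_1,\ldots,J_{k-1}$ and of the tail $J_k,\ldots,J_n$ separately, each by $\mathrm{OPT}$, and then add the two bounds. Feasibility and the running time are routine. The recharges are the minimum recharges of \cref{lem:min-recharge}, so the schedule is feasible. The algorithm uses two sorts and one linear scan, so it runs in $O(n\log n)$ time. Throughout, let $C_j$ denote the completion time of the $j$-th job in the output, and let $C^*_{(j)}$ denote the $j$-th smallest completion time in an optimal schedule.

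\emph{Tail bound.} First I would compute the completion times for $j\ge k$ exactly. Re-sorting the prefix does not change which jobs it contains, so its cumulative energy demand is $\sum_{i<k}b_i\le\Bz$. From position $k$ onward the cumulative demand exceeds $\Bz$, so by (\ref{eq:min-recharge}) the total recharge through position $j$ is $\sum_{i\le j}b_i-\Bz$. This gives
\[
C_j=\sum_{i\le j}(e_i+b_i)-\Bz\qquad(j\ge k),
\]
where the indices refer to SSF order. In any feasible schedule, the $j$-th job completes no earlier than the total execution plus forced recharge of the first $j$ jobs. By energy conservation this is at least the sum of their values $e+b$ minus $\Bz$, and hence at least the sum of the $j$ smallest values $e_i+b_i$ minus $\Bz$. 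That sum is exactly the SSF prefix sum, so $C^*_{(j)}\ge C_j$ for every $j\ge k$. Summing over $j\ge k$ gives $\sum_{j\ge k}C_j\le\sum_{j\ge k}C^*_{(j)}\le\mathrm{OPT}$.

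\emph{Prefix bound.} The prefix runs without recharge, in SJF order, so its completion-time sum equals the SJF cost of the job subset $\{J_1,\ldots,J_{k-1}\}$ on a device with no battery. Deleting jobs from an SJF schedule never increases the completion time of any remaining job. Hence this cost is at most the SJF cost of all $n$ jobs without a battery. That cost is the optimum of the problem with no energy constraints, and it lower-bounds $\mathrm{OPT}$ because recharges only add time. Adding the prefix and tail bounds gives $\SCT\le2\,\mathrm{OPT}$.

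The main obstacle I expect is the tail comparison. The bound must use the sorted order of OPT's completion times rather than the positions of specific jobs, and it must correctly identify that the forced-recharge lower bound in OPT matches the SSF prefix sum, even when OPT's early jobs run on the initial charge. I would handle this by taking $\max\bigl(0,\sum(e+b)-\Bz\bigr)$ over an arbitrary set of $j$ jobs and then minimizing over all $j$-sets. The degenerate case $k=n+1$, in which there is no tail, reduces to the prefix bound alone and in fact gives ratio $1$.
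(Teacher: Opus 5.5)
Your proposal is correct and follows the paper's proof. Both split the schedule at the separation index $k$. The tail argument is the same: you compute the completion time exactly as $\sum_{l\le j}w_{(l)}-\Bz$ for $j\ge k$ and match it against the universal lower bound in any feasible schedule. The prefix argument differs only slightly. You bound the prefix's SJF cost by the unconstrained SJF cost of all $n$ jobs, and the paper instead deletes the non-prefix jobs and the recharges from the optimal schedule; both chains are valid and give the same bound of $\mathrm{OPT}$.
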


\begin{proof}
Let $\pi$ be the schedule that $\mathcal{A}_1$ produces and let $\pi^*$ be an
optimal minimum-recharge schedule, which exists by \cref{lem:min-recharge}.
If all jobs fit within $\Bz$, then the algorithm returns SJF with no recharging,
which is optimal. Otherwise $1\le k\le n$ below.
Write $w_{(1)} \le w_{(2)} \le \cdots \le w_{(n)}$ for the values $e_i+b_i$
sorted in non-decreasing order, and let $k$ be the separation index of $\pi$,
so the prefix $\{J_{\pi_1},\ldots,J_{\pi_{k-1}}\}$ consists of the $k-1$ jobs of
smallest $e_i+b_i$ and satisfies
$\sum_{j<k} b_{\pi_j} \le \Bz < \sum_{j\le k} b_{\pi_j}$.

\emph{A universal lower bound on completion times.}
For any feasible schedule $\sigma$ and any position $i$, total recharging
through the first $i$ jobs is at least
$\max(0,\sum_{j\le i} b_{\sigma_j}-\Bz)$ (\cref{lem:min-recharge}), so
\begin{align}
  C_i(\sigma)
   &\ge \sum_{j\le i}\bigl(e_{\sigma_j}+b_{\sigma_j}\bigr)
     - \min\!\Bigl(\textstyle\sum_{j\le i} b_{\sigma_j},\,\Bz\Bigr)\notag\\
   &\ge \sum_{l=1}^{i} w_{(l)} - \Bz,
  \label{eq:lb-completion}
\end{align}
because the first $i$ jobs of $\sigma$ have total $e+b$ at least that of the $i$
smallest such values, and the subtracted term is at most $\Bz$.

\emph{Suffix bound.}
In $\pi$ the prefix holds the $k-1$ jobs of smallest $e_i+b_i$ and the suffix
lists the remaining jobs in non-decreasing $e_i+b_i$.
Hence for a suffix position $i\ge k$ the first $i$ scheduled jobs are precisely
the $i$ jobs of smallest $e_i+b_i$, so
$\sum_{j\le i}(e_{\pi_j}+b_{\pi_j})=\sum_{l=1}^i w_{(l)}$; moreover their
cumulative battery already exceeds $\Bz$ (since $\sum_{j\le k} b_{\pi_j}>\Bz$),
so the subtracted term in (\ref{eq:lb-completion}) equals $\Bz$ and
\[
  C_i(\pi) = \sum_{l=1}^{i} w_{(l)} - \Bz .
\]
Comparing with (\ref{eq:lb-completion}) applied to $\pi^*$ gives
$C_i(\pi)\le C_i(\pi^*)$ for every $i\ge k$, hence
\begin{equation}
  \sum_{i=k}^{n} C_i(\pi) \;\le\; \sum_{i=k}^{n} C_i(\pi^*) \;\le\; \mathrm{OPT}.
  \label{eq:suffix-bound}
\end{equation}

\emph{Prefix bound.}
The prefix jobs $P=\{J_{\pi_1},\ldots,J_{\pi_{k-1}}\}$ run first, in SJF order
and without any recharging, so $\sum_{i<k} C_i(\pi)$ is the sum of completion
times of $P$ scheduled alone by SJF.
In $\pi^*$, deleting every job outside $P$ and every recharging interval only
decreases the completion time of each job of $P$; the result schedules the jobs
of $P$ in some order without recharging, whose total completion time is at least
the minimum over such schedules, namely the SJF value $\sum_{i<k} C_i(\pi)$.
Consequently
\begin{equation}
  \sum_{i=1}^{k-1} C_i(\pi) \;\le\; \sum_{J_j\in P} C_j(\pi^*) \;\le\; \mathrm{OPT}.
  \label{eq:prefix-bound}
\end{equation}

\emph{Combining.}
Adding (\ref{eq:suffix-bound}) and (\ref{eq:prefix-bound}) gives
$\sum_{i=1}^n C_i(\pi) \le 2 \cdot \mathrm{OPT}$.
The two sorting steps dominate the running time, each taking $O(n \log n)$.
\end{proof}

\begin{remark}[Tightness]
The ratio 2 is tight.
Consider two jobs $J_1 = (1, \lambda)$ and $J_2 = (\lambda-1, 1)$ with
$\Bz = \BM = \lambda$, for a large parameter $\lambda$.
Since $e_2 + b_2 = \lambda < 1 + \lambda = e_1 + b_1$, \algAone{}
places $J_2$ in the prefix and $J_1$ in the suffix, producing
$\pi = (J_2,\, [\text{recharge}],\, J_1)$.
The completion times are $\lambda - 1$ for $J_2$ and
$(\lambda - 1) + 1 + 1 = \lambda + 1$ for $J_1$, giving $\SCT = 2\lambda$.
The schedule $(J_1,\, [\text{recharge}],\, J_2)$ achieves completion times
$1$ for $J_1$ and $1 + 1 + (\lambda - 1) = \lambda + 1$ for $J_2$,
giving $\SCT = \lambda + 2$.
The ratio $2\lambda / (\lambda + 2)$ approaches 2 as $\lambda \to \infty$.
\end{remark}

\subsubsection{Arbitrary $(e_i, b_i)$: Enumerating the Final Job}
\label{sec:final-job}

We can strengthen the bound for each fixed $n$ by enumerating the final
job. For each job $j$, we run $\mathcal A_1$ on the other jobs with the
same $\Bz$ and $\BM$, append $j$, and recharge only the deficits; we
return the best of the $n$ schedules. Two sorts and a Boolean membership
array make each candidate cost $O(n)$, so the enumeration takes $O(n^2)$
time.

\begin{theorem}[Final-job enumeration]
\label{thm:final-job}
Under partial recharging without setup time, let $M=E+\max\{0,S-\Bz\}$ be
the makespan of every minimum-recharge schedule. The enumeration returns a
schedule whose sum of completion times $F$ satisfies
\[
 F\leq2\,\mathrm{OPT}-M\leq\Bigl(2-\frac{2}{n+1}\Bigr)\mathrm{OPT},
\]
and it is never worse than $\mathcal A_1$.
\end{theorem}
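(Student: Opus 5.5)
The plan is to compare the enumeration with the candidate whose appended job is the last job of an optimal schedule. I will prove the two inequalities separately. The first is $F\le 2\,\mathrm{OPT}-M$. The second follows from an averaging bound $\mathrm{OPT}\le\frac{n+1}{2}M$.

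\emph{Step 1: deleting the last job.} Let $\pi^*$ be an optimal minimum-recharge schedule, which exists by \cref{lem:min-recharge}, and let $j$ be its last job. Under minimum recharging, the completion time of position $i$ is $\sum_{l\le i}e_{\pi_l}+\max\{0,\sum_{l\le i}b_{\pi_l}-\Bz\}$. This value depends only on the first $i$ jobs. The last job completes at $M$, since every minimum-recharge order has that makespan. Deleting $j$ from $\pi^*$ leaves the other completion times unchanged, which gives $\mathrm{OPT}(\mathcal J\setminus\{j\})\le \mathrm{OPT}-M$.

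\emph{Step 2: the candidate for $j$.} The candidate runs $\mathcal A_1$ on $\mathcal J\setminus\{j\}$ and appends $j$ with deficit recharging. By the same prefix property, the completion times of the first $n-1$ jobs are exactly those of $\mathcal A_1$'s schedule for $\mathcal J\setminus\{j\}$. By \cref{thm:2approx}, their sum is at most $2\,\mathrm{OPT}(\mathcal J\setminus\{j\})\le 2\,\mathrm{OPT}-2M$. The appended job completes at $M$. Since the enumeration returns the best candidate, $F\le 2\,\mathrm{OPT}-M$.

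\emph{Step 3: the averaging bound.} To obtain the second inequality, it suffices to show $M\ge 2\,\mathrm{OPT}/(n+1)$. I would bound $\mathrm{OPT}$ by the expected cost of a uniformly random order with minimum recharging. Let $P_i$ be the energy of the first $i$ jobs and $f(x)=\max\{0,x-\Bz\}$. The function $f$ is convex with $f(0)=0$, so $f(P_i)\le (P_i/S)f(S)$ for $0\le P_i\le S$. The case $S=0$ is trivial. Taking expectations, $\mathbb E[P_i]=iS/n$, and the expected executed time through position $i$ is $iE/n$. Hence $\mathbb E[C_i]\le \frac{i}{n}(E+f(S))=\frac{i}{n}M$. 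Summing over $i$ gives $\mathrm{OPT}\le\frac{n+1}{2}M$, and therefore $2\,\mathrm{OPT}-M\le(2-\frac{2}{n+1})\mathrm{OPT}$.

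\emph{Step 4: never worse than $\mathcal A_1$.} I expect this to be the main obstacle, because it depends on consistent tie-breaking. Let $j_0$ be the last job of $\mathcal A_1$'s schedule on $\mathcal J$. I would show that the candidate for $j_0$ reproduces that schedule exactly. Fix one SSF order with a fixed tie-breaking rule, and restrict that order to $\mathcal J\setminus\{j_0\}$.

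\begin{itemize}
\item If $k\le n-1$, then $j_0$ is the last suffix job, the one of largest $e+b$. Removing it leaves the SSF prefix, the separation index, and the suffix order unchanged.
\item If $k=n$, then $j_0$ is the separation job. The remaining jobs fit within $\Bz$, so $\mathcal A_1$ orders them by SJF, which agrees with the original prefix.
\item If $k=n+1$, then $\mathcal A_1$'s schedule is SJF with no recharge, which is already optimal. Alternatively, $j_0$ is the SJF-last job, and the remaining jobs are again sorted by SJF.
\end{itemize}

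In each case the candidate for $j_0$ equals $\mathcal A_1$'s output, so $F$ is no larger than the cost of $\mathcal A_1$.
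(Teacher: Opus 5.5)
Your proof is correct and follows the paper's own route: you compare against the candidate that appends the last job of an optimal schedule, you bound $\mathrm{OPT}\le\frac{n+1}{2}M$ by averaging over a uniformly random order using the convexity of $\max\{0,x-\Bz\}$, and you show that $\mathcal A_1$'s own schedule is the candidate for its last job. The differences are minor. You prove $\mathrm{OPT}(\mathcal J\setminus\{j\})\le\mathrm{OPT}-M$ rather than the equality the paper states, and the inequality is all that is needed. Your case analysis on $k$, with a fixed tie-breaking rule, spells out the paper's one-line claim that omitting $\mathcal A_1$'s last job leaves its prefix selection and order unchanged.
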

\begin{proof}
Let $j^*$ be the final job of an optimal minimum-recharge schedule. It
completes at $M$ whatever precedes it, and its predecessors form an optimal
schedule of the instance without $j^*$: a better order of those jobs,
followed by $j^*$, would improve the whole objective. Hence
$\mathrm{OPT}(\mathcal J\setminus\{j^*\})=\mathrm{OPT}-M$, and the
candidate that omits $j^*$ costs at most
$2\,\mathrm{OPT}(\mathcal J\setminus\{j^*\})+M=2\,\mathrm{OPT}-M$.

For the second inequality, write $R=\max\{0,S-\Bz\}$ and suppose $S>0$.
Convexity gives $\max\{0,x-\Bz\}\leq xR/S$ for $0\leq x\leq S$. In a
uniformly random order, the first $i$ jobs have expected execution time
$iE/n$ and expected demand $iS/n$, so the expected completion at position
$i$ is at most $iM/n$. Summing over positions, some order costs at most
$(n+1)M/2$, and therefore $\mathrm{OPT}\leq(n+1)M/2$. If $S=0$, then no
charging occurs and the execution times alone give the same bound.

Finally, $\mathcal A_1$'s own schedule is one of the candidates: omitting
its last job leaves its prefix selection and the order of the other jobs
unchanged.
\end{proof}

The enumeration is necessary in the following sense: taking the better of
$\mathcal A_1$ and SJF does not lower the factor.
For $m\geq2$, take $m$ jobs $(e,b)=(2,0)$, one job $(1,m^3+m^2)$, one job
$(m^3,0)$, $\Bz=m^3$, and $\BM=m^3+m^2$. The two rules cost
$2m^3+2m^2+5m+1$ and $2m^3+3m^2+4m+2$ against the optimum
$m^3+3m^2+5m+2$, so both ratios tend to two, whereas the enumeration is
exact on this family. The enumeration is not exact in general: jobs
$(1,20)$, $(11,14)$, $(16,0)$ with $\Bz=17$ and $\BM=20$ give 78 against
the optimum 69. A uniform factor below two would follow from the bound
$\mathcal A_1\leq\mathrm{OPT}+M$, but that bound does not hold: ten jobs
$(100,100)$ and four jobs $(0,201)$ with $\Bz=\BM=1000$ have
$\mathcal A_1-\mathrm{OPT}=2{,}374>1{,}804=M$. Whether some polynomial
algorithm achieves a constant factor below two, independent of $n$, is
open.

\subsection{Minimizing Maximum Tardiness ($T_{\max}$)}

To minimize the maximum tardiness, we need to determine whether EDF order
remains optimal when jobs require different amounts of recharging. We show
that exchanging two adjacent jobs that violate deadline order changes
neither the completion time of the second of them nor that of any later
job, and that the exchange cannot increase the maximum tardiness of the
pair. EDF order is therefore optimal even when execution times and energy
demands vary independently.

\begin{theorem}[EDF under partial recharging]
\label{thm:maxt-partial}
Minimizing maximum tardiness under partial recharging is solvable in
$O(n \log n)$ time for all four energy-execution relationships.
\end{theorem}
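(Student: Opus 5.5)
By \cref{lem:min-recharge} we can restrict attention to minimum-recharge schedules, because $T_{\max}$ is a coordinatewise non-decreasing function of the completion times. Once the order $\pi$ is fixed, the completion time of position $i$ is
\[
 C_i(\pi)=\sum_{j\le i}e_{\pi_j}+\max\Bigl\{0,\ \sum_{j\le i}b_{\pi_j}-\Bz\Bigr\}.
\]
This follows from the cumulative recharge formula (\ref{eq:min-recharge}). The key point is that $C_i$ depends only on the \emph{set} of the first $i$ jobs, and not on how they are ordered among themselves. The problem therefore reduces to choosing an order, and I will show by an adjacent-exchange argument that EDF order is optimal.

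Take an optimal minimum-recharge order that contains an adjacent inversion: $J_j$ is at position $i$ and $J_i$ at position $i+1$, with $d_i<d_j$. Swap them and recompute the minimum recharges for the new order.
\begin{itemize}
 \item The set of the first $i+1$ jobs is unchanged, and so is every later prefix set. By the formula above, the completion time $t=C_{i+1}$ and all later completion times stay the same. All earlier completion times are also unchanged.
 \item After the swap, $J_j$ completes at $t$, with tardiness $(t-d_j)^+\le(t-d_i)^+$.
 \item $J_i$ now completes at the new $C_i$. This is at most $t$, because $C$ is monotone in the prefix set (both $e\ge0$ and $b\ge0$). Its tardiness is therefore at most $(t-d_i)^+$.
 \item Before the swap, $J_i$ completed at $t$, so $(t-d_i)^+$ was already one of the terms in the maximum.
\end{itemize}
Hence $T_{\max}$ does not increase. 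Every swap reduces the number of inversions, so after finitely many swaps we reach an EDF order (ties broken arbitrarily) that is still optimal.

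Feasibility of the new order needs no separate argument: the construction in \cref{lem:min-recharge} yields a feasible schedule for every order, because $b_i\le\BM$. The algorithm sorts the jobs by deadline in $O(n\log n)$ time, computes the deficit recharges in $O(n)$ time, and outputs the result. Nothing in the argument uses any relationship between $e_i$ and $b_i$, so it covers all four job classes.

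The one step that needs care is the claim that the completion time of the second position is preserved even when the swap moves the separation job. This is exactly where the closed form of $C_i$, as a function of the prefix set, is needed, rather than a job-by-job analysis of recharge durations.
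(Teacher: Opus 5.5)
Your proposal is correct and follows essentially the same route as the paper's proof. Both use the minimum-recharge closed form, under which a completion time depends only on the set of jobs in its prefix, and an adjacent-exchange argument showing that a swap preserves the second position's completion and all later ones while not increasing the pair's maximum tardiness. Both then sort by deadline in $O(n\log n)$ time.
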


\begin{proof}
By \cref{lem:min-recharge}, the total recharging up to and including any
position depends only on the \emph{set} of jobs scheduled there, not on their
internal order.
An inversion is a pair $J_i$, $J_j$ with $d_i < d_j$ but $J_j$ scheduled
immediately before $J_i$, at positions $p$ and $p{+}1$.
Swapping them leaves the set of the first $p{+}1$ jobs unchanged, so the
swap preserves the completion time $C$ of position $p{+}1$ and every later
completion. Only the two swapped jobs can change their completion times.
Let $C_1, C_1' \le C$ be the completion time of position $p$ before and after
the swap.
Before the swap the pair's maximum tardiness is
$\max(C_1 - d_j,\, C - d_i)^+ = (C - d_i)^+$, since $C_1 \le C$ and $d_i < d_j$.
After the swap it is $\max(C_1' - d_i,\, C - d_j)^+$, and both terms are at
most $C - d_i$ (because $C_1' \le C$ and $d_j > d_i$).
The swap therefore does not increase the maximum tardiness.
Iterating over all inversions yields an EDF schedule, which is thus optimal.
Sorting by deadline runs in $O(n \log n)$.
\end{proof}

\subsection{Minimizing Number of Tardy Jobs ($U$)}

\subsubsection{Cases $b_i = b$, $e_i = e$, and $e_i = b_i$}

To minimize the number of tardy jobs, we must choose which jobs meet their
deadlines as well as the order in which to execute them. In the three
structured job classes, we can build a largest timely subset by scanning
the jobs in EDF order: whenever adding a job causes a deadline miss, we
remove a selected job of largest weight, where the weight depends on the
job class. This rule adapts the Moore--Hodgson
algorithm~\cite{moore1968}; we justify the choice of weight in the proof.

\begin{theorem}[Tardy jobs, structured job classes]
\label{thm:numtardy-partial-poly}
Minimizing the number of tardy jobs under partial recharging is solvable in
$O(n \log n)$ time when $b_i = b$, when $e_i = e$, or when $e_i = b_i$.
\end{theorem}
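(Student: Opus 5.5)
The plan is to adapt the Moore--Hodgson argument, using \cref{lem:min-recharge} to replace the recharging decisions by a closed-form completion time. Fix any set $X$ of jobs that we intend to make timely and schedule it first in EDF order, with the tardy jobs appended afterward. By \cref{lem:min-recharge}, the job at position $i$ completes at
\[
 f(P)=\sum_{j\in P}e_j+\max\Bigl\{0,\sum_{j\in P}b_j-\Bz\Bigr\},
\]
where $P$ is the set of the first $i$ jobs. The function $f$ is monotone non-decreasing in both $\sum e$ and $\sum b$. By the exchange argument of \cref{thm:maxt-partial}, if any order makes all of $X$ timely, then EDF does. So $X$ is \emph{feasible} exactly when every EDF prefix $P$ of $X$ satisfies $f(P)\le d$ for the last job of $P$. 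The problem is therefore to find a largest feasible subset.

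Next, define the removal weight in each class so that the jobs form a chain under componentwise comparison of $(e_i,b_i)$. For $b_i=b$, order by $e_i$. For $e_i=e$, order by $b_i$. For $e_i=b_i$, order by $e_i$, which is the same as ordering by $b_i$. Ties are broken by index, giving a total order $\preceq$ in which $J_u\preceq J_v$ implies both $e_u\le e_v$ and $b_u\le b_v$. This is where the three structured classes are used. With arbitrary parameters no such chain exists, which is consistent with the NP-hardness claimed for that case.

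The algorithm scans the jobs in EDF order and keeps the current selection $S$ in a max-heap keyed by $\preceq$, together with the running sums $\sum e$ and $\sum b$. After inserting job $i$, it checks whether $f(S)\le d_i$. If not, it removes the $\preceq$-largest job of $S$. Deleting a job can only lower $f$ on every earlier prefix, so the earlier deadlines remain met. Each step costs $O(\log n)$, so the total time is $O(n\log n)$ after the EDF sort.

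Correctness rests on the following invariant. Let $S_i$ be the selection after processing the first $i$ jobs in EDF order. Then $S_i$ is feasible, and for every feasible subset $Y$ of those first $i$ jobs:
\begin{enumerate}[label=(\roman*)]
\item $|Y|\le|S_i|$;
\item if $|Y|=|S_i|$, then listing both sets in $\preceq$-order, the $r$-th element of $S_i$ precedes the $r$-th element of $Y$ for every $r$ (elementwise dominance).
\end{enumerate}
Elementwise dominance implies that $S_i$ has the smallest $\sum e$ and the smallest $\sum b$ simultaneously, so $f(S_i)$ is minimal among feasible sets of its size. The induction step has two cases. If adding job $i$ keeps $S_{i-1}\cup\{J_i\}$ feasible, dominance is preserved directly. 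If adding job $i$ violates the deadline $d_i$, the algorithm's set $S_i$ has the same size as $S_{i-1}$, and two claims are needed:
\begin{enumerate}[label=(\alph*)]
\item no feasible set of size $|S_{i-1}|+1$ exists among the first $i$ jobs;
\item removing the $\preceq$-maximum keeps elementwise dominance over every feasible $Y$ of that size.
\end{enumerate}

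I expect the main obstacle to be claim (a). Let $Y$ be a feasible set of size $|S_{i-1}|+1$. If $J_i\notin Y$, then $Y$ is a feasible subset of the first $i-1$ jobs that is larger than $S_{i-1}$, contradicting (i). So $J_i\in Y$, and $Y\setminus\{J_i\}$ is feasible and has the size of $S_{i-1}$. It is therefore elementwise dominated by $S_{i-1}$, which gives $f(S_{i-1}\cup\{J_i\})\le f(Y)\le d_i$, contradicting the violation.

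Claim (b) needs more care because the comparison sets $Y$ may or may not contain $J_i$. I would argue it with a chain-merging lemma: if $A$ dominates $B$ elementwise and the same element $x$ is added to both, then removing the maximum of $A\cup\{x\}$ leaves a set that dominates $B\cup\{x\}$ with any one element removed. The monotonicity of $f$ in both coordinates then closes the induction. After the scan, $S_n$ is a maximum feasible set, and by (i) it minimizes $U$.
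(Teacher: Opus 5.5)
Your reduction to a largest set that is timely in EDF order, the chain order $\preceq$, the algorithm, and claim (a) are all sound. The gap is in claim (b). Invariant (ii) compares $S_{i-1}$ only with feasible sets of its own size $k=|S_{i-1}|$. In a violation step, a feasible $Y$ of size $k$ with $J_i\in Y$ has $Y\setminus\{J_i\}$ of size $k-1$, and (ii) says nothing about such sets. Your chain-merging lemma would need a feasible $B\supseteq Y\setminus\{J_i\}$ of size $k$ among $J_1,\dots,J_{i-1}$. No such $B$ need exist, because timely sets lack the matroid exchange property.

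Concretely, take $e_i=b_i$ and $\Bz=\BM=8$, so that no recharge ever occurs. Take EDF-ordered jobs with $(e_i,d_i)=(3,3),(2,4),(2,4),(1,4)$. The scan gives $S_3=\{J_2,J_3\}$, and step 4 is a violation. The set $\{J_1,J_4\}$ is timely and contains $J_4$. However, $\{J_1\}$ extends to no timely pair within $\{J_1,J_2,J_3\}$, so neither (ii) nor your lemma applies to it. The conclusion of (b) happens to hold here, but your argument does not reach it.

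The repair is to strengthen (ii) to all sizes. For every timely $Y\subseteq\{J_1,\dots,J_i\}$ and every job $z$, require that $S_i$ contains at least as many jobs $\preceq z$ as $Y$ does. Equivalently, the $r$-th job of $S_i$ precedes the $r$-th job of $Y$ for all $r\le|Y|$. Taking $z$ maximal recovers (i).

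The induction is then short:
\begin{itemize}
\item If $J_i\notin Y$, use that $S_i$ dominates $S_{i-1}$ in this counting sense.
\item If $J_i\in Y$ and nothing is removed, add $[J_i\preceq z]$ to both counts.
\item If $J_i\in Y$ and $m$ is removed, consider two ranges of $z$. For $z\prec m$, the count in $S_i$ equals the count in $S_{i-1}$ plus $[J_i\preceq z]$. This is at least the count in $Y\setminus\{J_i\}$ plus the same term. For $z\succeq m$, the count in $S_i$ is $k$. Here $|Y|\le k$ follows from your argument for (a), which now runs under the strengthened hypothesis.
\end{itemize}

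You should also check explicitly that $J_i$ stays timely when $m\neq J_i$. Since $m$ dominates $J_i$ in both $e$ and $b$, we get $f(S_i)\le f(S_{i-1})\le d_i$. Your remark that deletions only lower $f$ covers only the earlier jobs, not $J_i$ itself.

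For comparison, the paper avoids this multi-size bookkeeping. It shows that some maximum timely set omits the discarded job, by exchanging that job for an omitted job of index at most $q$. It then inducts on $n$ over the instance without the discarded job. Your repaired invariant yields more: the selected set simultaneously minimizes $\sum e$ and $\sum b$ among maximum timely sets.
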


\begin{proof}
By \cref{lem:min-recharge}, the total recharging before the $j$-th job of a
schedule equals $\max(0,\sum_{k\le j} b_k - \Bz)$ and depends only on the
\emph{set} of jobs in that prefix. Hence the completion time of the $j$-th
scheduled job is
\[
  C_j = E_j + \max(0,\, P_j - \Bz),
\]
where $E_j=\sum_{k\le j} e_k$ and $P_j=\sum_{k\le j} b_k$.
Since $C_j$ is nondecreasing along the schedule and the prefix recharge depends
only on the set, the adjacent-swap argument of \cref{thm:maxt-partial} shows
that a set of jobs can all meet their deadlines in some order if and only if it
does so in EDF order; we may therefore restrict attention to EDF order and seek
a maximum-cardinality \emph{timely set}.

In each of the three cases the completion time depends on one cumulative
quantity and on the position. Write $w_i$ for the quantity that varies
($e_i$ when $e_i=b_i$ or $b_i=b$, and $b_i$ when $e_i=e$) and
$W_j=\sum_{k\le j}w_k$. Then $C_j=\Phi_j(W_j)$, where
\[
  \Phi_j(W)=
  \begin{cases}
    W+\max(0,\,W-\Bz), & e_i=b_i,\\
    W+\max(0,\,jb-\Bz), & b_i=b,\\
    je+\max(0,\,W-\Bz), & e_i=e.
  \end{cases}
\]
Each $\Phi_j(W)$ is non-decreasing in both $j$ and $W$.
For $e_i=b_i$ it does not depend on $j$, and the problem is the classical
single-machine tardy-job problem $1\,\|\,\sum U_j$, where $U_j$ indicates
whether job $j$ is tardy. In that formulation, job $i$ has deadline
$\max\{W:\Phi_j(W)\le d_i\}$.
In the other two cases the bound on $W_j$ depends on the position of the job
within the timely set, so the instance is not a classical one.
The Moore--Hodgson rule remains optimal because its exchange argument uses
only the two monotonicity properties.

The rule scans the jobs in EDF order and, whenever the current job's
completion exceeds its deadline, discards the selected job of largest weight
(\cref{alg:moore-hodgson}). In all three cases the other component of
$e_k+b_k$ is constant or equal to $w_k$, so the job of largest $e_k+b_k$ has
the largest weight; the test uses the true completion $E+\max(0,P-\Bz)$.
Let $q$ be the first job that misses its deadline when we select all jobs
$1,\ldots,q$, and let $k\le q$ have the largest weight among them.
\begin{enumerate}[label=(\roman*)]
  \item Deleting a job from a timely set leaves it timely. Each later job
        moves one position earlier and its cumulative weight decreases, so
        its completion does not increase.
  \item The set $\{1,\ldots,q\}\setminus\{k\}$ is timely. Deleting $k$ preserves the completions before it, and jobs between $k$ and $q$ finish no later.
        If $k\ne q$, then job $q$ now completes at
        $\Phi_{q-1}(W_q-w_k)\le\Phi_{q-1}(W_{q-1})=C_{q-1}\le d_{q-1}\le d_q$,
        since $w_k\ge w_q$.
  \item Some maximum timely set omits $k$. By (i), no timely set contains
        all of $1,\ldots,q$, so a maximum timely set $O$ omits some
        $j\le q$. If $k\in O$, then replace $k$ by $j$. A completion depends only
        on the selected jobs that precede it. The jobs of the new set with
        index at most $q$ form a subset of $\{1,\ldots,q\}\setminus\{k\}$
        and are timely by (i) and (ii). Every later job keeps its position,
        and its cumulative weight changes by $w_j-w_k\le0$.
\end{enumerate}
After discarding $k$, the rule proceeds exactly as it would on the instance
without job $k$: by (i) and (ii) it discards nothing there before job $q+1$.
Induction on $n$ proves optimality. Maintaining $\mathcal{T}$ in a
max-heap keyed by $e_i+b_i$ gives $O(n\log n)$ time.

For arbitrary $(e_i,b_i)$ the position fixes neither $E_j$ nor $P_j$, so
$C_j$ depends on two independent cumulative quantities. Minimizing the number
of tardy jobs is NP-hard in this regime (\cref{thm:numtardy-arb-nph}).
\end{proof}

\begin{algorithm}[tbp]
\caption{Moore--Hodgson rule for partial recharging}
\label{alg:moore-hodgson}
\begin{algorithmic}[1]
\Require Jobs $\mathcal{J}$ in EDF order
\Ensure Maximum-cardinality set $\mathcal{T}$ of timely jobs
\State $\mathcal{T} \gets \emptyset$;\; $E \gets 0$;\; $P \gets 0$
\For{$i = 1$ to $n$}
  \State $\mathcal{T} \gets \mathcal{T} \cup \{J_i\}$;\; $E \gets E + e_i$;\; $P \gets P + b_i$
  \If{$E + \max(0,\, P - \Bz) > d_i$} \Comment{true completion under partial recharging}
    \State Let $J_k \in \mathcal{T}$ have the largest $e_k + b_k$
    \State $\mathcal{T} \gets \mathcal{T} \setminus \{J_k\}$;\; $E \gets E - e_k$;\; $P \gets P - b_k$
  \EndIf
\EndFor
\end{algorithmic}
\end{algorithm}

\subsubsection{Arbitrary $(e_i, b_i)$: NP-hardness and Dynamic Program}

\begin{figure}[tbp]
\small
\begin{align*}
  \eta(m,f,B) &= 0
    && \text{if } m=0,\; f=0,\; B=\Bz, \\
  \eta(m,f,B) &= \infty
    && \text{if } m=0 \text{ and } (f \ne 0 \text{ or } B \ne \Bz), \\
  \eta(m,f,B) &= \infty
    && \text{if } f < 0 \text{ or } B > \BM, \\
  \eta(m,f,B) &= \min\bigl\{\eta(m{-}1,f,B){+}1,\;
                            \eta(m{-}1,f{-}e_m,B{+}b_m)\bigr\}
    && \text{if } B>0,\; f \le d_m, \\
  \eta(m,f,B) &= \min\Bigl\{\eta(m{-}1,f,0){+}1,\;
               \min_{0\le b\le b_m}\eta(m{-}1,f{-}e_m{-}b_m{+}b,b)\Bigr\}
    && \text{if } B=0,\; f \le d_m, \\
  \eta(m,f,B) &= \eta(m{-}1,f,B){+}1
    && \text{if } f > d_m.
\end{align*}
\caption{The Bellman equation for minimizing the number of tardy jobs under partial recharging with arbitrary
$(e_i, b_i)$ (\cref{thm:numtardy-dp}).}
\label{fig:bellman}
\end{figure}

For arbitrary $(e_i,b_i)$, the Moore--Hodgson rule no longer applies,
because the completion time of a job depends on two independent cumulative
quantities. We show that the problem is NP-hard and then give a dynamic
program for it.

\begin{theorem}[Hardness of the tardy count, arbitrary parameters]
\label{thm:numtardy-arb-nph}
Minimizing the number of tardy jobs under partial recharging with arbitrary
$(e_i, b_i)$ is NP-hard.
\end{theorem}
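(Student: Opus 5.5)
We reduce from Partition and ask whether all jobs can be made timely. Under partial recharging, \cref{lem:min-recharge} gives the completion time of the $j$-th job as $C_j=E_j+\max(0,P_j-\Bz)$, where $E_j$ and $P_j$ are the cumulative execution time and energy of the first $j$ jobs. We build an instance in which this time depends on two independent cumulative quantities. Every schedule is then forced to place a block of jobs before a gadget job with a tight deadline, and that block must balance execution time against energy.

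\emph{Construction.} Take positive integers $a_1,\ldots,a_n$ with $\sum_i a_i=2T$.
\begin{enumerate}
\item For each $i$ create an item job $J_i$ with $e_i=b_i$-complementary parameters: $e_i=a_i$ and $b_i=K-a_i$, where $K$ is a large constant. Every item job then has $e_i+b_i=K$.
\item Set $\BM$ large enough that capacity never binds, for instance $\BM=nK$.
\item Set $\Bz=mK-T$ for a suitable count $m$, or more simply make the items differ in the way that Partition requires.
\end{enumerate}
A cleaner variant avoids cardinality constraints. Use item jobs with $e_i=a_i$ and $b_i=a_i$, but split the roles: give each item $e_i=a_i$, $b_i=0$ or $e_i=0$, $b_i=a_i$ as two copies, and pair them. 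I would first try the two-quantity design. Use items $(e_i,b_i)=(a_i,\,M-a_i)$ with a common deadline $d$ for a set of "early" jobs, add one separator job $J_0$ with a tight deadline, and give all remaining jobs a huge deadline.

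\emph{Timeliness constraints.} For the set $X$ of items placed before $J_0$, the separator is timely only if
\[\sum_{i\in X}a_i+\max\Bigl(0,\ |X|M-\sum_{i\in X}a_i-\Bz\Bigr)+e_0\le d_0.\]
A second gadget imposes the reverse inequality. This gadget is a final-deadline job whose deadline equals the makespan bound $E+\max(0,S-\Bz)$. It is not enough on its own, because the makespan does not depend on the order. The reverse inequality must therefore come from the items themselves. I would give every item the deadline $D$ and place a filler job between the two phases, so that the items in the second phase must fit as well.

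\emph{Proof of equivalence.} The "if" direction takes the subset $X$ with $\sum_{i\in X} a_i=T$, schedules its items, then $J_0$, then the rest. Each deadline is checked directly from the formula for $C_j$. The "only if" direction first uses the EDF property. By the exchange argument of \cref{thm:maxt-partial}, any set that is timely in some order is timely in EDF order. This fixes the structure of the schedule, and the two inequalities then force $\sum_{i\in X}a_i=T$. Choosing $M$ large pins $|X|$, which removes any dependence on cardinality. If needed I would reduce from EquiPartition rather than Partition, so that fixing $|X|$ loses nothing.

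The main obstacle is obtaining the two opposing inequalities. A single deadline on $J_0$ bounds only one combination of $E$ and $P$. The design must make the separator's deadline bound $\sum_{X}a_i\le T$ through execution time while the recharge term is slack, and must make a later job bound $\sum_X a_i\ge T$ through its recharge. The first inequality must not be weakened by the $\max(0,\cdot)$ term. I would set $\Bz$ so that the block before $J_0$ exactly exhausts the initial charge when $\sum_X a_i = T$, which turns the $\max$ into a linear term at the critical point. I would then verify the required deadlines with the explicit formula for $C_j$.
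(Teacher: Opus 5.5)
\textbf{Gap 1: the question $U=0$ is polynomial.} Your reduction asks whether all jobs can be made timely, and that can be decided in polynomial time. $U=0$ holds exactly when $T_{\max}=0$, and \cref{thm:maxt-partial} shows that EDF minimizes $T_{\max}$ under partial recharging for arbitrary $(e_i,b_i)$ in $O(n\log n)$ time. A reduction of this form would therefore give P$=$NP.

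You use this same exchange argument in your ``only if'' direction, and it defeats the construction. Once the deadlines are fixed, EDF determines which jobs precede $J_0$, up to reordering among equal deadlines. That reordering is irrelevant: a block of jobs with a common deadline is timely exactly when its last job is. That job completes at $E+\max(0,P-\Bz)$, which depends only on the set of jobs scheduled so far. So no subset choice is left to encode Partition, however you tune the separator, the filler job, or the deadlines. The hardness has to come from choosing \emph{which} jobs are tardy, that is, from a threshold $U\le k$ with $k>0$.

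\textbf{Gap 2: the missing reverse inequality.} This is the obstacle you name yourself, and you do not resolve it. With items $(a_i,M-a_i)$ the sum $e_i+b_i$ is constant. A block $X$ then completes at $\max\{\sum_{i\in X}a_i,\ |X|M-\Bz\}$. The recharge branch depends only on $|X|$, so it can never force $\sum_{i\in X}a_i\ge T$.

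Your premise that one deadline bounds only one combination of $E$ and $P$ is also wrong. Since $C=\max\{E,\,E+P-\Bz\}$, a single deadline imposes two linear constraints. They point in opposite directions in the subset sum as soon as energy demand falls faster than execution time rises.

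The paper's proof uses exactly this, with EquiPartition:
\[
  e_i=x_i,\qquad b_i=2(X-x_i),\qquad \BM=\Bz=(2n-1)X,\qquad d_i=X/2,
\]
and it asks whether at most $n$ jobs are tardy. Because the deadlines are common, the timely jobs can be placed first. The last job of a timely set $Y$ of size $n$ finishes at $\max\{x(Y),X-x(Y)\}\ge X/2$, with equality exactly when $x(Y)=X/2$. Any $n+1$ jobs placed first finish at $3X-x(Y)\ge 2X$. No separator job or second gadget is needed.
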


\begin{proof}
We reduce from EquiPartition: given positive integers
$x_1, \ldots, x_{2n}$ with $X = \sum_i x_i$, decide whether $n$ of them sum
to $X/2$. We may assume that $n\ge2$ and $X$ is even, since we can decide the remaining instances directly.
Set $\BM = \Bz = (2n-1)X$ and construct $2n$ jobs with $e_i = x_i$,
$b_i = 2(X - x_i)$, and $d_i = X/2$. Every job is individually feasible,
because $b_i<2X\le\BM$.

Let $Y$ be the set of jobs scheduled first, in any order, and write
$x(Y)=\sum_{i\in Y}x_i$. These jobs consume $2|Y|X-2x(Y)$, so the last of
them completes at
\[
  x(Y)+\max\bigl(0,\;(2|Y|-2n+1)X-2x(Y)\bigr)
\]
(\cref{lem:min-recharge}). For $|Y|=n+1$ this value is
$3X-x(Y)\ge2X$, so at most $n$ jobs meet the deadline.
For $|Y|=n$ it is $\max\{x(Y),\,X-x(Y)\}\ge X/2$, with equality exactly
when $x(Y)=X/2$. Hence a schedule with at most $n$ tardy jobs exists if and
only if EquiPartition has a solution.
\end{proof}

Although the arbitrary case is NP-hard, we can solve it with a dynamic
program whose running time depends on the numerical job parameters. To
decide whether another timely job can be added, the program needs the
finish time and the remaining battery of the last timely job. Its state
therefore holds both quantities, together with the number of jobs
considered (\cref{fig:bellman}).

\begin{theorem}[Pseudo-polynomial dynamic program]
\label{thm:numtardy-dp}
A dynamic program minimizes the number of tardy jobs under partial
recharging with arbitrary $(e_i,b_i)$ (\cref{fig:bellman}).
Its running time is
\[
  O\!\left(n\BM \Bigl(\sum_{i=1}^n(e_i+b_i)\Bigr)\max_{1\le i\le n} b_i\right).
\]
\end{theorem}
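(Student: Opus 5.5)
We will prove the statement in three steps. First we reduce it to a timely-set selection problem. Next we justify the state space and the recurrence in \cref{fig:bellman}. Finally we count states and transitions.

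\emph{Reduction.} By \cref{lem:min-recharge} we may restrict attention to minimum-recharge schedules. The exchange argument of \cref{thm:maxt-partial} then shows that the recharge before any prefix depends only on the set of jobs in it. Hence a set of jobs can all be timely in some order if and only if it is timely in EDF order. Moreover, placing tardy jobs after all timely ones never delays a timely job. We therefore index jobs in EDF order and decide, for each job $m$, whether it is timely (appended to the timely sequence) or tardy (deferred to the end, contributing $1$). The value $\eta(m,f,B)$ is intended as the least number of tardy jobs among $J_1,\dots,J_m$ over selections whose timely jobs, run in EDF order with minimum recharging, end at time exactly $f$ with battery level exactly $B$.

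\emph{Recurrence.} We verify the cases of \cref{fig:bellman} by induction on $m$, with the base cases immediate. If $J_m$ is tardy, the state of the timely sequence is unchanged, giving the $+1$ term in every case. If $J_m$ is timely, we need $f\le d_m$, and we distinguish whether it needed a recharge.
\begin{itemize}
\item Under minimum recharging, a job that is recharged ends with battery exactly $0$. So if $B>0$, no recharge occurred before $J_m$, and the predecessor state was $(f-e_m,B+b_m)$.
\item If $B=0$, the predecessor battery $b$ satisfies $0\le b\le b_m$. The deficit $b_m-b$ was recharged, so the predecessor time is $f-e_m-(b_m-b)$. The case $b=b_m$ covers no recharge with exact depletion.
\end{itemize}
We also check that the capacity constraint, $B\le\BM$, is respected. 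The answer is $\min_{f,B}\eta(n,f,B)$, and a schedule is recovered by backtracking.

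\emph{Main obstacle and running time.} The step needing the most care is the correctness claim that minimum recharging loses nothing for the count objective. One might worry that charging more energy earlier could help a later job, but it only delays completions. \cref{lem:min-recharge} gives coordinatewise minimal completion times for a fixed order, and the count $U$ is nondecreasing in each completion time, so this is settled. For the count, $m$ ranges over $n$ values, $f$ over at most $\sum_i(e_i+b_i)+1$ values since minimum-recharge completions are bounded by this sum, and $B$ over $\BM+1$ values. Each state takes $O(1)$ time when $B>0$ and $O(\max_i b_i)$ time when $B=0$. Bounding every state by the larger cost gives the stated bound $O\bigl(n\BM\sum_i(e_i+b_i)\max_i b_i\bigr)$.
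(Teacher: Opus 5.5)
Your proposal is correct and follows the same route as the paper's proof. Both reduce the problem to choosing a largest timely subset, executed first in EDF order with minimum recharging, track the completion time and battery level of the last timely job as the state $(m,f,B)$, and bound the running time by the number of states times the $O(\max_i b_i)$ options at states with $B=0$. You give more detail than the paper does, notably the observation that under minimum recharging any recharged job ends with an empty battery, which justifies the two selection cases of the recurrence.
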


\begin{proof}
We process jobs in EDF order. For each job, we consider whether to append
it to the timely set or defer it until after the timely jobs. We retain
the smallest tardy count for each finish-time and battery state, since
these quantities determine whether later jobs can meet their deadlines
(\cref{apx:thm:numtardy-dp}).
\end{proof}

\subsection{Minimizing Makespan ($C_{\max}$)}
\label{sec:partial-makespan}

The total execution time is fixed, so we minimize makespan by minimizing
the total recharging time. Under partial recharging, the device can
replenish exactly the energy that the initial charge does not cover,
whatever the job order. We can therefore compute the optimum from the
total execution time and the total energy demand, and any order attains
it.

\begin{theorem}[Makespan under partial recharging]
\label{thm:makespan-partial}
Under partial recharging, every schedule that recharges only as needed has the
same makespan, independent of the job ordering:
\begin{equation}
  C_{\max} = \sum_{i=1}^n e_i
           + \max\!\Bigl(0,\;\sum_{i=1}^n b_i - \Bz\Bigr).
\end{equation}
In particular, minimizing $C_{\max}$ imposes no constraint on the job ordering,
and the optimum value is computable in $O(n)$ time for all four
energy-execution relationships, or in $O(1)$ when $\sum_i e_i$ and
$\sum_i b_i$ are already available. Constructing a schedule takes $O(n)$ time.
\end{theorem}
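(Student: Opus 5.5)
The plan is to combine \cref{lem:min-recharge} with the energy-conservation identity (\ref{eq:makespan-setup}) at $h=0$.

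Fix an arbitrary permutation $\pi$ and take the minimum-recharge amounts of \cref{lem:min-recharge}. The cumulative recharge through position $i$ is then $R_i=\max\{0,P_i-\Bz\}$, where $P_i=\sum_{j\le i}b_{\pi(j)}$. Without setup time $r_j=q_j$, so the makespan is
\[
C_n=\sum_i e_i+R_n.
\]
Since $P_n=S$ is the total demand for every permutation, $R_n=\max\{0,S-\Bz\}$, and the stated formula holds for every order. For optimality, conservation gives $Q=S-\Bz+B_n\ge S-\Bz$ with $B_n\ge0$, and also $Q\ge0$. Hence every feasible schedule, not only a minimum-recharge one, has $C_{\max}=E+Q\ge E+\max\{0,S-\Bz\}$. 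So the common value is optimal, and the job ordering is unconstrained.

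For the complexity claims, computing $E$ and $S$ takes $O(n)$ time, and evaluating the formula from these totals takes $O(1)$. To construct a schedule, take any order, for example the input order. Scan it once, maintaining the battery level $B$. Before each job, recharge $\max\{0,b-B\}$, and then update $B$. This attains $R_i$ at each position in $O(n)$ total time.

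The only point needing care is feasibility under the capacity bound. We need that acquiring exactly the deficit never pushes the battery above $\BM$. This is already handled in the proof of \cref{lem:min-recharge}, using $b_i\le\BM$: after a positive recharge the level equals $b_i\le\BM$. So I expect no real obstacle. The remaining work is to note that the argument uses no relationship between $e_i$ and $b_i$, so it covers all four job classes.
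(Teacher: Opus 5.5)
Your proof is correct and follows the paper's argument: \cref{lem:min-recharge} gives total recharge $\max\{0,S-\Bz\}$ for every order, so the makespan is $E$ plus that constant, and additional recharging cannot lower it. Your lower bound $Q=S-\Bz+B_n\ge\max\{0,S-\Bz\}$ just makes explicit the paper's closing remark that recharging more than necessary only increases $C_{\max}$.
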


\begin{proof}
By \cref{lem:min-recharge}, the minimum total recharging equals
$\max(0, \sum_i b_i - \Bz)$, regardless of job ordering.
The makespan equals $\sum_i e_i$ plus the total recharging; for any schedule
that recharges only as needed it is therefore the same constant, independent of
the ordering.
A schedule may recharge more than necessary, strictly increasing $C_{\max}$;
the value above is the minimum.
\end{proof}

This argument depends on the ability to replenish only the energy needed.
Under complete recharging, the choice of jobs between recharges determines
the total recharging time, and we showed that this choice is NP-hard when
energy demands differ (\cref{thm:makespan-nph}).

Partial recharging without setup time is therefore solvable in polynomial
time in 14 of its 16 variants. The two NP-hard variants both have
arbitrary $(e_i,b_i)$: for the sum of completion times we have a
factor-two guarantee, and for the number of tardy jobs we have a
pseudo-polynomial exact algorithm. The whole analysis rests on
\cref{lem:min-recharge}, which fails as soon as a recharge has a fixed
cost (\cref{sec:setup-partial}). Before we turn to that setting, we
evaluate $\mathcal{A}_1$ and $\mathcal{A}_2$ empirically.

\section{Evaluation without Setup Time}
\label{sec:experiments}

The approximation guarantees hold for every input, but they do not
indicate how close \algAone{} and \algAtwo{} come to the optimum on
particular job sets, or how long the algorithms take to execute. To
examine these questions, we compared their objective values with exact
optima and measured their running times on a desktop computer. We also
varied the battery capacity and the charging rate, to determine how these
parameters affect the approximation quality and the cost of requiring
complete recharging. This sensitivity experiment used 30 synthetic job
sets and six job sets constructed from public current traces of wireless
sensor nodes.

All experiments use the offline model with $h=0$. The traces supply job
demands only; we did not validate a physical battery or charger. The
proofs establish the guarantees. The experiments are only indicative: the
ratios that we report describe the tested job sets and parameter grids,
and they are not bounds for other inputs. (The figures and tables label
the two algorithms A1 and A2.)

\subsection{Empirical Approximation Quality}
\label{sec:exp-approx}

For each of the two approximation algorithms, we computed the exact
optimum of every instance by dynamic programming, and we report the ratio
of the algorithm's objective value to that optimum. For \SCT{} under
partial recharging, the exact solver is a bitmask dynamic program that
takes $O(n \cdot 2^n)$ time; for makespan under complete recharging, it is
a subset-sum dynamic program that takes $O(n \cdot \sum b_i)$ time.

\subsubsection{\SCT{} under Partial Recharging}

We evaluated \algAone{} (2-approximation) on two distributions of
instances with arbitrary $(e_i, b_i)$:
\emph{adversarial} instances that alternate between low-execution/high-battery
and high-execution/low-battery jobs, and
\emph{uniform random} instances with $e_i, b_i \in [1, 50]$.
For both distributions, $\BM=\sum_i b_i$. The alternating instances used
$\Bz=\lfloor\BM/3\rfloor$, and we sampled the low and high values uniformly
from $[1,10]$ and $[50,100]$, respectively. Uniform instances used
$\Bz=\lfloor0.3\BM\rfloor$.
For each distribution, we generated 200 instances at each of
$n\in\{6,8,10,12\}$, 100 at $n=14$, and 30 at $n=16$.
Thus we tested 930 instances from each distribution.

Across the 1,860 instances, the trial-weighted mean ratios were 1.0186
for the alternating distribution and 1.0084 for the uniform distribution,
and the largest observed ratio was 1.1814, at $n=6$
(\cref{fig:approx-ratios,tab:sct-approx}). On these distributions,
$\mathcal{A}_1$ is therefore much closer to the optimum than its
worst-case factor of two. The factor is nevertheless attained in the
limit: the two-job construction that follows \cref{thm:2approx} has ratio
$2\lambda/(\lambda+2)$, which approaches 2 as $\lambda$ grows
(\cref{fig:tightness}). Repeating the alternating structure with more
jobs reduced the observed ratio over the tested sizes. This observation
does not exclude other worst-case families with more jobs.

\subsubsection{Makespan under Complete Recharging}

We evaluated \algAtwo{} ($\nicefrac{5}{4}$-approximation) on
instances with $e_i\in[1,50]$, $b_i\in[1,100]$,
$\BM=\max(\max_i b_i,\lfloor\sum b_i/2\rfloor)$, and $\Bz=0$.
Our implementation uses size-class enumeration subroutines that meet the
requirements for the makespan guarantee (\cref{thm:a2-guarantee}): a
$\nicefrac{5}{4}$-approximation for SS-MIN and a $\nicefrac{3}{4}$-approximation
for SS-MAX (\cref{lem:ssmin-implementation,lem:ssmax-implementation}).
We used 200 instances at each of $n\in\{6,8,10,12,14\}$ and 100 at each of
$n\in\{16,18,20\}$, for 1,300 instances in total.

The trial-weighted mean ratio was 1.0119 and the largest observed ratio
was 1.0635 (\cref{fig:approx-ratios}(b) and \cref{tab:makespan-approx}).
Both values are well below the bound of~$\nicefrac{5}{4}$. One reason is
that the total execution time contributes equally to the makespan of
$\mathcal{A}_2$ and to the optimal makespan, which reduces the makespan
ratio relative to the ratio of the subset-sum subroutine
(\cref{thm:ssmin-reduct}).

\begin{figure}[tbp]
  \centering
  \includegraphics[width=0.95\linewidth]{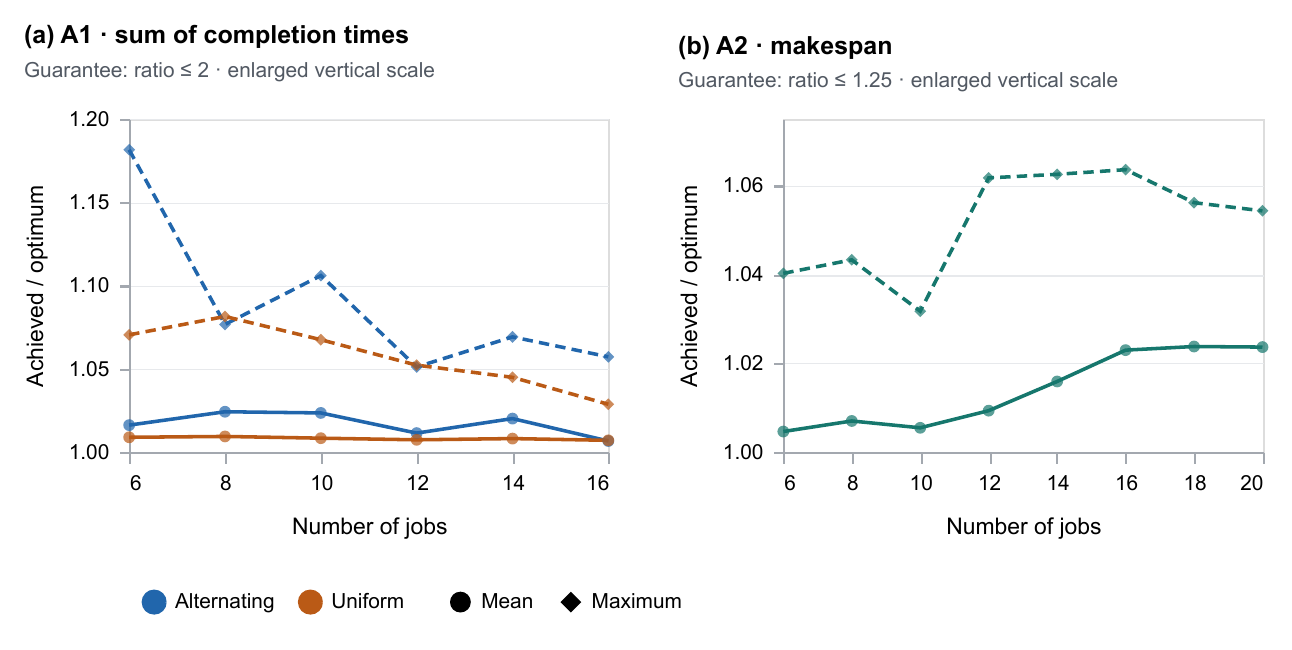}
  \caption{Objective value achieved by the algorithm divided by the exact optimum versus job count:
  (a)~\SCT{} of $\mathcal{A}_1$ under partial recharging, and
  (b)~makespan of $\mathcal{A}_2$ under complete recharging.
  Solid lines show means and dashed lines show maxima at each size.
  Across the sampled instances, the largest ratios were 1.1814 and 1.0635,
  respectively. We enlarged the vertical scales to show these differences;
  the theoretical bounds of 2 and $5/4$ lie above the plotted ranges.}
  \label{fig:approx-ratios}
\end{figure}

\begin{figure}[tbp]
  \centering
  \includegraphics[width=0.95\linewidth]{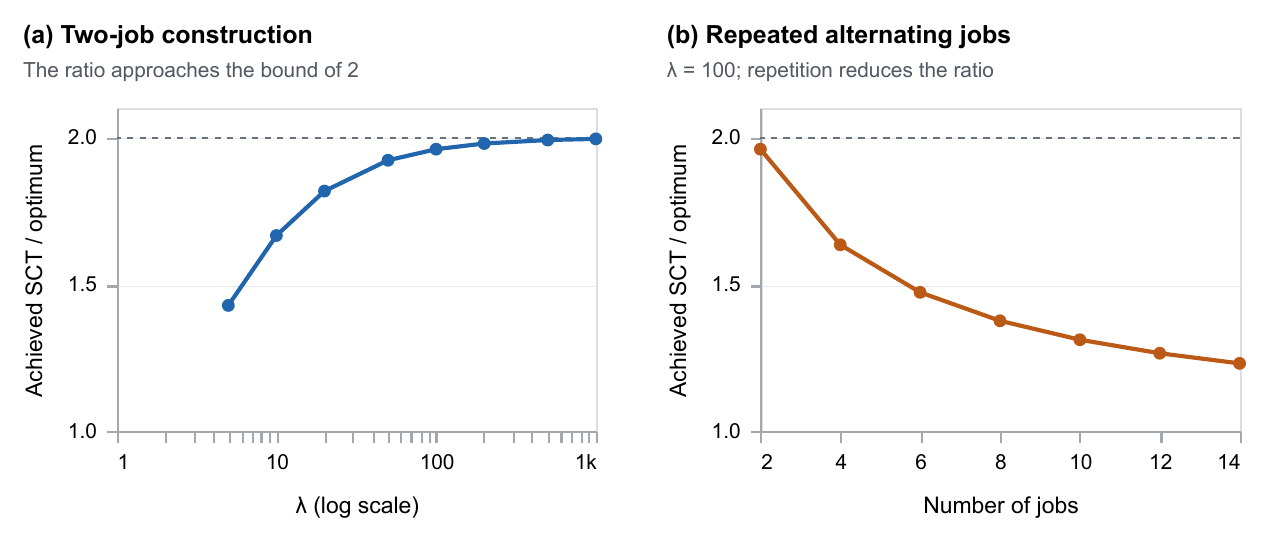}
  \caption{SCT approximation ratio of $\mathcal{A}_1$ on
  (a)~the two-job construction
  $J_1 = (1, \lambda)$, $J_2 = (\lambda - 1, 1)$, $\Bz = \BM = \lambda$
  versus $\lambda$, and (b)~its repeated alternating structure at
  $\lambda=100$ versus job count. For two jobs, the ratio
  $2\lambda/(\lambda+2)$ approaches 2, establishing tightness of the
  guarantee (dashed line). Repeating the structure reduced the observed ratios toward 1 and did not
  preserve the two-job worst case over the tested sizes.}
  \label{fig:tightness}
\end{figure}

\begin{figure}[tbp]
  \centering
  \includegraphics[width=0.95\linewidth]{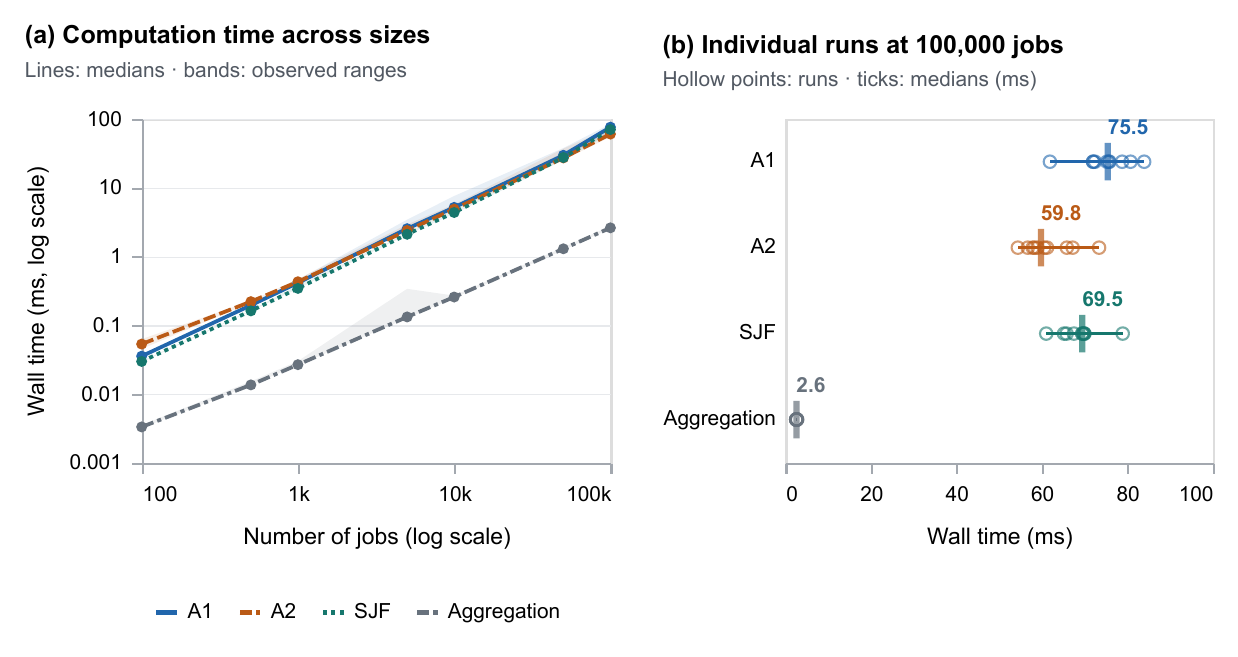}
  \caption{Desktop wall-clock computation time for $\mathcal{A}_1$,
  $\mathcal{A}_2$, SJF scheduling, and partial-makespan evaluation by summing
  job parameters. Timing included schedule construction and excluded input
  generation and disposal of inputs and results. (a)~Median time versus job count, with bands spanning the ten
  seeded runs. (b)~Individual runs at 100,000 jobs, with median ticks and
  values in milliseconds. At this size, median times were 75.5\,ms for
  $\mathcal{A}_1$ and 59.8\,ms for $\mathcal{A}_2$ on the tested desktop.}
  \label{fig:scalability}
\end{figure}

\begin{table}[tbp]
  \small
\centering
\caption{SCT of $\mathcal{A}_1$ divided by the exact partial-recharging optimum on 930 alternating low-execution/high-energy and high-execution/low-energy job sets. The trial-weighted mean ratio was 1.0186 and the maximum was 1.1814; the last column gives the number of trials at each size.}
\label{tab:sct-approx}
\begin{tabular}{rcccc}
\toprule
$n$ & Avg.\ ratio & Max.\ ratio & \% Optimal & Trials \\
\midrule
  6 & 1.0162 & 1.1814 & 39.5\% & 200 \\
  8 & 1.0242 & 1.0766 & 13.0\% & 200 \\
  10 & 1.0235 & 1.1059 & 34.5\% & 200 \\
  12 & 1.0114 & 1.0511 & 5.5\% & 200 \\
  14 & 1.0201 & 1.0691 & 20.0\% & 100 \\
  16 & 1.0066 & 1.0572 & 16.7\% & 30 \\
\bottomrule
\end{tabular}
\end{table}
\begin{table}[tbp]
  \small
\centering
\caption{Makespan of $\mathcal{A}_2$ divided by the exact complete-recharging optimum on 1,300 randomly generated job sets. The trial-weighted mean ratio was 1.0119 and the maximum was 1.0635, corresponding to at most 6.35\% excess makespan on these inputs.}
\label{tab:makespan-approx}
\begin{tabular}{rcccc}
\toprule
$n$ & Avg.\ ratio & Max.\ ratio & \% Optimal & Trials \\
\midrule
  6 & 1.0046 & 1.0402 & 61.5\% & 200 \\
  8 & 1.0070 & 1.0433 & 24.0\% & 200 \\
  10 & 1.0055 & 1.0317 & 19.5\% & 200 \\
  12 & 1.0093 & 1.0617 & 10.0\% & 200 \\
  14 & 1.0159 & 1.0625 & 6.5\% & 200 \\
  16 & 1.0229 & 1.0635 & 3.0\% & 100 \\
  18 & 1.0238 & 1.0561 & 1.0\% & 100 \\
  20 & 1.0236 & 1.0543 & 1.0\% & 100 \\
\bottomrule
\end{tabular}
\end{table}

\subsection{Running-Time Scalability}
\label{sec:exp-scalability}

We measured running time under Python 3.14.7 on an Apple M4 Pro host with
48\,GiB of memory running macOS 27.
At each of seven sizes from 100 to 100,000 jobs, we used one untimed
warmup followed by ten separately seeded runs. We included schedule construction in the timing and excluded instance
generation, schedule validation, and disposal of inputs and results. We
released each result after stopping its timer and before starting the
next call, and released each workload after all its calls. We left Python's
cyclic garbage collector enabled, shuffled the order of the algorithms, and
did not isolate the runs from other host activity.

At 100,000 jobs, median wall times were 75.5\,ms for \algAone{},
59.8\,ms for \algAtwo{}, and 69.5\,ms for SJF
(\cref{tab:timing-samples,fig:scalability}). The corresponding sample
standard deviations were 6.0, 5.8, and 4.6\,ms. The partial-makespan
measurement had a median of 2.6\,ms and included summing the input arrays, an $O(n)$ operation;
only evaluation from two already-computed sums is $O(1)$.
The three scheduling methods had similar running times, which is
consistent with sorting dominating each of them. These measurements
describe one desktop computer; they do not establish execution times on
embedded hardware.

\begin{table}[tbp]
  \centering
\small
\caption{Desktop computation time in milliseconds for scheduling with $\mathcal{A}_1$, $\mathcal{A}_2$, and SJF, and for partial-makespan evaluation by summing job parameters. Entries are medians (sample standard deviations) of ten seeded runs, including schedule construction and excluding input generation and disposal of inputs and results. All three scheduling methods had median times below 80\,ms at 100,000 jobs on this host; sum aggregation took 2.6\,ms.}
\label{tab:timing-samples}
\begin{tabular}{rrrrr}
\toprule
$n$ & $\mathcal{A}_1$ & $\mathcal{A}_2$ & SJF & Aggregation \\
\midrule
100 & 0.035 (0.001) & 0.053 (0.005) & 0.030 (0.000) & 0.003 (0.000) \\
500 & 0.194 (0.002) & 0.219 (0.004) & 0.161 (0.001) & 0.013 (0.001) \\
1,000 & 0.410 (0.016) & 0.426 (0.020) & 0.341 (0.018) & 0.026 (0.001) \\
5,000 & 2.518 (0.337) & 2.369 (0.221) & 2.087 (0.035) & 0.131 (0.065) \\
10,000 & 5.141 (0.789) & 4.851 (0.415) & 4.329 (0.250) & 0.255 (0.006) \\
50,000 & 29.411 (3.947) & 26.949 (4.657) & 27.290 (2.730) & 1.284 (0.035) \\
100,000 & 75.480 (6.012) & 59.821 (5.754) & 69.518 (4.610) & 2.591 (0.080) \\
\bottomrule
\end{tabular}

\end{table}

\subsection{Workloads and Capacity/Charge-Rate Sensitivity}
\label{sec:exp-casestudy}

We needed to distinguish the extra time imposed by the complete-recharging
policy from the error introduced by an approximation algorithm. To make
this distinction, we compared the optimal makespans under the two
policies, and we separately compared each algorithm with the optimum for
its own objective. We repeated these comparisons while varying the
capacity and the charging rate. We used the same design for 30 synthetic
job sets and for six job sets constructed from measured sensor-current
traces (\cref{sec:exp-traces}).

\begin{figure}[tbp]
  \centering
  \includegraphics[width=0.9\linewidth]{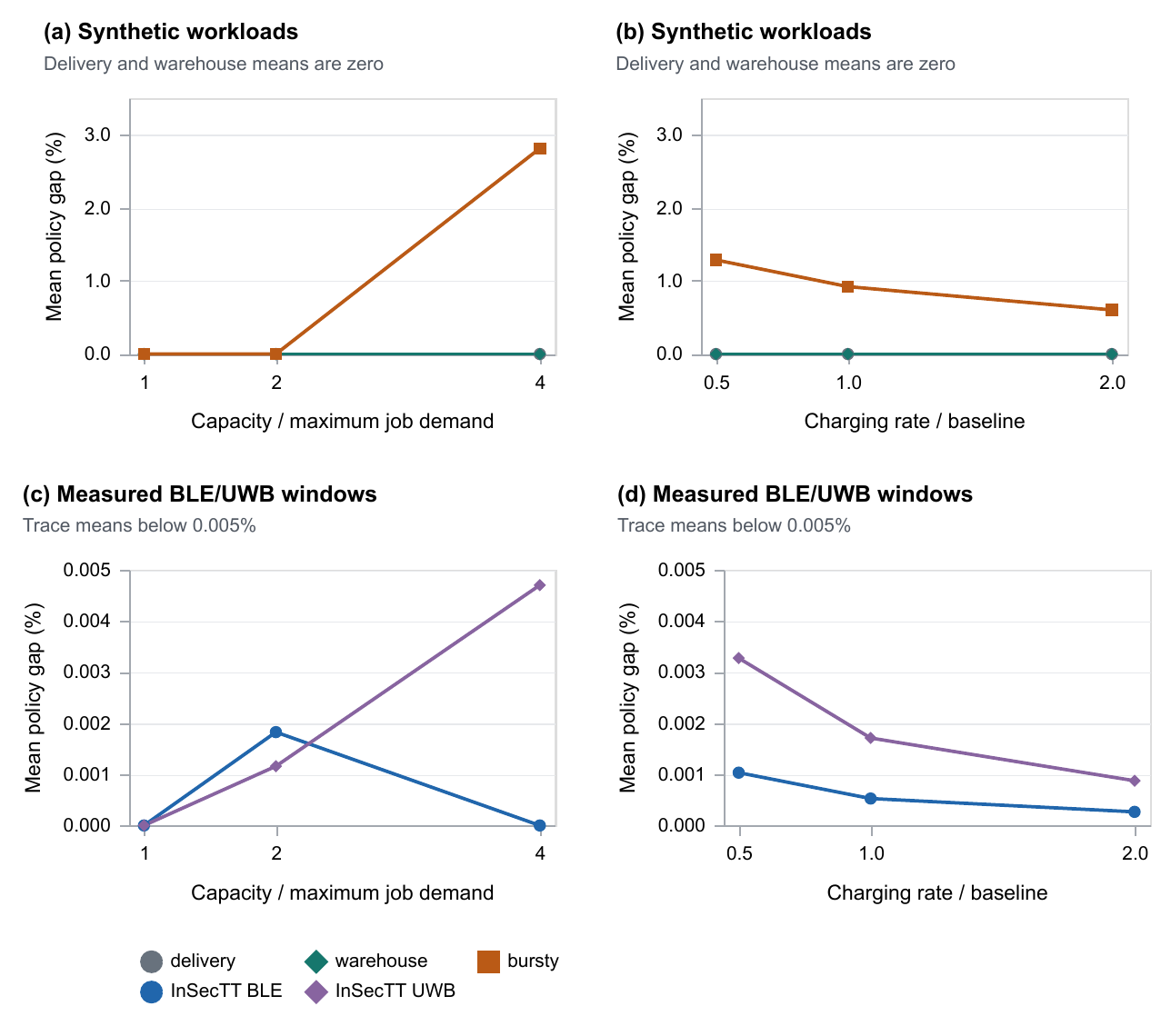}
  \caption{Extra optimal makespan caused by complete recharging, expressed
  as a percentage of the optimum under the partial-recharging policy, versus capacity (left)
  and charging rate (right). Points average over job sets, initial charges,
  and the other swept parameter within each family. The synthetic bursty
  jobs had the largest mean penalty; the BLE/UWB activity windows had
  mean penalties below 0.005\% at the tested settings. Faster charging
  reduced the mean time penalty over the tested settings; larger capacity
  did not always reduce it.
  Synthetic (top) and measured-window (bottom) panels use different vertical
  scales. We omitted capacity factor 0.75 because it made every set infeasible.}
  \label{fig:sensitivity}
\end{figure}
\begin{table}[tbp]
  \centering
\small
\caption{Policy cost and approximation quality across the 972 feasible capacity/charge-rate configurations. Policy gap is the percentage by which the optimal makespan under the complete-recharging policy exceeds that under the partial-recharging policy. We measured A2 excess against the optimal makespan under the complete-recharging policy and A1 ratios against the optimal SCT under the partial-recharging policy. The bursty sets had the largest policy gap (18.18\%); on the measured BLE/UWB windows, A1 attained the optimum and the policy gap stayed below 0.022\%.}
\label{tab:sensitivity}
\begin{tabular}{lrrrrrr}
\toprule
 & & \multicolumn{2}{c}{Policy gap (\%)} & A2 excess (\%) & \multicolumn{2}{c}{A1 ratio} \\
Workload & Sets & Mean & Max & Max & Mean & Max \\
\midrule
delivery & 10 & 0.00 & 0.00 & 1.58 & 1.0005 & 1.0092 \\
warehouse & 10 & 0.00 & 0.00 & 0.00 & 1.0002 & 1.0098 \\
bursty & 10 & 0.94 & 18.18 & 1.89 & 1.0166 & 1.2077 \\
InSecTT BLE & 3 & 0.0006 & 0.0062 & 0.1291 & 1.0000 & 1.0000 \\
InSecTT UWB & 3 & 0.0020 & 0.0215 & 0.1779 & 1.0000 & 1.0000 \\
\bottomrule
\end{tabular}

\end{table}

For each of three synthetic families, we generated ten sets with ten jobs
each, using seeds 0 through 9. Delivery jobs used $e=60D/V$ and charge demand
$Q=D(1.2+0.3W)$ with $D\sim U[4,20]$, $V\sim U[25,45]$, and
$W\sim U[0.5,5]$. Warehouse jobs used $e=D/(60V)+H$ and
$Q=0.01D+L$, with $D\sim U[20,200]$, $V\sim U[1,2]$,
$H\sim U[0.5,3]$, and $L\sim U[0.5,1.5]$. Bursty jobs alternated
between $e\sim U[1,5],Q\sim U[20,60]$ and
$e\sim U[10,30],Q\sim U[1,5]$. Draws were independent within these
rules. We rounded generated execution durations upward to integer scenario time units. Synthetic charge units are arbitrary and
$b_i=\lceil Q_i\rceil$;
these assumed rules have no physical calibration. We generated demands before choosing capacity and never clipped them to fit it.

For each job set, let $b_{\max}=\max_i b_i$. We tested
$\BM=\lfloor\kappa b_{\max}\rfloor$ for
$\kappa\in\{0.75,1,2,4\}$ and
$\Bz=\lfloor f\BM\rfloor$ for $f\in\{0,0.5,1\}$.
If any job exceeded capacity, then we rejected the configuration without
dropping jobs or reducing their demands. Charge rates were 0.5, 1, and 2 times a stated baseline,
which is one charge unit per scenario time unit for synthetic jobs.
For rate $r$, we passed $e'_i=2re_i$, $b'_i=2b_i$, and battery levels
scaled by two to the unit-rate algorithms, then divided time objectives by
$2r$. This common scaling made all inputs integral on the stated grid. It is
an exact change of units within the linear charging model.

For every feasible configuration, we obtained exact makespan and \SCT{}
optima under the partial-recharging policy, the exact makespan optimum under
the complete-recharging policy, and the schedules of \algAone{} and \algAtwo{}.
We independently replayed each schedule to check its charge bounds, recharge
policy, job coverage, and completion times. To quantify the difference
between the optimal makespans, we define the \emph{policy gap} as
$(C^*_{\max,\mathrm{complete}}/C^*_{\max,\mathrm{partial}}-1)100\%$;
to quantify the approximation error of $\mathcal{A}_2$, we define the
\emph{A2 excess} as
$(C_{\max,\mathcal A_2}/C^*_{\max,\mathrm{complete}}-1)100\%$.

To interpret the policy gap, we must also specify when the batch ends.
Our makespan objective stops at the completion of the last job and assigns
no value to the charge that remains in the battery. Let $S=\sum_i b_i$ and $E=\sum_i e_i$
in the scenario's charge and time units, and let $r$ be its charging rate.
If $S>\Bz$, then minimum partial recharging ends with zero charge.
If an optimal complete-recharging schedule ends with charge
$B_{\mathrm{end}}^*$, then energy conservation gives
\[
 C^*_{\max,\mathrm{complete}}-C^*_{\max,\mathrm{partial}}
 =\frac{B_{\mathrm{end}}^*}{r}.
\]
The extra time therefore replenishes charge that remains available after
the batch, and the gap is zero exactly when an optimal complete-recharging
schedule ends with an empty battery. If we instead require the device to
finish by restoring its battery to $\BM$, then both policies attain total
duration $E+(S+\BM-\Bz)/r$. The policy gap below measures completion of
one batch without this terminal-charge requirement.

\paragraph{Results on synthetic workloads.}

Across the 30 synthetic job sets, 810 of the 1,080 configurations were
feasible. We rejected the 270 configurations with $\kappa=0.75$ because
at least one job exceeded capacity. We gave each feasible configuration
equal weight within its workload family (\cref{tab:sensitivity,fig:sensitivity}).
The shared summaries also include the 162 feasible measured-window
configurations, whose construction and results we discuss separately
(\cref{sec:exp-traces}). Thus the full experiment had 972 feasible cases
among 1,296 configurations.

At $\kappa=1$, a job with demand $b_{\max}=\BM$ can fill the final
group exactly. If recharging is necessary, then this choice leaves zero
terminal charge and makes the two policy optima equal; otherwise both
policies execute the batch without recharging. This property accounts for
the zero policy gap in all 324 feasible configurations at $\kappa=1$.
Those zeros follow from the capacity construction, independently of the
workload family.

On the synthetic bursty jobs, the optimal makespan under complete
recharging exceeded the optimum under the partial-recharging policy by up to 18.18\%. The two
policy optima coincided on the sampled delivery and warehouse jobs.
The largest observed ratio of $\mathcal{A}_1$ was 1.2077, and the largest
A2 excess was 1.89\% (\cref{tab:sensitivity}).

Faster charging reduced the mean policy gap over the tested rate settings,
while increasing capacity did not always reduce it (\cref{fig:sensitivity}).
For fixed demands, capacity, and initial charge, let $R_p$ and $R_c$ be
the optimal amounts of replenished charge under the partial- and
complete-recharging policies. The percentage gap is
\[
 100\,\frac{R_c-R_p}{rE+R_p},
\]
when the partial-recharging optimum is positive. The optimal recharge
amounts do not depend on $r$ in our linear model, so the gap cannot
increase as charging becomes faster. This explains the rate trend in
the experiment. Changes in capacity also change which subsets fit in
the final group, so the same monotonicity need not hold for capacity.
The bursty workloads had the largest mean gap. We averaged over job sets,
initial charges, and the other swept parameter within each family, keeping
synthetic and measured families separate.

\subsection{Experiments with Measured Sensor-Current Traces}
\label{sec:exp-traces}

We also wanted to evaluate the algorithms on charge demands derived from
measurements. For this purpose, we constructed jobs from public
sensor-current recordings. The recordings supply the demands; the job
boundaries and the battery and charger parameters are our choices.

\paragraph{Data recorded by InSecTT.}
Version 1.0 of the InSecTT WSN Power Consumption
Dataset~\cite{karoliny2023insectt} contains current measurements from sensor
nodes using Bluetooth Low Energy (BLE), Thread, EPhESOS, and ultra-wideband
(UWB) communication. Its motivating application collects
two bytes of sensor data every 100\,ms and varies the communication period
from 100 to 1600\,ms. The corresponding application-data batches are
2, 4, 8, 16, and 32 bytes; longer periods batch more readings per transmission.
We used the BLE and UWB recordings. The BLE nodes used an nRF52840
development kit; the UWB nodes used an nRF52832 with a DW1000 transceiver.

Each protocol file contains a timestamp column and five current columns,
one for each communication period (100, 200, 400, 800, or 1600\,ms).
Each column is a separate 60-second recording sampled every
10\,$\mu$s with a Nordic Power Profiler Kit II, giving six million current
samples per configuration. The dataset reports current in microamperes and includes
radio activity and sleep. The recordings have arbitrary starting phases and
exclude acknowledgments and retransmissions. They provide current histories;
we constructed job boundaries and battery capacities for the scheduling
experiment, and our algorithms chose the recharge decisions.

\paragraph{Converting recordings into jobs.}
To use the recordings as scheduling inputs, we need an execution duration
and a charge demand for each job. We divided each
current column into nonoverlapping windows whose length equals
that column's communication period, starting at timestamp zero. For a window
$W_i$ with measured current samples $I_t$ in microamperes, we set $e_i$
to the window length in seconds and computed
\[
  Q_i=10^{-5}\sum_{t\in W_i} I_t\quad\text{microcoulombs}.
\]
We then set $b_i=\lceil Q_i/q\rceil$ with charge quantum
$q=0.1\,\mu$C, retaining the raw integral and rounding excess.
For example, the first 400-ms BLE window contains 40,000 samples and
consumed approximately $13.505\,\mu$C. We represented it as a job with
$e_i=0.4$\,s and $b_i=136$ charge units, representing $13.6\,\mu$C.
The window includes sleep as well as communication. We used its full
duration because we treated the activity window as a job in this model;
we did not interpret it as a measured atomic-task execution time.
The source supplies no voltage, so the measured quantity remains charge.

We constructed three ten-job sets per protocol. The first set took windows
0 and 1 from each of the five current columns, the second took windows
2 and 3, and the third took windows 4 and 5. Thus each set mixed two
windows at each of the five communication periods and had total execution
duration 6.2\,s. We selected the 60 distinct windows by fixed indices
before evaluating schedules. The two protocol files contain 2,324 full
windows in total; only these 60 entered the scheduling experiment. We
excluded incomplete final windows.

\paragraph{Scheduling experiment.}
We treated each ten-job set as an offline batch on one device, with all jobs
available at time zero. The scheduler chooses their order and inserts
recharges between jobs. For each set we varied capacity and initial charge
using the factors in \cref{sec:exp-casestudy}, and assumed constant charging
currents of 500, 1000, or 2000\,$\mu$A. With the charge quantum above,
the corresponding model rates are $r=I/q$. These battery and charger
settings were experimental inputs. We ran $\mathcal{A}_1$ for
\SCT{} under the partial-recharging policy and $\mathcal{A}_2$ for makespan
under the complete-recharging policy, computed their exact optima, and compared the optimal makespans of the two recharging policies.
The trace portion therefore had $6\times4\times3\times3=216$
configurations: 162 feasible cases and 54 rejected because at least one
job exceeded capacity.

\paragraph{Results on measured windows.}

For the selected sensor windows, $\mathcal{A}_1$ attained the exact \SCT{}
optimum; the maximum policy gap was 0.0215\% and the maximum A2 excess was
0.1779\% (\cref{tab:sensitivity,fig:sensitivity}). The first of these
observations has a structural explanation.
We checked each pair of jobs in all six batches at each of the three
charging currents. All 18 batch/current combinations satisfied
$e_i<e_j\Rightarrow e_i+b_i/r<e_j+b_j/r$. SJF with increasing-energy
tie-breaking therefore attains the \SCT{} optimum
(\cref{thm:sct-partial-poly}), independently of the capacity and the
initial charge within the feasible grid. These measured inputs lie in a
polynomially solvable case, and they do not distinguish $\mathcal{A}_1$
from that exact rule.

The batches each contained 6.2\,s of execution, which also helps explain
the small makespan percentages. The largest policy penalty was
0.4\,ms for BLE and 1.4\,ms for UWB; the largest A2 excess was
8.2\,ms and 11.6\,ms, respectively. For the UWB configuration with the
largest A2 excess, the optimal complete-recharging makespan was
6.5208\,s and the makespan of $\mathcal{A}_2$ was 6.5324\,s. The 11.6\,ms difference
was 0.1779\% of makespan but 3.616\% of the optimal 0.3208\,s spent
recharging.
To examine the effect of charge quantization, we repeated the experiment
with the coarser quantum $q=1\,\mu$C. This rerun gave a maximum trace
policy gap of zero and A2 excess of 0.2147\%. Changing the quantum also
changed the induced capacity and initial-charge grids, so we cannot use
this comparison as an estimate of measurement uncertainty.

We used the measured duration--charge pairs as fixed inputs to our offline
model. We combined separate recordings of communication configurations and
allowed the scheduler to reorder their windows. This construction does not
reproduce the original periodic traffic or test its deadlines and protocol
dependencies. The three selections per protocol are disjoint
parts of the recordings, not independent hardware trials. The dataset also
contains no measured charger or battery state trajectory. The results
therefore support approximation-quality and policy-cost comparisons for
these constructed batches; deployment would additionally require validating
job independence and measuring battery and charger behavior.

\subsection{Comparison with Simple Scheduling Heuristics}
\label{sec:exp-baselines}

A worst-case guarantee does not establish that an algorithm improves on
simpler rules for a particular workload. We therefore compared
$\mathcal A_1$ and $\mathcal A_2$ with simple heuristics on the same 972
feasible configurations. For \SCT{} minimization under the partial-recharging policy, SJF orders jobs by
$e_i$, breaking ties by $b_i$; SSF orders them by $e_i+b_i$, breaking ties
by $e_i$. Both recharge only enough to execute the next job. The weights
use the rate-normalized time units of the model, so SSF compares execution
time plus charging time. Remaining ties use input order.

For makespan minimization under the complete-recharging policy, the greedy heuristic scans jobs in decreasing
$b_i$, breaking ties by input order, and places each job that fits into the
final group. It schedules the remaining jobs first in input order,
recharging when needed, then fills the battery before the final group.
If all jobs fit in the initial charge, then it executes them without recharging.
$\mathcal{A}_2$ and the greedy heuristic both follow the
complete-recharging policy, so they differ only in how they select the
final group.

\begin{table}[tbp]
  \centering
\small
\caption{Mean objective divided by its exact optimum for five methods on the same 972 feasible configurations. SCT methods use partial recharging; makespan methods use complete recharging. Each configuration has equal weight within its workload family.}
\label{tab:baselines}
\begin{tabular}{lrccccc}
\toprule
 & & \multicolumn{3}{c}{\shortstack{SCT under the\\partial-recharging policy}} & \multicolumn{2}{c}{\shortstack{Makespan under the\\complete-recharging policy}} \\
Workload & Cases & A1 & SJF & SSF & A2 & Greedy \\
\midrule
delivery & 270 & 1.00053 & 1.01376 & 1.00122 & 1.00172 & 1.00531 \\
warehouse & 270 & 1.00020 & 1.00201 & 1.00054 & 1.00000 & 1.00383 \\
bursty & 270 & 1.01662 & 1.37473 & 1.14524 & 1.00060 & 1.00047 \\
InSecTT BLE & 81 & 1.00000 & 1.00000 & 1.00000 & 1.00010 & 1.00029 \\
InSecTT UWB & 81 & 1.00000 & 1.00000 & 1.00000 & 1.00032 & 1.00014 \\
\bottomrule
\end{tabular}

\end{table}
\begin{table}[tbp]
  \centering
\small
\caption{Paired wins/ties/losses on identical configurations. A win means that A1 or A2 has a strictly smaller integer objective than the named heuristic. Counts describe the tested grid, not independent deployment trials.}
\label{tab:baseline-pairs}
\begin{tabular}{lccc}
\toprule
Workload & A1 vs. SJF & A1 vs. SSF & A2 vs. Greedy \\
\midrule
delivery & 230/36/4 & 51/219/0 & 99/171/0 \\
warehouse & 69/196/5 & 9/261/0 & 54/216/0 \\
bursty & 243/15/12 & 155/115/0 & 27/225/18 \\
InSecTT BLE & 0/81/0 & 0/81/0 & 27/54/0 \\
InSecTT UWB & 0/81/0 & 0/81/0 & 0/54/27 \\
\bottomrule
\end{tabular}

\end{table}

The largest separation between the \SCT{} methods occurred on the
synthetic bursty workloads (\cref{tab:baselines}). The mean ratio of
$\mathcal{A}_1$ was 1.01662, compared with 1.37473 for SJF and 1.14524 for
SSF. The bursty jobs alternate between short jobs with large demands and
long jobs with small demands, which is the situation in which neither
execution time alone nor $e_i+b_i$ alone determines a good order.
$\mathcal{A}_1$ improved on SSF in 215 configurations and tied with it in
the remaining 757. Against SJF, $\mathcal{A}_1$ won 542 times, tied 409
times, and lost 21 times (\cref{tab:baseline-pairs}). All three methods
attained the \SCT{} optimum on the selected BLE and UWB windows, as the
ordering condition above predicts, so those batches show no advantage for
$\mathcal{A}_1$.

$\mathcal{A}_2$ improved on greedy final-group selection in 207
configurations, tied with it in 720, and lost in 45. Its mean makespan
ratio was smaller for delivery, warehouse, and BLE, while the greedy
heuristic had a slightly smaller mean for bursty and UWB. Both methods
were close to the optimum on these workloads. The comparison therefore
shows a benefit for $\mathcal{A}_1$ on the bursty jobs, but it does not
show that $\mathcal{A}_2$ consistently improves on greedy selection.
$\mathcal{A}_2$ has a proved worst-case bound; we have not established
such a bound for the greedy heuristic. The paired counts describe one
fixed parameter grid, and repeated settings of the same job sets are not
independent trials.

In these experiments, both algorithms were much closer to the optimum
than their worst-case factors: the largest observed ratios were 1.2077 for
$\mathcal{A}_1$ and 1.0635 for $\mathcal{A}_2$. The complete-recharging
policy itself cost up to 18.18\% in optimal makespan on the synthetic
bursty jobs and at most 0.0215\% on the other families. None of these
experiments includes setup time, which we now add to the model.

\section{Partial Recharging with Setup Time}
\label{sec:setup-partial}

Having classified the variants without setup time, we now let every
positive recharge of $q$ units take $h+q$ time units. The number of
charging sessions now affects every completion time. In the two-trip
example of
\cref{sec:intro} ($e_i=b_i=1$, $\BM=2$, $\Bz=0$, $h=3$), recharging one
unit before each job gives completion times $(5,10)$ and $\SCT=15$, whereas
a single recharge of two units gives $(6,7)$ and $\SCT=13$. The
minimum-recharging rule (\cref{lem:min-recharge}) therefore no longer
applies: even for a fixed job order, we must optimize the recharge amounts
together with their positions.

We first optimize the recharges for a fixed job order. We then use equal
energy demands to obtain polynomial algorithms that also choose the order.
This section treats partial recharging, and \cref{sec:setup-complete}
treats complete recharging.

\subsection{Recharging for a Fixed Order}

We first remove recharging decisions that cannot improve any of the four
objectives. Fix an order and index its jobs by position. We write
$P_j=\sum_{\ell=1}^j b_\ell$ and $E_j=\sum_{\ell=1}^j e_\ell$ for the
prefix sums of energy demand and execution time, with $P_0=E_0=0$. Let $k$
be the largest index satisfying $P_k\leq\Bz$. The first $k$ jobs form the
\emph{free prefix}: they can execute using the initial energy alone.

The following lemma shows that we may restrict attention to schedules in
which each session supplies exactly the jobs up to the next session.

\begin{lemma}[Partial-recharging normal form]
\label{lem:setup-partial-normal}
For any fixed order, an optimal schedule for each objective executes the
free prefix without recharging. If $k<n$, then it first recharges before job
$k+1$, has zero battery before each later positive recharge, and finishes
with zero battery.
\end{lemma}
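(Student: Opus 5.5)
Fix the order and take any feasible schedule; we transform it into one of the stated form without increasing any completion time. The four objectives are all non-decreasing in each completion time, so this suffices. We never add a session. Every session either keeps its amount or has it reduced, possibly to zero. Since $L(q)$ is non-decreasing in $q$ and $L(0)=0$, each $r_j$ can only decrease. By \eqref{eq:completion}, every $C_j$ can then only decrease. Throughout, the one thing to check after each change is feasibility under \eqref{eq:feasibility}.

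\emph{Step 1: no recharge inside the free prefix and none before it is needed.} Let $j_1$ be the first position with $q_{j_1}>0$. If $j_1\le k$, delete that session entirely and add its amount $q_{j_1}$ to the next positive session, or drop it if none follows. Jobs $j_1,\dots,k$ still fit, because $P_k\le\Bz$. Feasibility is kept up to the next session, since dropping the energy only matters after the free prefix. The next session, however, now acquires more, so its length may grow, and this is the subtle point. We avoid it by doing Step 2 first, which makes every session acquire only what is consumed before the next one. Then a session at $j_1\le k$ supplies energy consumed only beyond the battery's initial share. Concretely, I would argue via the cumulative constraint: feasibility of the whole schedule is equivalent to $\Bz+\sum_{i\le j}q_i\ge P_j$ for all $j$, together with the capacity constraints. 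The first job requiring energy beyond $\Bz$ is $k+1$. Hence moving a session's amount later, to a position $\le k+1$, preserves all lower constraints. It also only lowers battery levels, which preserves the capacity constraints, and it delays the time cost without changing the number of sessions. Postponing a session never increases earlier completion times and leaves later ones unchanged, since the same total length is spent by each later point. Therefore we move the first session to position $k+1$, merging it if one already sits there. Merging two sessions saves an $h$ and stays under capacity, because the battery level before job $k+1$ plus both amounts equals the level the later session reached. This is the step I expect to be the main obstacle: verifying capacity when merging.

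\emph{Step 2: zero battery before each later session and at the end.} Consider consecutive positive sessions at positions $s<s'$, with battery $B_{s'-1}>0$ just before $s'$. Reduce $q_s$ by $\min(q_s,B_{s'-1})$ and keep $q_{s'}$ unchanged. Between $s$ and $s'$ every battery level drops by the same amount, and all those levels were at least $B_{s'-1}$. This holds because levels are non-increasing between sessions, so nonnegativity holds. Before $s'$ the battery is lower, so capacity at $s'$ holds. After $s'$ nothing changes. If $q_s$ hits zero, the session vanishes, saving $h$, and we repeat with the previous session. If that previous session is the first one, it vanishes too, contradicting $k<n$ only if it was needed. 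For the first session the battery entering it is $\Bz-P_k$, and reduction is still valid. The same reduction applied to the last session uses $B_n$ and gives $B_n=0$ when $k<n$. Iterate over sessions from last to first; each pass either zeroes a battery level or removes a session, so the process terminates. The result has the form stated in the lemma, with all completion times no larger, so an optimal schedule of this form exists.
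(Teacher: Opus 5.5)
Your Step~1, read as its final version (postpone every session before job $k+1$ to that boundary and merge, justified by the cumulative constraint $\Bz+\sum_{i\le j}q_i\ge P_j$ and by $L(x+y)\le L(x)+L(y)$), is essentially the paper's first step and is fine. It already contradicts your opening claim that no session amount ever increases, and the case $k=n$ needs dropping rather than moving. The genuine gap is in Step~2. If you lower $q_s$ by $\delta=\min(q_s,B_{s'-1})>0$ and keep $q_{s'}$ unchanged, then the level after the recharge at $s'$ is also lower by $\delta$, and so is every later level. Your assertion that ``after $s'$ nothing changes'' is therefore false, and feasibility can fail. Take $\Bz=0$, $\BM=10$, two jobs with $b=(5,5)$, and $q_1=8$, $q_2=2$. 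Then $B_1=3$, your step sets $q_1=5$, and job~2 has only $0+2<5$ units. The last-to-first order makes the failure unavoidable: once $B_n=0$ holds, any further reduction forces $B_n<0$. More basically, reductions alone cannot reach the normal form. With $B_n=0$ the acquired energy is pinned at $S-\Bz$, so removing an intermediate residual means moving energy to a later session, not discarding it.

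The missing idea is a transfer, which is what the paper does: decrease $q_s$ by $\delta$ and increase $q_{s'}$ by the same $\delta$. The level after the recharge at $s'$ is then unchanged, so capacity at $s'$ holds and nothing after $s'$ changes. Your completion-time argument (``each $r_j$ can only decrease'') then no longer applies, because $r_{s'}$ grows. Replace it with this: the recharge at $s'$ starts $\delta$ earlier and lasts $\delta$ longer, with no new setup since $q_{s'}>0$. Its end time and all later completions are therefore unchanged, or earlier by $h$ if session $s$ vanishes, and the jobs between $s$ and $s'$ finish $\delta$ earlier. Only the final residual $B_n$ should be removed by pure reduction of the last session(s), and there your argument is correct. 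Processing left to right, as the paper does, also shows that the first charged session never vanishes, because job $k+1$ needs more than $a=\Bz-P_k$. This replaces your unclear remark about the first session.
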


\begin{proof}
We transform an arbitrary feasible schedule without increasing any
completion time. First, we postpone all recharges performed before job
$k+1$ to the boundary immediately before that job, and combine them. The
free prefix remains feasible. The battery after the combined recharge
equals its previous level at that boundary, so the capacity constraint
still holds. No completion time increases, because the transformation
postpones charging work and removes any redundant setups.

Next, we consider two consecutive positive recharges, and let $r$ be the
residual battery immediately before the second. We reduce the first
recharge by $r$ and increase the second by $r$. In the first charged
group, the first job requires more energy than the residual initial battery
$a=\Bz-P_k$; hence $r$ is less than the first recharge amount.
For each later group, the previously normalized starting battery is zero,
so the same reduction is feasible. Battery levels decrease monotonically
between recharges, and the original final residual was $r$.
Consequently, subtracting $r$ preserves every intermediate energy constraint.

The second recharge starts $r$ time units earlier and acquires $r$
additional units. Its completion and final battery are unchanged unless a
setup disappears, in which case subsequent jobs finish earlier.
We apply this transformation from left to right, and then remove unused
energy from the last recharge.
\end{proof}

In this normal form, each session supplies one consecutive group of jobs.
The only residual that a group may inherit is $a$, after the free prefix.
All recharge amounts are integers, because the demands and the initial
energy are integers. The lemma therefore reduces the choice of recharge
amounts to the choice of group boundaries, and we choose these boundaries
differently for each objective.

\paragraph{Sum of completion times.}
A recharge before position $i+1$ delays all $n-i$ remaining completions.
Let $D[j]$ be the least accumulated weighted charging time at a group
boundary after job $j$. We initialize $D[k]=0$ and all other values to
infinity. For $i\geq k$, we define $a_i=a$ when $i=k$, and $a_i=0$
otherwise. A group $i+1,\ldots,j$ is feasible when $P_j-P_i\leq\BM$;
its required recharge is $q_{ij}=\max(0,P_j-P_i-a_i)$. We use the
recurrence
\begin{equation}
 D[j]=\min_{\substack{k\leq i<j\\P_j-P_i\leq\BM}}
       \{D[i]+(n-i)L(q_{ij})\}.
 \label{eq:setup-partial-sct-dp}
\end{equation}
Adding the execution contribution
$\sum_{j=1}^n(n-j+1)e_j$ gives the optimum.
The normal form supplies a path through this recurrence for some optimal
schedule, and every path reconstructs a feasible schedule.
Prefix sums make each transition constant work, which gives $O(n^2)$ time
and $O(n)$ storage for values and predecessors.

\paragraph{Makespan.}
Makespan permits a faster construction, because every unit of charging
time has the same weight. After the free prefix, we form the longest
possible consecutive group with demand at most $\BM$, recharge exactly its
deficit, and repeat. This construction acquires exactly $S-\Bz$ energy when
recharging is necessary. Its first group ends at least as late as the
first group in any other feasible partition, and repeating this comparison
at successive boundaries shows that greedy grouping uses the fewest
sessions. Fixed-order makespan therefore takes $O(n)$ time.

\paragraph{Deadline objectives.}
For the deadline objectives, two prefixes with the same boundary may have
different elapsed times, because they used different numbers of setups.
We therefore retain the session count $K$ in the state.
For a charged group that ends at position $j$, each member $\ell$
completes at
\begin{equation}
 C_\ell=E_\ell+P_j-\Bz+hK.
 \label{eq:setup-partial-group-completion}
\end{equation}
The cumulative acquired energy at that group is $P_j-\Bz$, which explains
why earlier recharging decisions do not appear in this expression.

A state $(i,K)$ stores the least objective value through boundary $i$.
For $T_{\max}$, a transition over $i+1,\ldots,j$ takes the maximum of
the predecessor value, zero, and the group maximum of
$E_\ell-d_\ell+P_j-\Bz+hK$.
Maintaining a running maximum while extending a group gives $O(n^3)$
time over all boundaries and session counts.
For $U$, we fix a destination $(j,K)$ and precompute
\[
 V_{j,K}(r)=\sum_{\ell=1}^{r}
   \mathbf{1}\{E_\ell+P_j-\Bz+hK>d_\ell\},
 \qquad 0\leq r\leq j .
\]
The group contributes $V_{j,K}(j)-V_{j,K}(i)$.
Its predecessor uses $K-1$ sessions if $q_{ij}>0$, and $K$ otherwise.
Processing the destinations by increasing $j$ permits us to evaluate all
starting boundaries in $O(n)$ additional work per destination.
Both deadline objectives therefore take $O(n^3)$ time with $O(n^2)$
dynamic-programming values.

These fixed-order algorithms accept arbitrary energy demands. They do not
choose the order, which is the hard part of the problem for unequal demands
(\cref{sec:hardness}).

\subsection{Equal Energy Demands}

When $b_i=b$, a fixed recharging pattern is feasible independently of which
job occupies each position. We can therefore exchange the identities of
jobs without changing the battery trajectory or the recharge durations,
and we use such exchanges to fix the job order for two of the objectives.

For $\SCT$, the execution contribution is
$\sum_j(n-j+1)e_{\pi(j)}$, which SJF order minimizes.
For $T_{\max}$, consider adjacent jobs $u,v$ with $d_u>d_v$.
We exchange their identities while keeping the recharges at the same
positions. The second completion time is unchanged. Job $v$ now finishes
no later than its original completion, and the new tardiness of $u$ is no
greater than the original tardiness of $v$. Repeated exchanges give EDF
order. These arguments apply to both recharging policies.

\begin{lemma}[Equal-energy orders]
\label{lem:setup-equal-orders}
Under either recharging policy, some optimal equal-energy schedule uses
SJF for $\SCT$, and some optimal equal-energy schedule uses EDF for
$T_{\max}$.
\end{lemma}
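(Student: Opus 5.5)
The proof is an exchange argument on job identities, applied to an optimal schedule while its recharge amounts stay attached to positions. Fix an optimal schedule $(\pi,q_1,\ldots,q_n)$ under either policy. Since every job has demand $b$, the battery recursion (\ref{eq:battery-level}) gives $B_j=\Bz+\sum_{\ell\le j}q_\ell-jb$ whatever the order. Hence the feasibility constraint (\ref{eq:feasibility}) and, under complete recharging, the requirement $q_j\in\{0,\BM-B_{j-1}\}$ hold for every permutation once the sequence $(q_j)$ is fixed. The recharge durations $r_j=L(q_j)$ are also independent of the order.

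The first step is to record the completion times under any permutation $\pi'$ with the same amounts:
\[
C_j=\sum_{\ell\le j}r_\ell+\sum_{\ell\le j}e_{\pi'(\ell)}.
\]
The first sum depends only on the position.

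For $\SCT$, summing over $j$ gives a recharge contribution $\sum_j(n-j+1)r_j$ and an execution contribution $\sum_j(n-j+1)e_{\pi'(j)}$. The first is fixed. The coefficients $n-j+1$ decrease with $j$, so the rearrangement inequality shows that the second is minimized by SJF. Replacing $\pi$ by SJF with the same $(q_j)$ therefore yields a feasible schedule of no larger cost, which is optimal.

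For $T_{\max}$, I will remove adjacent inversions one at a time, following the proof of \cref{lem:edf-complete}. Suppose $u$ sits at position $p$ and $v$ at position $p+1$ with $d_u>d_v$. Swap them and keep $(q_j)$ unchanged. The completion time $t$ of position $p+1$ is unchanged, since it depends only on the set of the first $p+1$ jobs and on the fixed recharges, and all later completions are also unchanged. After the swap, $v$ completes at most at $t$ and $u$ completes at $t$. Both new tardiness values are therefore at most $(t-d_v)^+$, which was the tardiness of $v$ before the swap. Hence $T_{\max}$ does not increase. Each swap strictly reduces the number of inversions, so finitely many swaps give an EDF schedule that is still optimal.

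The only delicate point is to confirm that keeping $(q_j)$ fixed stays legal under complete recharging, where $q_j$ must equal $\BM-B_{j-1}$ whenever it is positive. This holds because $B_{j-1}$ does not depend on the order, so the same amounts still fill the battery exactly. With that checked, setup time causes no difficulty: $L(q_j)$ depends only on $q_j$, so every session persists with the same duration after any exchange.
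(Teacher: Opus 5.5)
Your proposal is correct and follows the paper's own argument: with equal demands the battery trajectory, and hence feasibility under either policy (including the requirement $q_j\in\{0,\BM-B_{j-1}\}$ under complete recharging), does not depend on the order, so the recharges can stay attached to positions. The paper then also uses the fixed-weight execution contribution $\sum_j(n-j+1)e_{\pi(j)}$ to get SJF for \SCT{} and adjacent exchanges that preserve the second completion time to get EDF for $T_{\max}$.
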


The lemma separates sequencing from recharging for these two objectives.
For $T_{\max}$ under partial recharging, EDF order followed by the
fixed-order dynamic program gives an $O(n^3)$ algorithm.
For $\SCT$, equal demands also permit a faster optimization of the
recharges, which we describe next.

\subsubsection{Sum of Completion Times}

If $b=0$ or $nb\leq\Bz$, then SJF needs no recharging.
Otherwise, let $k=\lfloor\Bz/b\rfloor$, $N=n-k$,
$a=\Bz-kb$, and $m=\lfloor\BM/b\rfloor$.
The free prefix is fixed, and the remaining decision is to divide the
$N$ jobs into charged groups of sizes $g_1,\ldots,g_K$, each at most $m$.
The first recharge is $bg_1-a$ and the later recharges are $bg_r$.
Their weighted contribution is
\begin{equation}
 \frac b2\left(N^2+\sum_{r=1}^{K}g_r^2\right)
       +h\sum_{r=1}^{K}r g_r-Na .
 \label{eq:setup-partial-equal-convex}
\end{equation}
The quadratic term counts the charging energy that precedes the jobs of
each group; the linear term counts the number of setups for which each job
waits.

We minimize the variable part of
(\ref{eq:setup-partial-equal-convex}) without enumerating all partitions.
Equivalently, we minimize
$\sum_r\bigl(bg_r(g_r+1)/2+hrg_r\bigr)$ subject to
$\sum_r g_r=N$ and $0\leq g_r\leq m$, using $N$ potential groups.
Increasing group $r$ from size $t-1$ to $t$ has marginal cost
\begin{equation}
 \Delta_{r,t}=bt+hr.
 \label{eq:setup-equal-marginal}
\end{equation}
The marginal costs of each group increase with $t$, so a greedy selection
of the cheapest marginal costs is valid. A minimum heap initially contains
$\Delta_{r,1}$ for all $1\leq r\leq N$. We repeatedly select the least
available marginal cost, increase the size of that group, and insert its
next marginal cost if its capacity permits. After $N$ selections, we omit
the empty groups.

\begin{theorem}[Equal-energy partial \SCT{}]
\label{thm:setup-partial-equal-sct}
SJF followed by the marginal-cost allocation minimizes $\SCT$ under
partial recharging in $O(n\log n)$ time and $O(n)$ additional space.
\end{theorem}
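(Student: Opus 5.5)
Three things need to be established: that SJF order is optimal; that, for SJF order, the normal form of \cref{lem:setup-partial-normal} turns the recharging decision into a choice of group sizes with cost (\ref{eq:setup-partial-equal-convex}); and that the heap-based marginal-cost greedy minimizes that separable convex cost.

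\emph{Ordering.} By \cref{lem:setup-equal-orders}, some optimal schedule uses SJF order. Fix that order. By \cref{lem:setup-partial-normal}, the optimal schedule in this order has three features. The free prefix of $k=\lfloor\Bz/b\rfloor$ jobs runs without recharging. Every later session starts with residual $0$, except the first, which starts with residual $a$. Each session supplies exactly one consecutive group of $g_r\le m$ jobs. The first recharge is therefore $bg_1-a$, which is positive because $g_1\ge1$ and $a<b$, and the later recharges are $bg_r$. With equal demands these amounts depend only on the group sizes, so the feasible patterns are exactly the compositions $(g_1,\dots,g_K)$ of $N$ with parts in $[1,m]$.

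\emph{Cost.} The charging time that precedes a job in group $r$ is $b(g_1+\dots+g_r)-a+hr$. Summing over all jobs gives $b\sum_{r}g_r(G_{r-1}+g_r)+h\sum_r rg_r-Na$, where $G_{r-1}=g_1+\dots+g_{r-1}$. Using $\sum_r g_rG_{r-1}+\sum_r g_r^2=(N^2+\sum_r g_r^2)/2$ yields (\ref{eq:setup-partial-equal-convex}). The variable part is $\sum_r c_r(g_r)$ with $c_r(g)=bg(g+1)/2+hrg$, up to an additive constant.

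\emph{Relaxation.} The main obstacle is that the greedy optimizes over $N$ slots that may be empty, whereas a schedule needs its nonempty groups to occupy consecutive indices $1,\dots,K$. The relaxation with $0\le g_r\le m$ over $r=1,\dots,N$ is at least as permissive, since every composition pads with zeros. Conversely, take an optimal relaxed vector and delete its empty slots. Each nonempty group then moves to an index no larger than before, so the term $h\,r\,g_r$ does not increase and the quadratic terms are unchanged. Hence the relaxed optimum equals the true optimum, and deleting empty groups from any relaxed optimum gives an optimal schedule. The case $h=0$ is also covered, because the quadratic cost is then symmetric in $r$.

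\emph{Greedy and running time.} Each $c_r$ is convex with marginal costs $\Delta_{r,t}=bt+hr$, which increase in $t$. Minimizing a sum of separable convex functions subject to $\sum g_r=N$ and box constraints is solved by choosing the $N$ smallest marginals, taking a prefix in each slot. The standard argument is an exchange: if an optimal solution omits a cheaper available marginal while using a more expensive one, swap the two unit increments; monotonicity in $t$ keeps prefixes consistent. The heap realizes this selection in $N$ extract-min and insert operations. The heap initially holds $N$ entries, so the greedy runs in $O(n\log n)$ time with $O(n)$ space. SJF sorting also costs $O(n\log n)$, and building the schedule from the group sizes costs $O(n)$.
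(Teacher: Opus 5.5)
Your proof is correct and follows the paper's argument. SJF comes from \cref{lem:setup-equal-orders}, the normal form reduces the recharges to a composition of $N$ into parts of size at most $m$ with cost (\ref{eq:setup-partial-equal-convex}), and a heap selects the $N$ cheapest marginals $bt+hr$, which increase in $t$. The one difference is how you go from the $N$ possibly empty slots to a valid group sequence. The paper breaks ties toward the smaller group index, so that with $h\ge0$ the occupied groups already form a prefix with nonincreasing sizes. You instead delete the empty slots and observe that the setup term $h\sum_r rg_r$ can only decrease, an argument that works under any tie-breaking rule.
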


\begin{proof}
\Cref{lem:setup-equal-orders} fixes an optimal job order.
Selecting the $N$ cheapest marginal costs minimizes the separable
expression, because the earlier marginal costs of a group never exceed its
later ones. We resolve ties by the smaller group index.
Since $h\geq0$, the occupied groups then form a prefix and their sizes are
nonincreasing, so the selected allocation is a valid group sequence.
Its charging cost is optimal by (\ref{eq:setup-partial-equal-convex}).
Sorting and $N$ heap operations each take $O(n\log n)$
time, and the heap and the reconstructed schedule use $O(n)$ space.
\end{proof}

For the two-trip example above, the first marginal costs of the two
potential groups are $4$ and $7$. Increasing the first group to size two
costs $5$, so the heap places both jobs in one group. If $h=0$, then the
first marginal cost of every group is $b$, and the algorithm produces
one-job groups, in agreement with \cref{lem:min-recharge}.

\subsubsection{Makespan}

When $b>0$ and $nb>\Bz$, the maximum size of the free prefix is
$\lfloor\Bz/b\rfloor$, and each charged group contains at most $m$ jobs.
Consequently, every schedule needs at least
\begin{equation}
 K_{\min}=
 \left\lceil\frac{n-\lfloor\Bz/b\rfloor}{m}\right\rceil
 \label{eq:setup-equal-min-sessions}
\end{equation}
sessions. The fixed-order greedy construction attains this count
and acquires exactly $nb-\Bz$ energy.
Any job order therefore yields an optimal equal-energy makespan schedule
in $O(n)$ time. This statement includes the case in which no recharge is
needed, $b=0$ or $nb\leq\Bz$.

\subsubsection{Number of Tardy Jobs}

Minimizing $U$ requires us to select which jobs finish on time. We first
show that the timely jobs can be moved to the front. Delete all tardy jobs
from a feasible schedule, and reduce the recharges as needed.
The retained jobs finish no later: deleting execution frees time, and
deleting energy demand cannot require additional charging.
Under complete recharging, each retained recharge can still fill the
battery; its amount is no larger than before.
We then append the deleted jobs after the retained ones.
The EDF exchange argument orders the timely prefix without making
any of its jobs late.

We therefore scan the jobs in EDF order and retain the earliest completion
time $A(t,\beta)$ for each count $t$ of selected timely jobs and each
battery level $\beta$. Skipping the next job preserves the state.
Selecting job $i$ requires a feasible recharge $q$ and updates
\begin{equation}
 (t,\beta)\longrightarrow(t+1,\beta+q-b),
 \qquad A(t,\beta)+L(q)+e_i\leq d_i .
 \label{eq:setup-timely-transition}
\end{equation}
Among histories with the same $(t,\beta)$, an earlier completion permits
every continuation of a later one. This dominance rule justifies storing
one time per state.

For partial recharging, \cref{lem:setup-partial-normal} applies to
the selected subset. We need only transitions without a recharge and
positive recharges to target levels $\ell b$, where
$1\leq\ell\leq\min(n-t,\lfloor\BM/b\rfloor)$.
Reachable batteries have the forms $\Bz-rb$ or $rb$, with at most $n$
executed or reserved jobs, so only $O(n)$ battery values occur.
A direct enumeration of every source and target would take quartic time.
For target $z=\ell b$, the completion time after a recharge is instead
\[
 e_i+h+z+\min_{\beta<z}\{A(t,\beta)-\beta\}.
\]
For each $t$, we sweep the sorted batteries and the increasing targets
while retaining the prefix minimum in this expression. This sweep
evaluates all transitions with a recharge in $O(n)$ time, in addition to
the transitions without one.

\begin{theorem}[Equal-energy partial tardy count]
\label{thm:setup-partial-equal-tardy}
The EDF subset dynamic program with prefix minima minimizes $U$ under
partial recharging in $O(n^3)$ time.
\end{theorem}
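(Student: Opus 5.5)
We prove correctness and the running time in four steps: reduce to EDF prefixes of the timely set, restrict the transitions by the normal form, justify the dominance rule, and count states.

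\emph{Step 1: timely jobs first, in EDF order.} The argument preceding the theorem already does most of this. Take an optimal schedule, delete its tardy jobs, and reduce the recharges. The retained jobs finish no later, so they stay timely. Append the deleted jobs afterward; this does not change the count $U$. Because demands are equal, an EDF exchange with the recharges kept at their positions leaves every retained job timely. This is the same argument as in \cref{lem:setup-equal-orders}. Hence it suffices to find a largest subset $\mathcal{T}$ that is timely when executed in EDF order with the best recharges, and scanning the jobs in EDF order while choosing to skip or select each one enumerates every such subset.

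\emph{Step 2: restrict the transitions.} Apply \cref{lem:setup-partial-normal} to the fixed EDF order of $\mathcal{T}$. The normal form does not increase any completion time, so it keeps every job of $\mathcal{T}$ timely. In this form, every positive recharge before the free prefix ends occurs at the boundary after job $\lfloor\Bz/b\rfloor$ of $\mathcal{T}$. Every later recharge starts at battery level $0$, except the first, which may start at the residual $a$. Each recharge supplies exactly the $\ell$ jobs of its group, so it ends at level $\ell b$ with $\ell\le\min(n-t,\lfloor\BM/b\rfloor)$.

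Consequently, every battery level reached has the form $\Bz-rb$ (in the free prefix) or $rb$ (inside a group), with $O(n)$ values. Every recharge ends at a level $z=\ell b$. The state space $(t,\beta)$ therefore has $O(n^2)$ entries, and every normal-form schedule of $\mathcal{T}$ is a path of transitions (\ref{eq:setup-timely-transition}). Conversely, every path is a feasible schedule, since each transition checks the battery capacity and the deadline.

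\emph{Step 3: dominance.} Fix two histories with the same $(t,\beta)$ and completion times $A\le A'$. Any continuation from the later history can be replayed from the earlier one with the same amounts $q$, so every deadline test still passes. Storing only the minimum $A(t,\beta)$ therefore loses no optimal solution. The answer is the largest $t$ that has a finite entry after the scan.

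\emph{Step 4: running time, the main obstacle.} The obstacle is evaluating the recharge transitions without paying for every source–target pair. A recharge from level $\beta$ to $z$ costs $L(z-\beta)=h+z-\beta$ whenever $z>\beta$. The completion time is then $e_i+h+z+\min_{\beta<z}(A(t,\beta)-\beta)$, which we must also handle when $z=\beta$ means no recharge and thus no setup, as a separate transition. For each job $i$ and each $t$, we sort the $O(n)$ battery values once, for example by keeping the two families $\Bz-rb$ and $rb$ separately sorted and merging them in linear time. We then sweep the targets $z=\ell b$ in increasing order and maintain the running prefix minimum. This takes $O(n)$ time for each $(i,t)$, and the $O(n)$ transitions without a recharge add the same amount. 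Summed over $n$ jobs and $n$ values of $t$, the algorithm runs in $O(n^3)$ time.

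One care point remains: the strict inequality $\beta<z$. A source with $\beta\ge z$ cannot reach $z$ by charging, and the case $\beta=z$ must not incur $h$. The sweep handles both correctly because the targets and the sources are merged in sorted order, and the no-recharge case is a separate transition.
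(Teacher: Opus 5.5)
Your proposal is correct and follows the paper's argument step for step. The steps are: timely jobs first in EDF order; the normal form of \cref{lem:setup-partial-normal}, which restricts recharges to targets $\ell b$ over $O(n)$ battery values; the earliest-completion dominance rule for $A(t,\beta)$; and the prefix-minimum sweep, which gives $O(n)$ work per job and count and $O(n^3)$ in total. The one detail you leave implicit is the degenerate case $b=0$, where $\lfloor\Bz/b\rfloor$ is undefined and, as the paper notes, only transitions without a recharge remain.
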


The maximum feasible count $t$ determines the optimal value $U=n-t$.
Predecessor choices recover the timely prefix; any feasible execution
of the remaining jobs completes the schedule.
There are $O(n^2)$ values per input layer. Retaining all reconstruction
layers uses $O(n^3)$ storage, whereas computing the optimal value alone
requires $O(n^2)$ storage.
For $b=0$, only transitions without a recharge remain.

Equal energy demands therefore admit exact polynomial algorithms for all
four objectives under partial recharging with setup time. These algorithms
rely on equal demands; their exchange arguments do not extend to the
unequal-demand classes (\cref{sec:hardness}).

\section{Complete Recharging with Setup Time}
\label{sec:setup-complete}

Having treated partial recharging, we now consider complete recharging
with setup time. This policy fixes each positive recharge amount at the
battery deficit $\BM-\beta$, where $\beta$ is the current energy, so the
scheduler decides only when to incur this amount and its setup.
Unlike partial recharging, an optimal schedule can carry residual energy
into its next recharge or finish with unused energy, and the normal form
of \cref{lem:setup-partial-normal} is not available.
As in the previous section, we first optimize the recharges for a given
order, and then use exchanges under equal energy demands to obtain
polynomial algorithms that also choose the order.

\subsection{Recharging for a Fixed Order}

For a fixed order, let $P_i$ and $E_i$ denote the prefix sums of energy
demand and execution time, as in \cref{sec:setup-partial}.
Immediately after position $i$, the battery is either $\Bz-P_i$, if no
recharge has occurred, or $\BM-(P_i-P_r)$, if the last recharge was before
position $r+1$.
Only $O(i)$ battery levels are therefore reachable, even when $\BM$ is
large. This observation replaces the normal form: it keeps the state space
of each dynamic program polynomial in $n$.

\paragraph{Sum of completion times.}
We store the least weighted elapsed-time contribution at each
battery level after each position.
An action before job $i+1$ chooses $q\in\{0,\BM-\beta\}$, subject to
$\beta+q\geq b_{i+1}$.
Its duration is $L(q)+e_{i+1}$, which delays $n-i$ unfinished jobs.
The action therefore contributes
$(n-i)(L(q)+e_{i+1})$.
There are two actions per battery state, which gives $O(n^2)$ time.
This weighted-duration representation already includes every downstream
effect of elapsed time, so it requires no separate session count in the
state.

\paragraph{Deadline objectives.}
For the deadline objectives, the session count affects the time at which
a state is reached. We therefore retain $(\beta,K)$ after position $i$.
Energy conservation determines the elapsed time of this state:
\begin{equation}
 t_i=E_i+P_i-\Bz+\beta+hK .
 \label{eq:setup-complete-state-time}
\end{equation}
A feasible action determines the next completion time and updates $K$ only
when $q>0$. For $T_{\max}$, we retain the smaller accumulated maximum
tardiness at each state; for $U$, we retain the smaller accumulated count.
States with the same $(i,\beta,K)$ have the same elapsed time and the same
continuations, so these dominance rules are valid.
The $O(n)$ battery levels and $O(n)$ session counts at each of $n$
positions give $O(n^3)$ time for either deadline objective.

\paragraph{Makespan.}
Makespan admits a linear construction that avoids dynamic programming.
If the total energy demand is at most $\Bz$, then no recharge is needed.
Otherwise, we choose the earliest position $s+1$ such that the final suffix
$s+1,\ldots,n$ has energy demand at most $\BM$.
We execute the head $1,\ldots,s$ with a full recharge only when the next
job would not fit, and then recharge to full immediately before the suffix.
The suffix needs no further recharge. The following lemma shows that this
construction minimizes both the acquired energy and the number of
sessions.

\begin{lemma}[Fixed-order complete makespan]
\label{lem:setup-complete-fixed-makespan}
The earliest-feasible-suffix construction minimizes complete-recharging
makespan for its fixed job order in $O(n)$ time.
\end{lemma}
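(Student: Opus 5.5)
By \eqref{eq:makespan-setup}, the makespan of any schedule for the fixed order is $E+Q+hK$. The plan is to prove separately that the construction minimizes $Q$ and minimizes $K$. Since $E$ is fixed, this shows it minimizes $C_{\max}$. If $S\le\Bz$, the construction uses no recharge, so $Q=K=0$ and it is optimal.

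\emph{Case $S>\Bz$.} Every feasible schedule recharges at least once. Let the last positive recharge of an arbitrary schedule $\sigma$ occur before position $t+1$. This recharge fills the battery, so the suffix $t+1,\ldots,n$ runs on $\BM$ and has demand at most $\BM$. Hence $t\ge s$ by the choice of $s$ as the earliest feasible suffix start. For a fixed order, the same argument as \cref{lem:recharge-formula} gives
\[
Q=\BM-\Bz+P_t .
\]
This is nondecreasing in $t$, so $Q(\sigma)\ge \BM-\Bz+P_s$. The construction attains $\BM-\Bz+P_s$, because its last recharge occurs exactly before $s+1$. One detail to check: the construction may make no recharge inside the head at all, or its first recharge may come before $s+1$. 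In either case the formula still holds, because it only requires that the last recharge fill the battery before $s+1$.

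\emph{Session count.} This is the main obstacle. I would compare the head's lazy rule with an arbitrary $\sigma$ by a "stays ahead" argument. Define a state as the pair (number of sessions so far, battery level after position $i$). The claim to prove by induction on $i\le s$ is: either the construction has strictly fewer sessions than $\sigma$, or it has the same number and at least as much energy. The inductive step has three cases.
\begin{itemize}
\item If the construction does not recharge before job $i+1$, its count does not increase and its energy falls by $b_{i+1}$, exactly as $\sigma$'s does when $\sigma$ does not recharge either.
\item If the construction recharges, the job did not fit. With equal counts, $\sigma$ had no more energy, so $\sigma$ must recharge too. Both batteries become $\BM-b_{i+1}$, and the counts stay equal.
\item If the construction was strictly fewer before, a single recharge by the construction can at most tie $\sigma$'s count. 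After that recharge its battery is full minus $b_{i+1}$, which is at least $\sigma$'s level.
\end{itemize}
At the boundary before $s+1$, $\sigma$ must still make at least one more session: either its last recharge occurs at a position $t\ge s$, or it covers the suffix in some other way. I need to be careful here. The construction forces one recharge before $s+1$, even when it might not need one there. It then uses zero further sessions. $\sigma$ needs at least one session at or after the point where it last recharged, with $t\ge s$. Combining this with the invariant should give $K(\sigma)\ge K$.

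The subtle case is the extra forced recharge when the construction already recharged immediately before position $s+1$ for lack of energy. There the two recharges merge, so no extra session is counted. If that merge does not settle it, I would try a potential-based version of the comparison instead.

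\emph{Running time.} Computing $s$ takes one backward scan of suffix sums, and simulating the lazy head takes one forward pass. Together this is $O(n)$.
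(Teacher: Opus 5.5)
Your proof is correct and is essentially the paper's. The identity $Q=\BM-\Bz+P_t$ with $t\ge s$ handles the energy, and your lexicographic (sessions, battery) invariant is an explicit version of the paper's claim that lazy full recharging minimizes sessions on every prefix; truncating $\sigma$ at position $s$ replaces the paper's monotonicity-in-prefix-length step. The step you hedge on closes at once, and the ``merge'' case cannot arise because the lazy rule acts only before head jobs: $\sigma$'s last positive recharge is at a boundary $t\ge s$, so it is not among the $K_\sigma(s)$ sessions made before positions $1,\ldots,s$, and your invariant gives $K(\sigma)\ge K_\sigma(s)+1\ge K_c(s)+1\ge K$.
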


\begin{proof}
For any schedule with a positive recharge, let $r$ be the final recharge
boundary. Its total acquired energy is
\[
 Q=\BM-\Bz+P_r.
\]
Indeed, the battery after the last job is
$\BM-(P_n-P_r)$, and conservation gives the displayed identity.
Feasibility requires $r\geq s$, so the proposed construction minimizes $Q$.

Lazy full recharging minimizes the number of sessions for every fixed
prefix: its first group ends as late as possible, and the same comparison
applies after each successive recharge.
The minimum number of sessions needed to execute a prefix is nondecreasing
with the prefix length. Consequently, the minimum for the head ending at
$s$, plus its final recharge, is no larger than the count of a schedule
whose last recharge follows a longer head.
The proposed final recharge is positive when recharging is necessary.
The construction therefore minimizes both $Q$ and $K$, and hence
$E+Q+hK$.
\end{proof}

This result concerns a specified order. Finding an order that minimizes
both terms remains strongly NP-hard for unequal demands
(\cref{sec:hardness}). The lemma nevertheless provides exact recharging
for any ordering heuristic, and we use it in the approximation algorithms
(\cref{sec:selection}).

\subsection{Equal Energy Demands}

For $b_i=b$, \cref{lem:setup-equal-orders} gives SJF order for $\SCT$
and EDF order for $T_{\max}$.
EDF order followed by the preceding deadline dynamic program takes $O(n^3)$
time. The other objectives permit more direct constructions, which we
describe in turn.

\subsubsection{Sum of Completion Times}

Assume that $b>0$ and $nb>\Bz$, since otherwise SJF needs no recharge.
We write $m=\lfloor\BM/b\rfloor$.
Fix an initial prefix of length $p$ that executes without a recharge, where
$0\leq p\leq\lfloor\Bz/b\rfloor$.
Unlike partial recharging, complete recharging may benefit from a recharge
before the device has consumed all of its initial energy, so $p$ is a
decision. The first recharge amount is $x=\BM-\Bz+pb$.
After that recharge, every group can contain at most $m$ jobs.

We show that we can fill each nonfinal charged group to $m$ jobs.
Consider adjacent charged groups of positive sizes
$u,v$, and let $t$ jobs follow both groups.
Their first recharge is unchanged when the internal boundary moves.
The next recharge and the recharge after the pair, if present, contribute
\[
 (t+v)(h+bu)+t(h+bv).
\]
For fixed $w=u+v$, the variable part is
$h(w-u)+bu(w-u)$, a concave function of $u$.
If $w>m$, then its feasible endpoints are $w-m$ and $m$.
Their energy terms agree, and the setup term favors the larger first
group. Replacing the pair by $(m,w-m)$ therefore does not increase the
cost. If $w\leq m$, then merging the pair removes
$h(t+v)+buv\geq0$ from the charging contribution.
Repeated applications fill every nonfinal charged group, which proves the
following lemma.

\begin{lemma}[Full interior groups]
\label{lem:setup-complete-full-groups}
For each initial prefix size $p$, an optimal equal-energy complete
schedule in SJF order has $m$ jobs in each nonfinal charged group.
\end{lemma}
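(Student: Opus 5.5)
The plan is to prove the lemma by an exchange argument on adjacent charged groups, in a fixed SJF schedule with a fixed initial prefix size $p$. Fix $p$ and an optimal schedule in SJF order; by \cref{lem:setup-equal-orders} and the discussion above, only the sizes of the charged groups matter. Under equal demands and complete recharging, a charged group of $g$ jobs ends with residual $\BM-gb$, so the next recharge acquires exactly $bg$ units and lasts $h+bg$. The first recharge lasts $h+x$ with $x=\BM-\Bz+pb$, independently of the later sizes. The charging contribution to $\SCT$ is therefore a function of the size sequence $(g_1,\ldots,g_K)$ with $g_r\le m$. The execution contribution is fixed by SJF and by the fact that positions do not change.

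Next I will isolate two adjacent charged groups of sizes $u,v>0$, followed by $t$ jobs. Moving their internal boundary changes only two recharges. One is the recharge between them, which lasts $h+bu$ and delays $v+t$ jobs. The other is the recharge after the pair, if $t>0$; it lasts $h+bv$ and delays $t$ jobs. All other recharge durations and delay counts are unchanged, because the recharge before the pair depends only on the earlier group. With $w=u+v$ fixed, the variable part is $h(w-u)+bu(w-u)$ plus constants, and this is concave in $u$. Its minimum over the feasible interval $\max(1,w-m)\le u\le\min(m,w-1)$ is therefore at an endpoint.

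I will then split into two cases. If $w\le m$, merging the two groups removes the middle recharge. It also turns the following recharge into one of size $h+bw$ that delays $t$ jobs. The net saving is $h(t+v)+buv\ge0$, and this removes a group. If $w>m$, I compare the endpoints $u=m$ and $u=w-m$. The energy terms $bu(w-u)$ coincide, and the setup term $h(w-u)$ is smaller at $u=m$. Replacing the pair by $(m,w-m)$ therefore does not increase the cost, and it keeps both sizes in $[1,m]$.

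Finally I will iterate these steps from left to right. A potential function, such as the number of groups together with the lexicographic order of the size vector, shows that the process terminates. At termination every nonfinal charged group has exactly $m$ jobs, and the cost has not increased, so some optimal schedule has this shape. The main obstacle is the case $t=0$, where the pair ends the schedule and no recharge follows it. There I must check that the same formula holds with the missing term equal to zero, and that feasibility under complete recharging (each recharge fills to $\BM$) is preserved after regrouping. Both hold because the battery level at each boundary is determined by the size of the preceding group.
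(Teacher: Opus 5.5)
Your proposal is correct and matches the paper's own argument step for step. It uses the same adjacent-pair exchange with variable part $h(w-u)+bu(w-u)$, the same endpoint comparison that favors $(m,w-m)$ when $w>m$, and the same merge saving $h(t+v)+buv$ when $w\le m$. Your termination potential and your $t=0$ check spell out what the paper compresses into ``repeated applications'' and ``if present.'' One harmless slip: the first recharge lasts $L(x)$, which is $0$ rather than $h+x$ when $\Bz=\BM$ and $p=0$, but it does not depend on the later group sizes either way.
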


The lemma leaves only $p$ to optimize.
Let $N=n-p$ and $\ell=\lfloor(N-1)/m\rfloor$.
The weighted recharge contribution is
\begin{equation}
 \begin{split}
 R_h(p)={}&N L(\BM-\Bz+pb)\\
       &+(mb+h)\left(\ell N-\frac{m\ell(\ell+1)}2\right).
 \end{split}
 \label{eq:setup-complete-equal-sct}
\end{equation}
The first term accounts for the first fill, and the second sums the
$\ell$ later fills, each after $m$ more completions.
The convention $L(0)=0$ is necessary when $\Bz=\BM$ and $p=0$:
this candidate has no initial session.

We evaluate every feasible $p$ in constant time and add the common SJF
execution contribution. There are at most $n$ candidates.
Reconstruction takes linear time, so sorting dominates the
$O(n\log n)$ running time for $\SCT$ under complete recharging. At $h=0$,
expression (\ref{eq:setup-complete-equal-sct}) reduces to the recharge
contribution in the proof of \cref{thm:sct-complete-eq}.

\subsubsection{Makespan}

Equal energy demands make every job order equivalent for makespan.
The earliest feasible final suffix contains $\min(n,m)$ jobs.
We execute its head with lazy full recharges and perform the final fill;
\cref{lem:setup-complete-fixed-makespan} shows that the result is optimal
and takes $O(n)$ time.

The construction also attains the minimum session count
(\ref{eq:setup-equal-min-sessions}).
If the final group contains $m$ jobs, then the head requires the remaining
sessions, after accounting for the initial free jobs.
If $n<m$ and recharging is necessary, then one fill suffices.
The cases $b=0$ and $nb\leq\Bz$ require no recharge.

\subsubsection{Number of Tardy Jobs}

We use the EDF timely-subset argument and the states $A(t,\beta)$ from
\cref{sec:setup-partial}.
A selection transition now chooses either $q=0$ or $q=\BM-\beta$,
and applies (\ref{eq:setup-timely-transition}).
Positive recharges incur $h$; filling an already full battery costs
nothing. Skipping a job leaves the state unchanged.

Reachable batteries have the forms $\Bz-rb$ and $\BM-rb$ for at most
$n$ selected jobs.
There are therefore $O(n^2)$ count/battery states per input layer and at
most two actions per state.
Keeping the earliest feasible time at each state gives an $O(n^3)$
algorithm, because an earlier arrival at the same battery level permits
every continuation of a later arrival.
The largest selected count is the size of an optimal timely prefix.
We append all unselected jobs with feasible recharging to obtain $U=n-t$.

We now have exact polynomial algorithms for all four equal-energy
objectives under both policies. They remain valid at $h=0$, where the
algorithms of \cref{sec:complete,sec:partial} are at least as fast.
For complete recharging, residual energy prevents the normal form that we
used for partial recharging, but only $O(n)$ battery values are relevant
to each fixed-order or equal-energy subproblem.
The remaining difficulty is the sequencing of unequal demands. We show
next that this difficulty is inherent.

\section{Hardness with Setup Time}
\label{sec:hardness}

Having obtained polynomial algorithms for equal energy demands, we now
show that the other three job classes are strongly NP-hard under both
policies. The proof uses the observation that setup time makes the number
of charging sessions matter even when the acquired energy is fixed:
deciding whether a schedule can use the minimum number of sessions is a
packing problem. The construction also explains why the equal-energy
algorithms do not extend to equal execution times.

The source problem is 3-Partition~\cite{garey1979computers}: given $3m$
positive integers $a_i$ summing to $mB$, with $B/4<a_i<B/2$, decide whether
they form $m$ triples of sum $B$. We set $\BM=B$, $\Bz=0$, and $b_i=a_i$.
We choose $e_i=1$ for the equal-execution class and $e_i=b_i$ for the
class with energy equal to execution time. We write $n=3m$,
$E=\sum_i e_i$, and $S=mB$.

We first relate the number of sessions to the existence of a 3-partition.
Every group contains at most three jobs, because four demands exceed the
capacity. This remains true if a schedule carries unused energy into its
next recharge. Consequently, at least $m$ sessions are necessary. Exactly
$m$ sessions would give $m$ triples, each with energy demand at most $B$.
Their total is $mB$, so each must have energy demand exactly $B$.
Conversely, a 3-partition permits one full recharge per triple. A schedule
with $m$ sessions therefore exists, under either policy, if and only if
the 3-Partition instance has a solution.

\begin{theorem}[Complexity with setup time]
\label{thm:setup-hardness}
For either recharging policy and each of the equal-execution,
energy-equal-to-execution, and arbitrary job classes, all four objectives
are strongly NP-hard when $h$ is an input. Makespan, maximum tardiness,
and the number of tardy jobs are strongly NP-hard even at $h=1$.
\end{theorem}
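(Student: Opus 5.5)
The plan is to build on the 3-Partition construction just described, in which $\BM=B$, $\Bz=0$, $b_i=a_i$, and $e_i\in\{1,a_i\}$. That construction already shows that a schedule with exactly $m$ sessions exists under either policy if and only if the instance is a YES instance. In every schedule the acquired energy satisfies $Q\ge S-\Bz=mB$, and $Q=mB$ holds exactly when the battery ends empty. Under complete recharging with $m$ triples of sum $B$, each fill acquires exactly $B$. Both classes are strongly NP-hard because all numbers are bounded by a polynomial in $mB$ and 3-Partition is strongly NP-hard. The arbitrary class contains both structured classes, so it inherits hardness from either one.

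\emph{Makespan at $h=1$.} By (\ref{eq:makespan-setup}), $C_{\max}=E+Q+hK\ge E+mB+hm$, with equality if and only if $K=m$ and $Q=mB$. A 3-partition attains this value under either policy. Conversely, if some schedule attains $C_{\max}\le E+mB+m$, then $K\le m$; since $K\ge m$ always holds, $K=m$ and a 3-partition exists. The decision threshold is therefore $D=E+mB+m$. For \emph{maximum tardiness} and the \emph{number of tardy jobs}, I would set every deadline to $d_i=D$ (or to $0$ with threshold $D$ for $T_{\max}$). Then $T_{\max}\le0$, or $U=0$, holds exactly when $C_{\max}\le D$, as in \cref{thm:maxt-complete-nph} and the complete-recharging tardy-count argument. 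This transfers hardness at $h=1$ without changing $e_i$ or $b_i$, so the job class is preserved.

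\emph{Sum of completion times with $h$ in the input.} This is the main obstacle. Triples alone do not control $\SCT$: it is affected by how energy is distributed and how triples are ordered, and a schedule with more sessions could still win through a better weighting. My plan is to choose $h$ large, say $h=n^2(E+S)+1$, so that the setup term dominates. The setup contributions are $h\sum_r(\text{number of jobs completing after session }r)$. With at most three jobs per group this sum is minimized, at value $h\cdot\sum_{r=1}^m 3(m-r+1)$, exactly when there are $m$ groups of three jobs each. Every charging and execution term is below $n(E+S)\le h/n$.

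The step to verify is that any schedule not using $m$ groups of size three has a setup-weighted term at least $h$ larger. This follows from an exchange or majorization argument. Let $g_1,\dots,g_K$ be the group sizes, each at most $3$. The quantity $\sum_r r g_r$ is minimized at $g=(3,\dots,3)$, and any other feasible integer vector exceeds that minimum by at least $1$. So the threshold is $h\cdot 3m(m+1)/2$ plus the maximum over orders of the non-setup part, which is bounded by $n(E+S)<h$. A YES instance meets this threshold with $m$ full triples. A NO instance forces either $K>m$ or some group smaller than three, which costs at least $h$ more. The number of $h$ is polynomially bounded, so strong hardness holds, but only when $h$ is part of the input, matching the statement. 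I must also check that under complete recharging residual energy cannot circumvent the bound of three jobs per group, and the text preceding the theorem already covers this.
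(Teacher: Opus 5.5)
Your proposal is correct and takes essentially the same route as the paper. Both use the $m$-session characterization of 3-Partition, the makespan threshold $E+S+m$ at $h=1$ with common deadlines for $T_{\max}$ and $U$, and a setup time larger than $n(E+S)$, so that every NO schedule pays at least one extra weighted setup. Your uniqueness-plus-integrality argument for $\sum_r r g_r$ is equivalent to the paper's observation that at most $3(r-1)$ jobs complete before the $r$-th session. State the \SCT{} threshold explicitly as $h\cdot 3m(m+1)/2+n(E+S)$ rather than as a maximum over orders, so that it is clearly computable in polynomial time.
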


\begin{proof}
For makespan, we set $h=1$ and $D=E+S+m$. A YES instance has a schedule of
makespan $D$. A NO instance needs at least $m+1$ sessions and at least $S$
units of acquired energy, so every schedule exceeds $D$. Giving every job
the deadline $D$ makes $T_{\max}=0$ equivalent to a makespan of at most
$D$; the same equivalence holds for $U=0$.

For \SCT{}, we need a further argument, because hardness of makespan does
not imply hardness of the sum. Before the $r$-th charging session, at most
$3(r-1)$ jobs have completed. The first $m$ sessions therefore contribute
at least $hW$ to \SCT{}, where
\[
 W=\sum_{r=1}^{m}\bigl(3m-3(r-1)\bigr)=\frac{3m(m+1)}2.
\]
Any additional session precedes an unfinished job and adds at least
one to this weighted session count. We set $V=n(E+S)$, $h=V+1$, and the
threshold $Z=hW+V$. A YES schedule with one full recharge per triple has
weighted setup contribution $hW$. The non-setup contribution of each
completion is at most $E+S$, so the \SCT{} of this schedule is at most $Z$.
Every NO schedule has \SCT{} at least $h(W+1)=Z+1$ from setup time alone.

All constructed quantities are polynomially bounded when the source
3-Partition numbers are polynomially bounded, so the reductions establish
strong NP-hardness. The arbitrary job class includes both restricted
constructions.
\end{proof}

The decision forms of these problems are in NP. An order and at most one
integer recharge amount before each job form a certificate of polynomial
size, because consecutive recharges can be merged and idle time can be
removed. The decision problems are therefore strongly NP-complete.

The theorem has two limits. The \SCT{} proof chooses $h$ from the input, so
it does not establish hardness for every prescribed positive constant $h$.
The makespan and deadline proofs do hold at $h=1$, and their thresholds of
zero tardiness also limit what an approximation guarantee for the deadline
objectives could be.

\begin{corollary}[Deadline objectives]
\label{cor:deadline-inapprox}
Unless P$=$NP, none of the unequal-demand deadline classes in
\cref{thm:setup-hardness} admits a polynomial-time approximation with a finite
multiplicative factor on all instances.
\end{corollary}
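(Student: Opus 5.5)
The plan is to use the zero-threshold structure of the reductions in \cref{thm:setup-hardness}. For $T_{\max}$ and $U$, the makespan reduction at $h=1$ gives every job the common deadline $D=E+S+m$. A YES instance of 3-Partition then admits a schedule with $T_{\max}=0$, or with $U=0$. In a NO instance every schedule has makespan at least $D+1$, so the last job is tardy: $T_{\max}\ge1$ and $U\ge1$. The optimum is therefore zero exactly on YES instances.

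Suppose that a polynomial-time algorithm achieves a finite factor $\rho$ on all instances, in the sense $\mathrm{ALG}\le\rho\cdot\mathrm{OPT}$. On a YES instance it must return a schedule of value $\rho\cdot0=0$. On a NO instance every schedule, including its output, has value at least one. We would run the algorithm on the constructed instance, evaluate its schedule's objective in polynomial time, and answer YES exactly when the value is zero. This decides 3-Partition in polynomial time, so P$=$NP.

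The conclusion has to hold for each unequal-demand class separately. The construction already covers the equal-execution class ($e_i=1$) and the class $e_i=b_i$, each under both policies. The arbitrary class contains both restricted constructions, so it is covered as well. We must also check that the assigned deadlines keep every instance inside its class, and they do, because the classes constrain only $e_i$ and $b_i$.

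The only subtle step is the convention on ratios when $\mathrm{OPT}=0$. We would say explicitly that a finite multiplicative factor forces an exact answer on zero-optimum instances; that observation is the whole argument. We would also note that the instances are polynomially bounded, so the conclusion holds even against pseudo-polynomial algorithms, which is consistent with strong NP-hardness.
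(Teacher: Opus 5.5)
Your proposal is correct and matches the paper's proof: a finite multiplicative factor forces a zero-valued output on the zero-optimum YES instances of the $h=1$ deadline reductions, so checking whether the returned schedule has zero maximum tardiness or zero tardy jobs decides 3-Partition. Your added checks are correct. Integrality gives $T_{\max}\ge1$ and $U\ge1$ on NO instances, and the deadlines keep each instance in its class under both policies. The paper leaves both points implicit.
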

\begin{proof}
A finite multiplicative guarantee must return objective zero whenever the
optimum is zero. The reductions would then decide 3-Partition by testing
whether the returned schedule has zero tardiness or zero tardy jobs.
\end{proof}

The corollary depends on instances whose optimum is zero. It does not
exclude additive guarantees, parameterized algorithms, or approximation
under a promise that excludes such instances. For the two objectives that
are not deadline objectives, we give approximation algorithms in the next
section.

\section{Algorithms for Unequal Demands with Setup Time}
\label{sec:selection}

Having shown that the three unequal-demand job classes are strongly
NP-hard with setup time, we now give the algorithms that we use for them.
A job order and its recharging decisions depend on one another in this
setting, and no single algorithm covers every objective. We give exact
subset algorithms for small job counts, optimal recharging for a
prescribed order, and approximation algorithms for makespan and for the
sum of completion times. We end the section with instances that show why
the ordering rules of \cref{sec:complete,sec:partial} cannot be reused.

For comparison, the eight equal-energy variants have the same time bounds
under both recharging policies (\cref{tab:equal-energy-selection}). For
partial recharging, the convex allocation of group sizes improves the
quadratic fixed-order \SCT{} algorithm to $O(n\log n)$, and prefix minima
over battery levels reduce the timely-subset algorithm from $O(n^4)$ to
$O(n^3)$. These improvements use equal demands and do not apply to the
other three job classes.

\begin{table}[tbp]
\caption{Exact algorithms for equal energy demands with setup time. The
bounds include sorting where required.}
\label{tab:equal-energy-selection}
\centering
\begin{tabular}{lcc}
\toprule
Objective & Partial & Complete\\
\midrule
$\SCT$ & $O(n\log n)$ & $O(n\log n)$\\
$C_{\max}$ & $O(n)$ & $O(n)$\\
$T_{\max}$ & $O(n^3)$ & $O(n^3)$\\
$U$ & $O(n^3)$ & $O(n^3)$\\
\bottomrule
\end{tabular}
\end{table}

The other three job classes share general exact algorithms
(\cref{tab:algorithms}). These algorithms are exponential in $n$ but do
not depend on the numerical battery capacity. All bounds count arithmetic
operations and stored numerical values; their bit cost depends on the
input lengths.

\begin{table}[tbp]
\caption{General exact algorithms, applicable separately to equal execution
times, $e_i=b_i$, and arbitrary parameters. The same algorithms also solve
equal-energy instances, where the polynomial algorithms are preferable.}
\label{tab:algorithms}
\centering
\begin{tabular}{lcccc}
\toprule
&\multicolumn{2}{c}{Partial recharging}&\multicolumn{2}{c}{Complete recharging}\\
\cmidrule(lr){2-3}\cmidrule(lr){4-5}
Objective & Time & Space & Time & Space\\
\midrule
$\SCT$ & $O(3^n)$ & $O(2^n)$ & $O(n3^n)$ & $O(3^n)$\\
$C_{\max}$ & $O(n2^n)$ & $O(2^n)$ & $O(n3^n)$ & $O(3^n)$\\
$T_{\max}$ & $O(n3^n)$ & $O(n2^n)$ & $O(n^2 3^n)$ & $O(n3^n)$\\
$U$ & $O(3^n)$ & $O(2^n)$ & $O(n3^n)$ & $O(3^n)$\\
\bottomrule
\end{tabular}
\end{table}

\subsection{Subset Algorithms for Partial Recharging}

We first reduce the choice of recharge amounts to a choice of job groups,
so that a dynamic program over subsets of jobs suffices. For a subset
$X\subseteq\mathcal J$, let $e(X)=\sum_{i\in X}e_i$ and
$b(X)=\sum_{i\in X}b_i$. We can transform a partial-recharging schedule so
that every group preceded by a positive recharge ends with an empty
battery, without increasing any completion time. If a group leaves
residual energy, then
we move as much of that energy as possible from its preceding recharge
to the next recharge.
The battery level after the next recharge remains unchanged, and
completions before that next recharge become earlier. If the earlier
recharge vanishes, then
its setup also disappears and the neighboring groups merge. We remove
any terminal surplus in the same way. Repeating this operation gives the
required form.

The schedule therefore consists of an initial free set $F$, with
$b(F)\leq\Bz$, followed by groups of energy at most $\BM$.
The first charged group $G$ acquires
$q=b(G)-(\Bz-b(F))>0$ units; every later charged group acquires its own
demand. Zero-energy groups may execute without a recharge.
We enumerate the free set and the first charged group explicitly. There
are $3^n$ assignments of jobs to these two sets and the remaining set.
If $b(F)=\Bz$, then $F$ can also initialize an empty-battery state directly.
We evaluate the case with no recharging separately.

Within a group, changing the job order does not change feasibility, the
recharge amount, or the ending time. For \SCT{}, we use SJF order.
Let $A(X)$ denote the sum of completion times when the jobs of $X$ execute
consecutively in SJF order from time zero. For deadline feasibility and
maximum tardiness, we use EDF order, and we define
\begin{equation}
 \ell(X)=\max_{j\in X}
 \left(\sum_{i\preceq_X j}e_i-d_j\right),
 \label{eq:group-lateness}
\end{equation}
where $\preceq_X$ is EDF order within $X$. A group starting execution at
time $t$ meets every deadline exactly when $t+\ell(X)\leq0$.
The adjacent-exchange argument for EDF applies because the group contains
no charging interruption. We retain negative values of $\ell(X)$, since
they measure how much delay a group can tolerate.

We precompute subset sums, cardinalities, $A(X)$, and $\ell(X)$ in
$O(n2^n)$ time and $O(2^n)$ space. For example, removing the last SJF job
from $X$ gives $A(X)=A(X\setminus\{j\})+e(X)$. Removing the last EDF job
gives $\ell(X)=\max\{\ell(X\setminus\{j\}),e(X)-d_j\}$, with the
empty-set term omitted. Each subsequent group transition therefore has
constant arithmetic cost.

\paragraph{Sum of completion times.}
Let $D(X)$ be the minimum weighted contribution of actions completing $X$
and ending with an empty battery. Each elapsed time unit before the next
job delays all $n-|X|$ unfinished completions. Appending a disjoint group
$G$ therefore adds
\begin{equation}
 (n-|X|)L(b(G))+A(G)+(n-|X|-|G|)e(G).
 \label{eq:subset-sct-transition}
\end{equation}
We minimize $D(X\cup G)$ over feasible $G$. The first term accounts for
charging, the second for completions within the group, and the third for
the execution delay imposed on jobs outside it.

To initialize this recurrence, the free group contributes
$A(F)+(n-|F|)e(F)$. We then use (\ref{eq:subset-sct-transition}) with
$X=F$ and recharge amount $b(G)-(\Bz-b(F))$ in its first term.
The remaining transitions require $b(G)\leq\BM$.
There are at most $\sum_X2^{n-|X|}=3^n$ subset/group pairs, which gives the
first partial-recharging bound (\cref{tab:algorithms}). Predecessors
recover the groups, their SJF orders, and their recharge amounts.

\paragraph{Maximum tardiness.}
The \SCT{} state omits elapsed time, because weighted action durations
already account for every delayed completion. The deadline objectives
require that time. At an empty-battery boundary after set $X$ and $k$
sessions, conservation of energy determines it as
\begin{equation}
 t(X,k)=e(X)+b(X)-\Bz+hk.
 \label{eq:subset-boundary-time}
\end{equation}
We keep the minimum maximum tardiness $D(X,k)$ and append each feasible
group $G$ using
\begin{equation}
 \max\{D(X,k),\,0,\,t(X,k)+L(b(G))+\ell(G)\}.
 \label{eq:subset-tardiness-transition}
\end{equation}
The destination session count is $k+\mathbf1\{b(G)>0\}$.
For the first charged group, execution starts at
$e(F)+L(b(G)-(\Bz-b(F)))$; the maximum tardiness of the free group is
$\max\{0,\ell(F)\}$, or zero when $F$ is empty.
There are at most $n+1$ session counts per subset, which gives $O(n3^n)$
time and $O(n2^n)$ space.

\paragraph{Number of tardy jobs.}
For this objective we need a timely subset of maximum cardinality, and not
an EDF order of all jobs. Deleting tardy jobs from a schedule cannot delay
the retained
jobs: it removes execution and energy demand, while any recharge can be
reduced to respect capacity. Under complete recharging, a retained recharge
still fills the battery; its amount can only decrease, and a zero amount
also removes its setup. We may therefore move the timely jobs to the front
under either policy and append all omitted jobs afterward.

For partial recharging, we keep the earliest empty-battery completion time
$V(X)$ of a schedule in which every job of $X$ is timely. An earlier time
dominates a later one, because the battery and the remaining jobs are
identical.
We admit group $G$ only when
\begin{equation}
 V(X)+L(b(G))+\ell(G)\leq0,
 \label{eq:subset-timely-transition}
\end{equation}
and minimize $V(X\cup G)$ with candidate
$V(X)+L(b(G))+e(G)$. Free and first charged groups use their actual
starting times and the same deadline check. Among reachable subsets,
including feasible free groups, we select one with maximum cardinality.
Appending all omitted jobs yields the minimum tardy count. There is one time
value per subset, so the time and space bounds are $O(3^n)$ and $O(2^n)$.

\paragraph{Makespan.}
Every partial schedule in the normal form acquires exactly
$\max\{0,S-\Bz\}$ energy, so we need only to minimize its session count.
We view the free group as a bin of capacity $\Bz$ and each later group as
a bin of capacity $\BM$. For each subset $X$, we retain the
lexicographically smallest pair $(k,a)$, where $k$ is the number of later
bins and $a$ is the load of the current bin. We initialize the empty set
with $(0,0)$.
Appending job $j$ replaces the pair by $(k,a+b_j)$ when it fits the
current capacity, and by $(k+1,b_j)$ otherwise.

The retained pair dominates every discarded pair under future insertions.
For equal $k$, a smaller load leaves more capacity. For smaller $k$, an
insertion can increase the count by at most one; if it reaches the other
count, then its new load is only $b_j$, which is no greater than the load
obtained by adding $j$ to an existing bin. This argument also covers the
smaller initial capacity. Every packing gives an order of its groups;
inserting that order greedily uses no more groups. The recurrence
therefore contains a packing with the minimum number of sessions.
Enumerating the final job of each subset gives $O(n2^n)$ time and $O(2^n)$
space. Recharging each recovered group exactly as above attains makespan
$E+\max\{0,S-\Bz\}+hk$.

\subsection{Subset Algorithms for Complete Recharging}

Under complete recharging the normal form is not available, but a complete
recharge determines the battery level independently of earlier recharging
decisions. After completed set $X$, the battery is either
$\Bz-b(X)$, if no positive recharge has occurred, or
$\BM-b(Y)$, where $Y\subseteq X$ comprises the jobs since the last
positive recharge. We store only reachable battery values $\beta$.
Summing at most $1+2^{|X|}$ possibilities over all $X$ gives $O(3^n)$
subset/battery states, regardless of the numerical capacity.

From each state, we try each uncompleted job $j$ without a recharge and
with a full recharge. A feasible action has recharge amount
$q\in\{0,\BM-\beta\}$, duration $a=e_j+L(q)$, and resulting battery
$\beta+q-b_j$.
For \SCT{}, we add $(n-|X|)a$ to the accumulated weighted cost and retain its
minimum per state. For makespan, we retain the earliest time and add $a$.
These dominance rules give $O(n3^n)$ time and $O(3^n)$ space.

For maximum tardiness, we also store the session count $k$. The elapsed
time is
\begin{equation}
 t(X,\beta,k)=e(X)+b(X)-\Bz+\beta+hk.
 \label{eq:complete-subset-time}
\end{equation}
An action updates the objective to
$\max\{D(X,\beta,k),0,t(X,\beta,k)+a-d_j\}$ and increments $k$
only for a positive recharge. The added dimension gives $O(n^2 3^n)$
time and $O(n3^n)$ space.

For the number of tardy jobs, we use the timely-prefix property proved
above. We keep the earliest time per $(X,\beta)$, admit an action only
when its completion meets $d_j$, and select a reachable subset of largest
cardinality. An earlier time permits every continuation of a later state
with the same set and battery, so the state needs no session count. The
bounds are again $O(n3^n)$ time and $O(3^n)$ space.

The subset algorithms give exact schedules for every variant, at a cost
that is exponential in $n$. We use them for small instances and as the
reference optima in \cref{sec:setup-experiments}.

\subsection{Recharging a Prescribed Job Order}

For larger instances, a sequencing rule can supply an order, and we can
still choose the recharges for that order exactly with the fixed-order
algorithms of \cref{sec:setup-partial,sec:setup-complete}. Under partial
recharging, group transitions over consecutive intervals take $O(n^2)$
time for \SCT{} and $O(n^3)$ for either deadline objective. Under complete
recharging, the reachable battery states give the same bounds. These
algorithms accept arbitrary energy demands. They do not imply that the
supplied order is globally optimal.

Makespan has a faster $O(n)$ algorithm under both policies.
For partial recharging, we execute the longest feasible free prefix and
then take maximal consecutive groups of demand at most $\BM$.
For complete recharging, we reserve the longest final suffix whose demand
does not exceed $\BM$, execute its head with full recharges only when
necessary, and fill the battery before the reserved suffix
(\cref{lem:setup-complete-fixed-makespan}). We use both constructions in
the approximation algorithms that follow.

\subsection{Makespan Approximation}

For large instances with unequal energy demands, we can retain a makespan
guarantee without solving the subset recurrences. We first establish a
bound that permits arbitrary initial energy, and then improve it when the
battery starts empty. The first bound combines two factor-two arguments,
one for the number of sessions and one for the acquired energy.

\begin{theorem}[Makespan with arbitrary initial energy]
\label{thm:makespan-approx}
For either recharging policy, there is an $O(n)$-time factor-two approximation
for makespan with any $\Bz\in[0,\BM]$ and $h\geq0$.
\end{theorem}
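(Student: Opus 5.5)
The plan is to use one simple job order together with the exact fixed-order recharging constructions of \cref{sec:setup-partial,sec:setup-complete}. We then compare the three terms of (\ref{eq:makespan-setup}), $C_{\max}=E+Q+hK$, with lower bounds that every schedule must satisfy. If recharging is unnecessary ($S\le\Bz$), the algorithm executes the jobs directly and is optimal, so we assume $S>\Bz$. Every schedule then satisfies
\[
 C^*_{\max}\ \ge\ E+Q^*+hK^*,\qquad Q^*\ge S-\Bz,\qquad K^*\ge\max\Bigl\{1,\Bigl\lceil\frac{S-\Bz}{\BM}\Bigr\rceil\Bigr\}.
\]
Under complete recharging we also have $Q^*\ge\BM-\Bz$ by \cref{lem:recharge-formula}, since its proof uses only energy conservation and is unaffected by setup time. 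Because $E$ appears on both sides, it suffices to show $Q\le 2Q^*$ and $hK\le 2hK^*$, or some combined inequality $Q+hK\le 2(Q^*+hK^*)+E$.

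For partial recharging we take the input order, or any order, and apply the $O(n)$ fixed-order greedy grouping. By \cref{lem:setup-partial-normal} it acquires exactly $S-\Bz=Q^*$, so only the session count must be bounded. We use the next-fit pairing argument. Let $G_0$ be the free prefix, viewed as a bin of capacity $\Bz$, and let $G_1,\ldots,G_K$ be the charged groups. The first job of $G_{r+1}$ did not fit into the bin of $G_r$, so $b(G_r)+b(G_{r+1})$ exceeds that bin's capacity. For $r=0$ this gives $P_k+b(G_1)>\Bz$. Summing over disjoint consecutive pairs should give $K\le 2\lceil (S-\Bz)/\BM\rceil\le 2K^*$, with the special pair $(G_0,G_1)$ absorbing the residual $a=\Bz-P_k$. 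Then $C_{\max}\le E+Q^*+2hK^*\le 2C^*_{\max}$.

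For complete recharging we apply the earliest-feasible-suffix construction of \cref{lem:setup-complete-fixed-makespan}. I would first move a job of maximum energy demand to the end of the order, which takes $O(n)$ time. Its sessions are bounded as in the partial case, because lazy full recharging in the head is again next-fit. The final fill adds one session, which is covered by $K^*\ge1$ provided the pairing bound for the head is slightly sharper. Its energy is $Q=\BM-\Bz+P_s$, and maximality of the suffix gives $P_s\le S-\BM+b_s$, hence $Q\le S-\Bz+b_s$. The goal is $Q\le 2\max\{S-\Bz,\BM-\Bz\}$.

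I expect this energy comparison to be the main obstacle. When $\Bz$ is close to $\BM$, the term $b_s$ need not be at most $\BM-\Bz$. My first attempt is a case split. If $S\ge 2\BM$, then $b_s\le\BM\le S-\Bz$. Otherwise the bound must charge $b_s$ against $S-\Bz$ using the choice of the max-demand job in the suffix. Failing that, I would fall back on charging part of the excess energy to $hK^*$ together with $E$, so that the factor two is obtained for the sum $Q+hK$ rather than term by term.
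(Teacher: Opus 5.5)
Your partial-recharging argument is correct and is the paper's: greedy grouping acquires exactly $S-\Bz$, and pairing the groups $(0,1),(2,3),\ldots$ with $a_0+a_1>\Bz$ and $a_i+a_{i+1}>\BM$ shows that $K\geq 2K^*+1$ would give $S>\Bz+K^*\BM$, which no schedule with $K^*$ sessions permits. For the complete-recharging session count you do not need a sharper pairing bound. Take the maximal-group partition of the same order and put a full fill before each charged group. This is a feasible complete-recharging schedule with the same $K$. By \cref{lem:setup-complete-fixed-makespan}, the earliest-feasible-suffix construction uses no more sessions than any complete-recharging schedule of that order. Your case $S\geq 2\BM$ is also fine.

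The genuine gap is the acquired energy under complete recharging when $\BM<S<2\BM$. Better analysis cannot close it, because both your target and your order fail there.

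\emph{The target is unattainable.} Take $\Bz=\BM=100$ and demands $51$ and $50$. The final group cannot hold both jobs, so every schedule has a head of demand at least $50$ and $Q\geq 50$. Your target $Q\leq 2\max\{S-\Bz,\BM-\Bz\}$ equals $2$, so no schedule meets it. The correct lower bound is $Q^*\geq\BM-\Bz+b(H^*)$, where $H^*$ is a minimum-demand subset with $b(H^*)\geq T:=S-\BM$; this follows from \cref{lem:recharge-formula} and (\ref{eq:head-constraint}).

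\emph{Your order cannot approximate this bound.} Take input demands $(50,50,1)$ with $\Bz=\BM=100$, $e_i=0$, and $h=0$. Your order is $(50,1,50)$. Its best complete recharging has head $\{50\}$ and makespan $50$. Running the unit-demand job first and filling once gives makespan $1$. Replacing $50$ and $100$ by $N$ and $2N$ gives ratio $N$. With $h=0$ and $E=0$ there is nothing to charge the excess against.

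\emph{The missing idea} is to choose the head as a $2$-approximate Subset-Sum-Minimization solution for target $T$. Take the smaller of two candidates: the smallest single demand at least $T$, and the shortest prefix of the demands below $T$ whose sum reaches $T$. The prefix sum is below $2T$. This gives $b(H)\leq 2b(H^*)$. Order $H$ first and its complement last; the complement has demand at most $\BM$. Then $Q\leq\BM-\Bz+b(H)\leq 2Q^*$. Re-optimizing that order with the fixed-order construction can only lower $Q$, and the session argument above still gives $K\leq 2K^*$. This is how the paper completes the proof in linear time.
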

\begin{proof}
If $S\leq\Bz$, then no recharging is needed and any order is optimal.
Suppose otherwise, and let $Q^*$ and $K^*$ denote acquired energy and
sessions in one makespan-optimal schedule. The maximal-group construction
for a fixed order has
an initial group of load $a_0$ and $K$ charged groups of loads
$a_1,\ldots,a_K$. Maximality gives $a_0+a_1>\Bz$ and
$a_i+a_{i+1}>\BM$ for $i\geq1$.
Any schedule with $K^*$ sessions has $S\leq\Bz+K^*\BM$.
If $K\geq2K^*+1$, then pairing groups $(0,1),(2,3),\ldots,
(2K^*,2K^*+1)$ contradicts this inequality. Hence $K\leq2K^*$.
For partial recharging, acquired energy is already minimal, so any order
with the linear fixed-order algorithm gives the claimed factor.

For complete recharging, the linear fixed-order algorithm uses no more
sessions than that same maximal-group construction. We additionally choose
an order with acquired energy at most twice the minimum possible acquired
energy. We write $H$ for the jobs before the final recharge. Then
\begin{equation}
 Q=\BM-\Bz+b(H),\qquad b(H)\geq S-\BM.
 \label{eq:approx-head-energy}
\end{equation}
If $S\leq\BM$, then a full recharge before all jobs minimizes both
energy and sessions. If $S>2\BM$, then any final group gives
$Q\leq S+\BM-\Bz\leq2(S-\Bz)\leq2Q^*$.

The remaining case is $\BM<S\leq2\BM$. We set $T=S-\BM$.
We compare the smallest single demand at least $T$ with a prefix of positive
demands below $T$ whose sum first reaches $T$, omitting a candidate when
it does not exist. We choose the candidate with smaller total demand as $H$.
Let $H^*$ be a minimum-demand subset of demand at least $T$. If $H^*$
contains a demand at least $T$, then the single-job candidate is no worse.
Otherwise the smaller demands total at least $T$, and their prefix has
demand below $2T\leq2b(H^*)$. Hence $b(H)\leq2b(H^*)$.
Its complement fits in the final group, so (\ref{eq:approx-head-energy})
gives $Q\leq2Q^*$ because $\BM-\Bz\geq0$ and any feasible schedule
has head demand at least $b(H^*)$.
Optimizing the recharges for this order can only reduce the acquired
energy. Linear scans find all candidate sets.

For a makespan-optimal schedule, both $Q\leq2Q^*$ and $K\leq2K^*$
now hold, and consequently
$E+Q+hK\leq2(E+Q^*+hK^*)$.
\end{proof}

The energy bound and the session bound hold against the same optimal
schedule, even though that schedule may trade energy against sessions.
This is why the two factors of two combine into a single factor of two
for the makespan.

An empty initial battery permits a stronger bound from bin packing.
First-fit decreasing (FFD) sorts the positive demands in nonincreasing
order and places each demand in the first bin with enough capacity. We
execute the resulting bins as groups and place a fullest bin last.
Zero-demand jobs can precede these groups without a recharge.

\begin{theorem}[Makespan with an empty initial battery]
\label{thm:ffd}
If $\Bz=0$, then the FFD construction is a factor-$3/2$ makespan
approximation under either recharging policy. A direct implementation takes
$O(n^2)$ time.
\end{theorem}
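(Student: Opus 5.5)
The plan is to bound the two variable terms of (\ref{eq:makespan-setup}), $C_{\max}=E+Q+hK$, separately against one fixed makespan-optimal schedule with energy $Q^*$ and session count $K^*$. As in \cref{thm:makespan-approx}, it then suffices to show $K\le\tfrac32K^*$ and $Q\le\tfrac32Q^*$. The term $E$ is common to both schedules. If $S=0$, no recharge is needed and the claim is trivial, so assume $S>0$.

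\emph{Sessions.} With $\Bz=0$, every job of positive demand runs inside some charged group. Under either policy, the jobs between two consecutive positive recharges, or after the last one, consume at most $\BM$, because each recharge raises the battery to at most $\BM$. The groups of any feasible schedule therefore pack the positive demands into bins of capacity $\BM$, so $K^*\ge \mathrm{OPT}_{\mathrm{BP}}$. The FFD construction uses exactly one session per FFD bin, so $K=\mathrm{FFD}$. It remains to use the absolute worst-case bound $\mathrm{FFD}\le\tfrac32\mathrm{OPT}_{\mathrm{BP}}$ for first-fit decreasing.

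I expect this absolute bound to be the main obstacle, since the asymptotic $11/9$ bound carries an additive constant that fails for small optima. I would either invoke the known absolute ratio $3/2$ of FFD or prove it directly. A direct proof would go by contradiction. Suppose FFD opens bin $\lfloor 3\mathrm{OPT}/2\rfloor+1$ with an item $x$. If $x>\BM/3$, then all items placed so far exceed $\BM/3$, so every optimal bin holds at most two of them. The items larger than $\BM/2$ each sit alone, and FFD pairs the remaining large items at least as well as any packing, which contradicts the count. If $x\le\BM/3$, then every earlier bin has load above $2\BM/3$. Counting the total load then gives $S>\tfrac23\BM\lfloor 3\mathrm{OPT}/2\rfloor$, which I would check exceeds $\mathrm{OPT}\cdot\BM$ apart from small-OPT cases handled separately.

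\emph{Energy.} Under partial recharging with $\Bz=0$, each bin acquires exactly its demand, so $Q=S$, which is the minimum. Under complete recharging, \cref{lem:setup-complete-fixed-makespan}'s identity gives $Q=\BM+S-\phi$ and $Q^*=\BM+S-\phi^*$. Here $\phi$ is the load of the final group, which is a fullest FFD bin, and $\phi^*\le\BM$. Also $Q^*\ge\BM$, because the first recharge fills an empty battery. If FFD uses one bin, then $\phi=S$ and $Q=\BM\le Q^*$. Otherwise the contents of any two FFD bins exceed $\BM$, so the fullest bin has $\phi>\BM/2$. Then $Q-Q^*=\phi^*-\phi<\BM/2\le Q^*/2$.

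Combining the two bounds gives $E+Q+hK\le\tfrac32(E+Q^*+hK^*)$. A direct first-fit scan over the open bins costs $O(n)$ per item, and sorting costs $O(n\log n)$, for $O(n^2)$ overall.
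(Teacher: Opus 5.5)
Your proof is correct and has the same structure as the paper's. You bound $K$ and $Q$ separately by $\tfrac32$ against one makespan-optimal schedule. You read a bin packing off that schedule's groups to bound $K^*$ from below. You use the fact that any two FFD bins together exceed $\BM$ to get a fullest bin above $\BM/2$.

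It differs in two details. For the sessions, the paper does not invoke an absolute ratio. It derives one from D\'osa's bound $11k^*/9+2/3$ by integrality: $k^*=1$ forces one bin, $k^*=2$ allows at most three, and $11k^*/9+2/3\le 3k^*/2$ once $k^*\ge3$. Citing the known absolute ratio $3/2$ of FFD (Simchi-Levi) closes this step equally well. For the energy under complete recharging, the paper compares with $S$ via $Q<S+\BM/2<3S/2\le 3Q^*/2$. Your comparison $Q-Q^*=\phi^*-\phi<\BM/2\le Q^*/2$, through $Q^*\ge\BM$, is equally correct.

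If you instead prove the absolute bound directly, the counting in your second case fails as written, and not only for small optima. With $N=\lfloor 3\,\mathrm{OPT}/2\rfloor$ and $\mathrm{OPT}$ odd, $\tfrac23\BM N=(\mathrm{OPT}-\tfrac13)\BM$. For $\mathrm{OPT}=3$ you obtain only $S>\tfrac83\BM$, which contradicts nothing. Keep the exact slack and the item $x$ itself: each of the $N$ earlier bins has load above $\BM-x$, so
\[
 S>N(\BM-x)+x\ge\frac{(2N+1)\BM}{3}\ge\mathrm{OPT}\cdot\BM\ge S .
\]
This uses $x\le\BM/3$, $N\ge1$, and $2N+1\ge3\,\mathrm{OPT}$ for either parity, so it handles every value of $\mathrm{OPT}$ with no separate small cases.

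Your first case rests entirely on the fact that FFD is optimal when every item exceeds $\BM/3$. That fact is true: each bin holds at most two items, the sets of compatible partners are nested, and FFD's greedy pairing is therefore a maximum matching. Since it carries that whole case, you should prove it or cite it.
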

\begin{proof}
Let $k^*$ be the minimum number of capacity-$\BM$ bins for the positive
demands. D\'osa~\cite{dosa2007ffd} proves
$K_{\rm FFD}\leq11k^*/9+2/3$.
Integrality gives $K_{\rm FFD}=1$ when $k^*=1$ and
$K_{\rm FFD}\leq3$ when $k^*=2$; for $k^*\geq3$, the displayed
bound is at most $3k^*/2$. Every schedule supplies a feasible
packing, so $K_{\rm FFD}\leq3K^*/2$ for a makespan-optimal schedule.

For partial recharging, $Q=S$ is minimal. For complete recharging, let
$f$ be the load of the fullest FFD bin. Then $Q=S+\BM-f$.
If one bin suffices, then $Q=\BM$ is optimal. With at least two bins,
the loads of any two FFD bins sum to more than $\BM$: the first item of
the later bin did not fit in the earlier bin when inserted. Consequently,
$f>\BM/2$, and $S>\BM$. It follows that
$Q<S+\BM/2<3S/2\leq3Q^*/2$.
Combining the energy and session bounds gives the ratio. The all-zero
demand case uses no recharge and is exact under either policy.
\end{proof}

The ratio is $3/2$ and not $11/9$, because the additive constant in
D\'osa's bound cannot be dropped. The guarantee requires $\Bz=0$;
\cref{thm:makespan-approx} covers arbitrary initial energy.

\subsection{Sum of Completion Times Approximation}
\label{sec:sct-approx}

Setup time enters the sum of completion times through the position of each
session, so a sorting rule must weigh energy demand against execution time
in a way that depends on $h$. We give such a rule for an initially empty
battery and positive demands. The rule also accommodates job weights
$w_i>0$ in the objective $\sum_iw_iC_i$; unit weights give $\SCT$. We
define the coefficient
\begin{equation}
 c=\begin{cases}
   \max\bigl\{1,\;(1+2h/\BM)/3\bigr\} & \text{partial recharging},\\
   (1+2h/\BM)/3 & \text{complete recharging}.
 \end{cases}
 \label{eq:sct-coefficient}
\end{equation}
The coefficient $c$ is the weight that the sorting key gives to a unit of
energy demand relative to a unit of execution time. It grows with
$h/\BM$, because every $\BM$ units of demand require at least one setup.

The construction sorts the jobs by nondecreasing $(e_i+cb_i)/w_i$, breaking
ties arbitrarily, and partitions this order into consecutive groups by next
fit: a job joins the current group when the demand of the group stays
within a limit $\theta$, and otherwise starts a new group. Partial recharging uses
$\theta=\min\{\BM,h\}$, and a job with $b_i>\theta$ forms its own group. Complete
recharging uses $\theta=\BM$. The device recharges before each group: by exactly
the demand of the group under partial recharging, and to full capacity under
complete recharging. When $h=0$ under partial recharging, every job is its
own group.
Comparisons use the integers $3\BM e_i+\max\{3\BM,\BM+2h\}b_i$ (partial)
or $3\BM e_i+(\BM+2h)b_i$ (complete), divided by $w_i$, so the algorithm
takes $O(n\log n)$ arithmetic operations.

\begin{theorem}[Weighted completion times with an empty initial battery]
\label{thm:sct-approx}
Let $\Bz=0$, $b_i>0$ for every job, and $w_i>0$. Under either charging
policy, the construction above returns a feasible schedule with
$\sum_iw_iC_i\leq3\,\mathrm{OPT}$, where $\mathrm{OPT}$ is the optimum for
the same policy.
\end{theorem}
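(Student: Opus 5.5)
The plan is to compare the constructed schedule with the optimum through a single-machine relaxation in which each job $J_i$ has an effective length of the form $p_i=e_i+c'b_i$. By Smith's rule, the minimum of $\sum_i w_i\sum_{k\preceq i}p_k$ over all orders is attained by sorting on $p_i/w_i$. So the construction's order is exactly the Smith order for the lengths $e_i+cb_i$. The proof will bound, job by job, the construction's completion time from above by an additive term plus a multiple of the prefix sum of $e_i+cb_i$. It will bound $\mathrm{OPT}$ from below by matching quantities, and the two parts must combine to the factor three.

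\emph{Lower bounds on $\mathrm{OPT}$.} Take any feasible schedule under either policy with $\Bz=0$, and let $X_j$ be the set of jobs up to and including the job in position $j$. Energy conservation gives acquired energy at least $b(X_j)$ before that job completes. Every session supplies at most $\BM$ units, so there are at least $\lceil b(X_j)/\BM\rceil\ge\max\{1,b(X_j)/\BM\}$ sessions, and hence $C_j\ge e(X_j)+b(X_j)+h\max\{1,b(X_j)/\BM\}$. Under complete recharging the first session alone lasts $h+\BM$, so every job waits at least $h+\BM$. Dropping the maximum in two different ways gives two bounds. The first is the additive bound $\mathrm{OPT}\ge\sum_j w_j(h+\BM)$ under complete recharging, and $\mathrm{OPT}\ge\sum_jw_j(h+b_j)$ under partial recharging. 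The second is the relaxation bound $\mathrm{OPT}\ge\min_\pi\sum_jw_j\sum_{k\preceq j}\bigl(e_k+(1+h/\BM)b_k\bigr)$, whose minimum is a Smith value. Convex combinations of these bounds, together with the pure execution bound $\mathrm{OPT}\ge\min_\pi\sum_j w_j e(X_j)$, yield lower bounds whose lengths have the form $e_k+c'b_k$ with a constant tied to $h/\BM$. I expect the choice $c=(1+2h/\BM)/3$ to be the weight at which the energy coefficient of the upper bound equals three times that of the combined lower bound.

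\emph{Upper bound for the construction.} Next fit with limit $\theta$ makes any two consecutive groups together exceed $\theta$. Hence the number of groups up to the one containing $J_j$ is at most $1+2b(X_j)/\theta$. A job with $b_i>\theta$ forms its own group, and that group already exceeds $\theta$, so it satisfies the same count. Under partial recharging, the construction acquires exactly $b(X_j)$ up to the group of $J_j$. Its completion time is therefore at most $e(G_j)+b(G_j)+h\bigl(1+2b(X_j)/\theta\bigr)$, where $G_j$ denotes all jobs through the end of $J_j$'s group. Since $\theta=\min\{\BM,h\}$, the factor $2h/\theta$ equals $2\max\{1,h/\BM\}$. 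Under complete recharging, every session lasts at most $h+\BM$, so $C_j\le e(X_j)+(h+\BM)\bigl(1+2b(X_j)/\BM\bigr)$. In both cases I will split the bound into the additive term ($h$ or $h+\BM$) and a term linear in the prefix sums. The additive term is charged once to the additive lower bound. The linear term will be shown to be at most $2$, or at most $3$ after absorbing part of the first bound, times $\sum_j w_j\sum_{k\preceq j}(e_k+cb_k)$ in the algorithm's own order. That sum is at most the Smith lower bound.

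\emph{Main obstacle.} The hard step is the constant bookkeeping that gives exactly $3$. Three issues interact: the factor $2$ from next fit, the extra $b(X_j)$ term from the energy itself, and the fact that the relaxation coefficient $1+h/\BM$ differs from $c$. Under partial recharging the $\max\{1,\cdot\}$ in $c$ is needed when $h<\BM$, and there I would also bound the energy of the rest of $J_j$'s group by $\theta\le h$, charging it to the additive term. My first attempt is to write each upper-bound coefficient as a sum of multiples of the lower-bound coefficients and check the two regimes $h\le\BM$ and $h>\BM$ separately. If the coefficients fail to close, I would instead use a three-way convex combination of the additive, execution-only, and energy-plus-setup relaxations, and solve for the weights that define $c$.
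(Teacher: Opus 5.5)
Your architecture matches the paper's: a per-completion upper bound from next fit, a per-completion lower bound valid for every feasible schedule, and Smith's rule on $p_i=e_i+cb_i$. The gap is your complete-recharging upper bound, which is too weak to give the factor three. Bounding every session by $h+\BM$ and multiplying by $1+2P/\BM$ charges a full battery of energy to each session. The coefficient of $P$ then becomes $2+2h/\BM=3c+1$ instead of $3c=1+2h/\BM$.

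No choice of lower bounds can repair this, because the bound itself can exceed $3\,\mathrm{OPT}$. Take $h=0$, $\BM=B$, unit weights, $n=B$ jobs with $(e_i,b_i)=(0,B)$, and $n$ jobs with $(\lceil B/3\rceil,1)$. The key $e_i+b_i/3$ runs the $(0,B)$ jobs first, and your bound sums to about $\tfrac{19}{6}Bn^2$. Running the other jobs first, after one fill, costs about $Bn^2$.

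The missing idea is to bound energy and setups separately. By energy conservation, the energy acquired through a completion is $P$ plus the residual battery, hence at most $P+\theta$. Only the setup $h$ is multiplied by the session count $K<2P/\theta+1$. This gives $C\le E+(1+2h/\theta)P+h+\theta$. When $\theta=\BM$ this equals $3(E+cP+d)$ with $d=(h+\BM)/3$. When $\theta=h\le\BM$ under partial recharging it is at most $3(E+P+2h/3)$.

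There are two smaller points. First, under partial recharging the completion is $e(X_j)+b(G_j)+hK$, not $e(G_j)+\cdots$, because the later jobs of the group have not yet executed. The difference $e(G_j)-e(X_j)$ has no bound in terms of $\theta$: a short job of large weight can share a group with a long job of small weight. Only $b(G_j)-b(X_j)\le\theta$ should be absorbed.

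Second, the paper does not combine separately minimized lower bounds on $\mathrm{OPT}$. It proves the single inequality $C\ge E+cP+d$ at every completion of every feasible schedule. It does this by comparing $E+P+h\max\{1,P/\BM\}$ (partial) or $E+(\BM+h)\max\{1,P/\BM\}$ (complete) with $E+cP+d$: as $P\to0$, at the breakpoint $P=\BM$, and by slope beyond it. Summing with weights and applying Smith's rule once gives $\mathrm{OPT}\ge d\sum_iw_i+\sum_jw_j(E_j+cP_j)$ in the construction's order, which matches the upper bound term by term. Combining separately minimized bounds is weaker, because $t\mapsto\min_\pi\sum_jw_j(E_j+tP_j)$ is concave. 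Your additive bound $\sum_jw_j(h+\BM)$ also discards the execution time that the pointwise bound $C\ge E+h+\BM$ retains.
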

\begin{proof}
Define
\[
 d=\begin{cases}
   2h/3 & \text{partial recharging with } h\leq\BM,\\
   (h+\BM)/3 & \text{otherwise.}
 \end{cases}
\]
The two partial cases agree at $h=\BM$, and $c=1$ in the first case.
Throughout, $E$ and $P$ denote the execution time and demand of the jobs
completed up to and including a fixed completion.

\emph{Upper bound.} Consider the constructed schedule, and let $K$ be the
number of groups started by the fixed completion. Suppose $h>0$ under
partial recharging, or any $h$ under complete recharging. Truncate the
current group after the completed job. Any two adjacent groups have
combined demand above $\theta$: the first job of the later group did not fit in
the earlier one, and a singleton with $b_i>\theta$ exceeds $\theta$ by itself.
Pairing groups $(1,2),(3,4),\ldots$ when $K$ is even and $(2,3),(4,5),\ldots$
when $K$ is odd gives $P>\lfloor K/2\rfloor \theta$, hence $K<2P/\theta+1$. The
energy acquired so far is at most $P+\theta$: under partial recharging the
excess is the demand of the later jobs of the current group, and under
complete recharging the residual battery is at most $\BM=\theta$. Each of the $K$
sessions costs one setup, so
\begin{equation}
 C\leq E+P+\theta+h\Bigl(\frac{2P}{\theta}+1\Bigr)
  =E+\Bigl(1+\frac{2h}{\theta}\Bigr)P+h+\theta.
 \label{eq:sct-approx-upper}
\end{equation}
For partial recharging with $h\leq\BM$, $\theta=h$ and the right side is
$E+3P+2h$. Otherwise $\theta=\BM$, the coefficient of $P$ equals $3c$, and the
constant equals $3d$. When $h=0$ under partial recharging, $C=E+P$. In every
case,
\begin{equation}
 C\leq3(E+cP+d).
 \label{eq:sct-approx-three}
\end{equation}

\emph{Lower bound.} Consider any feasible schedule and any of its
completions. Since demands are positive, $P>0$ and at least one session has
started. A session acquires at most $\BM$, so at least $\max\{1,P/\BM\}$
sessions have started, and the acquired energy is at least $P$. Under
complete recharging the first session acquires $\BM$, so the acquired
energy is at least $\max\{\BM,P\}$. Hence
\begin{equation}
 C\geq\begin{cases}
   E+P+h\max\{1,P/\BM\} & \text{(partial)},\\
   E+(\BM+h)\max\{1,P/\BM\} & \text{(complete)}.
 \end{cases}
 \label{eq:sct-approx-lower}
\end{equation}
Each right side is at least $E+cP+d$. For partial recharging with
$h\leq\BM$ this reads $P+h\geq P+2h/3$. In the remaining cases both sides
are piecewise linear in $P$ with a breakpoint at $P=\BM$. At $P=\BM$ the
difference is $\BM/3$; as $P\to0$ it is $(2h-\BM)/3\geq\BM/3$ for partial
recharging with $h\geq\BM$ and $2(\BM+h)/3$ for complete recharging; and
for $P\geq\BM$ the slope $1+h/\BM$ of the left side exceeds $c$. Therefore
every feasible schedule satisfies
\begin{equation}
 C\geq E+cP+d
 \label{eq:sct-approx-lb}
\end{equation}
at every completion.

\emph{Sequencing.} Put $p_i=e_i+cb_i$ and $W=\sum_iw_i$. Summing
(\ref{eq:sct-approx-lb}) over the jobs of any schedule with their weights
shows that $\mathrm{OPT}$ is at least $dW$ plus the minimum weighted sum of
completion times of ordinary jobs with processing times $p_i$. Sorting by
$p_i/w_i$ attains that minimum, because exchanging adjacent jobs $i$ and
$j$ changes the objective by $w_ip_j-w_jp_i$. The construction uses this
order, so summing (\ref{eq:sct-approx-three}) with the same weights bounds
its objective by three times the same quantity.
\end{proof}

Optimizing the recharges for the returned order with the fixed-order
algorithm can only improve the objective. The guarantee uses both
hypotheses: a zero-demand job can complete before the first setup, which
violates (\ref{eq:sct-approx-lb}), and initial energy changes the lower
bound. We do not know whether the factor three can be improved; the
largest ratios that we observed were $491/400$ and $13/10$
(\cref{sec:setup-experiments}). The zero-setup order by $e_i+b_i$ cannot
replace the setup-dependent key, even with optimal recharging
(\cref{thm:a1-order-lower-bound}).

\subsection{Limits of the Zero-Setup Rules}

A positive setup time invalidates several rules that are sufficient
without setup time (\cref{sec:complete,sec:partial}). In each instance
below, the failure lies in the job order: optimizing the recharges for
that order does not remove it.

EDF order can fail even when all execution times are one. Consider energy
demands $(2,2,3,3)$, deadlines $(33,34,35,36)$, $\BM=5$, $\Bz=0$, and
$h=10$. The best partial recharging of the EDF order has maximum tardiness
eight. The order $(1,3,2,4)$ with two recharges of five units meets every
deadline. Retaining EDF order and repairing only its recharge amounts
therefore does not restore optimality for unequal demands.

Under partial recharging, the Moore--Hodgson deletion rule can fail even
with equal energy. Let
$(e_i,b_i,d_i)$ be $(1,1,7)$, $(2,1,9)$, and $(2,1,11)$, with
$\BM=2$, $\Bz=0$, and $h=5$.
Job 1 alone is timely, but neither pair $(1,2)$ nor $(1,3)$ is feasible
in EDF order. Deleting the longest job at each infeasible prefix therefore
retains only job 1, even with an exact test of recharging feasibility.
Jobs 2 and 3 instead share a recharge of two units and finish at times nine
and eleven. This is why the equal-energy algorithm for the number of tardy
jobs selects subsets (\cref{thm:setup-partial-equal-tardy}).

The recharge rule of $\mathcal A_1$ can have an unbounded \SCT{} ratio.
For $n$ identical jobs with $e_i=b_i=1$, $\BM=n$, and $\Bz=0$,
recharging only the demand of the next job gives \SCT{}
$(h+2)n(n+1)/2$. A single initial recharge gives
$nh+n^2+n(n+1)/2$. As $h$ increases, their ratio tends to $(n+1)/2$.
Optimizing the recharges for the job order of $\mathcal A_1$ removes this
particular failure, but the next theorem shows that it cannot restore a
constant factor.

\begin{theorem}[The zero-setup order with optimal recharging]
\label{thm:a1-order-lower-bound}
For every integer $m\geq2$ and every $\varepsilon>0$ there is an instance
with $n=m^2+m$ jobs, $\Bz=0$, and a sufficiently large setup time on which
the order by nondecreasing $e_i+b_i$, with any recharging decisions, has
$\SCT$ at least $((2m+1)/3-\varepsilon)$ times the optimum under either
policy.
\end{theorem}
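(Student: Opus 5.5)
The plan is an explicit family in which, once $h$ is large, both sides are dominated by the weighted number of setups, so the ratio is a ratio of two session-weight counts. Fix $m\ge2$, set $\Bz=0$ and $\BM=m^2$, and take two kinds of jobs:
\begin{itemize}
\item $m$ \emph{heavy} jobs with $(e_i,b_i)=(0,\BM)$;
\item $m^2$ \emph{light} jobs with $(e_i,b_i)=(\BM,1)$.
\end{itemize}
A heavy job has $e_i+b_i=\BM$ and a light job has $e_i+b_i=\BM+1$. Hence the order by nondecreasing $e_i+b_i$ puts all heavy jobs first and all light jobs afterwards. All other parameters are fixed, independent of $h$. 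Any schedule we compare, with unnecessary recharges removed, has completion times of the form $h\cdot(\text{sessions started})+R$, where $R$ is bounded by a polynomial in $m$ not depending on $h$. Letting $h\to\infty$ then makes the ratio tend to the ratio of the \emph{weighted session counts}, $\sum_j(\text{number of sessions started by completion }j)$.

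\emph{Lower bound for the sorted order, with arbitrary recharging.} This argument uses only feasibility, so it holds under either policy. Each heavy job needs a full battery. The battery is empty at the start ($\Bz=0$) and is empty after every heavy job, so the heavy job in position $r$ has at least $r$ sessions started by its completion. After the last heavy job the battery is empty, so every light job has at least $m+1$ sessions started by its completion. The weighted session count is therefore at least
\[
\frac{m(m+1)}2+m^2(m+1)=\frac{m(m+1)(2m+1)}2 ,
\]
and the $\SCT$ is at least $h$ times this, since every other contribution is nonnegative.

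\emph{Upper bound on the optimum.} We take the schedule that recharges once to $\BM=m^2$, executes all $m^2$ light jobs on that charge, and then executes each heavy job after its own full recharge. The charging amounts are the same under partial and complete recharging, because the battery is empty before every session. The light jobs see one session each, and the heavy jobs see $2,\dots,m+1$ sessions. The weighted session count is
\[
m^2+\frac{(m+1)(m+2)}2-1=\frac{3m(m+1)}2 .
\]
The execution and charging terms of this schedule are at most $n\cdot(\text{its makespan at }h=0)$, a polynomial $R(m)$ in $m$. Hence $\mathrm{OPT}\le h\cdot 3m(m+1)/2+R(m)$.

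\emph{Conclusion.} Dividing the two bounds gives
\[
\frac{h\,m(m+1)(2m+1)/2}{h\,3m(m+1)/2+R(m)}\longrightarrow\frac{2m+1}3
\]
as $h\to\infty$. We therefore choose $h$ large enough that the quotient exceeds $(2m+1)/3-\varepsilon$; an explicit choice is $h\ge R(m)\cdot(2m+1)/(3\varepsilon)$ times a constant. The only step needing care is the lower bound, which must hold for \emph{every} recharging of the sorted order and not just the minimum-recharge one. The key observation is that a demand equal to $\BM$ forces the battery to be empty immediately after the job. So no charge can be carried across heavy jobs, or from them into the light block, even under partial recharging with surplus energy. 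This is exactly what fixes the session counts above.
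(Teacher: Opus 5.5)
Your proof is correct and matches the paper's argument. The paper uses the same two-type family, with $(e,b)=(1,\BM)$ and $(\BM+1,1)$ where you use $(0,\BM)$ and $(\BM,1)$; both are valid, since $e_i\ge0$ is allowed. It has the same forced-session lower bound $h\,m(m+1)(2m+1)/2$ for any recharging of the sorted order, and the same witness schedule with weighted session count $3m(m+1)/2$, with the ratio $(2m+1)/3$ obtained as $h\to\infty$.
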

\begin{proof}
Let $\BM=m^2$ and take $m$ jobs with $(e,b)=(1,\BM)$ and $m^2$ jobs with
$(e,b)=(\BM+1,1)$. Since $\BM+1<\BM+2$, the order places every
capacity-$\BM$ job first. Each of them needs its own session, so the $r$-th
completes after at least $r$ setups, and every later job waits for at least
$m+1$ setups because the battery is empty after the $m$-th. The setup
contribution is therefore at least $h[m(m+1)/2+m^2(m+1)]$ for any
recharging of this order. The schedule that executes the $m^2$ unit-demand
jobs as one
group and then each large job in its own session is feasible under both
policies and has setup contribution $h[m^2+m(m+3)/2]$. Its remaining
contribution is at most $n(E+S)=O(m^6)$. The two setup contributions have
ratio exactly $(2m+1)/3$, so the ratio of the two schedules tends to
$(2m+1)/3$ as $h\to\infty$, and the optimum is no larger than the second
schedule.
\end{proof}
The zero-setup order therefore has ratio $\Omega(\sqrt n)$ even with
optimal recharging; we do not have a matching upper bound. The setup time
must enter the sorting rule itself, as it does in \cref{thm:sct-approx}.

The order of $\mathcal A_2$ can also exceed its $5/4$ makespan bound after
we optimize its recharges. Consider six unit-execution jobs with demands
$(2,2,2,3,3,3)$, $\BM=5$, $\Bz=0$, and $h=1{,}000$.
The order of $\mathcal A_2$ has best complete-recharging makespan $4{,}021$.
Pairing each demand-two job with a demand-three job gives $3{,}021$,
which attains the lower bound of three sessions, 15 acquired energy units,
and six execution units. The ratio exceeds $5/4$.

With setup time, we therefore use the polynomial exact algorithms for
equal energy demands, the subset algorithms for exact solutions of small
general instances, \cref{thm:makespan-approx,thm:ffd} for makespan
approximation, and \cref{thm:sct-approx} for the sum of completion times
when the battery starts empty and every job needs energy. Other job orders
can still serve as heuristics when they are followed by optimal
recharging. We have no approximation ratio for such heuristics, and none
for the sum of completion times when the initial energy is positive or a
job has zero demand.

\section{Evaluation with Setup Time}
\label{sec:setup-experiments}

The results of the preceding four sections are proved, and the experiments
in this section have a narrower purpose than those of
\cref{sec:experiments}. We first checked our implementations of the
setup-time algorithms against independent enumeration. We then measured
how often the polynomial algorithms attained an exact optimum on small
synthetic inputs. Finite comparisons of this kind can expose implementation
errors, and they indicate typical behavior on the tested inputs; they do
not establish correctness or approximation ratios.

\subsection{Correctness Checks}

We used two independent reference implementations. The first, which we
call the \emph{integer-state oracle}, enumerates both the next job and
every feasible integer recharge amount; its state records the completed
subset, the battery level, and the number of sessions. The second, which we
call the \emph{literal enumerator}, advances the elapsed time directly
through each permutation and each sequence of recharges, without using a
grouping normal form or the energy-conservation identity for elapsed time.

We compared the oracle with the literal enumerator on 3,418 instances with
zero to two jobs, capacities one to three, execution times zero or two,
every feasible energy demand and initial energy, deadlines zero or four,
and setup time zero or three. These instances produced 27,344
policy/objective comparisons. We also checked the fixed-order and
equal-energy algorithms on these inputs and on 240 seeded instances with
three to six jobs. All 29,264 fixed-order comparisons and all 10,424
equal-energy comparisons agreed.

We then checked the remaining algorithms. The subset algorithms matched
the integer-state oracle on 27,344 exhaustive and 4,000 random
policy/objective comparisons; the random cases used three to seven jobs.
Sixteen scaling checks multiplied the time and energy parameters by
$2^{70}$: the number of tardy jobs remained unchanged, and the time-valued
objectives scaled by the same factor, as expected for algorithms that do
not depend on the numerical capacity. A 16-job packing instance with a
large integer capacity attained its lower bound of four sessions.

For the polynomial algorithms, 7,620 fixed-order makespan comparisons,
10,224 equal-energy partial-\SCT{} comparisons, and 15,202 equal-energy
partial-tardy-count comparisons had no mismatches. The approximation
checks included 180 instances for each general factor-two construction,
180 for the order of $\mathcal A_2$ with optimized recharges, and 360
policy/instance checks of the FFD construction with an empty battery. We
also replayed the counterexamples of \cref{sec:selection} and evaluated 32
YES/NO instances of the hardness constructions.

For the factor-three \SCT{} construction (\cref{thm:sct-approx}), we
compared the returned value, the intermediate lower bound
$\sum_iw_i(E_i+cP_i+d)$ of its proof, and the exact optimum on 7,532
policy/instance pairs. These pairs comprise every multiset of one to three
jobs with execution time zero or two, positive demand, capacity one to
three, and setup time in $\{0,1,\BM,2\BM,10\BM\}$, together with 2,000
seeded random instances with up to seven jobs and capacities up to
$10^6$. A further 2,000 weighted comparisons used a literal integer-battery
oracle with weights from one to ten. The largest observed ratios were
$491/400$ and $13/10$ without weights and $2{,}601/1{,}858$ with weights,
all well below the proved factor of three.

On the family of \cref{thm:a1-order-lower-bound} with $m=2,4,8,16$, the
zero-setup order with optimal recharging cost up to $10.99995$ times the
witness schedule at $272$ jobs, whereas the factor-three construction
matched that witness. This agrees with the theorem, which predicts a ratio
that approaches $(2m+1)/3=11$ for $m=16$.

For the final-job enumeration without setup time (\cref{thm:final-job}),
6,000 seeded instances with up to nine jobs, including zero execution
times, zero demands, and empty or full initial batteries, satisfied every
stated inequality against the exact subset dynamic program. The
enumeration improved on $\mathcal A_1$ in 69 of these instances. We
replayed the tight family, the instance on which the enumeration is not
exact, and the instance that violates
$\mathcal A_1\leq\mathrm{OPT}+M$, with $m$ up to $1{,}000$ for the family.

These counts overlap, and they are not counts of independent statistical
trials.

\subsection{Comparison on Paired Setup Times}

To observe how close the polynomial algorithms come to the optimum, we
generated ten six-job sets for each of the four job classes, which gives
40 sets. The capacity was an integer from three to ten. Energy demands
were positive integers within the capacity, execution times were integers
from one to eight except when set equal to the energy demand, and
deadlines were integers from five to 80. Each set retained its jobs,
capacity, initial energy, and deadlines while the setup time varied over
$0,1,5,20,100$, so that the comparisons are paired across setup times.
Half of the sets in each class started with an empty battery; the others
used an initial level sampled from zero through the capacity. The
resulting 200 instances gave 1,600 comparisons across the two policies and
the four objectives.

For equal energy demands, we executed the exact polynomial algorithm.
Otherwise, we tried several job orders (SJF, increasing $e_i+b_i$,
increasing energy, decreasing energy, EDF, and the order of
$\mathcal A_1$), optimized the recharges for each, and kept the best
schedule. For makespan, the candidates also included the factor-two
construction, FFD when $\Bz=0$, and the order of $\mathcal A_2$ under
complete recharging. For \SCT{}, they included the factor-three
construction when $\Bz=0$; every generated demand is positive. We compared
the best candidate with the exact optimum from the subset algorithms. The
result is the best of several orders, so we cannot attribute its quality
to any single sorting rule.

The equal-energy algorithms attained the exact optimum in all 400
policy/objective comparisons for that class, as the proofs require. The
other three classes contributed 150 comparisons per policy/objective pair
(\cref{tab:setup-results}). These classes permit unequal demands; an
individual generated instance may also satisfy a more restrictive class.

\begin{table}[tbp]
\caption{Best polynomial candidate versus the exact optimum on the three
job classes that permit unequal energy demands. Each row contains 150
comparisons.
A gap is the candidate objective minus the optimum. Time-valued gaps use
the normalized time unit; $U$ gaps count jobs. Deadline ratios are omitted
because some exact optima are zero.}
\label{tab:setup-results}
\centering
\begin{tabular}{llrrr}
\toprule
Policy & Objective & Optimal / 150 & Largest gap & Largest ratio \\
\midrule
Partial & $\SCT$ & 108 & 294 & 1.329 \\
Partial & $C_{\max}$ & 134 & 100 & 1.459 \\
Partial & $T_{\max}$ & 111 & 99 & -- \\
Partial & $U$ & 108 & 2 & -- \\
Complete & $\SCT$ & 94 & 298 & 1.331 \\
Complete & $C_{\max}$ & 130 & 100 & 1.459 \\
Complete & $T_{\max}$ & 106 & 100 & -- \\
Complete & $U$ & 104 & 2 & -- \\
\bottomrule
\end{tabular}

\end{table}

The best candidate was optimal in between 94 and 134 of the 150
comparisons, depending on the policy and the objective. The largest
observed \SCT{} ratios were $198/149$ under partial recharging and
$1197/899$ under complete recharging, approximately 1.329 and 1.331.
These values are observations about this sample and not approximation
guarantees; in particular, we have no \SCT{} guarantee for the job sets
with positive initial energy. For makespan, the largest observed ratio was
$159/109$, approximately 1.459, under either policy, and every tested
makespan schedule satisfied the proved bound that applies to it.

For the deadline objectives, a ratio would conceal the cases in which the
optimum is zero: 45 instances under partial recharging and 39 under
complete recharging, for each deadline objective. We therefore report
absolute gaps. The best candidate missed at most two more deadlines than
the optimum. Its largest excess maximum tardiness was 99 time units under
partial recharging and 100 under complete recharging. These gaps are
large relative to deadlines of at most 80, and they are consistent with
\cref{cor:deadline-inapprox}: when the question is whether every deadline
can be met, the exact algorithms remain necessary.

\subsection{Limits of the Evaluation}

We did not repeat the sensor-trace or scalability experiments of
\cref{sec:experiments} under the setup-time model. The comparison above
has six jobs per set and a small synthetic parameter range, so it does not
indicate how the polynomial algorithms perform on larger or more realistic
workloads. We have not measured setup times or charging rates on a
physical device.

\section{Related Work}
\label{sec:related}

We relate our model to work on battery-constrained routing and task
scheduling for particular applications, on recharge durations and setup
time, on batching and packing, and on energy-aware scheduling more
broadly.

\subsection{UAV and Drone Scheduling}

Most work on scheduling for unmanned aerial vehicles (UAVs) under battery
constraints treats it as a routing problem. Pasha et al.~\cite{pasha2022drone} survey drone
scheduling in logistics, aerial monitoring, and healthcare delivery.
Alyassi et al.~\cite{alyassi2022tase} develop a machine-learning energy
model and a constant-factor approximation algorithm for joint flight tour
planning and recharging, and validate it on commercial UAVs. Their energy
model is richer than ours: consumption is a continuous function of
trajectory and payload, where we assume a fixed quantity per job. Their
problem is a fuel-constrained tour (which sites to visit, in what order,
with charging stops on the way), without per-job deadlines or a
classification across objectives. We take the execution time and the
energy demand of each job as inputs, as they are once a route is fixed,
and determine the complexity of choosing the job order and the recharges.

Jana and Mandal~\cite{jana2023drone} give a fully polynomial-time
approximation scheme for single-drone delivery scheduling and
constant-factor approximations for several drones, maximizing delivery
profit over time-windowed orders subject to battery budgets. As in the
routing formulations, the battery is a per-tour budget and not a
recharging process interleaved with execution.

\subsection{Warehouse Robots and Electric Ground Vehicles}

Warehouse robots execute tasks in sequence and can recharge between them,
which matches our fixed job costs and operator-controlled recharging.
Zhen et al.~\cite{zhen2024warehouse} develop a column-and-row generation
algorithm for order picking with autonomous mobile robots (AMRs): their
model assigns batched orders to the robots and routes the robots and the
human pickers to minimize the makespan of a batch.
Tomy et al.~\cite{tomy2020battery} formulate battery charge
scheduling for long-horizon autonomous robots as a multi-objective Markov
decision process. Fragapane et al.~\cite{fragapane2021amr} found that
fewer than 4\% of AMR planning papers address battery constraints
explicitly. For electric forklifts, Lee et al.~\cite{lee2022forklift}
jointly optimize routing and charging schedules as an integer program.
Mandal et al.~\cite{mandal2025evcharging} give a dynamic-programming
strategy for fixed-route electric-vehicle charging under a discrete
partial-charging model, in which the operator also chooses the amount of
charge.

\subsection{Recharge Durations and Setup Time}

Yu and Oron~\cite{yu2025recharging} study single-machine scheduling with
rechargeable energy and recharges of fixed duration. They minimize the
number of late jobs and maximize the weighted number of jobs completed
exactly at their due dates, and analyze parameterizations by the numbers
of distinct due dates, processing times, and energy demands; both problems
are fixed-parameter tractable in the combined numbers of energy demands
and due dates. A fixed-duration recharge differs from ours, whose length
$h+q$ depends on the amount acquired as well as on the decision to start.
Our results concern the amount-dependent model and do not cover their
just-in-time objective.

Li et al.~\cite{li2025smart} study automated guided vehicle scheduling
with flexible charging and a fixed setup time per charging operation.
They minimize makespan with a mixed-integer linear formulation and an
exact logic-based Benders decomposition, whose master problem assigns
tasks to vehicles and whose subproblems optimize each vehicle's charging.
We analyze a single device under both recharging policies, with four
objectives and four job classes, and give complexity results and
approximation guarantees. We have not compared our algorithms with the
single-vehicle case of their formulation.

Setup times also arise when a processor switches between types of work.
Gopalakrishnan et al.~\cite{gopalakrishnan2025coldstart} study a
uniprocessor that executes jobs from several job families, where a setup
time -- or cold start delay -- is incurred whenever a job follows a job
of a different family. This situation occurs in serverless computing,
where a request may need a suitable container to be provisioned. They
minimize the number of tardy jobs, show that the offline problem is
NP-hard even when all jobs have the same execution time, and give an
online policy that buffers jobs and is nearly 1-competitive when every job
has sufficiently large slack. In that model the job sequence triggers a
setup. In our model the scheduler triggers a setup by choosing to
recharge, and the duration of the recharge also depends on the energy
acquired. The two models share one consequence: jobs should be grouped so
that they share a setup, and EDF order can fail once setup takes time
(\cref{sec:selection}).

\subsection{Batching and Packing}

A recharge can supply several jobs, so charging groups resemble batches.
The completion convention matters here. Shabtay and
Steiner~\cite{shabtay2007batch} study resource-controlled setup and
processing times under batch availability, where all jobs in a batch
complete together. Our jobs complete individually, each after its own
execution, so their results do not establish the recurrences that we use.

Bin packing corresponds more closely to makespan under partial recharging
with $\Bz=0$. Each recharge supplies one bin of total demand at most $\BM$,
and the optimal
makespan is $E+S+hK^*$, where $K^*$ is the minimum number of bins.
D{\'o}sa~\cite{dosa2007ffd} proves that first-fit decreasing uses at most
$11K^*/9+2/3$ bins, which with the scheduling identity gives a makespan
ratio of $3/2$ (\cref{thm:ffd}). Complete recharging needs the separate
final-group energy argument in that theorem, and arbitrary initial energy
needs the factor-two construction of \cref{thm:makespan-approx}.

\subsection{CubeSat and Nanosatellite Scheduling}

Nanosatellite scheduling also couples task execution with battery
availability, but orbital charging windows add constraints that the
\SCP{} lacks. Seman et al.~\cite{seman2023nanosatellite} formulate offline
nanosatellite scheduling as an energy-constrained mixed-integer linear
program and solve it by branch-cut-and-price. Rouzot et
al.~\cite{rouzot2025nanosatellite} apply integer and constraint
programming to the partition scheduling variant. Our model has an offline
batch with recharging available between any two jobs, so our hardness
results do not transfer directly to either formulation.

\subsection{Energy Harvesting and Intermittent Computing}

In energy-harvesting systems the environment determines when energy
arrives; in our model the operator chooses when to recharge. Islam and
Nirjon~\cite{islam2020scheduling} report that deadline-aware scheduling on
intermittent systems improves on EDF baselines. Caruso et
al.~\cite{caruso2018dynamic} establish NP-hardness and develop dynamic
programs for a related energy-harvesting scheduling problem. Schieber et
al.~\cite{schieber2023harvesting} draw the connection to
inventory-constrained machine scheduling.

Intermittent computing addresses a different problem: preserving the
correctness of a computation across uncontrolled power interruptions. Lucia et
al.~\cite{lucia2017intermittent} survey intermittent execution models.
Lucia and Ransford~\cite{lucia2015dino} propose DINO, a checkpoint-based
execution model for batteryless devices, and Hester et
al.~\cite{hester2017mayfly} add timekeeping guarantees for intermittently
powered sensors.

\subsection{Energy-Aware Scheduling}

Other energy-constrained scheduling models treat energy consumption or
arrival as stochastic, or choose execution parameters to reduce energy
use. Moser et al.~\cite{moser2006lazy,moser2006realtime,moser2007realtime}
propose the Lazy Scheduling Algorithm and prove it optimal for minimizing
missed jobs under energy clairvoyance with preemption. Maeng and
Lucia~\cite{maeng2020adaptive} develop task isolation for time-critical
workloads on energy-harvesting devices. Gopalakrishnan et
al.~\cite{gopalakrishnan2006finite} study finite-horizon scheduling under
joint energy and time constraints for radar systems. Huang and
Neely~\cite{huang2012utility} develop an online scheduling algorithm for
energy-harvesting packet networks via Lyapunov optimization. In the
embedded and real-time literature, Bambagini et
al.~\cite{bambagini2016energy} and Sheikh and
Pasha~\cite{sheikh2018multicore} survey energy-aware scheduling on uni-
and multiprocessor real-time systems; that work mostly minimizes energy
for a fixed workload through voltage and frequency scaling and power
management. We fix the energy demand of each job and choose job order and
recharging to minimize a scheduling objective.

\section{Conclusions}
\label{sec:conclusions}

We set out to determine which versions of the Scheduling-and-Charging
Problem are solvable in polynomial time and how well the others can be
approximated. The answer depends on whether a recharge has a fixed cost.

Without setup time, partial recharging reduces every objective to an
ordering problem, because the recharge durations follow from the job
order. Classical sequencing rules solve the structured job classes, EDF
order minimizes maximum tardiness for every job class, and every order has
the same makespan (\cref{thm:makespan-partial}). The two NP-hard variants
are the average completion time, which we analyzed through the
equivalent sum of completion times, and the number of tardy jobs, both
with arbitrary $(e_i,b_i)$. Under complete recharging, equal energy demands
give the only polynomial case for each objective. For the average completion time under partial recharging, we gave a 2-approximation whose
factor is tight for the algorithm (\cref{thm:2approx}) and a variant with
factor $2-2/(n+1)$ (\cref{thm:final-job}). For makespan under complete
recharging, we gave a $5/4$-approximation (\cref{thm:a2-guarantee}).

A positive setup time makes the number and the positions of the charging
sessions part of the scheduling decision under both policies. Equal energy
demands still admit polynomial exact algorithms for all four objectives.
For the other three job classes, all objectives are strongly NP-hard when
the setup time is an input, and ten variants that are polynomial without
setup time become hard. For these classes we gave subset algorithms that
find exact schedules without enumerating battery levels, makespan
approximations with factor two for any initial energy and $3/2$ for an
empty battery, and a sorting rule that depends on the setup time and has
factor three for the weighted average completion time when the battery
starts empty and every job needs energy.

The two parts of the paper are not equally complete. Without setup time,
the classification is complete, and the open questions concern
approximation. With setup time, our hardness result for the average completion time chooses the setup time from the input, our guarantee for
that objective requires an empty battery and positive demands, and for the
deadline objectives we can only optimize the recharges for candidate
orders, with no approximation factor. Our evaluation with setup time uses
six-job instances. All experiments use an offline model with a constant
charging rate, and we did not execute the schedules on a physical battery
and charger.

\subsection*{Open Problems}

\emph{Average completion time without setup time.}
Our NP-hardness result holds for binary-encoded inputs and is not a strong
NP-hardness result, so it does not exclude a fully polynomial-time
approximation scheme. Such a scheme would have to control how rounding changes both the
jobs selected for the free window and the SSF order of the remaining jobs.
We also do not know whether a polynomial algorithm achieves a constant
factor below two, independent of $n$.

\emph{Tightness of the $5/4$-approximation for makespan.}
The guarantee of $\mathcal{A}_2$ rests on a $5/4$ ratio for its SS-MIN
subroutine and a $3/4$ ratio for its SS-MAX subroutine
(\cref{thm:a2-guarantee}). The additive $\sum_i e_i$ term in $C_{\max}$
reduces the scheduling ratio relative to the subroutine ratio. A matching
family would therefore have to drive $\sum_i e_i$ to zero relative to the
optimal recharging time while respecting the battery capacity
(\cref{thm:ssmin-tight}), and we have not identified such a family.

\emph{Complete recharging without setup time.}
We do not have constant-factor approximations for the average completion time or the number of tardy jobs. The scheduler must assign jobs to
groups that fit within the battery capacity, so bin-packing techniques may
help to construct and analyze such approximations.

\emph{Average completion time with setup time.}
We do not know whether the factor three is tight, and the analysis does
not cover a positive initial energy or zero-demand jobs. Hardness for each
fixed positive setup time is also open. The zero-setup order by $e_i+b_i$
has ratio $\Omega(\sqrt n)$ even with optimal recharging
(\cref{thm:a1-order-lower-bound}), so any improvement must again use a
sorting rule that depends on the setup time.

\emph{Deadline objectives with setup time.}
Unless P$=$NP, no polynomial algorithm has a finite multiplicative
guarantee on all instances (\cref{cor:deadline-inapprox}). Additive
guarantees, and algorithms parameterized by the number of job types or
sessions, are more suitable directions.

\emph{Several devices and online arrivals.}
Even two devices with the makespan objective are open. Delivery requests
and warehouse orders also arrive during execution; an online scheduler
must order jobs and recharge without knowledge of future arrivals, and a
competitive analysis would quantify the cost of that missing information.

\emph{Battery and charger models.}
Lithium-ion charging is fast at a low state of charge and slows above
about 80\%, which suggests a recharge duration that is convex in the
energy acquired; this change may alter which variants are solvable in
polynomial time. Setup that consumes energy, charger availability that
varies over time, and chargers shared among devices would each require
additional constraints. Measured setup times and charging rates would show
whether the scheduling decisions studied here matter for a physical device.

\bibliographystyle{IEEEtran}
\bibliography{references}
\appendix

\section{Proofs}
\label{apx:proofs}

We give the group-exchange arguments, dynamic programs, and reduction
details omitted from the main text. We then prove the guarantees of the
two subset-sum routines that $\mathcal{A}_2$ uses.

\subsection{Proof of \cref{lem:sjf-groups} (SJF Within Groups)}
\label{apx:lem:sjf-groups}

\begin{proof}
We separate the recharge and execution contributions to total completion
time by expanding the completion formula (\ref{eq:completion}):
\begin{equation}
  \sum_{j=1}^n C_j
  = \underbrace{\sum_{j=1}^n (n-j+1)\, r_j}_{A}
  + \underbrace{\sum_{j=1}^n (n-j+1)\, e_j}_{B}.
  \label{eq:sct-expand}
\end{equation}
Since all jobs consume the same energy $b$, reordering jobs within any group
does not change which slots carry recharging intervals, so term $A$ is
unchanged.
SJF order minimizes term $B$ by placing smaller execution times first.
Any schedule not in SJF order therefore has a weakly larger $B$ and hence a
weakly larger sum of completion times.
\end{proof}

\subsection{Proof of \cref{lem:full-groups} (Full Interior Groups)}
\label{apx:lem:full-groups}

\begin{proof}
Assume $b>0$; if $b=0$, then the device needs no recharge.
Let $\Pi^*$ be optimal and let $G$ be a group strictly between the first
and last recharging intervals, with $m_a < m^*$ jobs.
Let $G'$ be the immediately succeeding group with $m_{a'}$ jobs.
If $m_a + m_{a'} \le m^*$, then we may merge $G$ and $G'$ into a single group,
strictly improving the objective (by reducing the number of recharging intervals),
contradicting optimality.
Hence $m_a + m_{a'} > m^*$.
Create schedule $\Pi^{**}$ by moving $m^* - m_a$ jobs from $G'$ to the end of
$G$.
The change in the total sum of completion times is:
\[
  C^{**} - C^* = (m_a - m^*)(m^* - m_{a'})\, b.
\]
This move preserves the job order and hence the execution contribution.
Both $G$ and $G'$ start from a full battery. Moving
$x=m^*-m_a$ jobs across their boundary replaces the recharge contribution
$m_{a'}m_a b$ by $(m_{a'}-x)(m_a+x)b$; the move preserves all earlier and later contributions. This gives the displayed difference.
Since $m_a < m^*$ and $m_{a'} \le m^*$ we have $C^{**} - C^* \le 0$, so
$\Pi^{**}$ is at least as good as $\Pi^*$.
Iterating this argument across all non-full interior groups yields the claim.
\end{proof}

\subsection{Proof of \cref{thm:sct-complete-nph} (\SCT{}, Complete, NP-hard)}
\label{apx:thm:sct-complete-nph}

\begin{proof}
We give separate decision thresholds for equal execution times and for
$e_i=b_i$. The source problem is EquiPartition: given $2n$ positive integers
$a_1,\ldots,a_{2n}$ with sum $2A$, decide whether exactly $n$ of them sum
to $A$. This problem is NP-hard: a Partition instance
$x_1,\ldots,x_n$ of sum $2S$ maps to the $n$ values $x_i+1$ and $n$
additional unit values, with target $S+n$ and cardinality $n$.
We may restrict to $n\ge2$, since we can decide smaller instances directly.

\emph{Construction and feasible groups.}
Set
\[
  L=6nA,\qquad b_i=L+a_i,\qquad \BM=\Bz=nL+A.
\]
The total energy is $2\BM$. Any $n+1$ jobs exceed capacity, since
$(n+1)L>nL+A$. Conversely, every set of at most $n-1$ jobs fits, since
its energy is at most $(n-1)L+2A<\BM$.
In particular all jobs are individually feasible. A schedule with exactly
two groups must therefore have $n$ jobs in each. Both groups must consume
exactly $\BM$, so such a schedule exists if and only if EquiPartition is YES.
Schedules with more than two groups are also feasible, since smaller
groups fit within the battery. We show that every such schedule exceeds
the decision threshold.

\emph{Cost of every other grouping.}
Omit zero-duration recharges and write $G_1,\ldots,G_t$ for the nonempty
groups, $m_g=|G_g|$, and $E_g=\sum_{i\in G_g}b_i$.
Because the battery starts full, the cumulative recharge before $G_g$ is
$\sum_{f<g}E_f$. Thus the recharge contribution to the sum of completion
times is
\begin{align*}
 R&=\sum_{f<g}m_g E_f\ \ge\ L K,\\
 K&=\sum_{f<g}m_fm_g
   =\frac{4n^2-\sum_g m_g^2}{2}.
\end{align*}
Every $m_g\le n$. Unless the sizes are exactly $(n,n)$, there are at least
three groups and
$\sum_g m_g^2\le n^2+(n-1)^2+1$.
To see this, allow zero entries and transfer one unit from a smaller
positive entry to a larger entry below $n$; this increases the sum of
squares. Continue while keeping at least three entries positive.
The terminal maximum is $(n,n-1,1)$, up to permutation.
Consequently every other size pattern satisfies
\[
 K\ge n^2+n-1,\qquad R\ge L(n^2+n-1).
\]
For a feasible $(n,n)$ schedule, $R=n\BM=n^2L+nA$.
The inequality $L(n-1)>5nA$ will separate these schedules from all other
groupings, even when execution times depend on the input.

\emph{Equal execution times.}
Set $e_i=1$. The execution contribution is always $n(2n+1)$.
Use decision threshold
\[
 H_{\rm eq}=n(2n+1)+n\BM.
\]
A YES instance has a feasible $(n,n)$ schedule of cost $H_{\rm eq}$.
Every other grouping has recharge contribution strictly larger than
$n\BM$, because $L(n-1)>nA$. Hence a schedule of cost at most
$H_{\rm eq}$ exists if and only if EquiPartition is YES.

\emph{Energy equal to execution time.}
Set $e_i=b_i=L+a_i$. For any job order $\pi$, the execution contribution is
\[
 L n(2n+1)+\sum_{j=1}^{2n}(2n-j+1)a_{\pi(j)}.
\]
The second term is positive and at most $2n\sum_i a_i=4nA$.
Use the separate threshold
\[
 H_{\rm prop}=L(3n^2+n)+5nA.
\]
If EquiPartition is YES, then a $(n,n)$ schedule has recharge contribution
$n^2L+nA$ and therefore total cost at most $H_{\rm prop}$.
Every other grouping has cost strictly greater than
\[
 L n(2n+1)+L(n^2+n-1)
 =L(3n^2+n)+L(n-1)>H_{\rm prop}.
\]
This proves the converse and the $e_i=b_i$ case independently.
All numbers have polynomial encoding length, and unrestricted $(e_i,b_i)$
include the equal-execution case, completing the reduction.
\end{proof}

\subsection{Proof of \cref{thm:maxt-complete-eq} ($T_{\max}$, Complete, $b_i=b$)}
\label{apx:thm:maxt-complete-eq}

\begin{proof}
By \cref{lem:edf-complete} we sort jobs in EDF order and index them
$J_1, \ldots, J_n$ by non-decreasing deadline.
A schedule partitions this sequence into consecutive \emph{groups}, with a
complete recharge before each group except possibly the first, and no recharging
within a group.

\emph{Completion times depend only on the group boundaries.}
If the scheduler recharges before a group beginning at position $k$, then just after
that recharge the battery is $\BM$ while the first $k-1$ jobs have drawn $(k-1)b$
from the initial charge $\Bz$; by energy conservation the recharging time elapsed
so far is
\[
  R_k \;=\; \BM - \Bz + (k-1)\,b,
\]
\emph{independent of how the first $k-1$ jobs are grouped}.
Job $J_j$ in this group ($k \le j < \ell$) therefore completes at
$\sum_{i\le j} e_i + R_k$.
If the scheduler starts the first group without recharging, then it uses
the initial charge $\Bz$, and job $J_j$ completes at $\sum_{i\le j} e_i$.

Writing $x^{+}=\max(0,x)$, define for $1 \le k < \ell$ the maximum tardiness of
the group $J_k,\ldots,J_{\ell-1}$ preceded by a recharge,
\[
  \tau_{k,\ell} \;=\; \max_{k \le j < \ell}\Bigl(\textstyle\sum_{i\le j} e_i + R_k - d_j\Bigr)^{+},
\]
with $\tau_{k,\ell}=\infty$ if $(\ell-k)\,b > \BM$, and, for the
recharge-free first group,
\[
  \tau_{0,\ell} \;=\; \max_{1 \le j < \ell}\Bigl(\textstyle\sum_{i\le j} e_i - d_j\Bigr)^{+},
\]
with $\tau_{0,\ell}=\infty$ if $(\ell-1)\,b > \Bz$.
It remains to choose the group boundaries. Let $\tau_\ell$ be
the minimum maximum tardiness over schedules of the first $\ell - 1$ jobs.
We consider every possible start of the last group and combine its cost
with the best schedule for the preceding jobs:
\begin{equation}
  \tau_\ell = \begin{cases}
    0 & \ell \le 1, \\
    \min_{0 \le k < \ell} \max(\tau_k,\, \tau_{k,\ell}) & \ell > 1.
  \end{cases}
  \label{eq:dp-maxt}
\end{equation}
The $k=0$ term starts the first group from $\Bz$ with no initial recharge
(capacity $\Bz$); the terms $k\ge 1$ place a recharge before position $k$
(capacity $\BM$).
Since $R_k$ depends only on $k$, we can combine a candidate group with any
feasible schedule of the preceding jobs without changing the group's
completion times. This is why the recurrence only needs the best maximum
tardiness for that prefix. To compute the group costs, we fix each possible
start and extend the group one job at a time, updating its maximum tardiness
in $O(1)$ per endpoint. Computing all group costs and evaluating the
recurrence therefore take $O(n^2)$ time in total.
\end{proof}

\subsection{Proof of \cref{thm:maxt-complete-nph} ($T_{\max}$, Complete, NP-hard)}
\label{apx:thm:maxt-complete-nph}

\begin{proof}
Take a makespan decision instance, with threshold $D$, of any of the three
variants proved NP-hard in \cref{thm:makespan-nph}. Assign every job
$d_i=0$, leaving all execution times, energy requirements, and battery
parameters unchanged. Each completion time is nonnegative, so every job's
tardiness equals its completion time. Thus maximum tardiness equals
makespan for every feasible schedule, and a schedule with maximum
tardiness at most $D$ exists if and only if the original makespan
instance is YES. The map is polynomial and preserves each of the
three execution-time/energy relationships.
\end{proof}

\subsection{Proof of \cref{thm:numtardy-complete-eq} ($U$, Complete, $b_i=b$)}
\label{apx:thm:numtardy-complete-eq}

\begin{proof}
We can execute an optimal set of on-time (\emph{timely}) jobs in EDF order
(\cref{lem:edf-complete}). We therefore index the jobs $J_1,\ldots,J_n$ by
non-decreasing deadline and seek a largest subset whose jobs all meet their
deadlines.
As in \cref{thm:maxt-complete-eq}, a schedule partitions the chosen jobs into
groups separated by complete recharges; within a group the battery falls by $b$
per job, and a recharge restores it to $\BM$.

A dynamic program over the EDF sequence chooses which jobs to include in
the timely subset and when to recharge between them
(\cref{alg:numtardy-complete}).
A state $(t,\beta)$ records that we have chosen $t$ timely jobs so far and that
the battery is $\beta$ immediately after the $t$-th of them; $D[t][\beta]$ is the
least possible completion time of that $t$-th job over all such choices.
We process jobs in EDF order and append each chosen job after the previous
ones. Its deadline is the largest so far, and appending it preserves earlier
completion times. We can therefore select it exactly when its own completion
meets its deadline.
From a state with completion $L$ and battery $\beta$, we have two ways to
append $J_i$. If $\beta\ge b$, then we can extend the current group,
giving completion $L+e_i$ and battery $\beta-b$. Alternatively, we can
recharge for $\BM-\beta$ time units and start a new group, giving completion
$L+(\BM-\beta)+e_i$ and battery $\BM-b$. The recurrence retains each
option that meets the deadline. It also permits us to defer $J_i$ to the
tardy jobs, leaving the state unchanged.
The initial state $(0,\Bz)$ lets the first group run from $\Bz$ with no initial
recharge.

\emph{Optimality.}
To justify retaining only one completion time for a fixed $(t,\beta)$,
consider two partial schedules with that state. If we start the same
continuation after the earlier completion, then every later job completes
earlier by the same amount, and the battery levels are unchanged.
Because $\beta$ alone determines the next recharge duration, any continuation
feasible from a later completion is also feasible from the minimum one.
Retaining, for each $(t,\beta)$, only the minimum completion $D[t][\beta]$
therefore discards no optimal solution, and the largest $t$ with a finite entry
is the maximum number of timely jobs; the minimum number of tardy jobs is $n$
minus this.

\emph{Running time.}
The battery after any job equals $\Bz - jb$ or $\BM - jb$ for some
$0 \le j \le n$, so only $O(n)$ values of $\beta$ occur; with $O(n)$ values of
$t$ there are $O(n^2)$ states, each updated in $O(1)$ per job, giving $O(n^3)$
time.
\end{proof}

\subsection{Proof of \cref{lem:structure} (Optimal Schedule Structure)}
\label{apx:lem:structure}

\begin{proof}
For any fixed ordering, minimum recharging gives the completion at position
$r$ as
\[
  C_r=\sum_{j\le r}e_{\pi_j}
       +\max\!\left(0,\sum_{j\le r}b_{\pi_j}-\Bz\right).
\]
Swapping adjacent jobs therefore leaves the completion at the second position,
and every later completion, unchanged. Only the first position can change.

If the total energy demand is at most $\Bz$, then there is no separation
job and the usual SJF exchange proves the assertion. Otherwise, let $k$ be
the separation position. Sorting the prefix by $e_i$ preserves its total
energy demand, hence all its jobs still fit without recharging, and
cannot increase its sum of completion times. Sorting the jobs strictly
after $k$ by $e_i+b_i$ cannot increase the objective either: the separation
job leaves zero battery, and each later job requires its full $b_i$.
These exchanges preserve the separation job and position. We may order tied jobs arbitrarily within either sorted part.

The boundary conditions come from adjacent exchanges at the separation
job, taking its remaining initial charge into account. Write $y=\pi_k$ and let $\beta$ be the
battery remaining just before its recharge, so $0\le\beta<b_y$.
If a left neighbor $x=\pi_{k-1}$ exists, then the battery before the pair is
$\beta+b_x$. The first completion increment is $e_x$ before the exchange
and $e_y+\max(0,b_y-\beta-b_x)$ after it. Non-improvement gives
\[
 e_x\le e_y+\max(0,b_y-\beta-b_x),
\]
which is the first inequality in (\ref{eq:boundary}).
If a right neighbor $z=\pi_{k+1}$ exists, then the first increment changes
from $e_y+b_y-\beta$ to $e_z+\max(0,b_z-\beta)$, giving the second inequality.
This reasoning allows the exchange to move the separation position, and
covers $\beta=0$, a zero-energy neighbor, and equality at the initial
battery boundary. When $k=1$ or $k=n$, we omit the inequality for the missing neighbor. The second inequality does not in general imply
$e_y+b_y\le e_z+b_z$ when $b_z<\beta$; this is why part~(b) excludes $y$.
\end{proof}

\subsection{Proof of \cref{lem:partial-auxiliary} (Exact Cost Identity)}
\label{apx:lem:partial-auxiliary}

\begin{proof}
Write $z_i=\ell_i-1$, so $H_{i+1}=H_i+z_i$, and let
$\bar\ell_i=2-\ell_i$ be the right-block multiplicity. The left and right counts
strictly below type $i$ are respectively $i+H_i$ and $i-H_i$.
Since $\delta=m\Lambda-2E_\ell-\Bz$, the middle term of
(\ref{eq:partial-auxiliary}) depends on $S_i$ through
$-(m+1)\ell_iS_i$. Including the two sorted-block terms, the coefficient
of $S_i$ in $F(\ell)$ is
\begin{align*}
 &\ell_i\left(m-(i+H_i)-\frac{\ell_i-1}{2}\right)-(m+1)\ell_i\\
 &\qquad-\bar\ell_i\left(i-H_i+\frac{\bar\ell_i+1}{2}\right)\\
 &\quad=-2(i+1)-2z_iH_i-z_i^2\\
 &\quad=-2(i+1)-(H_{i+1}^2-H_i^2).
\end{align*}
All other terms are independent of $\ell$: in particular, the right-block
$\Lambda$ contribution is $\Lambda m(m+1)/2$ for every assignment.
Subtracting the reference multiplicities $\ell_i=1$ and using
$H_0=H_m=0$ gives
\begin{align*}
 F(\ell)-C_{\mathrm{ref}}
   &=-\sum_{i=0}^n S_i(H_{i+1}^2-H_i^2)\\
   &=\sum_{k=1}^n(S_k-S_{k-1})H_k^2
   =\sum_{k=1}^n a_kH_k^2.
\end{align*}
The same summation by parts gives
$d=\sum_i z_iS_i=-\sum_k a_kH_k$.

It remains to identify the completion times that the auxiliary cost
misrepresents. The first $m-1=n$ non-anchor jobs require no recharge,
and the first right-block job brings total consumption above $\Bz$
(\ref{eq:partial-count-bounds}).
Only two completions can differ from the auxiliary expression: the last
left-block job needs an additional $\delta^+=\max(0,\delta)$, and the anchor
needs an additional $\delta^+-\delta=(-\delta)^+$.
For every right-block job, cumulative recharging is already the cumulative
energy demand minus $\Bz$, exactly as represented by the signed term
$\delta$ and the right-block $b_i$ terms. Thus the total correction is
$\delta^++(-\delta)^+=|\delta|$. This also covers $\delta=0$ and proves
(\ref{eq:partial-cost-identity}).

For completeness, a forward generalized exchange from type $v$ to type
$u>v$ replaces $H_k$ by $H_k-1$ exactly on $v<k\le u$.
Taking differences of the squared terms gives
(\ref{eq:partial-general-swap}); reversing the exchange replaces them by
$H_k+1$. Subtracting the two feasible-cost identities gives
(\ref{eq:partial-feasible-swap}). These calculations apply whenever the
outgoing and incoming copies exist, and do not assume that their windows
are disjoint or that the feasible cost increases.
\end{proof}

\subsection{Proof of \cref{lem:anchor-pos} (Anchor Position and Ties)}
\label{apx:lem:anchor-pos}

\begin{proof}
Choose an optimal schedule with minimum recharging and the structure in
\cref{lem:structure}. Total consumption exceeds $\Bz$, so a separation job
exists. The anchor cannot be that job because its energy demand is zero.
If it belongs to the prefix, then its strictly largest execution time places
it last there. If it occurs after the separation job, then SSF places it
before all jobs of types $0,\ldots,n-1$. We resolve ties with type $n$ by
moving the anchor first among those tied tail jobs; all are strictly after
the separation job, so exchanging them leaves the objective unchanged.
Thus the anchor is either last in the prefix or immediately after the
separation job. This tie rule never exchanges a job across the separation position.

Let $q$ be the number of non-anchor jobs before the first recharge.
By (\ref{eq:partial-count-bounds}), $q$ is $n$ or $n+1$: any $n$ fit,
and no $n+2$ fit. If the anchor is last in a prefix with $q=n+1$, then its position
is $n+2$. If it follows the separation job, then its position is $q+2$, giving
$n+2$ or $n+3$ as stated.

The remaining possibility, an anchor last in a prefix with $q=n$, cannot
be optimal. Exchange it with the following separation job $x$.
Since the anchor consumes no battery, $x$ has the same initial residual
battery before and after this exchange. Its recharge deficit is at most
$(n+1)\Lambda-\Bz=2\sum_{i=0}^nS_i+2T$. Consequently its execution plus
this recharge is at most
\[
  2T+2\sum_{i=0}^nS_i+2T\le(4n+4)T<A.
\]
The exchange strictly reduces the first of the two completions and leaves
the second and all later completions unchanged, a contradiction.
The estimates are strict under (\ref{eq:Lambdabound}); they also cover a
prefix that exhausts the initial battery exactly.
\end{proof}

\subsection{Proof of \cref{lem:only-if} (Only-If Direction)}
\label{apx:lem:only-if}

\begin{proof}
Suppose a feasible schedule has cost at most $C_{\mathrm{ref}}+T$.
Choose an optimal minimum-recharge schedule in the form guaranteed by
\cref{lem:anchor-pos}; its cost is no greater.
For every integer $H_k$, $H_k(H_k+1)\ge0$. Set
\[
  K=\sum_{k=1}^n a_kH_k(H_k+1)\ge0,
  \qquad \sum_{k=1}^n a_kH_k^2=d+K.
\]
Since all $a_k>0$, equality $K=0$ holds exactly when every
$H_k\in\{0,-1\}$.

\emph{Anchor at position $n+2$.}
There are $m=n+1$ non-anchor jobs on each side. Sort the left block by $e_i$
and the right block by $e_i+b_i$. This does not increase cost even if the
left block contains the separation job: by (\ref{eq:partial-count-bounds}),
its first $n$ jobs have no recharge in any ordering, its last completion
depends only on the block's total execution and energy, and all subsequent
completions depend only on the block totals. Every right-block job is past
the initial charge boundary, so SSF is valid throughout that block.
Let $\ell$ be the resulting counts. Its cost is
\[
 C(\ell)=C_{\mathrm{ref}}+d+K+2|T-d|.
\]
For $d\le T$, $d-T+2|T-d|=T-d$; for $d\ge T$, it equals $3(d-T)$.
Therefore $C(\ell)\le C_{\mathrm{ref}}+T$ forces $d=T$ and $K=0$.

\emph{Anchor at position $n+3$.}
Let $x$ be the separation job immediately before the anchor, and write
$s=e_x=S_i$ for its type's prefix sum. The first $m=n+1$ non-anchor jobs
form a recharge-free SJF block with counts $\ell$ and residual battery
\[
  \beta=\Bz-\sum_i\ell_i b_i=2(d-T)\ge0.
\]
Move the anchor just before $x$, leaving every other job fixed, to obtain a
central ordering $\sigma$. This move preserves the completion of the second position in the pair
and every later completion. The first
completion increases by
\[
  A-(e_x+b_x-\beta)
    =(\Lambda-2T)-(\Lambda-s-\beta)=\beta+s-2T.
\]
Consequently the original cost is
$C(\sigma)-\beta+2T-s$; we do not assume a sign for this change.
In $\sigma$, every job to the right of the anchor exceeds the initial
battery boundary, because the first such job is the $(n+2)$-nd non-anchor
job. Sorting that right block by $e_i+b_i$ cannot increase cost and yields
$F(\ell)+\beta$ by \cref{lem:partial-auxiliary}. Hence the original cost
is at least
\[
  F(\ell)+2T-s
   =C_{\mathrm{ref}}+d+K+2T-s
   \ge C_{\mathrm{ref}}+T.
\]
Here $d\ge T$, $K\ge0$, and $s\le2T$. A cost at most the threshold forces
$d=T$, $K=0$, and $s=2T$. The equality case includes an anchor after a type-$n$ separation job.
Resolving SSF ties only within the tail preserves this case.

In both cases $d=T$ and $H_k\in\{0,-1\}$. Thus
\[
  \sum_{k:H_k=-1}a_k=-\sum_{k=1}^n a_kH_k=d=T,
\]
which is a Partition solution.
\end{proof}

\subsection{Proof of \cref{thm:numtardy-dp} ($U$, Partial, Dynamic Program)}
\label{apx:thm:numtardy-dp}

\begin{proof}
The timely jobs can be scheduled in EDF order with the tardy jobs after
them in any order, so the task is to choose a largest subset that can meet
its deadlines. Whether adding a job preserves this property depends on
when the current timely sequence ends and on how much charge remains.

Define $\eta(m, f, B)$ as the minimum number of tardy jobs after processing
the first $m$ EDF-ordered jobs, where $f$ is the completion time of the last
timely job and $B$ is the battery level after that job. For each next job,
we compare deferring it to the tardy set with appending it to the timely
sequence, provided that it can complete by its deadline. The recurrence
accounts for any recharge needed before that execution (\cref{fig:bellman}).

The state space has $O(n\BM\sum_i(e_i+b_i))$ states; at states with $B = 0$ there are $O(\max_i b_i)$ options. The stated running time follows.
\end{proof}

\subsection{Subset-Sum Subroutines Used by $\mathcal{A}_2$}
\label{apx:a2-subroutines}

To obtain the guarantee for $\mathcal{A}_2$, we need a $5/4$-approximation for SS-MIN
and a $3/4$-approximation for SS-MAX (\cref{thm:a2-guarantee}). We describe
the two subroutines used in our experiments and prove that they meet
these requirements. Both divide items into classes according to their
weights, which lets us restrict the combinations we need to consider.

Let $w(X)=\sum_{i\in X}b_i$ for a set of item indices $X$. Equal weights
remain distinct items. For a positive target $C$, partition the positive
weights into
\begin{align*}
 G_0&=(0,C/4), &G_1&=[C/4,C/2),\\
 G_2&=[C/2,3C/4), &G_3&=[3C/4,C),\\
 G_4&=[C,\infty).
\end{align*}
Here an interval denotes the set of indices whose weights lie in it.
Our implementation compares integer multiples, such as $4b_i$ with $3C$,
to preserve these boundaries for arbitrary integer inputs.
Sort each class in non-decreasing weight, retaining input order within ties.
We may omit zero-weight items because they change neither subset objective.
$\mathcal{A}_2$ retains their jobs when it constructs the complete schedule.

\subsubsection{Minimum Head Energy}

To choose the head, we need a subset with enough energy to leave a final
group that fits in the battery, while keeping the head's energy as small
as possible. We use SS-MIN to make this choice.
We assume $0<C\le w(\mathcal J)$; $\mathcal{A}_2$ handles non-positive targets without
calling this subroutine. We choose an inclusion-minimal optimal subset:
removing any selected positive-weight item leaves a sum below $C$.
If it contains an item from $G_4$, then that item alone is optimal among subsets
containing an oversized item. Otherwise its large-item counts
$(n_3,n_2,n_1)$ belong to
\begin{align*}
 \mathcal M=\{&(0,0,0),(0,0,1),(0,0,2),(0,0,3),(0,0,4),\\
              &(0,1,0),(0,1,1),(0,1,2),(0,2,0),\\
              &(1,0,0),(1,0,1),(1,1,0),(2,0,0)\}.
\end{align*}
To obtain this list, suppose the subset contains a $G_1$ item. Removing one of these
items leaves weight at least $(3n_3+2n_2+n_1-1)C/4$, so minimality requires
$3n_3+2n_2+n_1\le4$. If the subset contains no $G_1$ item but contains a $G_2$ item, then removing
one $G_2$ item gives $3n_3+2n_2\le5$. With large items only from $G_3$,
we obtain $n_3\le2$. These conditions give the 13 tuples above, including
$(0,0,0)$ for a subset containing only small items. Small items can impose
additional restrictions, but cannot permit another large-item composition.

It remains to choose the items within each class. To avoid considering
every subset, we select a contiguous window of the prescribed size from each
of $G_1,G_2,G_3$. We start with the smallest items and slide the
$G_1$ window to its largest position, then the $G_2$ window, then $G_3$.
A slide removes the first item in one window and adds its next item.
The selected weight is non-decreasing along this path, and each increment
is strictly below $C/4$. The largest windows have at least as much weight
as any selection with the same counts. We stop at the first selection
whose weight, together with all small items, reaches $C$.

The algorithm also considers the smallest and largest windows directly.
For each selection, it adds small items in ascending order until it reaches
$C$, rejecting the selection if even all small items are insufficient
(\cref{alg:ssmin-implementation}). The two endpoint candidates preserve the
implementation's choices but are not needed for the bound.

\begin{algorithm}[tbp]
\caption{\textsc{SelectMinHead}$(\mathcal J,C)$}
\label{alg:ssmin-implementation}
\begin{algorithmic}[1]
\Require Positive weights with $0<C\le w(\mathcal J)$
\Ensure $H\subseteq\mathcal J$, $C\le w(H)\le(5/4)\operatorname{OPT}_{\min}$
\State Partition and sort $G_0,\ldots,G_4$; $S_0\gets w(G_0)$
\State $H\gets\mathcal J$
\If{$G_4\ne\emptyset$}
  \State $H\gets\{\text{smallest item in }G_4\}$
\EndIf
\For{each available composition $(n_3,n_2,n_1)\in\mathcal M$}
  \State $L_{\min},L_{\max}\gets$ smallest and largest windows with these counts
  \State \Call{Consider}{$L_{\min}$}; \Call{Consider}{$L_{\max}$}
  \State $L\gets L_{\min}$
  \For{$t=1,2,3$}
    \While{$w(L)+S_0<C$ and the nonempty $G_t$ window can slide}
      \State Slide that window one position to the right
    \EndWhile
  \EndFor
  \State \Call{Consider}{$L$}
\EndFor
\State \Return $H$
\Procedure{Consider}{$L$}
  \If{$w(L)+S_0\ge C$}
    \State Add the shortest ascending prefix of $G_0$ needed to reach $C$
    \If{$w(L)<w(H)$}
      \State $H\gets L$ \Comment{store a copy; do not modify other candidates}
    \EndIf
  \EndIf
\EndProcedure
\end{algorithmic}
\end{algorithm}

\begin{lemma}[Guarantee of \textsc{SelectMinHead}]
\label{lem:ssmin-implementation}
\textup{\textsc{SelectMinHead}} is a $5/4$-approximation for SS-MIN and runs in
$O(n\log n)$ time.
\end{lemma}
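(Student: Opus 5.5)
We compare the returned set with an inclusion-minimal optimal subset $O$, where $\operatorname{OPT}_{\min}=w(O)$, and split into cases according to its large-item composition.

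\emph{Oversized item in $O$.} If $O$ contains an item of $G_4$, minimality makes $O$ that single item. The smallest item of $G_4$ is then optimal, and the initialization of $H$ attains it.

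\emph{No oversized item.} Otherwise $O$ has counts $(n_3,n_2,n_1)\in\mathcal M$ by the enumeration preceding \cref{alg:ssmin-implementation}. Fix that composition and write $O_L$ for the large items of $O$ and $O_0=O\cap G_0$. Since $O$ is feasible, $w(O_L)+S_0\ge C$, so the largest window $L_{\max}$ with these counts is feasible. The slide path therefore stops at a first feasible selection $L$. There are two subcases.
\begin{itemize}
\item[(i)] $L=L_{\min}$ is already feasible. Then $w(L)\le w(O_L)$, because the windows consist of the smallest items of each class. Adding the shortest ascending prefix of $G_0$ that reaches $C$ either overshoots $C$ by less than one small item, hence by less than $C/4$, or uses no more small weight than $O_0$ does. (The ascending prefix reaching $C-w(L)$ has weight at most $\max\{w(O_0),\,C-w(L)+C/4\}$; I would check the second term by noting that the prefix without its last item is below the target.) So the total is at most $\max\{w(O),\,C+C/4\}\le\frac54\operatorname{OPT}_{\min}$, using $\operatorname{OPT}_{\min}\ge C$.
\item[(ii)] $L$ is reached after at least one slide. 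The predecessor of $L$ was infeasible even with all of $G_0$, so $w(L)+S_0<C+C/4$ because each increment is below $C/4$. The ascending prefix then stops before exceeding $C$ by more than one small item, so the candidate weight is below $C+C/4$ again.
\end{itemize}
The delicate point in (ii) is that the candidate may need all of $G_0$. In that case its weight is $w(L)+S_0<\frac54 C$ directly. Otherwise the last small item added has weight below $C/4$ and the sum before it was below $C$. Either way the candidate weighs less than $\frac54 C\le\frac54\operatorname{OPT}_{\min}$. Feasibility ($\ge C$) holds by the Consider test, and the returned $H$ is the minimum over the candidates, so it inherits the bound.

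\emph{Main obstacle.} The step I expect to need most care is (i). There I must justify that the small-item completion of $L_{\min}$ does not exceed $\max\{w(O),\frac54C\}$. I would argue it by cases: if $C-w(L_{\min})\le w(O_0)$, then since $C-w(O_L)\le w(O_0)$ and ascending prefixes are the cheapest way to reach a target, the prefix weight is at most the smallest sum of small items reaching $C-w(L)$. One more item beyond a sub-target prefix adds less than $C/4$. Both branches give the required bound.

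\emph{Running time.} Sorting costs $O(n\log n)$. There are 13 compositions, each path performs $O(n)$ slides with $O(1)$ updates, and prefix sums over sorted $G_0$ let each Consider call find its prefix by binary search or a linear scan. The total is $O(n\log n)$.
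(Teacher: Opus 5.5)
Your proof is correct and follows essentially the paper's route. You take an inclusion-minimal optimum, handle the oversized case through the initialization of $H$, fix the composition in $\mathcal M$, and stop at the first selection on the sliding path with $w(L)+S_0\ge C$. You then bound the candidate by $\operatorname{OPT}_{\min}$ when that selection is $L_{\min}$ and by $5C/4$ otherwise, because one slide or one added small item contributes less than $C/4$.

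One justification in your ``main obstacle'' paragraph is false: ascending prefixes are not the cheapest way to reach a target. Take $C=16$, one large item of weight $13$ (in $G_3$), and small items $2,2,3$. Your branch hypothesis $C-w(L_{\min})\le w(O_0)$ holds for $O=\{13,3\}$, yet the ascending prefix is $\{2,2\}$, of weight $4>w(O_0)$. That branch is unnecessary, and the bound you state still holds. In (i), either no small item is added, so the candidate weighs $w(L_{\min})\le w(O_L)\le\operatorname{OPT}_{\min}$, or the prefix is nonempty and overshoots $C$ by less than $C/4$. This is exactly the paper's split on $w(L)\ge C$ versus $w(L)<C$.
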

\begin{proof}
Let $H^*$ be an inclusion-minimal optimum. If $H^*$ contains an oversized
item, then the smallest item of $G_4$ is an optimal candidate.
Otherwise let $L^*$ be its large items. Their composition belongs to
$\mathcal M$, and
\[
 C-S_0\le w(L^*)\le w(H^*)=\operatorname{OPT}_{\min}.
\]
Consider the sliding path for this composition. Its largest selection has
weight at least $w(L^*)$, so the path reaches $C-S_0$. Let $L$ be its first
selection reaching that value. If $L$ is the initial selection, then
$w(L)\le w(L^*)\le\operatorname{OPT}_{\min}$. Otherwise its previous
selection has weight below $C-S_0$ and one slide adds less than $C/4$,
so $w(L)<C-S_0+C/4\le5C/4$.

If $w(L)\ge C$, then this candidate is already feasible and has weight
at most $\max(\operatorname{OPT}_{\min},5C/4)$.
If $w(L)<C$, then small-item augmentation can reach $C$; its last item
weighs less than $C/4$, giving final weight below $5C/4$.
Since $\operatorname{OPT}_{\min}\ge C$, either candidate satisfies the
claimed bound. Keeping the best candidate preserves it.

Sorting takes $O(n\log n)$ time. There are 13 compositions, each containing
at most four large items. Each sliding path advances each class window at
most its class length; a slide updates the weight by subtracting the removed
item and adding the new item, in $O(1)$ time. We augment at most three
candidates per composition, each in $O(n)$ time. Candidate copying also
costs at most $O(n)$ per candidate. The work after sorting is therefore
$O(n)$, with $O(n)$ storage.
\end{proof}

This bounds the energy of the selected head relative to its minimum.
Adding the same non-negative execution and charging terms to both values
cannot increase the ratio (\cref{thm:ssmin-reduct}), which gives the
makespan bound. Showing that the makespan ratio is tight would still need
instances that satisfy the capacity condition (\cref{thm:ssmin-tight}).

\subsubsection{Maximum Final-Group Energy}

To choose the final group, we instead need to select as much energy as
possible without exceeding one full charge. We use SS-MAX for this
selection. We discard items above
$C$ because they cannot fit, and retain items of weight exactly $C$ in $G_4$.
The routine enumerates
\[
 (n_4,n_3,n_2,n_1)\in\{0,1\}\times\{0,1\}\times\{0,1,2\}\times\{0,1,2,3\}.
\]
For each available tuple, it considers the smallest prescribed items from
every class and, separately, the largest prescribed items from every class.
It rejects an overweight selection; otherwise it scans $G_0$ in ascending
order, adding each item that fits (\cref{alg:ssmax-implementation}).
There are at most 96 initial selections. We do not need to enumerate
every feasible composition: we show below that these selections either
produce enough weight to meet the approximation guarantee or include
an optimal subset.

\begin{algorithm}[tbp]
\caption{\textsc{SelectMaxFinalGroup}$(\mathcal J,C)$}
\label{alg:ssmax-implementation}
\begin{algorithmic}[1]
\Require Non-negative weights and capacity $C>0$
\Ensure $w(F)\le C$ and $w(F)\ge(3/4)\operatorname{OPT}_{\max}$
\State Discard weights above $C$; partition and sort $G_0,\ldots,G_4$
\State $F\gets\emptyset$
\For{each available tuple $(n_4,n_3,n_2,n_1)$ in the stated ranges}
  \For{each of largest-first selection and smallest-first selection}
    \State $L\gets$ the prescribed number of items from each of $G_4,G_3,G_2,G_1$
    \If{$w(L)\le C$}
      \For{each $i\in G_0$, in ascending order}
        \If{$w(L)+b_i\le C$}
          \State $L\gets L\cup\{i\}$
        \EndIf
      \EndFor
      \If{$w(L)>w(F)$}
        \State $F\gets L$
      \EndIf
    \EndIf
  \EndFor
\EndFor
\State \Return $F$
\end{algorithmic}
\end{algorithm}

\begin{lemma}[Guarantee of \textsc{SelectMaxFinalGroup}]
\label{lem:ssmax-implementation}
\textup{\textsc{SelectMaxFinalGroup}} returns either a subset of weight at least
$3C/4$ or an optimal subset. It is therefore a $3/4$-approximation for
SS-MAX and runs in $O(n\log n)$ time.
\end{lemma}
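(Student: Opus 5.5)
The plan is to fix an optimal subset $F^*$ (a feasible subset of maximum weight) and split on its composition by size class. If $F^*$ has at most one item from $G_4$, at most one from $G_3$, at most two from $G_2$ and at most three from $G_1$, then its large-item tuple $(n_4,n_3,n_2,n_1)$ lies in the enumerated ranges. Otherwise $F^*$ cannot exist at all or already contains enough weight. Two items of $G_4$, or a $G_4$ item together with any other positive item, exceed $C$. Two items of $G_3$ sum to at least $3C/2>C$. Three items of $G_2$ exceed $C$. Four items of $G_1$ sum to at least $C$, which forces equality and hence optimality of a subset with exactly four items of weight $C/4$. I will handle that boundary case separately, by noting that $\operatorname{OPT}_{\max}\le C$ and then arguing via a three-item candidate. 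So the first step is to show that every feasible subset has a large-item tuple in the enumerated box, except in degenerate cases where some candidate trivially reaches $3C/4$.

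Next, fix the tuple $\tau$ of $F^*$ and look at the smallest-first selection $L_{\min}(\tau)$. Its weight is at most that of the large items of $F^*$, so it is feasible. The greedy scan then adds small items in ascending order. Two outcomes are possible. If every item of $G_0$ is added, the candidate has large weight at least that of $L_{\min}$ plus $w(G_0)$. I then need to compare this with $w(F^*)$, and this is where the largest-first selection enters. If some small item is rejected, then the current weight exceeds $C-C/4=3C/4$, because the rejected item is below $C/4$. That already gives the bound $3C/4\ge\frac34\operatorname{OPT}_{\max}$.

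So the remaining case is that all of $G_0$ fits on top of $L_{\min}(\tau)$, and I aim to show a candidate is optimal or reaches $3C/4$. For this I consider the largest-first selection $L_{\max}(\tau)$. If it is feasible, and all of $G_0$ also fits on top of it, then it dominates $F^*$: it uses the heaviest possible large items with the same counts plus every small item, so it is optimal. If some small item is rejected on top of it, the $3C/4$ argument applies again. The main obstacle is the case where $L_{\max}(\tau)$ itself overflows $C$ while $L_{\min}(\tau)\cup G_0$ fits. My first attempt will use the class boundaries: within a tuple whose total is feasible, the weight of $L_{\min}(\tau)$ is already bounded below by the class lower endpoints, $n_4C+3n_3C/4+n_2C/2+n_1C/4$. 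I will check tuple by tuple that either this lower bound is at least $3C/4$, or the tuple is so small, for example $(0,0,1,1)$, $(0,0,0,k)$ or $(0,0,1,0)$, that its maximum possible weight lies below $C$, so $L_{\max}(\tau)$ cannot overflow. In the tuple $(0,0,1,1)$ a $G_2$ item plus a $G_1$ item can exceed $C$. Here I would fall back on comparing with the smaller tuple $(0,0,1,0)$ or $(0,0,0,1)$ and its greedy fill, and if needed argue directly that the $G_0$ scan pushes the weight past $3C/4$.

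Finally, the running time: sorting costs $O(n\log n)$, and there are at most 96 candidates, each with an $O(n)$ scan, so the total is $O(n\log n)$.
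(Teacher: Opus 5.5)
Your argument is correct, and it runs in the opposite direction from the paper's. The paper argues by contraposition. It assumes that every candidate weighs less than $3C/4$, and deduces three facts: every feasible candidate absorbs all of $G_0$; no eligible item lies in $G_3$ or $G_4$; and the large items of any feasible subset are at most one $G_2$ item or at most two $G_1$ items. The largest-first candidates for these few shapes, each plus all small items, are then optimal.

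You instead fix an optimal $F^*$ and split on the class lower bound $(4n_4+3n_3+2n_2+n_1)C/4$ of its composition $\tau$. If this bound is at least $3C/4$, the smallest-first selection for $\tau$ is feasible, because it is item-wise dominated by the large items of $F^*$, and it already reaches $3C/4$. If the bound is below $3C/4$, the largest-first selection cannot overflow. Then either the scan rejects a small item, so the weight exceeds $3C/4$, or it takes all of $G_0$ and dominates $F^*$. Your version shows which of the 96 selections actually carry the guarantee, and it needs no global claim about small items. The paper's version is shorter because its contrapositive assumption forces every candidate to take all of $G_0$ at once.

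One correction: the obstacle you anticipate does not arise. The tuple $(0,0,1,1)$ has class lower bound $C/2+C/4=3C/4$, so it belongs to your first branch. Its smallest-first selection is feasible, since $F^*$ realizes this composition, and it weighs at least $3C/4$.

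The only compositions with lower bound below $3C/4$ are $(0,0,0,0)$, $(0,0,0,1)$, $(0,0,0,2)$ and $(0,0,1,0)$. Their selections all weigh less than $C$, so $L_{\max}(\tau)$ always fits in these cases, and your fallback for $(0,0,1,1)$ can be dropped. Similarly, $(0,0,0,3)$ lies in the first branch and covers the out-of-range composition of four items of weight exactly $C/4$. Its three smallest $G_1$ items then weigh exactly $3C/4$.
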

\begin{proof}
Suppose every returned candidate has weight below $3C/4$; we show that the
best is optimal. In any feasible candidate's small-item scan, rejecting
an item of weight below $C/4$ would mean that the accumulated weight
already exceeds $3C/4$. Thus every feasible candidate includes all small
items. Write $S_0=w(G_0)$.

There can be no eligible item in $G_3$ or $G_4$, since its singleton
candidate already reaches $3C/4$. Nor can a feasible subset contain both
a $G_2$ item and a $G_1$ item: the two smallest items of those classes
would form a feasible candidate of weight at least $3C/4$.
Two $G_2$ items can fit only if both weigh exactly $C/2$; the smallest-pair
candidate would then reach $C$.
Finally, if any three $G_1$ items fit, then the three smallest fit and
have total weight at least $3C/4$, again a contradiction.

Consequently a feasible subset's large items consist of at most one
$G_2$ item or at most two $G_1$ items. The algorithm enumerates the largest $G_2$ singleton and the largest pair
of $G_1$ items when available; the pair
fits because each item weighs strictly less than $C/2$.
If there is only one $G_1$ item, then the algorithm enumerates its singleton
instead. It also enumerates the empty selection. Each of these candidates takes all
small items, as proved above. Taking the largest of their weights therefore
attains the optimum over every possible form of a feasible subset.

If a candidate instead reaches $3C/4$, then it has weight at least
$(3/4)\operatorname{OPT}_{\max}$ because $\operatorname{OPT}_{\max}\le C$.
This proves the bound in both cases. Sorting dominates the runtime:
the constant number of selections and their small-item scans take $O(n)$
time after sorting, and the storage is $O(n)$.
\end{proof}

Both subroutines meet the hypotheses of \cref{thm:a2-guarantee}, so the
guarantee applies to the implementation we evaluated. In the SS-MAX regime, the selected final group
falls short of the optimal energy by at most $\BM/4$. By energy
conservation, we need at most that much additional recharging time, which
gives the makespan error used in the theorem.

The evaluated schedule builder recharges only when needed within the head,
then fills the battery before a nonempty final group. When recharge is
necessary and the designated final group is nonempty, this has the same
makespan as \cref{alg:shtf}: both have total recharge
$\BM-\Bz+w(\mathcal H)$ by energy conservation. If the designated final
group is empty, then both builders omit terminal recharging and use at
most this amount, which suffices for the approximation upper bound.
This comparison concerns makespan; the intermediate completion times can
differ.

\end{document}